\newtheoremstyle{newdefinition}{}{}{\normalfont}{}{\bfseries}{}{\newline}
{\thmname{#1} \thmnumber{#2}\thmnote{ (#3)}}
\newtheoremstyle{newplain}{}{}{\itshape}{}{\bfseries}{}{1em}
{\thmname{#1} \thmnumber{#2}\thmnote{ (#3)}}
\newtheoremstyle{newremark}{}{}{\normalfont}{}{\bfseries}{}{1em}
{\thmname{#1}}
\theoremstyle{newdefinition}
\newtheorem{definition}{Definition}[section]
\theoremstyle{newplain}
\newtheorem{theorem}[definition]{Theorem}
\newtheorem{lemma}[definition]{Lemma}
\newtheorem{proposition}[definition]{Proposition}
\newtheorem{corollary}[definition]{Corollary}
\newtheorem{remark}[definition]{Remark}
\DeclareMathOperator{\Z}{\mathbb{Z}}
\DeclareMathOperator{\C}{\mathbb{C}}
\DeclareMathOperator{\E}{\mathbb{E}}
\DeclareMathOperator{\F}{\mathbb{F}}
\DeclareMathOperator{\HH}{\mathcal{H}}
\DeclareMathOperator{\KK}{\mathcal{K}}
\DeclareMathOperator{\BB}{\mathcal{B}}
\DeclareMathOperator{\AAA}{\mathcal{A}}
\DeclareMathOperator{\FF}{\mathcal{F}}
\DeclareMathOperator{\LL}{\mathcal{L}}
\DeclareMathOperator{\NN}{\mathcal{N}}
\DeclareMathOperator{\EE}{\mathcal{E}}
\DeclareMathOperator{\PP}{\mathcal{P}}
\DeclareMathOperator{\TT}{\mathcal{T}}
\DeclareMathOperator{\RR}{\mathcal{R}}
\DeclareMathOperator{\SSS}{\mathcal{S}}
\DeclareMathOperator{\diam}{diam}
\DeclareMathOperator{\dist}{dist}
\DeclareMathOperator{\Img}{Im}
\DeclareMathOperator{\gap}{gap}
\newcommand{\identity}{\ensuremath{\mathds{1}}}
\def\lsym{\Lambda} 
\DeclareMathOperator{\famsym}{{\mathcal{F}}} 
\DeclareMathOperator{\strset}{{\mathbb{S}}}
\DeclareMathOperator{\plqset}{{\mathbb{P}}}
\newcommand{\bdy}{\partial}
\newcommand{\cobdy}{d}
\def\str{\medwhitestar}
\def\plq{\square}
\def\strplq{\boxplus}
\def\wildcard{\sharp}
\numberwithin{equation}{section}
\title{Modified logarithmic Sobolev inequalities for CSS codes}
\author[1,2]{Sebastian Stengele}
\author[3,4]{\'Angela Capel}
\author[5]{Li Gao}
\author[6]{Angelo Lucia}
\author[7,8]{David Pérez-García}
\author[9]{Antonio Pérez-Hernández}
\author[10]{Cambyse Rouzé}
\author[1,2]{Simone Warzel}
\affil[1]{\small Departments of Mathematics and Physics, TU M\"{u}nchen, 85747 Garching, Germany} 
\affil[2]{\small Munich Center for Quantum Science and Technology,
80799 M\"{u}nchen, Germany}
\affil[3]{\small Department of Applied Mathematics and Theoretical Physics, University of Cambridge, Wilberforce Road, Cambridge, CB3 0WA, United Kingdom}
\affil[4]{\small Fachbereich Mathematik, Universität Tübingen, 72076 Tübingen, Germany}
\affil[5]{\small School of Mathematics and Statistics, Wuhan University, Wuhan, 430072, China}
\affil[6]{\small Dipartimento di Matematica, Politecnico di Milano, 20133 Milano, Italy}
\affil[7]{\small Departamento de An\'{a}lisis Matemático y Matemática Aplicada, Universidad Complutense de Madrid, 28040 Madrid, Spain}
\affil[8]{\small Instituto de Ciencias Matemáticas, 28049 Madrid, Spain}
\affil[9]{\small Departamento de Matem\'{a}tica Aplicada I, Escuela T\'{e}cnica Superior de Ingenieros Industriales, Universidad Nacional de Educación a Distancia, 28040 Madrid, Spain}
\affil[10]{\small Inria, Télécom Paris - LTCI, Institut Polytechnique de Paris, 91120 Palaiseau, France}
\date{October 16, 2025}
\begin{document}

\maketitle

\begin{abstract}
We consider the class of Davies quantum semigroups modelling thermalization for translation-invariant Calderbank-Shor-Steane (CSS) codes in $D$ dimensions. We prove that conditions of Dobrushin-Shlosman-type on the quantum Gibbs state imply a modified logarithmic Sobolev inequality with a constant that is uniform in the system’s size.
This is accomplished by generalizing parts of the classical results on thermalization by Stroock, Zegarlinski, Martinelli, and Olivieri to the CSS quantum setting. The results in particular imply the rapid thermalization at any positive temperature of the toric code in $2$D and the star part of the toric code in $3$D, implying a rapid loss of stored quantum information for these models. 
\end{abstract}
\bigskip
\tableofcontents

\section{Introduction and results}

The interplay between static and dynamical properties of classical Markovian evolutions on lattice spin systems is by now broadly understood. Logarithmic Sobolev inequalities (LSI) have been identified as a powerful tool for bounding the mixing times of classical Markov processes such as the Glauber dynamics on spin configurations. In seminal works, Stroock and Zegarlinski \cite{stroock_equivalencelogarithmicSobolev_1992} and Martinelli and Olivieri \cite{martinelli_ApproachEquilibriumGlauber_1994, martinelli_ApproachEquilibriumGlauber_1994a} established the equivalence of a LSI for Glauber dynamics to a Dobrushin-Shlosmann \cite{dobrushin_CompletelyAnalyticalInteractions_1987} high-temperature condition on the Gibbs equilibrium state.
It is an interesting and long-standing question whether any of these results can be generalized to quantum spin systems.

While quantum generalizations of LSI have found interest in the context of the theory of non-commutative integration and hypercontractive semigroups on $ C^* $-algebras \cite{carlen_Optimalhypercontractivityfermi_1993,olkiewicz_HypercontractivitynoncommutativeL_1999,bardet_HypercontractivityLogarithmicSobolev_2022}, more versatile from the point of view of thermalization when described by a Markovian quantum semigroup are modified logarithmic Sobolev inequalities (MLSIs) \cite{kastoryano_QuantumLogarithmicSobolev_2013}. They estimate the relative entropy between an initial quantum state and the dynamics’ fixed point in terms of the entropy production \cite{spohn_Entropyproductionquantum_1978}. For Davies semigroups \cite{davies_Generatorsdynamicalsemigroups_1979}, which constitute a well-established model for thermal noise in the weak coupling limit and are generated by a Lindbladian satisfying a quantum version of detailed balance, the unique fixed point is by construction the Gibbs state, and an MLSI then allows to deduce tight bounds on the mixing time.

Such MLSIs have have been established in limited settings only: for the heat-bath dynamics of the generalized depolarizing semigroup~\cite{capel_Quantumconditionalrelative_2018,beigi_Quantumreversehypercontractivity_2020}, and that of specific $1$D systems~\cite{bardet_ModifiedLogarithmicSobolev_2021}, for Schmidt generators of 2-local, commuting Hamiltonians~\cite{capel_ModifiedLogarithmicSobolev_2021}, for the Davies dynamics of  $k$-local, $1$D Hamiltonians at any positive temperature~\cite{bardet_RapidThermalizationSpin_2023,bardet_EntropyDecayDavies_2024}, that of any dimensional 2-local, commuting Hamiltonians at high enough temperature~\cite{kochanowski_RapidThermalizationDissipative_2025}, and for the Davies dynamics of non-interacting Hamiltonians conjugated with IQP circuits  \cite{bergamaschi_QuantumComputationalAdvantage_2024}.

In this work, we generalize the classical arguments of \cite{stroock_equivalencelogarithmicSobolev_1992,martinelli_ApproachEquilibriumGlauber_1994, martinelli_ApproachEquilibriumGlauber_1994a} to an open system's quantum dynamics, which corresponds to an arbitrary Calderbank-Shor-Steane (CSS) code~\cite{calderbank_Goodquantumerrorcorrecting_1996,steane_Simplequantumerrorcorrecting_1996} on a $ D$-dimensional lattice. This is a broad class of quantum codes, which enable some form of error correction and which cover Kitaev's toric or the surface code as its most prominent examples \cite{nielsen_QuantumComputationQuantum_2012,gottesman_introductionquantumerror_2010,kitaev_Faulttolerantquantumcomputation_2003}. We prove that a certain decay of correlations of the Gibbs state, analogous to the Dobrushin-Shlosman so-called high-temperature condition (which in some instances like the $2$D toric code holds up zero temperature), implies an MLSI for the Davies dynamics associated with the CSS code. While the two constituents of the Davies dynamics bear some resemblance to a classical Glauber dynamics (they are in fact identical on suitably chosen diagonals, i.e.\ classical states), the Davies semigroups studied here are nonetheless genuine Markovian quantum dynamics generated by local Lindbladians. Key aspects of the classical proofs~\cite{stroock_equivalencelogarithmicSobolev_1992,martinelli_ApproachEquilibriumGlauber_1994, martinelli_ApproachEquilibriumGlauber_1994a}, in particular those related to probabilistic conditioning, have no straightforward quantum analogue. A novel ingredient will be an explicit expression for the Davies conditional expectations, that is, the projectors onto the fixed points of the local dynamics. 
Similarly to the classical case, our proof is also based on a multiscale argument, reducing an MLSI from a larger to a smaller region.

\subsection{Families of CSS Hamiltonians} 
We consider families of CSS Hamiltonians on a $ D$-dimensional lattice.
Each unit cell of the lattice is assumed to carry a fixed, finite number of qubits, stars, and plaquettes, and each qubit, star, and plaquette is associated with a unique unit cell. See \cref{fig:code_examples} and \cref{fig:surface_code_boundary} for examples of CSS codes and unit cells, respectively. 
Depending on the model, the qubits might be placed, for example, on edges or corners of a hypercubic lattice. Our results, however, do not rely on any special placement and hold for general unit cells.
Our main results will be formulated for fixed, rectangular subsets $\lsym$ of the lattice, where rectangles are defined in terms of unit cells, with bounds which hold for any set in an increasing family $\famsym = \{\lsym_1, \lsym_2, \ldots\}$ of rectangles.

The distance of two qubits $v,v'\in \lsym$ on unit cells $z,z'\in \Z^D$ is the maximum distance between their unit cells, $$\dist(v,v')\coloneqq\max_{i=1,\ldots D}|z_i-z_i'| . $$
In particular, their distance is $0$ if $v$ and $v'$ lie in the same unit cell. Following standard conventions, for any two sets $U,V$, we also denote by $\operatorname{dist}(U,V)$ the minimum distance between any two elements $v\in U,v'\in V$, and $\operatorname{dist}(v,V)=\dist(\{v\},V)$. The diameter of a set $U$ of qubits is defined as $\max_{v,v'\in U}\dist(v,v')$.

For a fixed and finite $ \lsym $, the CSS Hamiltonian acts on the tensor-product Hilbert space 
 \begin{equation*}
      \HH_{\lsym} = \bigotimes_{v\in \lsym} \C^2 .
  \end{equation*}
We denote the three Pauli operators associated with the qubit at $ v \in \lsym $ by $ X_v, Y_v$ and $ Z_v $. 
CSS Hamiltonians are composed of two sets of commuting interactions. Borrowing the nomenclature of the most prominent example, Kitaev's toric code \cite{kitaev_Faulttolerantquantumcomputation_2003}, the pair of interactions is encoded in a set of stars $  \strset_{\lsym} $, and a set of plaquettes $\plqset_{\lsym} $, which are assumed to obey the translation symmetry of the lattice. For simplicity, we also assume the vertex support of any star or plaquette to be of a diameter of at most $1$. That is, any two qubits connected by a star or plaquette are at most in the next lattice cell, i.e.\ at distance at most $1$. 
These interaction rules give rise to the star and plaquette interaction operators, 
\begin{equation}\label{eq:operatordef}
 A_s\coloneqq \bigotimes_{v\in ds}X_v
 \quad \mathrm{and} \quad
 B_p \coloneqq \bigotimes_{v\in \bdy p}Z_v
\end{equation}
indexed by $ s \in  \strset_{\lsym} $, and $ p \in \plqset_{\lsym} $. The tensor products in~\eqref{eq:operatordef} are over the vertices associated with a star and plaquette, respectively.  
\cref{fig:code_examples} illustrates these interactions in five examples. We use the symbols $\cobdy$ and $\bdy$ in reminiscence of the equivalence of CSS codes to (co-)chain complexes, see e.g. \cite{breuckmann_PhDThesisHomological_2018}.
In order to relate to this homological point of view, one lifts the maps $\bdy$ and $\cobdy$ to $\F_2$-linear maps from subsets of stars $ 2^{\strset_{\lsym}} $ or plaquettes $ 2^{\plqset_{\lsym}}$ to the subsets $ 2^\lsym $ of classical Ising spin configurations,  
\begin{equation*}
  \begin{tikzcd}
    2^{\strset_{\lsym}}
    \arrow[r, "\cobdy", shift left]
    & 2^{\lsym}
    \arrow[r, "\cobdy", shift left]
    \arrow[l, "\bdy", shift left]
    & 2^{\plqset_{\lsym}} \arrow[l, "\bdy", shift left]
  \end{tikzcd} 
\end{equation*}
where, by slight abuse of notation, we also identify the transpose $\bdy^T=\cobdy$. In particular, we use $\bdy v$ and $\cobdy v$ to denote the set of stars and plaquettes connected to a single qubit $v$, respectively. 
In this language, the relations $\cobdy \circ\cobdy = 0$ and $\bdy \circ \bdy =0 $, encode a key property of the CSS code: the star  and plaquette operators commute
\begin{equation}
\label{eq:commrule}
[ A_s , B_p ] = 0 .
\end{equation}
Our subsequent analysis will not rely on any deep insight from homology on how to construct CSS codes, but rather only on the fact that the commutation rule~\eqref{eq:commrule} is satisfied for the operators in~\eqref{eq:operatordef}.

\cref{fig:code_examples} lists some examples of CSS codes:
\begin{itemize}
    \item The rotated surface code has qubits on the vertices of a square lattice. Stars and plaquettes are placed on alternating faces in a checkerboard way. On a finite lattice, we place boundary conditions by truncating the interactions.
    \item The $2$D toric code has qubits placed on the edges of a square lattice. Stars are placed on vertices and plaquettes on faces. They commute since they overlap in two or zero qubits. The unit cell is given by a vertical and a horizontal qubit forming the intersection of a star and a plaquette.
    \item A straightforward generalization of the $2$D toric code are tessellation models. Given any periodic tiling of the plane, associating qubits to edges, stars to vertices, and plaquettes to faces defines a valid CSS code \cite[Theorem 2.10]{breuckmann_PhDThesisHomological_2018}. We furthermore require that the unit cells of these models are both star- and plaquette-connected. 
    \item The $3$D toric code is defined on the cubic lattice, with stars on vertices, qubits on edges, and plaquettes on faces. The unit cell is given by three qubits connected at a star and pointing in orthogonal directions.
    \item The $3$D toric code also admits a straightforward generalisation to periodic tessellations, which define CSS codes \cite[Theorem 2.10]{breuckmann_PhDThesisHomological_2018}. We require that the unit cells of these models are star-connected. 
\end{itemize}

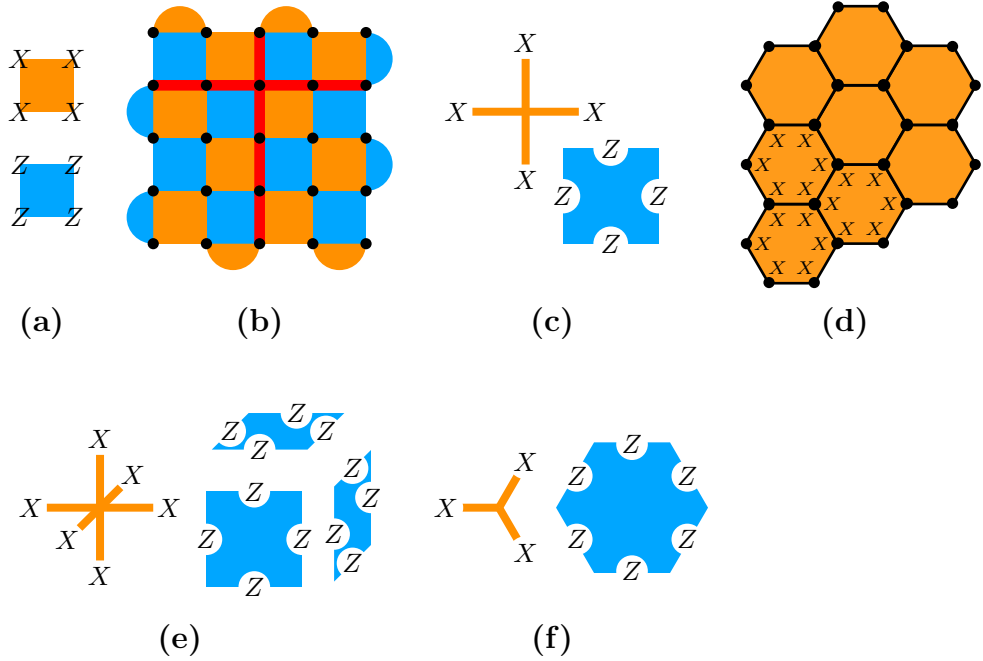
\begin{figure}
  \centering
  \begin{tikzpicture}[scale=0.7]

    \pgfmathsetmacro{\sizex}{4}
    \pgfmathsetmacro{\sizey}{4}
    \pgfmathsetmacro{\poscellx}{1.5}
    \pgfmathsetmacro{\poscelly}{-1.5}
    
    \def\region{33,34,35,36,44,45,46,55,56}


    
    \definecolor{strbgcolor}{HTML}{86abbf}
    \definecolor{plqbgcolor}{HTML}{ab8d67}
    \definecolor{strfgcolor}{HTML}{00A6FF}
    \definecolor{plqfgcolor}{HTML}{FF9000}
    \definecolor{qubitcolor}{HTML}{000000}

    \pgfdeclarelayer{background}
    \pgfdeclarelayer{foreground}
    \pgfdeclarelayer{nodes}
    \pgfdeclarelayer{hexagon}
    \pgfdeclarelayer{toric}
    \pgfdeclarelayer{3dtoric}
    \pgfdeclarelayer{tesselation}
    \pgfsetlayers{background, foreground, nodes, hexagon, toric, 3dtoric, tesselation}
    \begin{pgfonlayer}{nodes}
        \foreach \row in {0,...,\sizex} {
            \foreach \column in {0,..., \sizey}{
                \fill[qubitcolor] (\column+1,\row) circle (3pt);
                \node[coordinate] (\column\row) at (\column+1,\row) {};
            }
        }
    \end{pgfonlayer}
    \begin{pgfonlayer}{background}
        \foreach \row in {1, ..., \sizex} {
            \foreach \column in {1, 3,..., \sizey} {
                \fill[strfgcolor] ({\column + mod(\row,2)}, \row) rectangle +(1,-1);
                \fill[plqfgcolor] ({\column - mod(\row,2)+1}, \row) rectangle +(1,-1);
            }
        }
        \foreach \x in {1, 3,..., \sizex} {
            \fill[strfgcolor] (1,\x+1) -- (1,\x) arc[start angle=90, delta angle=180, radius=0.5] -- cycle;
            \fill[strfgcolor] (\sizey+1,\x+1) -- (\sizey+1,\x) arc[start angle=270, delta angle=180, radius=0.5] -- cycle;
        }
        \foreach \y in {1, 3,..., \sizey} {
            \fill[plqfgcolor] (\y+2,0) -- (\y+1,0) arc[start angle=180, delta angle=180, radius=0.5] -- cycle;
            \fill[plqfgcolor] (\y+2,\sizex) -- (\y+1,\sizex) arc[start angle=0, delta angle=180, radius=0.5] -- cycle;
        }

        \fill[strfgcolor] (\poscelly, \poscellx-1) rectangle +(1,1);
        \fill[plqfgcolor] (\poscelly, \poscellx+1) rectangle +(1,1);
    \end{pgfonlayer}

    \begin{pgfonlayer}{foreground}
        \draw[red, line width=4pt] (20)--(24);
        \draw[red, line width=4pt] (03)--(43);
        \node at (\poscelly, \poscellx-1) {$Z$};
        \node at (\poscelly, \poscellx) {$Z$};
        \node at (\poscelly+1, \poscellx-1) {$Z$};
        \node at (\poscelly+1, \poscellx) {$Z$};

        \node at (\poscelly, \poscellx+1) {$X$};
        \node at (\poscelly, \poscellx+2) {$X$};
        \node at (\poscelly+1, \poscellx+1) {$X$};
        \node at (\poscelly+1, \poscellx+2) {$X$};

        \node at (-1.1, -1.5) {\large\textbf{(a)}};
        \node at (3, -1.5) {\large\textbf{(b)}};
        \node at (8.5, -1.5) {\large\textbf{(c)}};
        \node at (14, -1.5) {\large\textbf{(d)}};
        \node at (1.5, -7.5) {\large\textbf{(e)}};
        \node at (8.5, -7.5) {\large\textbf{(f)}};
    \end{pgfonlayer}

    \begin{pgfonlayer}{toric}
        \begin{scope}[xshift=7cm, yshift=1.5cm]

        \draw[line width=3pt, plqfgcolor] (0,1)--(2,1);
        \draw[line width=3pt,plqfgcolor] (1,2)--(1,0);
        \node at (-0.3,1) {$X$};
        \node at (2.3,1) {$X$};
        \node at (1,2.3) {$X$};
        \node at (1,-0.3) {$X$};

        \fill[strfgcolor] (1.7,0.3) rectangle ++(1.8,-1.8);
        \node[circle, fill=white, inner sep=1pt] at (1.7, -0.6) {$Z$};
        \node[circle, fill=white, inner sep=1pt] at (1.7+1.8, -0.6) {$Z$};
        \node[circle, fill=white, inner sep=1pt] at (1.7+0.9, 0.3) {$Z$};
        \node[circle, fill=white, inner sep=1pt] at (1.7+0.9, 0.3-1.8) {$Z$};

        \end{scope}
    \end{pgfonlayer}

    \begin{pgfonlayer}{hexagon}
        \begin{scope}[xshift=13cm]

        \coordinate (c1) at (0, 0);
        \coordinate (c2) at (0, 1.5);
        \coordinate (c3) at (1.3, 0.75);
        \coordinate (c4) at (1.3, 2.25);
        \coordinate (c5) at (2.6, 1.5);
        \coordinate (c6) at (2.6, 3);
        \coordinate (c7) at (0, 3);
        \coordinate (c8) at (1.3, 3.75);
        
        \foreach \c in {c1,c2,c3,c4,c5,c6,c7,c8}{
            \node[draw, regular polygon, regular polygon sides=6, minimum width=1.21cm, fill=plqfgcolor!90!white, line width=1pt] at (\c) {};
        }
        \foreach \c in {c1,c2,c3,c4,c5,c6,c7,c8}{
            \fill[qubitcolor] ($ (\c) +(0:0.85cm)$) circle (3pt);
            \fill[qubitcolor] ($ (\c) +(60:0.85cm)$) circle (3pt);
            \fill[qubitcolor] ($ (\c) +(120:0.85cm)$) circle (3pt);
            \fill[qubitcolor] ($ (\c) +(180:0.85cm)$) circle (3pt);
            \fill[qubitcolor] ($ (\c) +(240:0.85cm)$) circle (3pt);
            \fill[qubitcolor] ($ (\c) +(300:0.85cm)$) circle (3pt);
        }
        
        \node at (0:0.53cm) {$\scriptstyle X$};
        \node at (60:0.53cm) {$\scriptstyle X$};
        \node at (120:0.53cm) {$\scriptstyle X$};
        \node at (180:0.53cm) {$\scriptstyle X$};
        \node at (240:0.53cm) {$\scriptstyle X$};
        \node at (300:0.53cm) {$\scriptstyle X$};
        \node at ($ (c3) +(0:0.53cm)$) {$\scriptstyle X$};
        \node at ($ (c3) +(60:0.53cm)$) {$\scriptstyle X$};
        \node at ($ (c3) +(120:0.53cm)$) {$\scriptstyle X$};
        \node at ($ (c3) +(180:0.53cm)$) {$\scriptstyle X$};
        \node at ($ (c3) +(240:0.53cm)$) {$\scriptstyle X$};
        \node at ($ (c3) +(300:0.53cm)$) {$\scriptstyle X$};
        \node at ($ (c2) +(0:0.53cm)$) {$\scriptstyle X$};
        \node at ($ (c2) +(60:0.53cm)$) {$\scriptstyle X$};
        \node at ($ (c2) +(120:0.53cm)$) {$\scriptstyle X$};
        \node at ($ (c2) +(180:0.53cm)$) {$\scriptstyle X$};
        \node at ($ (c2) +(240:0.53cm)$) {$\scriptstyle X$};
        \node at ($ (c2) +(300:0.53cm)$) {$\scriptstyle X$};

        \end{scope}
    \end{pgfonlayer}

    \begin{pgfonlayer}{3dtoric}
        \begin{scope}[xshift=0cm, yshift=-5cm]

        \draw[line width=3pt, plqfgcolor] (-1,0,0)--(1,0,0);
        \draw[line width=3pt,plqfgcolor] (0,-1,0)--(0,1,0);
        \draw[line width=3pt,plqfgcolor] (0,0,-1)--(0,0,1);
        \node at (-1.3,0,0) {$X$};
        \node at (1.3,0,0) {$X$};
        \node at (0,1.3,0) {$X$};
        \node at (0,-1.3,0) {$X$};
        \node at (0,0,-1.6) {$X$};
        \node at (0,0,1.6) {$X$};

        \fill[strfgcolor] (2,0.3,0) rectangle ++(1.8,-1.8,0);
        
        \fill[strfgcolor] (2.1,1.1,0) -- ++(0,0,-1.8) -- ++(1.8,0,0) -- ++(0,0,1.8) -- cycle;
        
        \fill[strfgcolor] (4.4,0.4,0) -- ++(0,0,-1.8) -- ++(0,-1.8,0) -- ++(0,0,1.8) -- cycle;
        
        \node[circle, fill=white, inner sep=0.6pt] at ($(2,0.3,0) + (0.9,0,0)$) {$Z$};
        \node[circle, fill=white, inner sep=0.6pt] at ($(2,0.3,0) + (0.9,-1.8,0)$) {$Z$};
        \node[circle, fill=white, inner sep=0.6pt] at ($(2,0.3,0) + (0,-0.9,0)$) {$Z$};
        \node[circle, fill=white, inner sep=0.6pt] at ($(2,0.3,0) + (1.8,-0.9,0)$) {$Z$};

        \node[circle, fill=white, inner sep=0.6pt] at ($(2.1,1.1,0) + (0.9,0,0)$) {$Z$};
        \node[circle, fill=white, inner sep=0.6pt] at ($(2.1,1.1,0) + (0.9,0,-1.8)$) {$Z$};
        \node[circle, fill=white, inner sep=0.6pt] at ($(2.1,1.1,0) + (0,0,-0.9)$) {$Z$};
        \node[circle, fill=white, inner sep=0.6pt] at ($(2.1,1.1,0) + (1.8,0,-0.9)$) {$Z$};
        
        \node[circle, fill=white, inner sep=0.6pt] at ($(4.4,0.4,0) + (0,-0.9,0)$) {$Z$};
        \node[circle, fill=white, inner sep=0.6pt] at ($(4.4,0.4,0) + (0,-0.9,-1.8)$) {$Z$};
        \node[circle, fill=white, inner sep=0.6pt] at ($(4.4,0.4,0) + (0,0,-0.9)$) {$Z$};
        \node[circle, fill=white, inner sep=0.6pt] at ($(4.4,0.4,0) + (0,-1.8,-0.9)$) {$Z$};

        \end{scope}
    \end{pgfonlayer}

    \begin{pgfonlayer}{tesselation}
        \begin{scope}[xshift=10cm, yshift=-5cm]

        \coordinate (hexstr) at (-2.5, 0);
        \draw[line width=3pt, plqfgcolor] (hexstr)-- ++(60:1);
        \draw[line width=3pt, plqfgcolor] (hexstr)-- ++(180:1);
        \draw[line width=3pt, plqfgcolor] (hexstr)-- ++(300:1);
        \node[circle, fill=white, inner sep=0.6pt] at ($(hexstr) + (60:1)$) {$X$};
        \node[circle, fill=white, inner sep=0.6pt] at ($(hexstr) + (180:1)$) {$X$};
        \node[circle, fill=white, inner sep=0.6pt] at ($(hexstr) + (300:1)$) {$X$};

        \node[regular polygon, regular polygon sides=6, minimum width=2cm, fill=strfgcolor] at (0,0,0) {};
        \node[circle, fill=white, inner sep=0.6pt] at (30:1.2cm) {$Z$};
        \node[circle, fill=white, inner sep=0.6pt] at (90:1.2cm) {$Z$};
        \node[circle, fill=white, inner sep=0.6pt] at (150:1.2cm) {$Z$};
        \node[circle, fill=white, inner sep=0.6pt] at (210:1.2cm) {$Z$};
        \node[circle, fill=white, inner sep=0.6pt] at (270:1.2cm) {$Z$};
        \node[circle, fill=white, inner sep=0.6pt] at (330:1.2cm) {$Z$};

        \end{scope}
    \end{pgfonlayer}

\end{tikzpicture}
  \caption{Examples of CSS codes. Stars are colored orange, plaquettes blue.  The commutation relation~\eqref{eq:commrule} applies to all. (a) The stars (top) and plaquettes (bottom) of the rotated surface code. (b) The rotated surface code on $5\times 5$ qubits. The red vertical line is a logical $X$-operator, and the red horizontal line is a logical $Z$-operator. (c) The stars (top) and plaquette (bottom) of the $2$D toric code, which are then combined in a $2$D regular lattice. (d) The stars of the $2$D hexagonal color code are the faces of the hexagonal lattice. Qubits are placed on the vertices. The plaquettes are, up to switching $X\leftrightarrow Z$, identical to the stars. (e) The star (left) and plaquettes (right) of the $3$D Toric code. There are three orientations of the plaquette. Each plaquette is connected to $4$ qubits, each star to $6$. (f) The star (left) and plaquette (right) of a $2$D tessellation code on the hexagonal lattice. }
  \label{fig:code_examples}
\end{figure}

The commuting interactions  give rise to the star and plaquette Hamiltonians
\begin{equation}\label{eq:Ham}
H^\str_{\lsym} \coloneqq -\sum_{s\in \strset_{\lsym}} A_s, \qquad  H^\plq_{\lsym}  \coloneqq  - \sum_{p\in \plqset_{\lsym}} B_p \ ,
\end{equation}
which act on $\HH_{\lsym} $, as well as the full CSS Hamiltonian
\begin{equation}
      H^\strplq_{\lsym} \coloneqq H^\str_{\lsym} + H^\plq_{\lsym} . 
\end{equation} 
CSS Hamiltonians are spectrally trivial in the sense that $H^\str_{\lsym} $
is diagonal in the canonical tensor eigenbasis of the Pauli $ X_v $-operators with $ v \in \lsym$, and likewise for $ H^\plq_{\lsym}  $ with the canonical $ Z_v$-basis instead. Moreover, both parts can be jointly diagonalized. 
The ground state eigenspace of $  H^\strplq_{\lsym} $ is by construction the code space of the CSS code. In this paper, we do not focus heavily on the information-theoretic aspects of CSS codes, and we refer the interested reader to~\cite{nielsen_QuantumComputationQuantum_2012, gottesman_introductionquantumerror_2010}. As we will explain below, we will, however, show as a corollary to our main result that CSS codes are strongly not self-correcting in the high-temperature phase.

As part of our proof technique, we will also analyze CSS Hamiltonians restricted to subsets $R\subseteq \lsym$. They will be constructed similarly, starting from a set of stars and plaquettes touching this subset, 
\begin{equation}\label{eq:relstarpsets}
   \strset_R \coloneqq \{s\in \strset_{\lsym} | \cobdy s \cap R\neq \emptyset \} \quad \mathrm{and} \quad \plqset_R \coloneqq \{p \in \plqset_{\lsym} | \bdy p \cap R \neq \emptyset \} \ ,
\end{equation}
where we suppress the implicit dependence on $ \lsym$. By a similar abuse of notation, we write 
\begin{equation}\label{eq:HamR}
H^\str_R \coloneqq  - \sum_{s\in \strset_R}A_s, \qquad   H^\plq_R\coloneqq  - \sum_{p\in \plqset_R}B_p \ ,
\end{equation}
and $H^\strplq_R \coloneqq H^\str_R + H^\plq_R $ for the Hamiltonians on $R\subseteq \lsym$. We stress that the Hamiltonians in~\eqref{eq:Ham} and \eqref{eq:HamR} come with their choice of boundary interactions. 
These are encoded in how the relations $\bdy\colon\plqset_{\lsym}\to 2^{\lsym}$ and $\cobdy \colon \strset_{\lsym}\to 2^{\lsym}$ act at the boundary of the rectangle $\lsym$. That is, stars and plaquettes in the outermost layer of unit cells can differ from those in the bulk, see \cref{fig:surface_code_boundary} for an example.

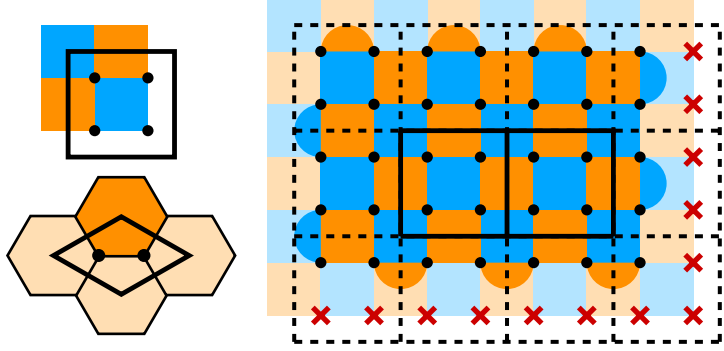
\begin{figure}
  \centering
  \begin{tikzpicture}[scale=0.7]

    \pgfmathsetmacro{\sizex}{4}
    \pgfmathsetmacro{\sizey}{6}
    \pgfmathsetmacro{\poscellx}{3}
    \pgfmathsetmacro{\poscelly}{-2.75}
    
    \def\region{33,34,35,36,44,45,46,55,56}


    
    \definecolor{plqfgcolor}{HTML}{00A6FF}
    \definecolor{strfgcolor}{HTML}{FF9000}
    \definecolor{qubitcolor}{HTML}{000000}

    \colorlet{plqbgcolor}{plqfgcolor!30!white}
    \colorlet{strbgcolor}{strfgcolor!30!white}

    \pgfdeclarelayer{background}
    \pgfdeclarelayer{foreground}
    \pgfdeclarelayer{nodes}
    \pgfdeclarelayer{nodes_foreground}
    \pgfdeclarelayer{hexagon}
    \pgfsetlayers{background, foreground, nodes, nodes_foreground, hexagon}
    \begin{pgfonlayer}{nodes}
        \foreach \row in {0,...,\sizex} {
            \foreach \column in {0,..., \sizey}{
                \fill[qubitcolor] (\column+1,\row) circle (3pt);
                \node[coordinate] (\column\row) at (\column+1,\row) {};
            }
        }
        \foreach \row in {0,...,\sizex} {
            \draw (\sizey+2,\row) node[cross out, draw=red!80!black, minimum size=6pt, inner sep=0pt, outer sep=0pt, line width=2pt] {};
        }
        \foreach \column in {0,..., \numexpr\sizey+1} {
            \draw (\column+1,-1) node[cross out, draw=red!80!black, minimum size=6pt, inner sep=0pt, outer sep=0pt, line width=2pt] {};
        }

        \fill[qubitcolor] (\poscelly+0.5,\poscellx+0.5) circle (3pt);
        \fill[qubitcolor] (\poscelly-0.5,\poscellx+0.5) circle (3pt);
        \fill[qubitcolor] (\poscelly+0.5,\poscellx-0.5) circle (3pt);
        \fill[qubitcolor] (\poscelly-0.5,\poscellx-0.5) circle (3pt);
    \end{pgfonlayer}
    \begin{pgfonlayer}{background}
        
        \foreach \row in {0, ..., \numexpr\sizex+1} {
            \foreach \column in {1, 3,..., \numexpr\sizey+1} {
                \fill[strbgcolor] ({\column + mod(\row,2)-1}, \row) rectangle +(1,-1);
                \fill[plqbgcolor] ({\column - mod(\row,2)}, \row) rectangle +(1,-1);
            }
        }
        \foreach \row in {1, ..., \sizex} {
            \foreach \column in {1, 3,..., \sizey} {
                \fill[plqfgcolor] ({\column + mod(\row,2)}, \row) rectangle +(1,-1);
                \fill[strfgcolor] ({\column - mod(\row,2)+1}, \row) rectangle +(1,-1);
            }
        }
        \foreach \x in {1, 3,..., \sizex} {
            \fill[plqfgcolor] (1,\x+1) -- (1,\x) arc[start angle=90, delta angle=180, radius=0.5] -- cycle;
            \fill[plqfgcolor] (\sizey+1,\x+1) -- (\sizey+1,\x) arc[start angle=270, delta angle=180, radius=0.5] -- cycle;
        }
        \foreach \y in {1, 3,..., \sizey} {
            \fill[strfgcolor] (\y+2,0) -- (\y+1,0) arc[start angle=180, delta angle=180, radius=0.5] -- cycle;
            \fill[strfgcolor] (\y+2,\sizex) -- (\y+1,\sizex) arc[start angle=0, delta angle=180, radius=0.5] -- cycle;
        }

        \fill[plqfgcolor] (\poscelly-0.5, \poscellx-0.5) rectangle +(1,1);
        \fill[plqfgcolor] (\poscelly-1.5, \poscellx+0.5) rectangle +(1,1);
        \fill[strfgcolor] (\poscelly-0.5, \poscellx+0.5) rectangle +(1,1);
        \fill[strfgcolor] (\poscelly-1.5, \poscellx-0.5) rectangle +(1,1);
    \end{pgfonlayer}

    \begin{pgfonlayer}{foreground}
        \draw[line width=1.5pt,dashed,xshift=0.5cm,yshift=-1.5cm] (0,0) grid[step=2] (\sizey+2,\sizex+2);
        \draw[line width=1.8pt,xshift=0.5cm,yshift=-1.5cm] (2,2) grid[step=2] (\sizey,\sizex);
        
        \draw[line width=1.8pt] (\poscelly-1,\poscellx-1) rectangle +(2,2);

    \end{pgfonlayer}
    \begin{pgfonlayer}{hexagon}
        \begin{scope}[xshift=-2.75cm, yshift=-0.6cm]

        \coordinate (c1) at (0, 0);
        \coordinate (c2) at (0, 1.47);
        \coordinate (c3) at (1.275, 0.735);
        \coordinate (c4) at (1.275, 2.21);
        \coordinate (c0) at (-1.275, 0.735);
        
        \foreach \c in {c1,c3,c0}{
            \node[draw, regular polygon, regular polygon sides=6, minimum width=1.21cm, fill=strbgcolor, line width=1pt] at (\c) {};
        }
        \node[draw, regular polygon, regular polygon sides=6, minimum width=1.21cm, fill=strfgcolor, line width=1pt] at (c2) {};
        \fill[qubitcolor] ($ (c1) +(60:0.85cm)$) circle (3.5pt);
        \fill[qubitcolor] ($ (c1) +(120:0.85cm)$) circle (3.5pt);
        \draw[line width=1.8pt] (c0)--(c1)--(c3)--(c2)--cycle;

        \end{scope}

    \end{pgfonlayer}

\end{tikzpicture}
  \caption{Unit cells and boundaries. Top left: a single unit cell of the rotated surface code containing four qubits, two stars (orange) and two plaquettes (blue). Bottom left: A single unit cell of the $2$D hexagonal color code containing two qubits and one star.  Right: a small rectangle with boundary conditions. The dashed unit cells lie at the boundary and are modified. The shaded interactions are truncated or fully removed. The crossed-out qubits are not interacting and can be ignored.}
  \label{fig:surface_code_boundary}
\end{figure}

\subsection{Davies Lindbladians and thermal equilibria}
The CSS code's equilibrium properties at inverse temperature $ \beta > 0 $ are described in terms of the canonical Gibbs states
\begin{equation}\label{eq:Gibbsstate}
    \rho_\lsym^\wildcard \coloneqq \frac{e^{-\beta H_\lsym^\wildcard}}{ Z_\lsym^\wildcard }, \quad \mbox{with} \quad \wildcard \in \{ \str, \plq , \strplq \}  , 
\end{equation}
and the partition functions $ Z_\lsym^\wildcard \coloneqq \Tr  e^{-\beta H_\lsym^\wildcard} $. 
(We suppress the dependence on $\beta$ in these quantities since we do not switch temperatures in our arguments.)

While it immediately follows from the commutativity of the star and plaquette Hamiltonian that the Gibbs state $\rho_\lsym^\strplq $ factorizes, as a first simple result, we show that the factors are the star and plaquette Gibbs states up to a factor of the Hilbert space's dimension.
\begin{theorem}\label{thm:equi}
For any CSS code the partition function factorizes, $Z_\lsym^\strplq = Z_\lsym^\str \ Z_\lsym^\plq \ 2^{-|\lsym|}  $, with $ |\lsym | $ the number of qubits within $ \lsym $, and hence the Gibbs state factorizes as well
$$
\rho_\lsym^\strplq = 2^{|\lsym|}\rho_\lsym^\str \ \rho_\lsym^\plq . 
$$
\end{theorem}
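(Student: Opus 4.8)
The plan is to exploit the commutation relation~\eqref{eq:commrule} to split the CSS Gibbs operator, and then to isolate the single nontrivial ingredient: the trace of a product of an ``$X$-type'' operator and a ``$Z$-type'' operator factorizes, up to a dimensional prefactor. Since $[A_s,B_p]=0$ for all $s,p$, we have $[H^\str_\lsym,H^\plq_\lsym]=0$, hence $e^{-\beta H^\strplq_\lsym}=e^{-\beta H^\str_\lsym}\,e^{-\beta H^\plq_\lsym}$ and in particular $Z^\strplq_\lsym=\Tr\big(e^{-\beta H^\str_\lsym}e^{-\beta H^\plq_\lsym}\big)$. Moreover, expanding $e^{-\beta H^\str_\lsym}=\prod_{s\in\strset_\lsym}e^{\beta A_s}=\prod_{s\in\strset_\lsym}\big(\cosh\beta\,\identity+\sinh\beta\,A_s\big)$ (using $A_s^2=\identity$) writes $e^{-\beta H^\str_\lsym}$ as a real-linear combination of operators $\prod_{s\in T}A_s$, $T\subseteq\strset_\lsym$, each of which is a tensor product of $X$-operators supported on the qubit set $\cobdy T$; symmetrically, $e^{-\beta H^\plq_\lsym}$ is a linear combination of tensor products of $Z$-operators.

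The key computation is then the following elementary fact: for any subsets $U,V\subseteq\lsym$,
\begin{equation*}
  \Tr\!\Big(\bigotimes_{v\in U}X_v\ \bigotimes_{v'\in V}Z_{v'}\Big)=\frac{1}{2^{|\lsym|}}\,\Tr\!\Big(\bigotimes_{v\in U}X_v\Big)\,\Tr\!\Big(\bigotimes_{v'\in V}Z_{v'}\Big),
\end{equation*}
where both sides equal $2^{|\lsym|}$ if $U=V=\emptyset$ and vanish otherwise. Indeed, the operator on the left factorizes over qubits, acting on a single qubit $w$ as $X_w$, $Z_w$, $X_wZ_w=-\mathrm{i}\,Y_w$, or $\identity_w$ according to whether $w$ lies in $U\setminus V$, $V\setminus U$, $U\cap V$, or neither; since $X_w,Z_w,X_wZ_w$ are all traceless while $\Tr\identity_w=2$, the trace of the tensor product vanishes unless $U=V=\emptyset$. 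By bilinearity this upgrades to $\Tr(ab)=2^{-|\lsym|}\Tr(a)\Tr(b)$ whenever $a$ lies in the abelian algebra generated by $\{X_v\}_{v\in\lsym}$ and $b$ in the one generated by $\{Z_v\}_{v\in\lsym}$.

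Applying the last identity with $a=e^{-\beta H^\str_\lsym}$ and $b=e^{-\beta H^\plq_\lsym}$ yields $Z^\strplq_\lsym=2^{-|\lsym|}Z^\str_\lsym Z^\plq_\lsym$, which is the partition-function claim; the state factorization $\rho^\strplq_\lsym=e^{-\beta H^\str_\lsym}e^{-\beta H^\plq_\lsym}/Z^\strplq_\lsym=2^{|\lsym|}\,\rho^\str_\lsym\,\rho^\plq_\lsym$ is then immediate (and $\rho^\str_\lsym$, $\rho^\plq_\lsym$ commute, so the order is immaterial). I do not anticipate a genuine obstacle here; the only point requiring a little care is that the trace identity must be read off qubit by qubit, where the $X$--$Z$ anticommutation causes no sign issues, rather than attempting to commute global $X$- and $Z$-strings past one another. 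An essentially equivalent alternative is to use that $H^\str_\lsym$ is diagonal in the $X$-eigenbasis and $H^\plq_\lsym$ in the $Z$-eigenbasis, and that these two eigenbases are \emph{mutually unbiased}, $|\langle x|z\rangle|^2=2^{-|\lsym|}$; inserting the $Z$-resolution of the identity into $\Tr\big(e^{-\beta H^\str_\lsym}e^{-\beta H^\plq_\lsym}\big)$ expanded in the $X$-eigenbasis produces the same factorization.
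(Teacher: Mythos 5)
Your proposal is correct and follows essentially the same route as the paper: expand $e^{-\beta H^\str_\lsym}$ and $e^{-\beta H^\plq_\lsym}$ via $e^{\beta A_s}=\cosh(\beta)\identity+\sinh(\beta)A_s$ into linear combinations of $X$- and $Z$-strings, and use the trace identity $\Tr\bigl(\bigotimes_{v\in U}X_v\,\bigotimes_{v'\in V}Z_{v'}\bigr)=2^{-|\lsym|}\Tr\bigl(\bigotimes_{v\in U}X_v\bigr)\Tr\bigl(\bigotimes_{v'\in V}Z_{v'}\bigr)$, which is exactly the key step in the paper's proof of \cref{thm:equi:main}. Your qubit-by-qubit verification of that identity (and the mutually-unbiased-bases remark) is just a slightly different, equally valid way of checking the same fact.
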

This will be proven, among other equilibrium properties of CSS codes, as  \cref{thm:equi:main} in \cref{sec:static}. 
The theorem reveals that the equilibrium properties of a CSS code are fully encoded in two underlying classical models, which are in one-to-one correspondence with the star and plaquette Gibbs states. More precisely,  since the star Gibbs state and its reduced states on subsets are all diagonal in the canonical $ X $-basis,  the star Gibbs state may hence be identified with the classical Gibbs measure: 
  \begin{equation}\label{eq:clGibbsm}
  \mu^\str(\textbf{x}) \coloneqq \bra{\textbf{x}} \rho^{\str}_{\lsym} \ket{\textbf{x}} ,
  \end{equation}
  on classical spin configuration $\textbf{x} \in 2^{\lsym} $. Here and in the following, we use braket notations, in which $\ket{\textbf{x}} \in \HH_\lsym $ is the corresponding $X$-basis product vector. 
  Likewise, one has $ \mu^\plq(\textbf{z}) \coloneqq \bra{\textbf{z}} \rho^{\str}_{\lsym} \ket{\textbf{z}} $, where $\textbf{z} \in 2^{\lsym} $ stands for a classical spin configuration, and $\ket{\textbf{z}}$ is the corresponding $Z$-basis vector.

Since we only consider qubits, the classical models corresponding to the restriction of $ H_\lsym^\str $ and $ H_\lsym^\str $ to the $ X $- and $ Z $-basis respectively are abelian $\Z_2$-gauge theories involving Ising spins~$\pm 1$. 
An example of such a lattice gauge theory is the famous Wilson-loop gauge theory \cite{wilson_ConfinementQuarks_1974, wegner_DualityGeneralizedIsing_1971} in $3$D. If one puts the qubits in the $3$D toric on the edges, it equals the plaquette interaction of this CSS code. 
The equilibrium properties of such classical lattice gauge theories are generally highly non-trivial. Available rigorous results concern duality results and other properties of phase transitions~\cite{wegner_DualityGeneralizedIsing_1971,aizenman_GeometricAnalysisIsing_2025,adhikari_Correlationdecayfinite_2025, forsstrom_DecayCorrelationsFinite_2022, duncan_TopologicalPhasesPlaquette_2025}. \\

The main aim of this paper is to study the mixing times for the convergence to the Gibbs equilibrium when the system is governed by a Davies Lindbladian, the standard model for thermal noise in the weak-coupling
limit, whose construction dates back to Davies ~\cite{davies_Diffusionweaklycoupled_1972}. In the Heisenberg picture, the Lindbladian dynamics acts on the space of bounded linear operators $ \BB(\HH_\lsym) $, on which $ \| O\| $ stands for the operator norm and $ O^\dagger $ for the Hilbert-space adjoint of $ O \in \BB(\HH_\lsym) $. The space of bounded operators may also be equipped with the Hilbert-Schmidt scalar product
$   \langle O_1, O_2\rangle \coloneqq \Tr(O_1^\dag O_2) $.
  For a linear (super-)operator $\mathcal{O}:\BB(\HH_\lsym)\to \BB(\HH_\lsym)$, we denote  by $\mathcal{O}^*$ its adjoint with respect to this inner product.

In the Heisenberg picture, the Davies Lindbladian of the CSS code  is given by a sum of two parts 
  \begin{equation}\label{def:LL}
    \LL^{\strplq}_\lsym = \LL^{\str}_\lsym + \LL^{\plq}_\lsym . 
  \end{equation}
In Davies' construction~\cite{davies_Generatorsdynamicalsemigroups_1979,davies_Diffusionweaklycoupled_1972}, these parts result from tracing out interactions with a bath given by the Pauli operators $ Z_v$ and $ X_v$  with $ v \in \lsym$, respectively, and taking the weak-coupling limit. 
  For any $ R \subseteq \lsym$, the parts  $\LL_R^{\wildcard}$ with $ \wildcard \in \{ \str, \plq \} $ are sums over local terms $\LL^{\wildcard}_R \coloneqq \sum_{v\in R} \LL^{\wildcard}_{v}$ whose action on $ O \in \BB(\HH_\lsym)  $ are in standard Lindbladian form
  \begin{equation}\label{def:LLv}
    \LL^{\wildcard}_{v}(O) = \sum_{\omega} h^{\wildcard}_v(\omega) \left( L_v^{\wildcard, \dag}(\omega) O L^{\wildcard}_v(\omega) -  \frac{1}{2}\{L_v^{\wildcard, \dag}(\omega)L^{\wildcard}_v(\omega), O\} \right).
  \end{equation}
  The summation is over all Bohr frequencies $\omega$, i.e.\ differences of energies of $H^{\wildcard}_{v}$, and $\{\cdot,\cdot\}$ denotes the anti-commutator.
  The jump operators $L^{\wildcard}_v(\omega)$ are given by
  \begin{equation}\label{def:jumpops}
    L_v^\str(\omega) \coloneqq Z_v P^\str_{\bdy v}(\omega)
    \quad \mathrm{and} \quad
    L_v^\plq(\omega) \coloneqq X_v P^\plq_{\cobdy v}(\omega) ,
  \end{equation}
  where $P^\str_{\bdy v}(\omega)$ and $P^\plq_{\cobdy v}(\omega)$ are the spectral projections of
  $  -2\sum_{s\in \bdy v} A_s $ and $ -2\sum_{p\in \cobdy v}B_p $,
  respectively. Note that the eigenvalues of the latter are in one-to-one correspondence with the energy differences of $H^{\str}_{v} $ and $H^{\plq}_{v}$.
  The jump rates incorporate the bath's inverse temperature $ \beta $, are assumed to be strictly positive, $h^{\wildcard}_v(\omega)> 0 $, and satisfy the detailed balance condition: 
  \begin{equation}\label{eq:jumprate}
    h^{\wildcard}_v(-\omega) = h^{\wildcard}_v(\omega) e^{{-\beta \omega}} . 
  \end{equation}
  An example would be $ h_v(\omega) = \big( 1+ e^{-\beta\omega}\big)^{-1}$. 
  To avoid issues with jump rates vanishing asymptotically at infinity, in our main result we also need the following condition.
  \begin{definition}\label{def:jumprates}
  For a family of CSS codes, 
    the $ \wildcard$-Lindbladian's jump rates  are said to be \emph{uniformly positive} if for any  inverse temperature $ \beta > 0 $ there is some  $ g^\wildcard > 0 $ such that uniformly in the energy differences $ \omega $ of all $ H_v^\wildcard$ and all vertices:
    $$
    \inf_{v, \omega}\  h_v^\wildcard(\omega) e^{-\beta\omega/2} \geq g^\wildcard . 
    $$
  \end{definition}
  In case of translation invariance of the CSS Hamiltonian, the infimum over $\omega$ is a minimum. In this case, the condition applies if, e.g., the jump rates $ h_v^\wildcard $ are independent of $ v$. \\

In the Schr\"odinger picture, the Davies Lindbladian acts as the (Hilbert-Schmidt) adjoint $ \mathcal{L}^{\strplq, *}_\lsym  $ on the set of quantum states 
  \begin{equation*}
    \SSS(\HH_\lsym)\coloneqq \{\sigma \in \BB(\HH_\lsym)| \sigma=\sigma^\dag,\  \sigma \geq 0, \ \Tr(\sigma)=1\} . 
  \end{equation*}
The Davies construction and the fact that the set of Pauli matices $ X_v , Z_v $ with $ v \in \lsym $ generate the full matrix algebra $ \BB(\HH_\lsym) $ ensures~\cite{davies_Generatorsdynamicalsemigroups_1979,frigerio_StationaryStatesQuantum_1978} that the Gibbs state $ \rho_\lsym^\strplq  $ is the unique fixed point of the completely positive, trace-preserving semigroup, $ \exp\left(t\mathcal{L}^{\strplq, *}_\lsym\right) $, $ t \geq 0 $. 
The infinite-time limit of this Lindbladian semigroup yields the full-rank Gibbs state 
\begin{equation*}
\lim_{t\to \infty} \exp\left(t\mathcal{L}^{\strplq, *}_\lsym\right)(\sigma) = \rho_\lsym^\strplq 
\end{equation*}
for any initial state $ \sigma \in \SSS(\HH_\lsym)$. 
More information and further properties of the Davies Lindbladians are collected in \cref{sec:davies}.
  
\subsection{MLSI and rapid mixing towards equilibrium}

For a general Lindbladian $ \mathcal{L}_\lsym $, the long-time limit in the Schr\"odinger picture will be denoted by 
\begin{equation}\label{def:LTlimit}
  \E^*_{\lsym} \coloneqq \lim_{t\to\infty }e^{t\LL^*_{\lsym}} ,
\end{equation}
assuming its existence, as in the case studied here.
One of our goals is to present a bound on the mixing time 
\begin{equation}\label{eq:mixtime}
  t_{\mathrm{mix}}(\varepsilon) = \inf \{t\geq 0 \ | \ \forall \sigma \in \SSS(\HH_{\lsym}):\ \|e^{t\LL^*_{\lsym}} (\sigma) - \E_{\lsym}^*(\sigma)\|_{1}\leq \varepsilon\} 
\end{equation}
corresponding to an error $\varepsilon> 0 $ in trace-distance $ \| \cdot \|_1 \coloneqq \Tr | \cdot | $. This will be accomplished by deriving a modified logarithmic Sobolev inequality (MLSI).

\begin{definition} 
A family of Lindbladians $\LL_{\lsym}$ indexed by $ \lsym \in \FF$ is said to satisfy an \emph{MLSI with constant $\alpha >0$} if for any $\Lambda\in \FF$ and any full-rank state $\sigma \in \mathcal{S}(\HH_{\lsym})$:
\begin{equation}\label{eq:defmlsi}
  2 \alpha \ D(\sigma \|\E^*_{\lsym}(\sigma))\leq EP_{\lsym}(\sigma) .
\end{equation}
Here 
$D(\sigma\|\sigma') \coloneqq \Tr(\sigma(\ln\sigma - \ln \sigma')) $ 
is the relative entropy between full-rank states $ \sigma, \sigma' \in  \SSS(\HH_\lsym) $, and 
\begin{equation*}
  EP_{\lsym} (\sigma)\coloneqq -\Tr(\LL^*_{\lsym}(\sigma)(\ln \sigma - \ln \E^*_{\lsym}(\sigma)))
\end{equation*}
is the entropy production.
\end{definition}
The name entropy production stems from Spohn's computation~\cite{spohn_Entropyproductionquantum_1978}
\begin{equation}\label{eq:EPder}
EP_{\lsym}(\sigma) = -\frac{\dd}{\dd t}D(e^{t\LL^*_{\lsym}}(\sigma)  \| \E^*_{\lsym}(\sigma)) \big|_{t=0} . 
\end{equation}
While an MLSI constant $ \alpha $, which depends on $ \lsym  $, is easily derived \cite{gao_Completeentropicinequalities_2022}, e.g.\ from the spectral gap of the Lindbladian on a finite $ \lsym$, we will aim at constants $ \alpha $, which are uniform in the system size.\\

Let us briefly recall how to obtain 
a bound on the mixing time~\eqref{eq:mixtime} from the MLSI. First, Grönwalls's inequality allows to integrate \eqref{eq:EPder} when inserting~\eqref{eq:defmlsi}:
\begin{equation}\label{eq:intmlsi}
  D(e^{t\LL^*_{\lsym}}(\sigma) \| \E^*_{\lsym}(\sigma))\leq e^{-2\alpha t} D(\sigma \| \E^*_{\lsym}(\sigma)) \leq e^{-2 \alpha t} \ln(1/(\E^*_{\lsym}(\sigma))_{\mathrm{min}}) .
\end{equation}
Here $(\E^*_{\lsym}(\sigma))_{\mathrm{min}}$ is the minimal eigenvalue of $\E^*_{\lsym}(\sigma)$.  Since the latter has full rank, see \cref{lem:condexpkern}, this is always non-zero. 
For the Davies semigroup, $\E^*_{\lsym}(\sigma)$ agrees with the Gibbs state $ \rho_\lsym^\strplq $ and the negative logarithm of its minimal eigenvalue is $\beta \| H_{\lsym}^\strplq \| + \ln{Z_{\lsym}^\strplq} $. Since the CSS Hamiltonian is a sum of bounded, local terms, its operator norm $ \| H_\lsym^\strplq \| $ is a polynomial, $ \mathrm{poly}(|\lsym|) $, in the system's size.
By the quantum Pinsker inequality this then yields:
\begin{equation}\label{eq:rapidmix}
    \|e^{t\LL^*_{\lsym}} \sigma - \E_{\lsym}^*\sigma\|_{1} \leq \sqrt{2 D(e^{t\LL^*_{\lsym}}(\sigma) \| \E^*_{\lsym}(\sigma))} \leq \mathrm{poly}(|\lsym|) e^{-\alpha t} .
\end{equation}
In other words, if $ \alpha > 0 $ is independent of $ \lsym $, the mixing time $ t_{\mathrm{mix}}(\varepsilon) $ scales polylogarithmically in the system size. 
As shown in \cref{sec:appendix:full-rank} using a continuity argument, an MLSI for full-rank states implies rapid mixing in the sense of~\eqref{eq:rapidmix} for all initial states $ \sigma \in \SSS(\HH_{\lsym})$. 

The estimate~\eqref{eq:rapidmix} derived from the MLSI should be compared with the one obtained from a spectral gap, 
$ \gap \mathcal{L}_\lsym $, of the Lindbladian \cite{temme_chi2divergenceMixing_2010,kastoryano_QuantumLogarithmicSobolev_2013}:
\begin{equation*}
  \| e^{t\mathcal{L}_\lsym^* }\sigma - \E_{\lsym}^*(\sigma) \|_1 \leq e^{-t \, \gap \mathcal{L}_\lsym}  \sqrt{(\E^*_{\lsym}(\sigma))_{\mathrm{min}}^{-1}} \leq \exp\left[ \mathrm{poly}(|\lsym|)  -t \, \gap \mathcal{L}_\lsym\right] .
\end{equation*}
In case the gap is strictly positive uniformly lower bounded in $ \lsym $, this corresponds to a bound on the mixing time, which is only polynomial in the system size. 
For commuting Hamiltonians such as CSS codes, conditions ensuring a lower bound on $ \gap \mathcal{L}_\lsym$ which is uniform in the system size $ |\lsym |$, have been established in several works (see e.g.\ \cite{alicki_thermalizationKitaevs2D_2009,temme_LowerBoundsSpectral_2013,kastoryano_QuantumGibbsSamplers_2016,lucia_SpectralGapBounds_2025}). \\

The use of MLSIs is not limited to deriving tight bounds on the mixing time. 
In particular, it implies \cite{kastoryano_QuantumLogarithmicSobolev_2013} a Poincar\'e inequality, i.e.\ a uniform lower bound on $\gap \mathcal{L}_\lsym $. 
Through quantum functional and transport cost inequalities,  MLSIs relate to concentration of measure results with applications to e.g.\  the eigenstate thermalization hypothesis \cite{rouze_ConcentrationQuantumStates_2019,depalma_QuantumWassersteinDistance_2021,depalma_QuantumConcentrationInequalities_2022}.

\subsection{MLSI from uniform  correlation decay}
Given the star or plaquette Gibbs state~\eqref{eq:Gibbsstate}, a local Gibbs
state associated with $ R \subseteq \lsym$ is 
    \begin{equation}\label{def:restorGNS}
    \hat\rho^\wildcard_R \coloneqq e^{-\beta H^\wildcard_R}\left(\Tr_R e^{-\beta H^\wildcard_R} \right)^{-1} .
  \end{equation}
  For $ \wildcard \in \{\str,\plq \}$, both factors in the right side are diagonal in the canonical $ X$- respectively $ Z$-basis. The second operator on the right side restores the normalization, and results from the partial trace $ {\Tr_R} $ over the $ \HH_R $-component of the Hilbert space, where we use the convention that the partial trace is tensored with the identity $\identity_R$ on $R$.
  For the full region, $R=\lsym$, this state coincides with the Gibbs state, $\hat{\rho}_\lsym^{\wildcard} = \rho_\lsym^\wildcard$. As will be explained in more detail in \cref{sec:static}, for general $ R \subset \lsym $ the state $ \hat\rho^\wildcard_R $ 
differs from the reduced state $\rho^\wildcard_R \coloneqq \Tr_{R^c} \rho^\wildcard_\lsym $ in which one traces out the complement $ R^c \coloneqq \lsym \backslash R $. \\

Our condition for the validity of an MLSI will involve the decay of correlations encoded in the star and plaquette states~\eqref{def:restorGNS}. 
The decay condition involves a geometry sketched in \cref{fig:approxtensor}: two rectangular subsets 
\begin{equation*}
    UV  \coloneqq  U \uplus V , \quad\mbox{and}\quad VW\coloneqq  V\uplus W ,
\end{equation*} 
which overlap in $ V $. The latter is disjoint from both $ U $ and $ W $. These sets form the disjoint union $ UVW \coloneqq U \uplus V \uplus W$. 

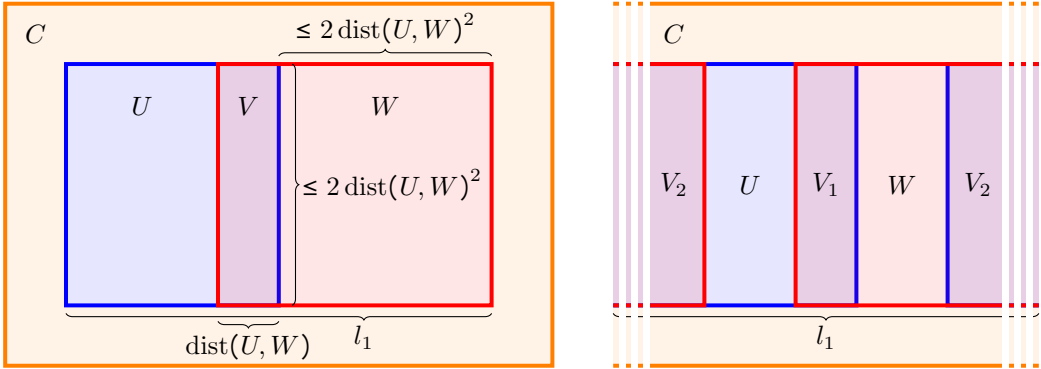
\begin{figure}[ht]
  \centering
  \begin{tikzpicture}[scale=1]
\begin{scope}[shift={(-4,0)}, scale=0.8]
    \draw[orange, fill=orange, fill opacity=0.1, line width=1.5pt] (-1, -1) rectangle (8, 5);
    \fill[white] (0, 0) rectangle (7, 4);

    \draw[blue, fill=blue, fill opacity=0.1, line width=1.5pt] (0, 0) rectangle (3.5, 4);

    \draw[red, fill=red, fill opacity=0.1, line width=1.5pt] (2.5, 0) rectangle (7, 4);
    
    \node at (-0.5, 4.5) {$C$};
    \node at (1.25, 3.3) {$U$};
    \node at (3, 3.3) {$V$};
    \node at (5.25, 3.3) {$W$};

    \draw[decoration={brace,mirror,raise=5pt},decorate] (3.5,0) -- node[right=5pt] {$ \leq 2 \dist(U,W)^2$} (3.5,4);

    \draw[decoration={brace,mirror,raise=3pt},decorate] (7,4) -- node[above=4pt] {$ \leq 2 \dist(U,W)^2$} (3.5,4);

    \draw[decoration={brace,mirror,raise=6pt},decorate] (2.5,0) -- node[below=7pt] {$\dist(U,W)$} (3.5,0);
    \draw[decoration={brace,mirror,raise=3pt,aspect=0.7},decorate] (0,0) -- node[pos=0.7, below=4pt] {$l_1$} (7,0);
    
\end{scope}

\begin{scope}[shift={(4,0)}, scale=0.8]
    \fill[orange, fill opacity=0.1] (-1, -1) rectangle (6, 5);
    \draw[orange, line width=1.5pt] (-1, -1) -- (6, -1);
    \draw[orange, line width=1.5pt] (-1, 5) -- (6, 5);
    \fill[white] (-1, 0) rectangle (6, 4);

    \fill[blue, fill opacity=0.1] (-1, 0) rectangle (3, 4);
    \draw[blue, line width=1.5pt] (-1, 0) -- (3, 0) -- (3,4) -- (-1,4);
    \fill[blue, fill opacity=0.1] (6, 0) rectangle (4.5, 4);
    \draw[blue, line width=1.5pt] (6, 0) -- (4.5, 0) -- (4.5,4) -- (6,4);

    \fill[red, fill opacity=0.1] (2, 0) rectangle (6, 4);
    \draw[red, line width=1.5pt] (6,4) -- (2,4) -- (2, 0) -- (6, 0);
    \fill[red, fill opacity=0.1] (-1, 0) rectangle (0.5, 4);
    \draw[red, line width=1.5pt] (-1, 0) -- (0.5, 0) -- (0.5,4) -- (-1,4);

    \draw[decoration={brace,mirror,raise=3pt,aspect=0.5},decorate] (-1,0) -- node[pos=0.5, below=4pt] {$l_1$} (6,0);

    \node at (0, 4.5) {$C$};
    \node at (1.25, 2) {$U$};
    \node at (2.5, 2) {$V_1$};
    \node at (5, 2) {$V_2$};
    \node at (0, 2) {$V_2$};
    \node at (3.75, 2) {$W$};

    \fill[white] (-0.9,-1.05) rectangle (-0.8,5.05);
    \fill[white] (-0.7,-1.05) rectangle (-0.6,5.05);
    \fill[white] (-0.5,-1.05) rectangle (-0.4,5.05);

    \fill[white] (5.9,-1.05) rectangle (5.8,5.05);
    \fill[white] (5.7,-1.05) rectangle (5.6,5.05);
    \fill[white] (5.5,-1.05) rectangle (5.4,5.05);

\end{scope}
\end{tikzpicture}
  \caption{Geometry in the DS-condition. Left: Two rectangles $U\uplus V$ and $V\uplus W$ intersecting in $V$. The example shows an alignment of $ U $ and $ W $ along the first coordinate direction.  Their union $U \uplus V\uplus W$ is again a rectangle. The height and width of $W$ are bounded by the square of the width of $V$. Right: The same setup for periodic boundaries. Now $V=V_1\uplus V_2$ is a disjoint union of two rectangles. The quantity $ l_1 $ appears in the proof of the MLSI in \cref{sec:mainresult}.}
  \label{fig:approxtensor}
\end{figure}
\begin{definition}\label{def:dscond}
    For a family of CSS Hamiltonians on a $D$-dimensional lattice and $ \wildcard \in \{\str,\plq \}$, we say that the $\wildcard$-part of the family of Gibbs states satisfies the \emph{DS-condition} at inverse temperature $ \beta > 0 $ with constants $K^\wildcard,\xi^\wildcard \in (0,\infty) $ if there is some length scale $ L_0^\wildcard > 0 $ such that
    \begin{equation}\label{eq:corrdecaywildcard}
      \left\| \frac{\Tr_{VW} \hat{\rho}^\wildcard_{UVW}}{\Tr_{V} \hat{\rho}^\wildcard_{UV}} -1 \right\| \leq K^\wildcard e^{-\xi^\wildcard \dist(U,W)}
    \end{equation}
    holds for all $\lsym\in\famsym$, and any pair of overlapping rectangles $UV\subseteq \lsym$ and $VW \subseteq \lsym$ with $\diam(W)\leq 2 \dist(U,W)^2$, and such that $UVW$ is also a rectangle,  whose diameter is bigger or equal to $ L_0^\wildcard  $. \\
    We will write $\mathrm{DS}_\beta^\wildcard(K, \xi)$ as a shorthand for the DS condition with parameters $K$ and $\xi$ above.
\end{definition}
Requiring $\diam(W)\leq 2 \dist(U,W)^2$ weakens the condition to rectangles for which the overlap is not too short. This does not impede the proof but will allow us to absorb some dependence on the size of $W$ in the proof of MLSI in \cref{sec:mainresult}.  

As will be explained in \cref{sec:decaycond} below, the DS-condition of the $ \wildcard $-part of the quantum Gibbs state is in fact implied by the Dobrushin-Shlosman characterization \cite{dobrushin_CompletelyAnalyticalInteractions_1987} of the high-temperature phase of the classical Gibbs measure $\mu^\wildcard$. In fact, since the states are diagonal in the $ X $- or $ Z $-basis, the operator norm on the left side of \eqref{eq:corrdecaywildcard} is nothing but an infinity norm of a ratio of conditional probabilities related to $ \mu^\wildcard$. E.g.\ for the star part one has
\begin{equation}\label{eq:DScondprob}
    \left\| \frac{\Tr_{VW} \hat{\rho}^{\str}_{UVW}}{\Tr_{V} \hat{\rho}^{\str}_{UV}} -1 \right\|
    = \max_{\substack{\textbf{x}_{U}\in 2^U \!\! ,\textbf{x}_{C} \in 2^C \\ \textbf{x}_{W} \in 2^W}} \left| \frac{\mu_{UC}^\str(\textbf{x}_{U}|\textbf{x}_{C})}{\mu_{UWC}^\str(\textbf{x}_{U}|\textbf{x}_{W}\textbf{x}_{C})} -1\right| \ ,
  \end{equation}
  where $\mu_{U(W)C}^\str( \cdot \ | \ (\textbf{x}_{W}) \textbf{x}_{C}) $ are conditional probabilities associated with the marginal  (defined in~\eqref{eq:defcondGibbs})  of the star Gibbs measure $ \mu^\str $ on $ U(W)C$ with $ C \coloneqq (UVW)^c$.
Based on \cref{lem:weakDS} below and known classical equivalence relations \cite{dobrushin_CompletelyAnalyticalInteractions_1987,stroock_logarithmicsobolevinequality_1992a,martinelli_ApproachEquilibriumGlauber_1994,martinelli_ApproachEquilibriumGlauber_1994a,martinelli_LecturesGlauberDynamics_1999}, the exponential decay in the distance $ \dist(U,W) $ of the conditional probabilities in~\eqref{eq:DScondprob} is also implied by one of the following spatial or equivalently temporal mixing conditions of the $ \wildcard$-part of the underlying classical $ \Z_2 $-lattice gauge theory:
\begin{itemize}
\item the exponential
decay of finite-volume covariances, uniformly in the boundary conditions (dubbed `SMT-condition' in~\cite{martinelli_LecturesGlauberDynamics_1999}).
\item the uniform spectral gap on the generator of the classical Glauber dynamics  \cite{stroock_logarithmicsobolevinequality_1992a,yoshida_RelaxedCriteriaDobrushinShlosman_1997}.
\item the uniform boundedness of the logarithmic Sobolev constant of the classical Glauber dynamics \cite{stroock_logarithmicsobolevinequality_1992a,martinelli_ApproachEquilibriumGlauber_1994,martinelli_ApproachEquilibriumGlauber_1994a}. 
\end{itemize}
As will be explained in detail in \cref{sec:trivhight}, the DS-condition applies at \emph{any} positive temperature to the star and plaquette part of any CSS code for which each qubit takes part in exactly two star and plaquette interactions. This is clearly the case for the $2$D toric code for which, e.g.\  the left side in~\eqref{eq:DScondprob} is then bounded by a temperature-dependent constant times $  \exp(- |\strset_V | |\ln\tanh(\beta)|) $, i.e.\ superexponentially in $ \dist(U, W) $. 
Notably, the same argument also applies to the star part of the $3$D toric code at any positive temperature.

Despite this suggestive close relation to a classical Glauber dynamics, we caution the reader that the nature of the $ \wildcard$-Lindbladian dynamics is much richer in that it acts on all quantum states as initial states and not just on those diagonal in a particular basis. In this sense, the following main result is also not simply implied by the classical results in~\cite{stroock_logarithmicsobolevinequality_1992a,martinelli_ApproachEquilibriumGlauber_1994,martinelli_ApproachEquilibriumGlauber_1994a}.\\

The main result is a pair of MLSIs for the star and plaquette part of the Lindbladian. Aside from the DS-condition, it only requires translation invariance and the uniform positivity of the jump rates in the sense of \cref{def:jumprates}. 
\begin{theorem}\label{thm:main}
  For a family of translation-invariant CSS Hamiltonians on a $D$-dimensional lattice for which the $ \wildcard \in \{ \str, \plq \} $ part at inverse temperature $ \beta > 0 $ satisfies $\mathrm{DS}_\beta^\wildcard(K^\wildcard, \xi^\wildcard)$ with constants $K^\wildcard <\infty$, $\xi^\wildcard>0$, and the $ \wildcard $-Lindbladian has uniformly positive jump rates, 
 there exists $\alpha^\wildcard>0$, which is independent of $\lsym$, such that for all $\lsym \in \FF $ the Lindbladian $\LL_\lsym^\wildcard$ satisfies the MLSI with constant $\alpha^\wildcard$.
\end{theorem}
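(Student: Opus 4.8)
\emph{Reduction, strategy, and base case.}
Since the exchange $X\leftrightarrow Z$ interchanges the star and plaquette parts, it suffices to prove the statement for $\wildcard=\str$; fix this, write $\LL_R=\LL^\str_R$ and $\E^*_R=\E^{\str,*}_R$ for $R\subseteq\lsym$, and let $\alpha(\lsym)$ denote the largest MLSI constant of $\LL_\lsym$. The plan is a Stroock--Zegarlinski/Martinelli--Olivieri-style multiscale recursion \cite{stroock_logarithmicsobolevinequality_1992a,martinelli_ApproachEquilibriumGlauber_1994,martinelli_ApproachEquilibriumGlauber_1994a} showing that $\alpha(\lsym)^{-1}$ stays bounded as $\lsym$ grows: one step reduces the longest side of the rectangle by a fixed fraction at the price of a factor $1+o(1)$, and after $O(\log|\lsym|)$ steps one lands on rectangles of bounded diameter for which an MLSI holds with a fixed positive constant. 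For that base case: on any fixed finite rectangle $R$ the Lindbladian $\LL_R$ is primitive and hence satisfies an MLSI with a strictly positive constant \cite{gao_Completeentropicinequalities_2022}; translation invariance leaves only finitely many congruence types of rectangles with $\diam R\le\ell_0$ (including the finitely many boundary-abutting types), and the uniform positivity of the jump rates (\cref{def:jumprates}) produces for each of them a spectral-gap, hence MLSI, lower bound that does not deteriorate with the location of $R$. This yields $\alpha_0=\alpha_0(\ell_0)>0$ uniform over all such $R$; fix $\ell_0$ larger than $L_0^\str$ and than the other thresholds occurring below.

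\emph{Recursion step: approximate tensorization.}
The core estimate is a quasi-factorization of the relative entropy: for a sub-rectangle $UVW\subseteq\lsym$ decomposed as in \cref{fig:approxtensor}, with $\diam(W)\le 2\dist(U,W)^2$ and $\diam(UVW)\ge L_0^\str$, I would prove that for every full-rank state $\sigma$,
\begin{equation*}
  D\!\left(\sigma\,\big\|\,\E^*_{UVW}(\sigma)\right)\ \le\ \frac{1}{1-\delta\big(\dist(U,W)\big)}\,\Big(D\!\left(\sigma\,\big\|\,\E^*_{UV}(\sigma)\right)+D\!\left(\sigma\,\big\|\,\E^*_{VW}(\sigma)\right)\Big),
\end{equation*}
with $\delta(r)\le C K^\str e^{-\xi^\str r}$. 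Here the new ingredient announced in the introduction enters: the explicit form of the Davies conditional expectation exhibits $\E^*_R$ as a projection onto star-syndrome sectors followed by the classical conditional expectation of the Gibbs measure $\mu^\str$, so that (after a data-processing/operator-convexity reduction) the quantum quasi-factorization reduces to the classical quasi-factorization of $\mathrm{Ent}_{\mu^\str}$, whose multiplicative error is precisely the quantity appearing in \eqref{eq:DScondprob} and is therefore controlled by $\mathrm{DS}^\str_\beta(K^\str,\xi^\str)$; the hypothesis $\diam(W)\le 2\dist(U,W)^2$ is what allows the size of $W$ to be absorbed into the exponential. Inserting the (inductive) MLSI on the two sub-rectangles, $D(\sigma\|\E^*_{UV}\sigma)\le\frac{1}{2\alpha(UV)}EP_{UV}(\sigma)$ and likewise for $VW$, together with the splitting of the entropy production into region-independent single-site contributions — so that $EP_{UV}(\sigma)+EP_{VW}(\sigma)=EP_{UVW}(\sigma)+EP_V(\sigma)$, the extra $EP_V$ being the doubly counted overlap — gives
\begin{equation*}
  D\!\left(\sigma\,\big\|\,\E^*_{UVW}(\sigma)\right)\ \le\ \frac{1}{2\,(1-\delta)\,\min\{\alpha(UV),\alpha(VW)\}}\,\Big(EP_{UVW}(\sigma)+EP_V(\sigma)\Big).
\end{equation*}

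\emph{Iteration.}
I would apply this to $\lsym$ itself by cutting its currently-longest coordinate direction, taking $UV$ and $VW$ each of roughly half the length and overlapping in a slab $V$ of width $w$ of order the square root of that length, so that $\diam(W)\le 2w^2$ holds; to dispose of the spurious $EP_V$ I would smear the cut over $\Theta(w^2)$ translates of its position, whereupon each site lies in the overlap of only an $O(1/w)=o(1)$ fraction of the translates, and averaging yields $\alpha(\lsym)^{-1}\le (1+o(1))(1-\delta)^{-1}\max\{\alpha(UV)^{-1},\alpha(VW)^{-1}\}$. Iterating, the side lengths decrease geometrically, so there are $O(\log|\lsym|)$ steps, and at the scale where the longest side equals $s$ one has $\delta\lesssim K^\str e^{-c\,\xi^\str\sqrt{s}}$ and overlap fraction $\lesssim s^{-1/2}$; since the relevant scales form a geometric sequence bounded below by $\ell_0$, both $\prod(1-\delta)^{-1}$ and $\prod(1+o(1))$ converge to a constant $C'<\infty$ independent of $\lsym$. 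Hence $\alpha(\lsym)\ge \alpha_0/C'$ for all $\lsym\in\famsym$, which is the assertion.

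\emph{Main obstacle.}
The hard part is the quantum approximate tensorization with the correct error — exponentially small in $\dist(U,W)$ and nothing worse. The classical arguments obtain it through probabilistic conditioning, which has no literal quantum analogue; one must instead use the explicit description of $\E^*_R$ to reduce to the classical estimate without degrading the geometry, while keeping the entropy-production side intact. It is precisely in establishing that $\E^*_R$ has the required product-like structure, that the entropy production decomposes into region-independent single-site pieces, and that the base constant $\alpha_0$ can be taken location-independent, that the explicit conditional expectations, translation invariance, and the uniform positivity of the jump rates (\cref{def:jumprates}) are all genuinely used.
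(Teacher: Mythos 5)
Your overall architecture coincides with the paper's: a multiscale recursion on rectangles driven by an approximate tensorization of relative entropy whose error is controlled by the DS-condition, an averaging over cut positions to neutralize the doubly-counted overlap in the entropy production, and a base case at a finite scale obtained from a spectral gap together with translation invariance and the uniformly positive jump rates (the paper implements this base case via the bound $\alpha \geq \gap \LL_R^\str / C_{cb}(\E_R^\str)$ of \cite{gao_Completeentropicinequalities_2022}; note that what is needed is this complete/indexed version, since $\E_R^{\str,*}(\sigma)$ is a state on all of $\lsym$, not mere primitivity of $\LL_R$ on $\HH_R$). Your handling of the overlap, via the exact single-site decomposition $EP_{UV}+EP_{VW}=EP_{UVW}+EP_V$ and smearing the cut, is legitimate given the frustration-freeness and log-difference facts (cf.\ \cref{pro:entropyprodprop}); the paper instead averages over a family of partitions with pairwise disjoint overlaps and uses only subadditivity, which is a cosmetic difference (you do omit the periodic/winding case where $V$ has two components, but that is minor).

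The genuine gap is in the central step, which you yourself flag as the hard part but then only assert: the quantum approximate tensorization with error $\delta \lesssim K^\str e^{-\xi^\str \dist(U,W)}$. Your proposed mechanism — that the explicit form of $\E_R^\str$ lets one reduce, ``after a data-processing/operator-convexity reduction,'' to the classical quasi-factorization of $\mathrm{Ent}_{\mu^\str}$ — does not go through as stated. The image of $\E_R^\str$ is a noncommutative algebra (the commutant of $\{Z_v,\sum_{s\in\bdy v}A_s\}$), and for a general quantum state $\sigma$ the quantities $D(\sigma\|\E_R^{\str,*}(\sigma))$ are not classical entropy functionals of $\mu^\str$; moreover data processing under the relevant pinchings only decreases both relative entropies and cannot produce the needed upper bound on $D(\sigma\|\E^{\str,*}_{UVW}(\sigma))$ in terms of the two sub-rectangle entropies. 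The paper's route is different and is where the explicit formula $\E_R^\str(\cdot)=\TT_R^\str\circ\PP_R^\str(\hat\rho_R^\str\,\cdot)$ of \cref{lem:condexp} actually earns its keep: one first proves the complete-positivity-order sandwich $(1-2\varepsilon)\,\E_{UVW}^\str \leq \E_{UV}^\str\circ\E_{VW}^\str \leq (1+2\varepsilon)\,\E_{UVW}^\str$ (\cref{lem:positivitybound}), using commutation of the pinchings, the overlap identity $\PP_{UV}^\str\circ\PP_{VW}^\str=\PP_{UVW}^\str$, and the DS ratio bound on $\Tr_{VW}\hat\rho_{UVW}^\str/\Tr_V\hat\rho_{UV}^\str$; this operator ordering is then converted into approximate tensorization by the entropic comparison lemma of \cite{gao_Completepositivityorder_2025} (\cref{prop:lem2.3}), followed by a chain-rule/data-processing split of $D(\sigma\|\E^{\str,*}_{VW}\circ\E^{\str,*}_{UV}(\sigma))$. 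Without this (or an equivalent substitute), your recursion step is unsupported, so the proposal as written does not constitute a proof of the theorem.
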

The proof of this main result is found in \cref{sec:mainresult}.
As an immediate corollary,  we conclude the MLSI for the full Lindbladian if both the star and plaquette parts satisfy the conditions in~\cref{thm:main}.
\begin{corollary}\label{corMLSIfull}
    If for a family of translation-invariant CSS Hamiltonians on a $D$-dimensional lattice both parts, $ \str$ and $ \plq $, satisfy the conditions of  \cref{thm:main}, then the full Davies Lindbladian $ \LL_\lsym^\strplq $ satisfies an MLSI with constant $$\alpha^\strplq = \min\{\alpha^\str, \alpha^\plq\} . $$
\end{corollary}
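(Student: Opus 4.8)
The plan is to add the two MLSIs from \cref{thm:main}, exploiting that the star and plaquette Lindbladians dephase mutually unbiased --- hence, in a precise sense, independent --- degrees of freedom, so that both the entropy production and the relative entropy split additively over $\str$ and $\plq$.

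First, since the $X_v,Z_v$ ($v\in\lsym$) generate $\BB(\HH_\lsym)$, the full Lindbladian is primitive and $\E^{\strplq,*}_\lsym(\sigma)=\rho^\strplq_\lsym$ for every state $\sigma$; by \cref{thm:equi} this fixed point factorizes, $\rho^\strplq_\lsym=2^{|\lsym|}\rho^\str_\lsym\rho^\plq_\lsym$, with commuting factors. As $\rho^\plq_\lsym$ is a function of the $B_p$, it commutes with every $A_s$ and with the star jump operators, so it lies in the fixed-point algebra $\mathcal{N}^\str_\lsym\coloneqq\ker\LL^\str_\lsym$; in particular $\rho^\strplq_\lsym$ is a stationary state of $\LL^{\str,*}_\lsym$, and symmetrically of $\LL^{\plq,*}_\lsym$. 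Expanding $\LL^{\strplq,*}_\lsym=\LL^{\str,*}_\lsym+\LL^{\plq,*}_\lsym$ and inserting $\E^{\strplq,*}_\lsym(\sigma)=\rho^\strplq_\lsym$, one then checks that the entropy production is additive, $EP^\strplq_\lsym(\sigma)=EP^\str_\lsym(\sigma)+EP^\plq_\lsym(\sigma)$. The only point is that $-\Tr\big(\LL^{\str,*}_\lsym(\sigma)(\ln\sigma-\ln\rho^\strplq_\lsym)\big)$ equals the star entropy production $EP^\str_\lsym(\sigma)$, i.e.\ $\Tr\big(\LL^{\str,*}_\lsym(\sigma)(\ln\E^{\str,*}_\lsym(\sigma)-\ln\rho^\strplq_\lsym)\big)=0$; this holds because, by the explicit form of the Davies conditional expectation, $\E^{\str,*}_\lsym(\sigma)$ is an $\mathcal{N}^\str_\lsym$-multiple of $\rho^\str_\lsym$ whose $\mathcal{N}^\str_\lsym$-factor commutes with $\rho^\str_\lsym$, as does the $\mathcal{N}^\str_\lsym$-factor $2^{|\lsym|}\rho^\plq_\lsym$ of $\rho^\strplq_\lsym$, so $\ln\E^{\str,*}_\lsym(\sigma)-\ln\rho^\strplq_\lsym\in\mathcal{N}^\str_\lsym=\ker\LL^\str_\lsym$ and is annihilated by $\LL^\str_\lsym$ inside the trace against $\LL^{\str,*}_\lsym(\sigma)$.

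The central ingredient is the tensorization
\[ D\big(\sigma\,\big\|\,\rho^\strplq_\lsym\big)\ \le\ D\big(\sigma\,\big\|\,\E^{\str,*}_\lsym(\sigma)\big)+D\big(\sigma\,\big\|\,\E^{\plq,*}_\lsym(\sigma)\big). \]
Granting this, the corollary is immediate: with $\alpha^\strplq=\min\{\alpha^\str,\alpha^\plq\}$, the two MLSIs of \cref{thm:main} and the additivity of $EP^\strplq_\lsym$ give
\[ 2\alpha^\strplq D\big(\sigma\,\big\|\,\rho^\strplq_\lsym\big)\ \le\ 2\alpha^\str D\big(\sigma\,\big\|\,\E^{\str,*}_\lsym(\sigma)\big)+2\alpha^\plq D\big(\sigma\,\big\|\,\E^{\plq,*}_\lsym(\sigma)\big)\ \le\ EP^\str_\lsym(\sigma)+EP^\plq_\lsym(\sigma)\ =\ EP^\strplq_\lsym(\sigma). \]
To prove the tensorization I would use that $\E^{\str,*}_\lsym$ and $\E^{\plq,*}_\lsym$ are conditional expectations onto $\mathcal{N}^\str_\lsym$ and $\mathcal{N}^\plq_\lsym$, both compatible with $\rho^\strplq_\lsym$, and that these algebras are diagonal in the $Z$- and $X$-eigenbases respectively --- so that $\mathcal{N}^\str_\lsym\cap\mathcal{N}^\plq_\lsym=\C\identity$. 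The key structural fact to establish is that the two conditional expectations commute, $\E^{\str,*}_\lsym\circ\E^{\plq,*}_\lsym=\E^{\plq,*}_\lsym\circ\E^{\str,*}_\lsym$; this forces their composition to be the conditional expectation onto $\C\identity$ compatible with $\rho^\strplq_\lsym$, i.e.\ the constant map $\sigma\mapsto\rho^\strplq_\lsym$. Then the Petz chain rule (valid since $\E^{\plq,*}_\lsym$ is a conditional expectation compatible with $\rho^\strplq_\lsym$) gives $D(\sigma\|\rho^\strplq_\lsym)=D(\sigma\|\E^{\plq,*}_\lsym(\sigma))+D(\E^{\plq,*}_\lsym(\sigma)\|\rho^\strplq_\lsym)$, and since $\rho^\strplq_\lsym=\E^{\str,*}_\lsym(\E^{\plq,*}_\lsym(\sigma))=\E^{\plq,*}_\lsym(\E^{\str,*}_\lsym(\sigma))$, the data-processing inequality for the channel $\E^{\plq,*}_\lsym$ bounds $D(\E^{\plq,*}_\lsym(\sigma)\|\rho^\strplq_\lsym)\le D(\sigma\|\E^{\str,*}_\lsym(\sigma))$, yielding the claimed inequality.

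I expect the commutation $\E^{\str,*}_\lsym\circ\E^{\plq,*}_\lsym=\E^{\plq,*}_\lsym\circ\E^{\str,*}_\lsym$ to be the main obstacle. Heuristically it should hold because the star dynamics acts only on the $X$-type and the plaquette dynamics only on the $Z$-type degrees of freedom, and the $X$- and $Z$-bases are mutually unbiased; making this rigorous should rely on the explicit expressions for the Davies conditional expectations established earlier, together with the factorization identity $\Tr(AB)=2^{-|\lsym|}\Tr(A)\Tr(B)$ for $A$ diagonal in the $X$-basis and $B$ diagonal in the $Z$-basis --- the same identity that drives \cref{thm:equi}. Once this commutation is available, the remaining steps above are routine.
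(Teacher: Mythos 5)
Your plan follows the same route as the paper: split the entropy production additively over the star and plaquette parts, bound the relative entropy to the full fixed point by the sum of the relative entropies to $\E^{\str,*}_\lsym(\sigma)$ and $\E^{\plq,*}_\lsym(\sigma)$, and feed in the two MLSIs of \cref{thm:main} with the worse constant. Those pieces of your argument are sound: the identification $\E^{\strplq,*}_\lsym(\sigma)=\rho^\strplq_\lsym$ holds by primitivity; your fixed-point-swapping argument for $EP^\strplq_\lsym=EP^\str_\lsym+EP^\plq_\lsym$ is exactly \cref{pro:logdiff} applied to the two full-rank stationary states $\E^{\wildcard,*}_\lsym(\sigma)$ and $\rho^\strplq_\lsym$ of $\LL^{\wildcard,*}_\lsym$ (this is how \cref{pro:entropyprodprop} is proved); and the chain-rule-plus-data-processing derivation of the subadditivity $D(\sigma\|\rho^\strplq_\lsym)\le D(\sigma\|\E^{\str,*}_\lsym(\sigma))+D(\sigma\|\E^{\plq,*}_\lsym(\sigma))$ is the same manipulation the paper uses (via \cref{pro:logdiff}) in its proofs of \cref{thm:approxtensor} and \cref{corMLSIfull}.

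The one genuine gap is the step you yourself flag as the main obstacle: the identity $\E^{\plq,*}_\lsym\circ\E^{\str,*}_\lsym=\E^{\str,*}_\lsym\circ\E^{\plq,*}_\lsym=\E^{\strplq,*}_\lsym$, which your data-processing step needs in the form $\rho^\strplq_\lsym=\E^{\plq,*}_\lsym\bigl(\E^{\str,*}_\lsym(\sigma)\bigr)$. You leave this as a heuristic, proposing to extract it from the explicit formulas for the conditional expectations and the mutual unbiasedness of the $X$- and $Z$-bases. That route would in fact work (via \cref{lem:condexp} and \cref{lem:pinchingcommute}, noting that the full $Z$-pinching followed by the full $X$-pinching is the normalized trace), but it is heavier than necessary: the paper gets the commutation for free from \cref{lem:lindcom}, which shows that $\LL^\str_\lsym$ and $\LL^\plq_\lsym$ commute because the jump operators commute up to signs that cancel in the Lindbladian. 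Commuting generators give commuting semigroups, so $e^{t\LL^{\strplq,*}_\lsym}=e^{t\LL^{\str,*}_\lsym}e^{t\LL^{\plq,*}_\lsym}$ and letting $t\to\infty$ yields the composition identity at once; the intersection of the two fixed-point algebras is $\C\identity$ since the $X_v,Z_v$ generate $\BB(\HH_\lsym)$, confirming that the composition is the projection onto the Gibbs state. So your outline is correct and essentially the paper's, but as written it is incomplete at precisely this point; invoking the commutation of the two Lindbladians is the missing (and simplest) closing move.
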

The proof of this corollary is also found in \cref{sec:mainresult}. 
The mixing speed is controlled by the slower of the two parts. The reason for this is that $\LL^\str_{\lsym}$ and $\LL^\plq_{\lsym}$ mix to different subalgebras. If, say, $\LL^\str_{\lsym}$ mixes much faster than $\LL^\plq_{\lsym}$, then states will be projected quickly to the kernel of $\LL^\str_{\lsym}$ but it will take a long time for $\LL^\plq_{\lsym}$ to mix them further. \\

Another important implication of \cref{thm:main} concerns a strong lack of self-error correction of CSS codes in case one of the two parts of the Lindbladian mixes rapidly. In this context, we recall that the quantum information is encoded in the logical operators $X_L, Y_L, Z_L$ of the CSS code \cite{nielsen_QuantumComputationQuantum_2012, gottesman_introductionquantumerror_2010}. We use the convention that $X_L$ is a string of Pauli $X$-operators, $Z_L$ is a string of Pauli $Z$-operators, and $Y_L = i X_L Z_L$. 
These logical operators act non-trivially on the code space, i.e., the ground-state eigenspace of $ H^\strplq_\lsym$.  
They are not contained in the algebra spanned by the star and plaquette operators~\eqref{eq:operatordef}, which all act as the identity on the code space.  All logical operators $O_L\in \{X_L, Y_L, Z_L \} $ commute with the star and plaquette operators: $$ [O_L,A_s]=[O_L,B_p]=0 . 
$$
 We show that the quantum information is already destroyed in the kernel of either of the two parts of the Lindbladian.  
In the following, we spell this for the star part. An analogous result holds for the plaquette part. 
\begin{corollary}
\label{lem:nonselfcorr}
    Consider a family of CSS Hamiltonians on a $D$-dimensional lattice with Davies-Lindbladian $\LL^\strplq_{\lsym} = \LL^\str_{\lsym} + \LL^\plq_{\lsym}$. Assume that $\LL^\str_{\lsym}$ satisfies an MLSI with a constant $ \alpha^{\str} > 0 $ independent of $\lsym$. Then for any $ \lsym  \in \famsym $ and all states $\sigma \in \SSS(\HH_{\lsym})$ and $ t > 0 $:
    \begin{equation}
        \left|\Tr( e^{t\LL^{\strplq, *}_{\lsym}}(\sigma)\ O_L )\right|  \leq \mathrm{poly}(|\lsym|)e^{-\alpha^{\str} t}  
    \end{equation}
    for both $O_L\in \{X_L, Y_L\} $, the logical $ X $- and $ Y $-operator. 
\end{corollary}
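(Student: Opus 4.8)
The plan is to reduce the statement about the logical operators $X_L$ and $Y_L$ to the MLSI hypothesis on $\LL^\str_\lsym$ via two observations: (i) $X_L$ and $Y_L$ lie outside the fixed-point algebra of $\LL^\str_\lsym$ (the kernel of $\LL^\str_\lsym$, equivalently the image of the conditional expectation $\E^{\str,*}_\lsym$), so that their expectation values under the $\str$-dynamics already decay; and (ii) the full $\strplq$-dynamics can be controlled in terms of the $\str$-dynamics alone for this particular observable. Concretely, I would first argue that $\E^{\str,*}_\lsym$ annihilates $X_L$ and $Y_L$ in the sense that $\Tr(\E^{\str,*}_\lsym(\sigma)\,O_L)=0$ for every state $\sigma$. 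Since $\E^{\str,*}_\lsym$ is the long-time limit of $e^{t\LL^{\str,*}_\lsym}$, this is equivalent, by self-adjointness of the conditional expectation with respect to the GNS/KMS inner product, to showing that the Heisenberg conditional expectation $\E^\str_\lsym$ maps $X_L$ and $Y_L$ to $0$. Here I would use the explicit description of the Davies conditional expectations promised in the introduction: the $\str$-fixed-point algebra is generated by the star operators $A_s$ together with the $Z_v$'s (it is the commutant-type algebra fixed by the $Z_v P^\str(\omega)$ jump operators), and $X_L$ — being a nontrivial logical $X$-string — anticommutes with at least one plaquette operator $B_p$ but, more to the point, has vanishing overlap (in Hilbert–Schmidt inner product) with every element of the $\str$-fixed-point algebra because it is an $X$-type Pauli string that is not a product of star generators. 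The same holds for $Y_L = iX_LZ_L$ since conjugating by $Z_L$ does not bring it into that algebra. This gives $\E^{\str,*}_\lsym(\sigma)\perp O_L$.

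Next I would relate the $\strplq$-evolution to the $\str$-evolution. The cleanest route is to use the MLSI for $\LL^\str_\lsym$ together with data-processing: apply the integrated MLSI \eqref{eq:intmlsi} to the state $\sigma_t \coloneqq e^{t\LL^{\strplq,*}_\lsym}(\sigma)$ — no, better: note that $\LL^\strplq_\lsym = \LL^\str_\lsym + \LL^\plq_\lsym$ and that $\LL^\plq_\lsym$ preserves the $\str$-fixed-point algebra (the $\str$- and $\plq$-conditional expectations commute because $A_s$ and $B_p$ commute, a fact one can extract from \cref{thm:equi} and the structure in \cref{sec:davies}), so that $\E^{\str,*}_\lsym \circ e^{t\LL^{\strplq,*}_\lsym} = \E^{\str,*}_\lsym$, i.e. the $\str$-conditional expectation of the full evolution is stationary. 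Then by the data-processing inequality for relative entropy under the CPTP map $e^{t\LL^{\strplq,*}_\lsym}$, combined with the fact that the relative entropy to the $\str$-conditional expectation can only be made to decay by $\LL^\str_\lsym$ and is non-increasing under $\LL^\plq_\lsym$, one obtains
$$
D\!\left(e^{t\LL^{\strplq,*}_\lsym}(\sigma)\,\Big\|\,\E^{\str,*}_\lsym(\sigma)\right) \leq e^{-2\alpha^\str t}\, D\!\left(\sigma\,\big\|\,\E^{\str,*}_\lsym(\sigma)\right) \leq e^{-2\alpha^\str t}\,\mathrm{poly}(|\lsym|),
$$
exactly as in \eqref{eq:intmlsi}–\eqref{eq:rapidmix}, where the polynomial bound comes from $-\ln$ of the minimal eigenvalue of the $\str$-conditional expectation (full rank by \cref{lem:condexpkern}, with $\beta\|H^\str_\lsym\|+\ln Z^\str_\lsym$ polynomial in $|\lsym|$). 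Finally, quantum Pinsker gives $\|e^{t\LL^{\strplq,*}_\lsym}(\sigma) - \E^{\str,*}_\lsym(\sigma)\|_1 \leq \mathrm{poly}(|\lsym|)\,e^{-\alpha^\str t}$, and since $\|O_L\|=1$ and $\Tr(\E^{\str,*}_\lsym(\sigma)O_L)=0$ by step (i), Hölder yields $|\Tr(e^{t\LL^{\strplq,*}_\lsym}(\sigma)O_L)| = |\Tr((e^{t\LL^{\strplq,*}_\lsym}(\sigma) - \E^{\str,*}_\lsym(\sigma))O_L)| \leq \mathrm{poly}(|\lsym|)\,e^{-\alpha^\str t}$.

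The main obstacle I anticipate is step (i) — cleanly proving that $X_L$ and $Y_L$ are orthogonal to (indeed annihilated by) the $\str$-Davies conditional expectation — and the closely related claim that $\LL^\plq_\lsym$ does not increase the relative entropy to $\E^{\str,*}_\lsym(\sigma)$ (equivalently, that the two conditional expectations commute and $\E^{\str,*}_\lsym$ is a fixed point of $e^{t\LL^{\plq,*}_\lsym}$). Both hinge on having the explicit form of the Davies conditional expectations and on a careful bookkeeping of which Pauli strings lie in the $\str$-fixed-point algebra; the logical $X$-operator is characterized precisely by being an $X$-string that commutes with all $B_p$ yet is not generated by the $A_s$, and one must check that this places it in the orthogonal complement of the $\str$-fixed-point algebra rather than merely outside it. Once the algebraic picture from \cref{sec:davies} is in hand, the entropy-decay argument is a routine repetition of \eqref{eq:intmlsi}–\eqref{eq:rapidmix} with $\strplq$ in place of a generic Lindbladian and $\E^{\str,*}_\lsym$ in place of $\E^*_\lsym$.
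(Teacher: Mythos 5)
Your overall strategy is the paper's (show the star conditional expectation kills $X_L,Y_L$, then combine the star MLSI with the commutation of the two parts of the Lindbladian), but the reduction step contains a genuine error. The identity you invoke, $\E^{\str,*}_{\lsym}\circ e^{t\LL^{\strplq,*}_{\lsym}}=\E^{\str,*}_{\lsym}$, is false: commutation of $\LL^{\str}_{\lsym}$ and $\LL^{\plq}_{\lsym}$ (\cref{lem:lindcom}) only gives $\E^{\str,*}_{\lsym}\circ e^{t\LL^{\strplq,*}_{\lsym}}=e^{t\LL^{\plq,*}_{\lsym}}\circ\E^{\str,*}_{\lsym}$, i.e.\ the projected state keeps evolving under the plaquette dynamics. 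Consequently your central display $D(e^{t\LL^{\strplq,*}_{\lsym}}(\sigma)\,\|\,\E^{\str,*}_{\lsym}(\sigma))\leq e^{-2\alpha^{\str}t}D(\sigma\,\|\,\E^{\str,*}_{\lsym}(\sigma))$ and the ensuing trace-norm bound $\|e^{t\LL^{\strplq,*}_{\lsym}}(\sigma)-\E^{\str,*}_{\lsym}(\sigma)\|_1\leq \mathrm{poly}(|\lsym|)e^{-\alpha^{\str}t}$ cannot hold: as $t\to\infty$ the left-hand sides tend to $D(\rho^{\strplq}_{\lsym}\|\E^{\str,*}_{\lsym}(\sigma))$ and $\|\rho^{\strplq}_{\lsym}-\E^{\str,*}_{\lsym}(\sigma)\|_1$, which are nonzero for generic $\sigma$ (any $\sigma$ whose star-projection differs from the Gibbs state). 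The repair is exactly the paper's move: write $e^{t\LL^{\strplq,*}_{\lsym}}=e^{t\LL^{\plq,*}_{\lsym}}\circ e^{t\LL^{\str,*}_{\lsym}}$ and compare with the \emph{moving} reference $e^{t\LL^{\plq,*}_{\lsym}}\circ\E^{\str,*}_{\lsym}(\sigma)=\E^{\str,*}_{\lsym}\circ e^{t\LL^{\plq,*}_{\lsym}}(\sigma)$, which still has zero overlap with $O_L$; then contractivity of the CPTP map $e^{t\LL^{\plq,*}_{\lsym}}$ in trace norm (or data processing) reduces everything to the star-only decay $\|e^{t\LL^{\str,*}_{\lsym}}(\sigma)-\E^{\str,*}_{\lsym}(\sigma)\|_1$, which the MLSI controls.

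Two further gaps. First, your step (i): Hilbert--Schmidt orthogonality of $X_L$ to the star fixed-point algebra does not imply $\E^{\str}_{\lsym}(X_L)=0$, because $\E^{\str}_{\lsym}$ is the orthogonal projection with respect to the $\rho^{\strplq}_{\lsym}$-weighted (GNS/KMS) inner products, not the unweighted one; also, the parenthetical claim that $X_L$ anticommutes with some $B_p$ is wrong — logical operators commute with all stabilizers. The paper closes this cleanly with the explicit formula of \cref{lem:condexp}: $\E^{\str}_{\lsym}(X_L)=\PP^{\str}_{\lsym}\circ\TT^{\str}_{\lsym}(\hat\rho^{\str}_{\lsym}X_L)=\PP^{\str}_{\lsym}\bigl(\Tr(\hat\rho^{\str}_{\lsym}X_L)\bigr)=0$, since the full $Z$-pinching of the $X$-diagonal operator $\hat\rho^{\str}_{\lsym}X_L$ is its normalized trace, which vanishes by \cref{cor:logicalexpectation}; the same computation handles $Y_L=iX_LZ_L$. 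Second, your $\mathrm{poly}(|\lsym|)$ bound via the minimal eigenvalue of $\E^{\str,*}_{\lsym}(\sigma)$ is not uniform: that eigenvalue depends on $\sigma$ and can be arbitrarily small even for full-rank $\sigma$. Instead use the chain rule of \cref{pro:logdiff}, $D(\sigma\|\E^{\str,*}_{\lsym}(\sigma))=D(\sigma\|\rho^{\strplq}_{\lsym})-D(\E^{\str,*}_{\lsym}(\sigma)\|\rho^{\strplq}_{\lsym})\leq D(\sigma\|\rho^{\strplq}_{\lsym})\leq \ln\bigl((\rho^{\strplq}_{\lsym})_{\min}^{-1}\bigr)=\mathrm{poly}(|\lsym|)$, and start from full-rank $\sigma$ with an approximation argument (\cref{pro:nonfullrank}) to cover all states.
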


The above result is proven at the end of \cref{sec:mainresult}. It shows that all off-diagonal terms in the protected subspace vanish logarithmically fast in the system size $|\lsym|$. In other words, only classical information encoded in the logical $Z$-basis can survive longer than logarithmically, and the system is hence not self-correcting. This, for example, applies to the toric code in $ 3$D, since, as explained above, its star part can be shown to satisfy the MLSI at any positive temperature.

\subsection{Comparison to existing literature}

\paragraph{Fast mixing via gap estimates.} 

This paper is the first instance in which an MLSI has been proven for the Davies dynamics associated with any CSS code, and in particular, for the celebrated 2D toric code at any positive temperature. This solves a long standing open problem. Prior works have derived the exponentially weaker spectral gap for Davies dynamics of broader classes of commuting Hamiltonians.

In \cite{kastoryano_QuantumGibbsSamplers_2016}, a sufficient condition for the positive spectral gap of a Davies Lindbladian associated with an arbitrary local, commuting Hamiltonian was established. It involves a strong form of clustering on the Gibbs state, which can be interpreted as a conditional version of decay of correlations in the $2$-norm, stronger in general than the standard covariance decay. For specific commuting models, it was shown in \cite{alicki_thermalizationKitaevs2D_2009} that the Davies generator associated with the $2$D toric code has a positive spectral gap at any positive temperature. This was revisited in \cite{ding_PolynomialTimePreparationLowTemperature_2025}, where another lower bound was provided for the gap, scaling polynomially on the system size, but only linearly on the inverse temperature. The work of  \cite{alicki_thermalizationKitaevs2D_2009} was subsequently extended to Abelian quantum double models in \cite{komar_Necessityenergybarrier_2016}, and to non-Abelian ones in \cite{lucia_ThermalizationKitaevsquantum_2023}. In the recent preprint \cite{lucia_SpectralGapBounds_2025}, the authors have further shown that a lower bound for the spectral gap of a class of GNS-symmetric generators associated to local, commuting Hamiltonians is equivalent to a mixing condition on the Gibbs state, which is known to be fulfilled at least for any finite-range 1D model, as well as by Kitaev’s quantum double models \cite{kitaev_Faulttolerantquantumcomputation_2003}.

\paragraph{Preparation of Gibbs states.}

The preparation of complex quantum systems is expected to be one of the main applications of quantum computers. Quantum Gibbs sampling lies at the heart of multiple fundamental problems in statistical physics, machine learning, and probabilistic inference. Some classical algorithms, such as the  Markov Chain Monte Carlo (MCMC) \cite{levin_MarkovChainsMixing_2008}, appear ubiquitously in the literature of Gibbs sampling, Typically, they are provably efficient at high enough temperature \cite{martinelli_ApproachEquilibriumGlauber_1994a}, but are generally believed to be efficient in practice in more generality \cite{brooks_HandbookMarkovChain_2011}. 

The classical MCMC algorithm inspired many quantum algorithms for the preparation of quantum Gibbs states \cite{gilyen_QuantumGeneralizationsGlauber_2024,jiang_QuantumMetropolisSampling_2024}, starting with the seminal quantum Metropolis algorithm \cite{temme_QuantumMetropolisSampling_2011}. In a slightly different direction, one of the primary avenues for preparing Gibbs states is based on dissipation. A collection of recent papers \cite{cleve_EfficientQuantumAlgorithms_2017,li_SimulatingMarkovianOpen_2023} have shown that a Davies Lindbladian can be implemented in $\mathcal{O}(n \log n)$ runtime on $n$ qubits. If the Davies Lindbladian mixes rapidly as is the case in which an MLSI is known~\cite{capel_Quantumconditionalrelative_2018,beigi_HypercontractivitylogarithmicSobolev_2016,bardet_ModifiedLogarithmicSobolev_2021,capel_ModifiedLogarithmicSobolev_2021,kochanowski_RapidThermalizationDissipative_2025,bergamaschi_QuantumComputationalAdvantage_2024}, then the circuit depth to prepare the Gibbs state is $\mathcal{O}(n \operatorname{polylog} n)$ (see e.g. \cite[Thm.~6]{paez-velasco_EfficientSimpleGibbs_2025}). In comparison, in case of fast mixing  the circuit depth is $\mathcal{O}( \operatorname{poly} n)$, and in the case of a spectral gap \cite{kastoryano_QuantumGibbsSamplers_2016,alicki_thermalizationKitaevs2D_2009,komar_Necessityenergybarrier_2016,lucia_ThermalizationKitaevsquantum_2023,lucia_SpectralGapBounds_2025}  it is $\mathcal{O}(n^{2} \operatorname{polylog} n)$. 
In the particular case of the toric code in 2D, this implies that the previous work \cite{alicki_thermalizationKitaevs2D_2009} yielded a circuit depth of  $\mathcal{O}(n^{2} \operatorname{polylog} n)$, and that of \cite{ding_PolynomialTimePreparationLowTemperature_2025}, a depth of $\mathcal{O}(n^{4} \operatorname{polylog} n)$, the latter with a linear dependence on $\beta$ and the former with an exponential dependence. Additionally, alternative approaches based on duality of the Hamiltonians with classical models, provide a circuit complexity of  $\mathcal{O}(n^{3/2})$ \cite{paez-velasco_EfficientSimpleGibbs_2025} for the $2$D toric code, and of $\mathcal{O}(n^{2})$ \cite{hwang_GibbsStatePreparation_2025}
 for the defective toric code, both independently of $\beta$. The current manuscript, with a depth of $\mathcal{O}(n \operatorname{polylog} n)$, implies therefore the most efficient algorithm preserving geometric locality to prepare the Gibbs state of the 2D toric code, as far as we know, for fixed $\beta$. 

 There have been some recent attempts to extend the above results to non-commuting Hamiltonians. In \cite{chen_QuantumThermalState_2023,chen_EfficientExactNoncommutative_2023}, the authors introduced a Lindbladian that is efficiently implementable in $\mathcal{O}(n \operatorname{polylog} n)$ depth for non-commuting, local Hamiltonians. This Lindbladian has been shown to mix rapidly at high temperatures \cite{rouze_OptimalQuantumAlgorithm_2024,rouze_EfficientThermalizationUniversal_2024}, hence preparing the corresponding Gibbs state efficiently. Another family of efficiently implementable Lindbladians for non-commuting Hamiltonians was presented in \cite{ding_EfficientQuantumGibbs_2025}. 

 There are other methods for the preparation of Gibbs states, such as Grover approaches \cite{poulin_SamplingThermalQuantum_2009,chowdhury_QuantumAlgorithmsGibbs_2017} or imaginary time evolution \cite{motta_DeterminingEigenstatesThermal_2020}.

\subsection{Comments on the proof}
The proof of the main result (\cref{thm:main}) uses a multiscale argument, bounding the MLSI constant on larger and larger length scales. This requires relating both the relative entropy and the entropy production from a large to a smaller scale. The entropy production is monotonic in the system size (\cref{pro:entropyprodprop}). 
For the left-hand side of the MLSI~\eqref{eq:defmlsi}, we show that the relative entropy $D(\sigma\| \E^*_{R\cup R'}(\sigma))$  of a union of two subsets $ R , R' \subseteq \lsym $ is upper bounded by the sum $D(\sigma\| \E^*_{R}(\sigma)) + D(\sigma\| \E^*_{R'}(\sigma))$, up to some factor decaying exponentially in the size of the intersection, a condition known as approximate tensorization. It follows from a positivity ordering of the conditional expectations \cite{gao_Completepositivityorder_2025}. 
We show that the DS-condition implies this positivity ordering. To do so, 
we use the following novel, explicit expression (here in the star case):
\begin{equation}
    \E^\str_{R}(O) = \TT^\str_R \circ \PP^\str_R(\hat\rho^\str_RO)
\end{equation}
for any $ O \in \BB(\HH_\lsym) $, where $\TT^\str_R$ is a $Z$-basis pinching on $R$, the operator $\PP^\str_R$ is a pinching built from spectral projections of star operators, and  $\hat\rho^\str_R$ the state~\eqref{def:restorGNS} for $ \wildcard = \str $. This explicit expression is valid for CSS codes on more general graphs. The positivity ordering on the conditional expectations will follow from a positivity ordering on the marginals of these states, which is exactly the DS-condition.

As an initial condition, we use the positivity of the 
MLSI-constant on some finite length scale (depending on the constants in the DS-condition). This will follow from known results on strict positivity of the spectral gap of the Lindbladian when restricted to a finite region in case the jump rates are uniformly positive (cf.~\cite{kastoryano_QuantumGibbsSamplers_2016,gao_Completeentropicinequalities_2022}). \\

Most statements in the subsequent sections will be the same for the star ($\str$) and plaquette ($\plq$) quantities with identical proofs. Thus, without loss of generality, we will only give the proofs for the star quantities.\\

\noindent
\textbf{Structure of the paper:} In \cref{sec:static} we collect some useful properties of the Gibbs state.  We show that the DS-condition is implied by one of the Dobrushin-Shlosman high-temperature conditions, and establish its validity at any positive temperature for all codes with a trivial high-temperature expansion such as the $2$D toric code. In \cref{sec:davies} we recall the construction and some basic properties of the Davies Lindbladians. In \cref{sec:condexp} we derive an explicit expression for the conditional expectations which we then use in \cref{sec:approxtensor} to prove approximate tensorization. The proofs of the main results are given in \cref{sec:mainresult}. In \cref{sec:appendix:full-rank}, we recall some more properties of conditional expectations and justify the use of full-rank states.

\section{Equilibrium}\label{sec:static}
We start our analysis by collecting properties of the Gibbs state of CSS Hamiltonians as well as their star and plaquette parts. This section also contains a more detailed discussion and tools for the verification of the DS-condition (\cref{def:dscond}). 

\subsection{Properties of the Gibbs state}
We recall from~\eqref{eq:Gibbsstate} that $    \rho^\wildcard_{\lsym} = e^{-\beta H^\wildcard_{\lsym}}/Z^\wildcard_{\lsym} $ with $ \wildcard \in \{\strplq, \str, \plq \} $ stands for the Gibbs state of a CSS Hamiltonian and its star or plaquette part on a finite set $ \lsym $ at inverse temperature $\beta \geq 0 $. 
The corresponding reduced states on the Hilbert space over $ R \subset \lsym $ 
result from tracing out the complement of $ R $ within $ \lsym$:
\begin{equation}
\rho^\wildcard_{R} \coloneqq \Tr_{R^c} \rho^\wildcard_{\lsym} ,
\end{equation}
where, by a consistent slight abuse of notation, we suppress the dependence on $ \lsym $ in the left side. The above reduced states should not be confused with the states $ \hat\rho^\wildcard_{R} $ introduced in~\eqref{def:restorGNS} and on which more will be said in the next subsection. \\

As announced in \cref{thm:equi}, the partition function and Gibbs states factorize. As a consequence, equilibrium properties of the full system are captured by the properties of the star- and plaquette systems. As the subsequent proof reveals, the following theorem is valid for CSS Hamiltonians on general finite sets $ \lsym$. Its proof only relies on the algebraic structure of the elementary star and plaquette operators~\eqref{eq:operatordef}. As is evident from their definition, the star Gibbs state is diagonal in the canonical $ X$-basis and the plaquette Gibbs state is diagonal in the canonical $ Z$-basis. The same applies to their reduced states.
\begin{theorem}\label{thm:equi:main}
  Consider a CSS  Hamiltonian on a finite set $\lsym $, and let any $0 \leq \beta < \infty$. Then $ Z^\strplq_{\lsym}            = Z^\str_{\lsym}     \cdot  Z^\plq_{\lsym}     \cdot 2^{-|\lsym|}  $, and for any $R\subseteq \lsym$:
  \begin{equation}\label{eq:reducedstatedecomp}
  \rho_R^\strplq      =\rho^\str_R    \cdot  \rho^\plq_R  \cdot 2^{|R|} 
  \end{equation}
  with factors that commute. Moreover, the von Neumann entropy $S(\sigma)= -\Tr(\sigma \ln \sigma)$ satisfies
  \begin{equation}\label{eq:entropy}
  S(\rho_R^\strplq)  = S(\rho^\str_R ) +   S(\rho^\plq_R)-   |R|\ln 2  .
  \end{equation}
  For any operator $O_X\in \BB(\HH_\lsym)$  diagonal in the $X$-basis, and any operator $O_Z\in \BB(\HH_\lsym)$ diagonal in the $Z$-basis:
  \begin{align*}
&\Tr(\rho^\strplq_{\lsym} O_X)= \Tr(\rho^\str_{\lsym} O_X) ,\\
    &\Tr(\rho^\strplq_{\lsym} O_Z)= \Tr(\rho^\plq_{\lsym} O_Z) .
   \end{align*}
  Furthermore, all marginals of the reduced Gibbs state on $R, R' \subseteq \lsym $ commute:
   \begin{equation}\label{eq:commred}
    [\rho^\str_R , \rho^\plq_{R'}] = 0
    \quad \text{and} \quad
    [ \rho^\strplq_R , \rho^\strplq_{R'}] = 0.
  \end{equation}
\end{theorem}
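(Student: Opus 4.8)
The plan is to reduce everything to two elementary observations: since $[A_s,B_p]=0$ the two Hamiltonians commute, so
\[
  e^{-\beta H^\strplq_\lsym}=e^{-\beta H^\str_\lsym}\,e^{-\beta H^\plq_\lsym}
\]
with commuting factors; and the canonical $X$- and $Z$-eigenbases of a single qubit are mutually unbiased, so that $|\langle\textbf{x}|\textbf{z}\rangle|^2=2^{-|\lsym|}$ for every $X$-basis product vector $\ket{\textbf{x}}$ and $Z$-basis product vector $\ket{\textbf{z}}$ (and $=2^{-|R|}$ for the corresponding vectors on a subregion $R$). The computational workhorse will be the partial-trace identity: if $A\in\BB(\HH_\lsym)$ is diagonal in the $X$-basis and $B\in\BB(\HH_\lsym)$ in the $Z$-basis, then for every $R\subseteq\lsym$,
\[
  \Tr_{R^c}(AB)\;=\;2^{-|R^c|}\,\Tr_{R^c}(A)\,\Tr_{R^c}(B),
\]
which one proves by expanding $A$ and $B$ in their respective eigenbases, splitting configurations into their $R$- and $R^c$-parts, and applying mutual unbiasedness on $R^c$. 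Taking $R=\lsym$, $A=e^{-\beta H^\str_\lsym}$, $B=e^{-\beta H^\plq_\lsym}$ gives $Z^\strplq_\lsym=2^{-|\lsym|}Z^\str_\lsym Z^\plq_\lsym$, and dividing yields $\rho^\strplq_\lsym=2^{|\lsym|}\rho^\str_\lsym\rho^\plq_\lsym$; applying it with $A=\rho^\str_\lsym$, $B=\rho^\plq_\lsym$ and general $R$ gives $\Tr_{R^c}(\rho^\str_\lsym\rho^\plq_\lsym)=2^{-|R^c|}\rho^\str_R\rho^\plq_R$, hence $\rho^\strplq_R=2^{|\lsym|}\Tr_{R^c}(\rho^\str_\lsym\rho^\plq_\lsym)=2^{|R|}\rho^\str_R\rho^\plq_R$, which is \eqref{eq:reducedstatedecomp}.

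For the trace identities and the entropy formula I would first note that for any $O_X$ diagonal in the $X$-basis on $R$ one has $\langle\textbf{x}_R|\rho^\plq_R|\textbf{x}_R\rangle=2^{-|R|}$ by mutual unbiasedness, whence $\Tr(\rho^\strplq_R O_X)=2^{|R|}\Tr(\rho^\str_R\rho^\plq_R O_X)=\Tr(\rho^\str_R O_X)$, and symmetrically $\Tr(\rho^\strplq_R O_Z)=\Tr(\rho^\plq_R O_Z)$; the displayed identities in the theorem are the case $R=\lsym$. The factors $\rho^\str_R,\rho^\plq_R$ are positive definite (partial traces of positive-definite operators) and commute (shown below), so $\ln\rho^\strplq_R=|R|\ln 2+\ln\rho^\str_R+\ln\rho^\plq_R$ with $\ln\rho^\str_R$ $X$-diagonal and $\ln\rho^\plq_R$ $Z$-diagonal; substituting this into $S(\rho^\strplq_R)=-\Tr(\rho^\strplq_R\ln\rho^\strplq_R)$ and using the two trace identities collapses the cross terms, giving $S(\rho^\strplq_R)=-|R|\ln 2+S(\rho^\str_R)+S(\rho^\plq_R)$.

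The commutativity of the marginals is the one step that genuinely uses the CSS structure rather than the mere basis geometry, and is where I expect the only real subtlety: a generic $X$-diagonal operator on $R$ and $Z$-diagonal operator on $R'$ do \emph{not} commute when $R\cap R'\neq\emptyset$, so one must exploit that these particular operators live on a very restricted set of Pauli strings. Since $H^\str_\lsym$ is a function of the commuting involutions $\{A_s\}$, the operator $e^{-\beta H^\str_\lsym}$ lies in the algebra they generate, hence $\rho^\str_\lsym=\sum_{T\subseteq\strset_\lsym}\lambda_T A_T$ with $A_T\coloneqq\prod_{s\in T}A_s=X_{\cobdy T}$; applying $\Tr_{R^c}$ annihilates every term with $\cobdy T\not\subseteq R$ (a nontrivial $X$-string on $R^c$ is traceless) and shows $\rho^\str_R$ is a linear combination of the star operators $A_T$, and likewise $\rho^\plq_{R'}$ is a linear combination of plaquette operators $B_{T'}=Z_{\bdy T'}$. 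Now $[A_T,B_{T'}]=0$ for all $T,T'$, because every $A_s$ commutes with every $B_p$ and the $A$'s (resp.\ $B$'s) commute among themselves, so the products reorder; by bilinearity $[\rho^\str_R,\rho^\plq_{R'}]=0$, which in particular gives the asserted commutativity of the factors in \eqref{eq:reducedstatedecomp} (take $R'=R$). Finally, the whole family $\{\rho^\str_{R_1},\rho^\plq_{R_2}\}$ is mutually commuting — the $\rho^\str$'s commute because all $X_v$ do, the $\rho^\plq$'s because all $Z_v$ do — so $[\rho^\strplq_R,\rho^\strplq_{R'}]=0$ follows by writing both as $2^{|\cdot|}\rho^\str\rho^\plq$ and freely permuting the four factors.
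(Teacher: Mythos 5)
Your proposal is correct, and it reaches the factorization statements by a somewhat different route than the paper. The paper's proof runs everything through the explicit high-temperature expansion $e^{\beta A_s}=\cosh(\beta)\identity+\sinh(\beta)A_s$, writing $e^{-\beta H^{\strplq}_{\lsym}}$ as a sum over Pauli strings $\bigotimes_{v\in\cobdy T}X_v\bigotimes_{v'\in\bdy Q}Z_{v'}$ and using that (partial) traces of such mixed strings factorize with a $2^{-|\lsym|}$ (resp.\ $2^{-|R^c|}$) correction; your partial-trace identity $\Tr_{R^c}(AB)=2^{-|R^c|}\Tr_{R^c}(A)\Tr_{R^c}(B)$ for arbitrary $X$-diagonal $A$ and $Z$-diagonal $B$ is exactly that mechanism (mutual unbiasedness of the product bases), promoted from Pauli strings to general diagonal operators. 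This buys you the partition-function and reduced-state factorizations, the trace identities, and the entropy formula without ever expanding in $\tanh\beta$, which is arguably cleaner and makes transparent that these parts use nothing beyond the commutation $[H^{\str}_{\lsym},H^{\plq}_{\lsym}]=0$ and the basis geometry. For the commutation of marginals \eqref{eq:commred} you then revert to essentially the paper's argument (its Eq.~(2.9)): expand $e^{-\beta H^{\str}_{\lsym}}$ in the products $\prod_{s\in T}A_s$, note that $\Tr_{R^c}$ kills every term whose coboundary exits $R$ because a nontrivial $X$-string on $R^c$ is traceless, so $\rho^{\str}_R$ remains a linear combination of star products and hence commutes with every plaquette polynomial — this is where the CSS structure genuinely enters, and you correctly flag it as such. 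Two trivial remarks: in instantiating your identity for the partition function you should take $R=\emptyset$ (so that $\Tr_{R^c}$ is the full trace), not $R=\lsym$ — a harmless indexing slip; and your entropy step needs the trace identities for general $R$, which you indeed derive (the theorem only displays the case $R=\lsym$), and the identity $\ln(2^{|R|}\rho^{\str}_R\rho^{\plq}_R)=|R|\ln 2+\ln\rho^{\str}_R+\ln\rho^{\plq}_R$ is justified since the factors are commuting and positive definite, as you note.
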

\begin{proof}

Since star and plaquette operators commute, and for any  $s\in \strset_{\lsym}$, we have $e^{\beta A_s} = \cosh(\beta)\identity + \sinh(\beta)A_s$, since $A_s^2=\identity$ and analogously for plaquettes, we can expand the partition function:
  \begin{align}
    Z^\strplq_{\lsym} &= \Tr e^{-\beta H^\str_{\lsym}}e^{-\beta H^\plq_{\lsym}} =  \Tr \left(\prod_{s\in \strset_{\lsym}} (\cosh(\beta)\identity + \sinh(\beta)A_s) \prod_{p\in \plqset_{\lsym}} (\cosh(\beta)\identity + \sinh(\beta)B_p)\right) \notag\\
    &= \cosh(\beta)^{|\strset_{\lsym}| +  |\plqset_{\lsym}|} \sum_{T\subseteq \strset_{\lsym}} \tanh(\beta)^{|T|} \sum_{Q\subseteq \plqset_{\lsym}} \tanh(\beta)^{|Q|} \Tr\left(\bigotimes_{v\in \cobdy T} X_v  \bigotimes_{v' \in \bdy Q}Z_v\right) . \label{eq:HTexp}
  \end{align}
  The last line also used that $\prod_{s\in T}A_s = \bigotimes_{v\in \cobdy T}X_v$ by definition of the coboundary $ dT $ of the star subset. 
  
  Since the Pauli operators are all traceless, the trace in the left side of~\eqref{eq:HTexp} is non-zero if and only if the product $\bigotimes_{v\in \cobdy T} X_v \bigotimes_{v' \in \bdy Q}Z_{v'} = \identity$.
  Expressing the trace in the $ X $-basis, and using the fact that, in this basis, the diagonal elements of any product over Pauli $ Z $-operators is independent of the $ X $-state, we thus arrive at
  \begin{equation*}
    \Tr\left(\bigotimes_{v\in \cobdy T} X_v  \bigotimes_{v' \in \bdy Q}Z_v\right) = \Tr\left(\bigotimes_{v\in \cobdy T} X_v \right)
    \Tr\left(  \bigotimes_{v' \in \bdy Q}Z_v\right)
     2^{-|\lsym|}
  \end{equation*}
  where $2^{-|\lsym|}$ corrects for the second $\Tr \identity$ term if both traces are non-zero. Plugging this back into the expansion of the partition function shows the factorization.
  The factorization of the Gibbs state follows directly from the factorization of the partition function in case $ R = \lsym$.  To decompose the reduced Gibbs state, we perform a similar expansion, replacing $\Tr$ with $\Tr_{R^c}$. We then use
  \begin{equation*}
    \Tr_{R^c}\left(\bigotimes_{v\in \cobdy T} X_v  \bigotimes_{v' \in \bdy Q}Z_v\right) = \Tr_{R^c}\left(\bigotimes_{v\in \cobdy T} X_v \right)
    \Tr_{R^c}\left(  \bigotimes_{v' \in \bdy Q}Z_v\right)
     2^{-|R^c|} , 
  \end{equation*}
  since the trace is only non-zero if both $\cobdy T \cap R^c=\emptyset$ and $\bdy Q \cap R^c = \emptyset$. Together with the decomposition of the partition function, this yields the overall correction factor of $2^{|\lsym|-|R^c|}=2^{|R|}$ in~\eqref{eq:reducedstatedecomp}.
  Furthermore, considering only the star part, we find
  \begin{equation*}
    \Tr_{R^c}\left(\prod_{s\in T} A_s\right) = 1_{\cobdy T\cap R^c=\emptyset} 2^{|R^c|}\prod_{s\in T} A_s ,
  \end{equation*}
  since $\prod_{s\in T} A_s$ has no support on $R^c$ iff $\cobdy T\cap R^c=\emptyset$. Hence, the reduced star Gibbs state is still a polynomial of star operators
  \begin{equation}\label{eq:traceoutGibbs}
    \Tr_{R^c}(e^{-\beta H^\str_{\lsym}}) = \cosh(\beta)^{|\strset_{\lsym}|} 2^{|R^c|} \sum_{T\subseteq \strset_{\lsym}} \tanh(\beta)^{|T|}  1_{\cobdy T\cap R^c=\emptyset} \prod_{s\in T} A_s ,
  \end{equation}
  and thus commutes with any plaquettes. Similarly, the marginal of any plaquette Gibbs state is a polynomial of plaquettes and commutes with any star. This yields~\eqref{eq:commred}.

  For the expectation of $X$- or $Z$-diagonal operators,  the above expansion can be used on $\Tr(e^{-\beta H^\strplq_{\lsym}} O_X)$, which, together with
  \begin{equation*}
    \Tr\left(\bigotimes_{v\in \cobdy T} X_v  \bigotimes_{v' \in \bdy Q}Z_v O_X\right) = \Tr\left(\bigotimes_{v\in \cobdy T} X_v O_X\right)
    \Tr\left(  \bigotimes_{v' \in \bdy Q}Z_v\right)
     2^{-|\lsym|}
  \end{equation*}
  establishes the claim. The last identity is again most easily checked by expressing the trace in the $ X $-basis, and using the fact that, in this basis, the diagonal elements of any product over Pauli $ Z $-operators is independent of the $ X $-state.

 Finally, for the von Neumann entropy, we use the factorization of the Gibbs state and its expectation to find
  \begin{align*}
    S(\rho^\strplq_R) &= -\Tr ( \rho^\strplq_R \ln ( \rho^\str_R)) -\Tr ( \rho^\strplq_R \ln ( \rho_R^\plq)) - \ln 2^{|R|}\\
    &= -\Tr (\rho^\str_R \ln ( \rho_R^\str)) -\Tr ( \rho_R^\plq \ln ( \rho_R^\plq)) - \ln 2^{|R|} ,
  \end{align*}
  since $\ln \rho_R^\str$ and $\ln \rho_R^\plq$ are diagonal in $X$- and $Z$-basis, respectively. 
\end{proof}

The entropy relation~\eqref{eq:entropy} expresses the Maassen-Uffink uncertainty principle \cite{maassen_Generalizedentropicuncertainty_1988} as an identity. It also immediately  implies the relation
\begin{equation*}
I(R:R')_{\rho_\lsym^\strplq} = I(R:R')_{\rho_\lsym^\str} + I(R:R')_{\rho_\lsym^\plq}
\end{equation*}
for the mutual information  $I(R:R')_{\sigma} \coloneqq S(\sigma_{R}) + S(\sigma_{R'}) - S(\sigma_{R\cup R'})$ of two disjoint subsets $ R , R' \subseteq \lsym$.\\

Using the above structure of the Gibbs state, it is easy to see that the expectation value of the logical operators $X_L, Y_L, Z_L$ is $0$ at any positive temperature. In other words, the Gibbs state is maximally mixed in the codespace.
\begin{corollary}\label{cor:logicalexpectation}
    For a CSS Hamiltonian, all logical operators $O_L \in \{X_L, Y_L, Z_L\}$ vanish in expectation over the Gibbs state at any inverse temperature   $0 \leq \beta < \infty$:
    \begin{equation*}
        \Tr(\rho^\strplq_{\lsym} O_L)  = 0 \ .
    \end{equation*}
Moreover, we also have $  \Tr(\rho^\str_{\lsym} X_L) = 0 $ and $  \Tr(\rho^\plq_{\lsym} Z_L) = 0 $. 
\end{corollary}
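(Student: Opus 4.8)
The plan is to reduce the corollary to two elementary observations: the trace of a nontrivial Pauli string vanishes, and a logical operator, unlike a product of star and plaquette operators, does not act as a scalar on the code space.

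First I would record that each of $X_L$, $Y_L$, $Z_L$ is, up to an overall phase, a Pauli string — a tensor product of single-qubit factors drawn from $\{\identity, X, Y, Z\}$. For $X_L$ and $Z_L$ this is the convention fixed in the introduction; for $Y_L = iX_LZ_L$ one multiplies the $X$-string $X_L$ and the $Z$-string $Z_L$ qubit by qubit, obtaining $X_vZ_v = -iY_v$ at each vertex in the overlap of their supports and a single $X_v$ or $Z_v$ elsewhere, so $Y_L$ is again a Pauli string times a phase. By the same bookkeeping, using $A_s^2 = B_p^2 = \identity$ and the commutation rule~\eqref{eq:commrule}, every product $A_TB_Q \coloneqq \prod_{s\in T}A_s\,\prod_{p\in Q}B_p$ with $T\subseteq\strset_{\lsym}$ and $Q\subseteq\plqset_{\lsym}$ is likewise a Pauli string (again up to an overall phase).

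Next I would invoke the high-temperature expansion already performed in the proof of \cref{thm:equi:main} (see~\eqref{eq:HTexp}): expanding $e^{\beta A_s} = \cosh(\beta)\identity + \sinh(\beta)A_s$ and likewise for the plaquettes gives
\begin{equation*}
  e^{-\beta H^\strplq_{\lsym}} = \cosh(\beta)^{|\strset_{\lsym}|+|\plqset_{\lsym}|}\sum_{T\subseteq\strset_{\lsym}}\ \sum_{Q\subseteq\plqset_{\lsym}}\tanh(\beta)^{|T|+|Q|}\,A_TB_Q ,
\end{equation*}
so $\Tr(\rho^\strplq_{\lsym}O_L)$ is, up to the normalization $Z^\strplq_{\lsym}$, a finite linear combination of traces $\Tr(A_TB_Q\,O_L)$. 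Each summand is the trace of a Pauli string and hence vanishes unless $A_TB_Q\,O_L$ is a scalar multiple of $\identity$, i.e.\ unless $O_L$ is proportional to $A_TB_Q$. But $A_TB_Q$ acts as a scalar on the code space, whereas $O_L$ acts non-trivially there; hence no such pair $T,Q$ exists and every summand vanishes, giving $\Tr(\rho^\strplq_{\lsym}O_L) = 0$. Repeating the argument with $e^{-\beta H^\str_{\lsym}}$, which is a linear combination of the $A_T$ alone, and $O_L = X_L$ (not a product of star operators, again because it acts non-trivially on the code space) yields $\Tr(\rho^\str_{\lsym}X_L) = 0$, and symmetrically $\Tr(\rho^\plq_{\lsym}Z_L) = 0$; the cases $O_L = X_L$ and $O_L = Z_L$ of the first identity also follow by combining this with $\Tr(\rho^\strplq_{\lsym}O_X) = \Tr(\rho^\str_{\lsym}O_X)$ and $\Tr(\rho^\strplq_{\lsym}O_Z) = \Tr(\rho^\plq_{\lsym}O_Z)$ from \cref{thm:equi:main}.

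I do not anticipate a real obstacle. The only step needing mild care is pinning down the sense in which a logical operator is not a star–plaquette operator: since everything involved is a Pauli string, the statement quoted in the introduction — that $X_L,Y_L,Z_L$ lie outside the algebra spanned by the $A_s$ and $B_p$ — is exactly the input required, because for a Pauli string, membership in that span already forces proportionality to a single $A_TB_Q$.
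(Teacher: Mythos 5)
Your proposal is correct and follows essentially the same route as the paper: expand the Gibbs state via $e^{\beta A_s}=\cosh(\beta)\identity+\sinh(\beta)A_s$ (and likewise for plaquettes), note that each resulting trace of a Pauli string vanishes unless the string is proportional to the identity, and rule that out because a logical operator is not (proportional to) a product of star and plaquette operators. The only cosmetic difference is that you expand the full Gibbs state for all three logical operators, whereas the paper first reduces $\Tr(\rho^\strplq_\lsym X_L)$ and $\Tr(\rho^\strplq_\lsym Z_L)$ to the star and plaquette Gibbs states via the $X$-/$Z$-diagonal identities of \cref{thm:equi:main} and only expands both factors for $Y_L$; this changes nothing of substance.
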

\begin{proof}
We recall our conventions for the logical operators spelled above \cref{lem:nonselfcorr}.
  For the logical $ X$-operator we compute: 
    \begin{equation}
        \Tr(\rho^\strplq_{\lsym}X_L) = \Tr(\rho^\str_{\lsym} X_L) = \cosh(\beta)^{|\strset_{\lsym}|}\sum_{T\subseteq \strset_{\lsym}} \tanh(\beta)^{|T|} \Tr(X_L\prod_{s\in T}A_s) \ .
    \end{equation}
    This expression is $0$, since $X_L\prod_{s\in T}A_s$ is a product of Pauli $X$-operators. The trace of such a product is $0$ unless it is the identity, i.e.\  unless $X_L = \prod_{s\in T}A_s$.  However, as a logical operator, $X_L$ cannot be expressed as a product of star operators. Analogously $\Tr(\rho^\strplq_{\lsym} Z_L) = \Tr(\rho^\plq_{\lsym} Z_L) =0$. For the expectation of $Y_L=iX_L Z_L$ in $ \rho^\strplq_{\lsym} $, we expand both the star and plaquette Gibbs state, yielding a sum over some constants and traces of the form
    \begin{equation*}
        \Tr(X_L Z_L \prod_{s\in T}A_s \prod_{p\in Q}B_p)
    \end{equation*}
    which is $0$ for the same reasons. 
\end{proof}

\subsection{Dobrushin-Shlosman condition} 
\label{sec:decaycond}

Since the star and similarly the plaquette Gibbs state is equivalent to a 
classical Gibbs measure, one may also relate the reduced states and the local Gibbs state featuring in the DS-condition (\cref{def:dscond}) to this classical measure. We will spell this for the star case. The plaquette case is similar. 

  The marginals of the Gibbs measure $ \mu^\str(\textbf{x}) = \bra{\textbf{x}} \rho^{\str}_{\lsym} \ket{\textbf{x}} $, $ \textbf{x} \in 2^\lsym$, associated with a subset $R\subseteq \lsym$  can be expressed in terms of the reduced Gibbs state on $ R$:
  \begin{equation}\label{eq:defcondGibbs}
    \mu_R^\str(\textbf{x}_{R}) \coloneqq \sum_{\textbf{y} \in 2^{R^c}} \mu^\str(\textbf{x}_R\textbf{y}) = \bra{\textbf{x}_{R}} \rho^{\str}_{R} \ket{\textbf{x}_{R}} , 
  \end{equation}
    where $\textbf{x}_{R} \in 2^{R}$ and we again, by a slight abuse of notation, suppress the dependence on $\lsym$ in the left side.
    
    The conditional expectation, conditioned on the spin configurations on the complement of a  subset $ R' \subseteq \lsym$, which contains $ R \subseteq R' $, is related to the state $ \hat\rho_{R'}^{\str} $ (defined in~\eqref{def:restorGNS} for $ R $ instead of $ R' $):
    \begin{equation}\label{eq:relcondGS}
   \mu^\str_{R(R')^c}(\textbf{x}_{R}|\textbf{x}_{(R')^c}) \coloneqq \frac{\mu^\str_{R(R')^c}(\textbf{x}_{R} \textbf{x}_{(R')^c})}{\mu^\str_{(R')^c}(\textbf{x}_{(R')^c})} = \bra{\textbf{x}_{R}\textbf{x}_{(R')^c}} \Tr_{R'\cap R^c} \hat\rho_{R'}^{\str} \ket{\textbf{x}_{R}\textbf{x}_{(R')^c}} .
    \end{equation}
\begin{figure}
  \centering
  \begin{tikzpicture}[scale=1]
    \draw[orange, fill=orange, fill opacity=0.1, line width=1.5pt] (0, 0) rectangle (5,4);
    \fill[white] (0.8, 0.8) rectangle (4.2,3.2);

    \draw[blue, fill=blue, fill opacity=0.1, line width=1.5pt] (0.8, 0.8) rectangle (4.2,3.2);

    \draw[red, fill=red, fill opacity=0.1, line width=1.5pt] (1.7, 1.5) rectangle (2.8, 2.5);
    
    \node at (0.35, 3.6) {$\lsym$};
    \node at (1.18, 2.8) {$R'$};
    \node at (2.23, 2) {$R$};
    \node at (4.4,2.5)[circle, fill=black, inner sep=1.3pt] {};
    \node at (4.7,2.5) {$v$};
\end{tikzpicture}
  \caption{The geometry for \cref{def:DSIIId}. Two nested regions $R\subset R'\subseteq \lsym$ and a spin $v$ in the complement $(R')^c$. In \cref{def:DSIIId}, flipping the spin at $v$ changes the (conditional) Gibbs measure in $R$ by a term decaying exponentially in the distance between $v$ and $R$.}
  \label{fig:dobrushin_shlosman}
\end{figure}
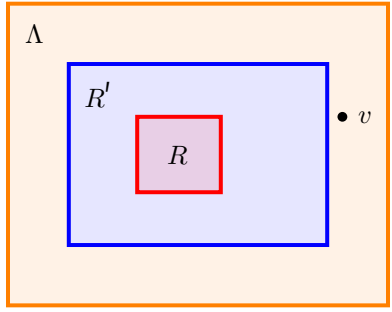
    
    The nested geometry (shown in \cref{fig:dobrushin_shlosman}) with conditioning on the complement of the larger region is in fact exactly what is addressed in the high-temperature conditions by Dobrushin-Shlosman \cite{dobrushin_CompletelyAnalyticalInteractions_1987}. They spelled out a number of equivalent conditions. 
    While in our proof, we require condition \emph{IIId} \cite[Eq. (2.25)]{dobrushin_CompletelyAnalyticalInteractions_1987}, which we recall in the following definition,  any of these conditions would be sufficient.
\begin{definition}\label{def:DSIIId}
    The Gibbs measure $\mu^{\str}$ satisfies \emph{Dobrushin-Shlosmann condition IIId}  if there exist constants $K<\infty, \gamma>0$ such that for all $\lsym$, $R'\subseteq \lsym$, $R\subset R'$, all $\textbf{x}_{R}, \textbf{x}_{(R')^c}$ and all $v\in (R')^c$: 
  \begin{equation*}
    \left| \frac{\mu^{\str}_{R(R')^c}(\textbf{x}_{R}|F_{v}(\textbf{x}_{(R')^c}))}{\mu^{\str}_{R(R')^c}(\textbf{x}_{R}|\textbf{x}_{(R')^c})} -1\right| \leq K e^{-\gamma \dist(v, R)}
  \end{equation*}
  where $F_v$ flips the spin at $v$. 
\end{definition}
The condition for the plaquette Gibbs measure $\mu^\plq$ is identical up to changing $\mathbf{x}\leftrightarrow \mathbf{z}$. 

The following reveals that the DS-condition (\cref{def:dscond}) is in fact the high-temperature condition IIId by Dobrushin-Shlosman for $H^\str_{\lsym}$ or $H^\plq_{\lsym}$ when viewed as a Hamiltonian on classical Ising configurations.  

\begin{proposition}\label{lem:weakDS}
  Let $\wildcard\in \{\str,\plq\}$ and fix some inverse temperature $\beta> 0$. If  $ \mu^\wildcard_\lsym$ satisfies the Dobrushin-Shlosman condition {IIId}, \cref{def:DSIIId}, 
  then there exist constants $K^\wildcard<\infty$, $\xi^\wildcard > 0$ such that $\mathrm{DS}_\beta^{\wildcard}(K^\wildcard, \xi^\wildcard)$ holds on any length scale.
\end{proposition}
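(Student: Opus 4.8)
The plan is to pass to the classical picture and run the single-site interpolation that underlies the Dobrushin--Shlosman theory. By \eqref{eq:DScondprob}, the operator norm on the left of \eqref{eq:corrdecaywildcard} (I treat $\wildcard=\str$; the $\plq$-case is word for word the same after $X\leftrightarrow Z$) equals a supremum over classical configurations $\textbf{x}_U,\textbf{x}_W,\textbf{x}_C$, with $C\coloneqq(UVW)^c$, of $\bigl|\mu^\str_{UC}(\textbf{x}_U|\textbf{x}_C)/\mu^\str_{UWC}(\textbf{x}_U|\textbf{x}_W\textbf{x}_C)-1\bigr|$, so it suffices to bound this by $K^\str e^{-\xi^\str\dist(U,W)}$ for every admissible rectangle triple $(U,V,W)$. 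The first move is to reduce to a comparison of two boundary conditions of a single finite-volume conditional measure. Writing the numerator as a sum over configurations $\textbf{y}_W$ on $W$,
\[
\mu^\str_{UC}(\textbf{x}_U|\textbf{x}_C)=\sum_{\textbf{y}_W}\mu^\str_{UWC}(\textbf{x}_U|\textbf{y}_W\textbf{x}_C)\,\mu^\str_{WC}(\textbf{y}_W|\textbf{x}_C),
\]
exhibits the ratio of interest as a convex combination, with weights $\mu^\str_{WC}(\textbf{y}_W|\textbf{x}_C)$, of the ratios $\mu^\str_{UWC}(\textbf{x}_U|\textbf{y}_W\textbf{x}_C)/\mu^\str_{UWC}(\textbf{x}_U|\textbf{x}_W\textbf{x}_C)$. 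Since convex combinations respect two-sided bounds, it is enough to estimate each of these uniformly in $\textbf{x}_W,\textbf{y}_W$.

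Next I would interpolate between $\textbf{x}_W$ and $\textbf{y}_W$ by flipping, one site at a time, the (at most $|W|$) sites where they differ, applying \cref{def:DSIIId} at each step with $R=U$ and $R'=UV$, so that $(R')^c=WC$ and the flipped site $v$ lies in $W$ (if $V=\emptyset$ then $UVW=UW$ is a rectangle partitioned by $U,W$, forcing $\dist(U,W)\le1$, which lands in the short-distance regime handled below). As $\dist(v,U)\ge\dist(U,W)\eqqcolon d$ for every $v\in W$, each flip multiplies $\mu^\str_{UWC}(\textbf{x}_U|\,\cdot\,)$ by a factor in $[1-Ke^{-\gamma d},\,1+Ke^{-\gamma d}]$, and in any case by a strictly positive factor; telescoping over the at most $|W|$ flips gives
\[
\frac{\mu^\str_{UWC}(\textbf{x}_U|\textbf{y}_W\textbf{x}_C)}{\mu^\str_{UWC}(\textbf{x}_U|\textbf{x}_W\textbf{x}_C)}\in\bigl[(1-Ke^{-\gamma d})^{|W|},(1+Ke^{-\gamma d})^{|W|}\bigr]
\]
as soon as $Ke^{-\gamma d}<1$, and lies in $\bigl(0,(1+K)^{|W|}\bigr)$ unconditionally. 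By the convexity remark, the same brackets bound $\mu^\str_{UC}(\textbf{x}_U|\textbf{x}_C)/\mu^\str_{UWC}(\textbf{x}_U|\textbf{x}_W\textbf{x}_C)$.

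It remains to control the exponent $|W|$, and this is where the restriction $\diam(W)\le2\dist(U,W)^2$ in \cref{def:dscond} is used: $W$ is contained in a box of side at most $2d^2+1$, and since the lattice carries a fixed finite number $q$ of qubits per unit cell, $|W|\le q(2d^2+1)^D$, which is polynomial in $d$. Hence for all $d$ with $Ke^{-\gamma d}<\tfrac12$ one has $(1\pm Ke^{-\gamma d})^{|W|}=1+O(|W|e^{-\gamma d})=1+O(d^{2D}e^{-\gamma d})$, and using $d^{2D}e^{-\gamma d}\le C_{D,\gamma}e^{-\gamma d/2}$ this yields $|{\cdot}-1|\le K^\str e^{-(\gamma/2)d}$ on that range. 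For the finitely many short distances $d<d_0$ the crude bound $\text{ratio}\in\bigl(0,(1+K)^{q(2d_0^2+1)^D}\bigr)$ is a constant, hence $\le K^\str e^{-(\gamma/2)d}$ once $K^\str$ is enlarged. Taking $\xi^\str=\gamma/2$ and $K^\str$ the larger of the two constants so produced gives $\mathrm{DS}^\str_\beta(K^\str,\xi^\str)$ with no constraint on the length scale, in particular for any $L^\str_0$; the plaquette case is identical.

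The step I expect to be the real obstacle is the last one: one must verify that interpolating over $|W|$ sites does not overwhelm the exponential decay supplied by \cref{def:DSIIId}. It does not, precisely because the quadratic cap on $\diam(W)$ forces $|W|$ to be only polynomial in $d$, so $(1+Ke^{-\gamma d})^{|W|}\to1$; the remaining annoyances — that one only has one-sided positivity once $Ke^{-\gamma d}\ge1$ (dealt with by the crude short-distance bound) and that the rate must be shrunk slightly, from $\gamma$ to $\gamma/2$, to absorb the polynomial prefactor — are routine.
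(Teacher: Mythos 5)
Your proposal is correct and follows essentially the same route as the paper: reduce to the classical conditional probabilities via \eqref{eq:DScondprob}, write the ratio as a convex combination over configurations on $W$, telescope single-site flips using condition IIId, and use $\diam(W)\le 2\dist(U,W)^2$ to keep $|W|$ polynomial in $\dist(U,W)$, absorbing it by halving the decay rate to $\xi^\wildcard=\gamma/2$. The only differences are cosmetic: the paper organizes the flips into distance shells $W_d$ and sums a geometric series (and obtains the lower bound by a quotient argument) where you bound every flip by the minimal distance and treat short distances separately, but these are interchangeable details of the same argument.
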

\begin{proof} 
In this proof, we drop the $ \str $ from the classical Gibbs measure. 
  Employing~\eqref{eq:relcondGS} in case of $R =U $ and $ R' = UVW$, respectively $ R'= UV $, and setting $C \coloneqq (UVW)^c $,  we arrive at \eqref{eq:DScondprob}.
  Using the rules for the conditional probabilities on its right side, one may write:
  \begin{equation*}
    \frac{\mu_{UC}(\textbf{x}_{U}|\textbf{x}_{C})}{\mu_{UWC}(\textbf{x}_{U}|\textbf{x}_{W}\textbf{x}_{C})} = \sum_{\textbf{y}_W \in 2^{W}} \mu_{WC}(\textbf{y}_W|\textbf{x}_{C}) \frac{\mu_{UWC}(\textbf{x}_{U}|\textbf{y}_W\textbf{x}_{C})}{\mu_{UWC}(\textbf{x}_{U}|\textbf{x}_{W}\textbf{x}_{C})} \ .
  \end{equation*}
  As is shown below, the Dobrushin and Shlosman condition~\emph{IIId} translated to the present situation implies that for some $ K^{\wildcard}, \xi^{\wildcard} \in (0,\infty) $: 
  \begin{equation}\label{eq:dobrushin_summed}
    \left| \frac{\mu_{UWC}(\textbf{x}_{U}|\textbf{y}_W\textbf{x}_{C})}{\mu_{UWC}(\textbf{x}_{U}|\textbf{x}_{W}\textbf{x}_{C})} - 1 \right| \leq K^{\wildcard} e^{-\xi^{\wildcard} \dist(U,W)} 
  \end{equation}
  uniformly in the spin configurations.
  Hence $ \mathrm{DS}_\beta^{\wildcard}(K^\wildcard, \xi^\wildcard)$ holds. 

  In the rest of the proof, we show that \eqref{eq:dobrushin_summed} follows indeed from \cite[Eq. (2.25)]{dobrushin_CompletelyAnalyticalInteractions_1987}.
  To this end, we decompose $W$ by distance to $U$: 
  \begin{equation*}
      W = \bigcupplus_{d=d_{min}}^{d_{max}} W_d,\quad W_d\coloneqq \{v\in W| \dist(v,U)=d\}
  \end{equation*} 
  where $d_{min} = \min_{v\in W} \dist(v,U)$ and $d_{max} = \max_{v\in W} \dist(v,U)$. For any $W_d$, we pick an arbitrary enumeration of spins. We fix two spin configurations $\textbf{x}_W,\textbf{y}_W\in 2^W$, and enumerate all spins in the symmetric difference $\textbf{x}_W\Delta\textbf{y}_W$, that is, those with different configurations, thereby respecting distance and the enumeration on $W_d$ we picked.
  Let $F_{i}$ for $i=1,\ldots, |\textbf{x}_W\Delta\textbf{y}_W|$ be the operation that flips spin $i$. Let $F_{\leq i} = \prod_{j\leq i}F_j$. Then, in particular $\textbf{y}_W= F_{\leq |\textbf{x}_W\Delta\textbf{y}_W|}\textbf{x}_W$.
  We now use a standard telescopic expansion of the fraction in the left-hand side of~\eqref{eq:dobrushin_summed}:
  \begin{align*}
    \frac{\mu_{UWC}(\textbf{x}_{U}|\textbf{y}_W\textbf{x}_{C})}{\mu_{UWC}(\textbf{x}_{U}|\textbf{x}_{W}\textbf{x}_{C})} & =  \prod_{i=1}^{|\textbf{x}_W\Delta\textbf{y}_W|} \frac{\mu_{UWC}(\textbf{x}_{U}|F_{\leq i}(\textbf{x}_W)\textbf{x}_{C})}{\mu_{UWC}(\textbf{x}_{U}|F_{\leq i-1}(\textbf{x}_{W})\textbf{x}_{C})} . 
  \end{align*}
  By assumption $\mu$ satisfies the Dobrushin-Sholsmann condition IIId (\cref{def:DSIIId}).
  For the current geometry this implies that there exist $\hat{K}<\infty, \gamma>0$ such that for all $U,V,W$, all $\textbf{x}_{U}, \textbf{x}_{W}, \textbf{x}_{C}$ and all $i$: 
  \begin{equation*}
    \left| \frac{\mu_{UWC}(\textbf{x}_{U}|F_{\leq i}(\textbf{x}_W)\textbf{x}_{C})}{\mu_{UWC}(\textbf{x}_{U}|F_{\leq i-1}(\textbf{x}_{W})\textbf{x}_{C})} -1\right| \leq \hat{K} e^{-\gamma \dist(i, U)} .
  \end{equation*}
  Inserting this into the left side of \eqref{eq:dobrushin_summed}, we arrive at
  \begin{align*}
    &\frac{\mu_{UWC}(\textbf{x}_{U}|\textbf{y}_W\textbf{x}_{C})}{\mu_{UWC}(\textbf{x}_{U}|\textbf{x}_{W}\textbf{x}_{C})}
    \leq  \prod_{i=1}^{|\textbf{x}_W\Delta\textbf{y}_W|} \left( 1 + \hat{K} e^{-\gamma \dist(i, U)} \right)
    \leq  \prod_{d=d_{min}}^{d_{max}} \left( 1 + \hat{K} e^{-\gamma d} \right)^{|W_d|}\\
    &\leq  \exp \left( \sum_{d=d_{min}}^{d_{max}} |W_d| \ln\left( 1 + \hat{K} e^{-\gamma d} \right)\right)
    \leq  \exp \left( \max_{d} |W_d| \hat{K} e^{-\gamma d_{min}} \sum_{d=0}^{\infty}   e^{-\gamma d} \right)\\
    &\leq 1 +  \frac{\max_{d} |W_d| \hat{K}}{\gamma} e^{-\gamma d_{min}} \exp \left( \frac{\max_{d} |W_d| \hat{K}}{\gamma} e^{-\gamma d_{min}} \right) . 
  \end{align*}
  If $U,V,W$ are rectangles as assummed, and if $\diam(W)\leq 2 \dist(U,W)^2$, then  the size of $W_d$ is bounded by 
  \begin{equation*}
      |W_d|\leq |W|\leq 2^{D} \dist(U,W)^{2D} C_Q 
  \end{equation*} 
  where $C_Q$ is the number of qubits per unit cell. This
  yields~\eqref{eq:dobrushin_summed} as an the upper bound with $ K^{\wildcard} = \frac{2^{D}  \hat{K} C_Q}{\gamma} f_1 e^{f_2}$ and $ \xi^{\wildcard} = \gamma/2 $, where
   \begin{equation*}
       f_1 \coloneqq \max_{d} d^{2D} e^{-\gamma/2 d} \quad   f_2 \coloneq \max_{d} f(d) , \quad  f(d) \coloneqq \frac{2^{D} d^{2D} C_Q \hat{K}}{\gamma} e^{-\gamma d} \ .
  \end{equation*}
  Since this upper bound is independent of the spin configurations, by taking quotients, it also implies a lower bound on the left side in terms of $ 1 - K^{\wildcard} e^{-\xi^{\wildcard} \dist(U,W)} $, which completes the argument for~\eqref{eq:dobrushin_summed}. 
\end{proof}

\subsection{DS-condition at any positive temperature}\label{sec:trivhight}
We show that the DS-condition holds at any positive temperature for models where each qubit is connected to at most two interactions. This class includes both the stars and plaquettes of the $2$D toric code, the stars of the $3$D toric code, and the slightly more general $2$D and $3$D tessellation models, see \cref{fig:code_examples}. 
The key point is that the high-temperature expansion of these models has a particularly simple form, akin to that of the $1$D Ising model. This allows for good control over the marginals of the Gibbs state used in the DS-condition.

\begin{definition}\label{def:pseudo1D}
  For a family of CSS Hamiltonians on a $D$-dimensional lattice, we say that the star part admits a \emph{trivial high-temperature expansion} if, for any $\lsym\in \FF$ and any star-connected region $R\subseteq \lsym$, the marginal of the (unnormalized) Gibbs state on $R$ satisfies at any $ \beta < \infty $:
  \begin{equation}\label{eq:pseudo1Dcase1}
    \Tr_R e^{-\beta H^\str_R}
    = 2^{|R|} \cosh(\beta)^{|\strset_R|} \left( \identity + 1_{\cobdy \strset_R \cap R=\emptyset}  \prod_{s\in \strset_R} \tanh(\beta)A_s \right) .
  \end{equation}
  The definition for the plaquette part is analogous.
\end{definition}
For any star-connected region $ R $ the condition $ \cobdy \strset_R \cap R=\emptyset $ on the coboundary of the star set of $ R $ can be fulfilled or not. 
Generalizations to non-connected $ R $ are possible and would involve one term of the type \eqref{eq:pseudo1Dcase1} per connected component. 

To derive the DS-condition for models with a trivial high-temperature expansion, we further require that all rectangles are connected. The main obstacle to further generalizations is a large number of small connected components at the boundary of the rectangle. While one could deal with a constant number of connected components, or require it only for large enough rectangles, we opt not to for clarity.

\begin{theorem}\label{thm:trivexpansionDS}
  For a translation-invariant family of CSS Hamiltonians on a $D$-dimensional lattice and $\wildcard \in \{\str, \plq\}$, assume that the $\wildcard$-part has a trivial high-temperature expansion and that all rectangles are $\wildcard$-connected. 
  Then there exist constants $K^\wildcard<\infty$, $\xi^\wildcard > 0$ such that $\mathrm{DS}_\beta^{\wildcard}(K^\wildcard, \xi^\wildcard)$ holds at any $ \beta < \infty $ on any length scale. 
\end{theorem}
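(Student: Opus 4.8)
The plan is to verify the DS-condition of \cref{def:dscond} directly, by bringing the two operators appearing in \eqref{eq:corrdecaywildcard} into closed form with the help of the trivial high-temperature expansion. As usual I only treat $\wildcard=\str$. Write $t\coloneqq\tanh\beta\in(0,1)$, $n_R\coloneqq|\strset_R|$, $P_R\coloneqq\prod_{s\in\strset_R}A_s$ and $\epsilon_R\coloneqq 1_{\cobdy\strset_R\cap R=\emptyset}\in\{0,1\}$. Fix an admissible triple of rectangles $(U,V,W)$; one may assume $U,V,W$ are all nonempty, since otherwise the quotient in \eqref{eq:corrdecaywildcard} equals the identity. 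The first step is geometric: since $UV$, $VW$, $UVW$ and $V$ are rectangles with $UVW=UV\cup VW$ and $V=UV\cap VW$, the configuration is the one drawn in \cref{fig:approxtensor}, with $U$ and $W$ on opposite sides of $V$ along one coordinate axis; hence $V$ has width $\dist(U,W)-1\geq1$ along that axis, so in particular $\dist(U,W)\geq2$. Because every star has vertex support of diameter at most $1$, no star can touch both $U$ and $W$, whence $\strset_U\cap\strset_W=\emptyset$. Together with $\strset_{UV}=\strset_U\cup\strset_V$ and $\strset_{UVW}=\strset_U\cup\strset_{VW}$ this yields $\strset_U\setminus\strset_V=\strset_U\setminus\strset_{VW}$; calling this common set $G$, one has the disjoint decompositions $\strset_{UV}=G\uplus\strset_V$ and $\strset_{UVW}=G\uplus\strset_{VW}$.

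Next I would compute the partial traces. Expanding $e^{-\beta H^\str_{UV}}=\cosh(\beta)^{n_{UV}}\sum_{T\subseteq\strset_{UV}}t^{|T|}\prod_{s\in T}A_s$ and applying $\Tr_V$, only the terms with $\cobdy T\cap V=\emptyset$ survive, each acquiring a factor $2^{|V|}$. Decomposing $T=T_0\uplus T_1$ along $\strset_{UV}=G\uplus\strset_V$ and using that the stars of $G$ avoid $V$, one gets $\cobdy T_1\cap V=\cobdy T\cap V=\emptyset$, so the trivial high-temperature expansion applied to the $\str$-connected rectangle $V$ forces $T_1\in\{\emptyset,\strset_V\}$ (the value $\strset_V$ being admissible only if $\epsilon_V=1$). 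Summing the resulting two families gives
\begin{equation*}
  \Tr_V e^{-\beta H^\str_{UV}}=2^{|V|}\cosh(\beta)^{n_{UV}}\Big(\prod_{s\in G}(\identity+tA_s)\Big)\big(\identity+\epsilon_V t^{n_V}P_V\big),
\end{equation*}
and the identical computation with $VW$ in place of $V$ produces $\Tr_{VW}e^{-\beta H^\str_{UVW}}$ with the same factor $\prod_{s\in G}(\identity+tA_s)$. Dividing by $\Tr_{UV}e^{-\beta H^\str_{UV}}=2^{|UV|}\cosh(\beta)^{n_{UV}}(\identity+\epsilon_{UV}t^{n_{UV}}P_{UV})$ (the trivial expansion for $R=UV$), inverting through $P_{UV}^2=\identity$, and using that this inverse is supported away from $V$, I obtain closed forms for $\Tr_V\hat\rho^\str_{UV}$ and $\Tr_{VW}\hat\rho^\str_{UVW}$. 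Since both carry the same \emph{invertible} factor $\prod_{s\in G}(\identity+tA_s)$, it cancels in the quotient, leaving (all operators commuting, being $X$-diagonal)
\begin{equation*}
  \frac{\Tr_{VW}\hat\rho^\str_{UVW}}{\Tr_V\hat\rho^\str_{UV}}=c_\lsym\,\frac{(\identity+\epsilon_{VW}t^{n_{VW}}P_{VW})(\identity-\epsilon_{UVW}t^{n_{UVW}}P_{UVW})}{(\identity+\epsilon_V t^{n_V}P_V)(\identity-\epsilon_{UV}t^{n_{UV}}P_{UV})},
\end{equation*}
where $c_\lsym=(1-\epsilon_{UV}t^{2n_{UV}})(1-\epsilon_{UVW}t^{2n_{UVW}})^{-1}\in[1-t^2,(1-t^2)^{-1}]$.

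To finish, I would estimate this quotient. Each binomial $\identity\pm\epsilon t^{n}P$ with $P^2=\identity$, as well as its inverse, differs from $\identity$ in operator norm by at most $2t^{n}/(1-t^2)$; since $n_V\leq n_{UV},n_{VW},n_{UVW}$ and $|c_\lsym-1|\leq 2t^{2n_V}/(1-t^2)$, a routine expansion gives
\begin{equation*}
  \Big\|\frac{\Tr_{VW}\hat\rho^\str_{UVW}}{\Tr_V\hat\rho^\str_{UV}}-\identity\Big\|\leq C(\beta)\,t^{n_V},\qquad C(\beta)=\frac{C_0}{(1-\tanh^2\beta)^2},
\end{equation*}
with $C_0$ an absolute constant. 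By translation invariance the rectangle $V$ contains at least $\dist(U,W)-1$ unit cells, each owning the same number $m$ of stars, so $n_V=|\strset_V|\geq m(\dist(U,W)-1)$; if $m=0$ then $H^\str_\lsym\equiv0$ and the quotient is the identity, while if $m\geq1$ then $t^{n_V}\leq t^{-m}e^{-m|\ln\tanh\beta|\,\dist(U,W)}$. This establishes $\mathrm{DS}^\str_\beta(K^\str,\xi^\str)$ for every $\lsym$, with $\xi^\str=m|\ln\tanh\beta|>0$ (this is where $\beta>0$ enters) and $K^\str=C(\beta)t^{-m}$, and with arbitrary length scale $L_0^\str$ since no step needed $\diam(UVW)$ large. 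In the periodic case $V$ may split into two rectangles, which merely introduces extra factors $\identity+\epsilon_{V_i}t^{n_{V_i}}P_{V_i}$ handled exactly as above (one loop term per connected component); the only truly degenerate sub-case is $\dist(U,W)=1$, where one uses that then $\diam(W)\leq2$, so $|\strset_W|$ and hence the single uncancelled boundary factor are bounded by a $\beta$-dependent constant, which suffices since the required bound $K^\str e^{-\xi^\str\dist(U,W)}=K^\str e^{-\xi^\str}$ is then itself a constant. The plaquette case is verbatim with $X\leftrightarrow Z$ and stars replaced by plaquettes.

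I expect the main obstacle to be the combinatorial bookkeeping of the second step: identifying precisely which star subsets survive each partial trace, verifying that the $G$-factor is genuinely the same for the $V$- and the $VW$-traces --- which is where $\dist(U,W)\geq2$ and the diameter-$1$ bound on star supports are used, and where the $\diam(W)\leq 2\dist(U,W)^2$ hypothesis rescues the degenerate periodic sub-case --- and keeping careful track of the boundary indicators $\epsilon_R$. Once the two closed forms are available, the norm estimate and the extraction of the decay rate are elementary, and only this last part uses translation invariance.
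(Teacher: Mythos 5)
Your proposal is correct and follows essentially the same route as the paper's proof: use the trivial high-temperature expansion to bring the relevant marginals into closed form, cancel the common boundary factor, bound the remaining commuting ratio of $(\identity\pm\tanh(\beta)^{n}\,\cdot)$ terms by a $\beta$-dependent constant times $\tanh(\beta)^{|\strset_V|}$ (with the two-component $V$ of the periodic case treated analogously), and convert $|\strset_V|\gtrsim\dist(U,W)$ into the claimed exponential decay. The only organizational difference is that the paper first rewrites the quotient as a product of four traces $\Tr_R e^{-\beta H^\str_R}$, each directly covered by \cref{def:pseudo1D}, via the identity $H^\str_{UVW}-H^\str_{VW}=H^\str_{UV}-H^\str_{V}$, whereas you compute the mixed traces $\Tr_V e^{-\beta H^\str_{UV}}$ and $\Tr_{VW} e^{-\beta H^\str_{UVW}}$ directly, which tacitly relies on the (true, but asserted rather than proved in your sketch) consequence of \cref{def:pseudo1D} that the only $T\subseteq\strset_V$ with $\cobdy T\cap V=\emptyset$ are $T=\emptyset$ and $T=\strset_V$.
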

The proof of this theorem is given at the end of the section.
Let us first relate the trivial high-temperature expansion to the number of interactions per qubit.
\begin{corollary}\label{cor:interactioncount}
  For a family of CSS Hamiltonians on a $D$-dimensional lattice, $\beta<\infty$,
  assume that for all $\lsym \in \FF $ and any $v\in \lsym$ the number of stars supported on $v$ is at most $2$, i.e.\ $|\bdy v |\leq 2$. Furthermore, assume that all rectangles are star-connected. 
  Then, the star part satisfies the DS-condition on length scale $L^\str_0$. Similarly, if $|\cobdy v|\leq 2$ for all $v\in \lsym$, and if all rectangles are plaquette-connected, then the plaquette part satisfies the DS-condition.
\end{corollary}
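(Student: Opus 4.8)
The plan is to derive \cref{cor:interactioncount} from \cref{thm:trivexpansionDS}: I will use the hypothesis $|\bdy v|\le 2$ for all $v$ to verify that the star part admits a trivial high-temperature expansion in the sense of \cref{def:pseudo1D}, and then, since all rectangles are assumed star-connected, \cref{thm:trivexpansionDS} immediately produces constants $K^\str<\infty$, $\xi^\str>0$ with $\mathrm{DS}^\str_\beta(K^\str,\xi^\str)$ at every $\beta<\infty$, in particular on the length scale $L^\str_0$ of \cref{def:dscond}. I treat the star case; the plaquette case is the verbatim dual under $X\leftrightarrow Z$ and $\bdy\leftrightarrow\cobdy$.

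For the trivial expansion, fix a star-connected region $R\subseteq\lsym$. By the same high-temperature expansion used in the proof of \cref{thm:equi:main} — using $e^{\beta A_s}=\cosh(\beta)\identity+\sinh(\beta)A_s$, and that $\Tr_R$ annihilates every Pauli-$X$ string whose support meets $R$ —
$$\Tr_R e^{-\beta H^\str_R}=2^{|R|}\cosh(\beta)^{|\strset_R|}\sum_{\substack{T\subseteq\strset_R\\ \cobdy T\cap R=\emptyset}}\tanh(\beta)^{|T|}\prod_{s\in T}A_s .$$
So it suffices to show that the only $T\subseteq\strset_R$ with $\cobdy T\cap R=\emptyset$ are $T=\emptyset$ and, when $\cobdy\strset_R\cap R=\emptyset$, also $T=\strset_R$; substituting this reproduces \eqref{eq:pseudo1Dcase1}.

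This is the combinatorial heart. Over $\F_2$, $v\in\cobdy T$ iff $|T\cap\bdy v|$ is odd, and since $\bdy v\subseteq\strset_R$ for every $v\in R$, the condition $\cobdy T\cap R=\emptyset$ means $|T\cap\bdy v|$ is even for all $v\in R$. Because $|\bdy v|\le 2$, this unpacks to: (i) $s\notin T$ whenever some $v\in R$ has $\bdy v=\{s\}$; (ii) $s\in T\Leftrightarrow s'\in T$ whenever some $v\in R$ has $\bdy v=\{s,s'\}$. Condition (ii) says $T$ is a union of connected components of the graph $G$ on vertex set $\strset_R$ with edge set $\{\bdy v : v\in R,\ |\bdy v|=2\}$, and I would show $G$ is connected (at most one component): given $s,s'\in\strset_R$, pick $v\in\cobdy s\cap R$ and $v'\in\cobdy s'\cap R$; star-connectedness of $R$ provides a chain $v=v_0,\dots,v_k=v'$ of qubits in $R$ with consecutive ones lying in a common star $\sigma_i\in\bdy v_i\cap\bdy v_{i+1}$; then each $\bdy v_i$ with $1\le i\le k-1$ either equals $\{\sigma_{i-1},\sigma_i\}\in E(G)$ or has $\sigma_{i-1}=\sigma_i$, while $\{s,\sigma_0\}\subseteq\bdy v_0$ and $\{s',\sigma_{k-1}\}\subseteq\bdy v_k$, so $s$ and $s'$ lie in one component. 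Hence $T$ is constant on $\strset_R$, i.e.\ $T\in\{\emptyset,\strset_R\}$; and $T=\strset_R$ is compatible with (i) exactly when no $v\in R$ has $|\bdy v|=1$, which under $|\bdy v|\le 2$ is precisely the condition $\cobdy\strset_R\cap R=\emptyset$. This establishes \eqref{eq:pseudo1Dcase1}, and \cref{thm:trivexpansionDS} finishes the proof; the plaquette claim follows dually.

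I do not expect a serious obstacle: the argument is the elementary $\F_2$ bookkeeping for $\cobdy T\cap R=\emptyset$ plus the lifting of a qubit-chain in $R$ to a chain in the star graph $G$. The only points requiring a little care are the fully degenerate cases (e.g.\ $R$ a single non-interacting qubit, or $\strset_R$ a singleton) and the fact that the model's boundary interactions must not break $\strset_R$ into components that $R$ fails to bridge — but the latter is exactly what star-connectedness of rectangles rules out, which is also why that hypothesis is required in \cref{thm:trivexpansionDS}.
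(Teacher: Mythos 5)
Your proposal is correct and follows essentially the same route as the paper: reduce to \cref{thm:trivexpansionDS} by verifying the trivial high-temperature expansion, expand $\Tr_R e^{-\beta H^\str_R}$, and use $|\bdy v|\le 2$ together with star-connectedness to show that $\cobdy T\cap R=\emptyset$ forces $T\in\{\emptyset,\strset_R\}$. The paper phrases this last step as a path along which membership in $T$ must switch at some qubit (producing odd parity there), while you phrase it via $\F_2$ parity conditions and connectedness of the star graph $G$ — the same argument, with your version slightly more explicit about when the $T=\strset_R$ term is compatible with the indicator $1_{\cobdy\strset_R\cap R=\emptyset}$.
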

\begin{proof}
  Using \cref{thm:trivexpansionDS}, it is sufficient to show that the model admits a trivial high-temperature expansion. We give the proof for the star case; the plaquette case is analogous.
  Similarly to~\eqref{eq:traceoutGibbs}, we expand the marginal of the unnormalized Gibbs state:
  \begin{equation*}
    \Tr_R e^{-\beta H^\str_R} = 2^{|R|} \cosh(\beta)^{|\strset_R|} \sum_{T\subseteq \strset_R} 1_{\cobdy T\cap R=\emptyset} \prod_{s\in T} \tanh(\beta) A_s \ .
  \end{equation*}
  The condition $\cobdy T \cap R = \emptyset$ holds only if $T=\emptyset$ or $T=\strset_R$. 
  Assume $\cobdy T \cap R = \emptyset$ but neither  $T=\emptyset$ nor $T=\strset_R$. Then, there are $s\in T, s'\in \strset_R\setminus T$. 
  Since $R$ is star-connected, there is a finite path $s,v_1,s_2,v_2, \ldots, v_k, s'$ connecting $s$ to $s'$ with $v_i \in R$ for all $i$. Since $s\in T, s'\notin T$ there exists some $v_i$ with $s_{i-1}\in T $ and $s_{i}\notin T$. Since, by assumption, there are only two stars connected to $v_i$, it is connected to an odd number of stars in $T$ and hence $v_i\in dT\cap R$, which is a contradiction. 
\end{proof}

The trivial high-temperature expansion is illustrated well with the $1$D Ising model on a chain, which can be viewed as a trivial CSS code with Ising interactions as stars and no plaquettes. 
The marginals of its Gibbs state on an interval $R$ contain two terms, one proportional to the identity and the other is the product of all interactions, which is a Pauli-$X$ on both ends of the interval; interactions cannot form loops. This is in stark contrast to the $2$D Ising model, where the expansion contains an extensive number of closed loops of interactions which multiply to identity.

The more relevant examples of CSS codes, however, are the $2$D and $3$D toric code and the wider class of tessellation models \cref{fig:code_examples}. 
For a $2$D tessellation model, such as the $2$D toric code, qubits are placed on edges of a graph, stars on vertices and plaquettes on faces. Thus, there are at most two stars and two plaquettes connected to a single qubit. Since, by assumption, every unit cell is connected, so are rectangles. Thus,  \cref{cor:interactioncount} applies to both the star and plaquette parts and they satisfy the DS-condition at any positive temperature. 

For a $3$D tessellation model, such as the $3$D toric code, qubits are also placed on edges of a graph, with stars on vertices and plaquettes on faces. 
The star part of these models thus satisfies the DS-condition at any positive temperature, using \cref{cor:interactioncount} and the fact that each edge (i.e.\ qubit) is connected to two vertices (i.e.\  stars). The plaquette part has, in general, many non-trivial loops in the high-temperature expansion, similar to the $2$D Ising model, and can not be analysed with this method.

\begin{proof}[Proof of \cref{thm:trivexpansionDS}]
  Let $U, V, W\subseteq \lsym$ be as in \cref{def:dscond} and let $\beta<\infty$.
  Since, by assumption, any rectangle is star-connected, $UV$, $VW$, and $UVW$ are star-connected. 
  The rectangle $V$ has either one or two connected components; the latter occurs if $UVW$ has periodic boundary conditions (see \cref{fig:approxtensor}). First, consider the case where $V$ is star-connected. 
  We use that $H^\str_{UVW}-H^\str_{VW} = H^\str_{UV}-H^\str_{V}$ has no support on $VW$ to rewrite the fraction in the DS-condition:
  \begin{align*}
    &\frac{\Tr_{VW} \hat{\rho}^\str_{UVW}}{\Tr_{V} \hat{\rho}^\str_{UV}}= \frac{\Tr_{VW} e^{-\beta H^\str_{VW}}}{\Tr_{UVW} e^{-\beta H^\str_{UVW}}} \frac{\Tr_{UV} e^{-\beta H^\str_{UV}}}{\Tr_{V} e^{-\beta H^\str_{V}}} .
  \end{align*}
  On each of the traces on the right side, we use the trivial high-temperature expansion \eqref{eq:pseudo1Dcase1}.
  To simplify notation let $\tau =\tanh(\beta)$ and $A^{\strset_R}\coloneqq 1_{\cobdy \strset_R \cap R=\emptyset}  \prod_{s\in \strset_R}A_s$, then the above fraction equals:
  \begin{align*}
    \frac{2^{|VW|} \cosh(\beta)^{|\strset_{VW}|}}{2^{|UVW|} \cosh(\beta)^{|\strset_{UVW}|}}
    \frac{\identity + \tau^{|\strset_{VW}|}A^{\strset_{VW}} }{\identity + \tau^{|\strset_{UVW}|}A^{\strset_{UVW}}} 
    \cdot
    \frac{2^{|UV|} \cosh(\beta)^{|\strset_{UV}|}}{2^{|V|} \cosh(\beta)^{|\strset_{V}|}}
    \frac{\identity + \tau^{|\strset_{UV}|}A^{\strset_{UV}} }{\identity + \tau^{|\strset_{V}|}A^{\strset_{V}}} &\\
    =\frac{\identity + \tau^{|\strset_{VW}|}A^{\strset_{VW}} }{\identity + \tau^{|\strset_{UVW}|}A^{\strset_{UVW}}}
    \frac{\identity + \tau^{|\strset_{UV}|}A^{\strset_{UV}} }{\identity + \tau^{|\strset_{V}|}A^{\strset_{V}}} & ,
  \end{align*}
  where the last step also used $|\strset_{UV}| + |\strset_{VW}| = |\strset_{UVW}| + |\strset_{V}|$.
  This expression is diagonal in the $X$ basis. Moreover, since the spectrum of $A^{\strset_R}$ is one of $\{\pm 1\}, \{1\}, \{0\}$ for any $R$,
  we can bound the norm by replacing any $A^{\strset_R}$ by either $ +1 $ in the numerator and $- 1$ in the denominator, hence taking the maximum over all such configurations.   This yields
  \begin{align*}
    \left\| \frac{\Tr_{VW} \hat{\rho}^\str_{UVW}}{\Tr_{V} \hat{\rho}^\str_{UV}} -1 \right\| 
    & \leq \frac{\tau^{|\strset_{VW}|} + \tau^{|\strset_{UV}|} + \tau^{|\strset_{UV}|+|\strset_{VW}|} + \tau^{|\strset_{UVW}|} +\tau^{ |\strset_{V}|} + \tau^{|\strset_{UVW}| + |\strset_{V}|}}
    {\left( 1- \tau^{|\strset_{UVW}|} \right) \left(1-\tau^{|\strset_{V}|}\right)}
    \notag \\ 
    & \leq \frac{ 6 \tau^{ |\strset_{V}|}}
    { \left(1-\tau^{|\strset_{V}|}\right)^2} , 
  \end{align*}
    since $ \tau \in [0, 1) $.  The last term is upper-bounded by 
    $
     6 \left(1-\tau\right)^{-2} e^{-|\ln(\tau)| |\strset_{V}|}
   $. 
   
  If $V$ consists of two star-separated, connected components, $V=V_1\uplus V_2$, the Gibbs state on $V$ factorizes $\Tr_V e^{-\beta H^{\str}_V} = \Tr_{V_1} e^{-\beta H^{\str}_{V_1}} \otimes \Tr_{V_2} e^{-\beta H^{\str}_{V_2}} $. The other terms remain unchanged. Proceeding as above, we obtain three factors in the denominator and hence find
  \begin{align*}
    \left\| \frac{\Tr_{VW} \hat{\rho}^\str_{UVW}}{\Tr_{V} \hat{\rho}^\str_{UV}} -1 \right\| &\leq \frac{ 6 }
    { \left(1-\tau\right)^3} e^{-|\ln(\tau)| |\strset_{V}|} \ .
  \end{align*}
  Taking the maximum over both cases yields $K^\str = 6 (1-\tau)^{-3}$ for \eqref{eq:corrdecaywildcard}.

  To obtain a decay rate  $\xi^\str> 0$, we bound the number of stars on $ V $:
  \begin{equation*}
    |\strset_{V}| \geq \frac{|V|}{\max_{s\in \strset_V}|\cobdy s|}\geq \frac{\dist(U,W)}{C_Q \max_{s\in \strset_V}|\cobdy s|} , 
  \end{equation*}
  where the maximal number of qubits in a star, $\max_{s\in \strset_V}|\cobdy s|$ is bounded for translational invariant systems and where $C_Q<\infty$ is the number of qubits per unit cell. Thus
  \begin{equation*}
      \xi^\str = \frac{|\ln(\tau)|}{C_Q \max_{s\in \strset_V}|\cobdy s|} 
  \end{equation*}
  and the model satisfies $DS^\str(K^\str, \xi^\str)$ at any length scale.
\end{proof}

\section{Davies Lindbladians}\label{sec:davies}

We recall from~\eqref{def:LL}--\eqref{def:jumpops} the definitions of the Davies Lindbladians for CSS codes. 
In this section, we will briefly gather some properties of these operators which are of relevance for our proof of the main result (see also \cite{davies_Generatorsdynamicalsemigroups_1979,alicki_Quantumdynamicalsemigroups_1987,temme_ThermalizationTimeBounds_2017, alicki_thermalizationKitaevs2D_2009,breuer_theoryopenquantum_2007,attal_OpenQuantumSystems_2006a} for more information). In particular, we show that the full Lindbladian $ \LL_\lsym^\strplq = \LL_\lsym^\str + \LL_\lsym^\plq$ is the sum of two commuting parts.

\subsection{Self-adjointness and stationary states}

While the Lindbladians,
$$
\LL_R^\wildcard = \sum_{v\in R } \LL_v^\wildcard , \qquad \wildcard \in \{ \strplq, \str, \plq \} ,
$$
with $ \LL_v^\wildcard $ from~\eqref{def:LLv} and $ R \subseteq \lsym$ arbitrary, are generally not self-adjoint with respect to the Hilbert-Schmidt inner product, they do share this property if one endows the space $ \BB(\HH_\lsym) $ with a scalar product  with respect to the Gibbs equilibrium state $ \rho_\lsym^\strplq$ and any interpolation parameter $ s \in [0,1]$:
\begin{equation}\label{def:sscalar}
    \langle O_1, O_2 \rangle_{\rho_\lsym^\strplq, s} \coloneq  \Tr O_2 (\rho_\lsym^\strplq)^s O_1^\dagger (\rho_\lsym^\strplq)^{1-s}  . 
\end{equation}
The case $ s= \tfrac{1}{2}$ corresponds for the KMS-scalar product, and $ s = 1 $ is the GNS-scalar product. 

Properties of the Lindbladian, such as its self-adjointness with respect to the above family of inner products, are most easily seen by identifying it with a symmetric Dirichlet form. To do so, one fixes $v \in \lsym  $ and recalls that the spectral projections, which enter the definitions~\eqref{def:jumpops} of the jump operators, are polynomials in the star and plaquette operators:
\begin{align}\label{eq:repPpol}
    P^\str_{\bdy v}(\omega) &=  \sum_{\substack{\mathbf{a}\in \{\pm 1\}^{\bdy v}\\ \sum_{s\in \bdy v} \mathbf{a}_s = -\omega/2}} \prod_{s\in \bdy v} \frac{\identity + \mathbf{a}_s A_s}{2} \notag \\
    P^\plq_{\cobdy v}(\omega) &=  \sum_{\substack{\mathbf{b}\in \{\pm\}^{\cobdy v}\\ \sum_{p\in \cobdy v} \mathbf{b}_p = -\omega/2}} \prod_{p\in \cobdy v} \frac{\identity + \mathbf{b}_p B_p}{2} .
  \end{align}
 This allows us to verify that the jump operators are eigenvectors of the modular operator corresponding to $ \rho_\lsym^\strplq$, i.e., in the star case for any $ s \in \mathbb{R} $: 
  \begin{align*}
    L_v^{\str}(\omega)(\rho^{\strplq}_{\lsym})^s &= Z_v P^\str_{\bdy v}(\omega) (\rho^{\strplq}_{\lsym})^s\\
    &= \frac{e^{-\beta s (H_{\lsym}^{\strplq}-H^\str_v)}}{(Z^\strplq_\beta)^s}Z_v P^\str_{\bdy v}(\omega)e^{-\beta s H^\str_v} \\
    &= (\rho^{\strplq}_{\lsym})^s e^{2\beta s H^\str_v}Z_v P^\str_{\bdy v}(\omega) \\
    &= (\rho^{\strplq}_{\lsym})^s Z_v e^{-2\beta s H^\str_v} P^\str_{\bdy v}(\omega)
    = e^{-\beta s \omega}(\rho^{\strplq}_{\lsym})^s L_v^{\str}(\omega)
  \end{align*}
  where we also used that $Z_v$ flips the sign of exactly the stars connected to $v$ and that $P^\str_{\bdy v}(\omega)$ are the spectral projections of $2H_v^\str$ of eigenvalue $-\omega$. This also implies, e.g.\ that $ L_v^{\str, \dag}(\omega)L^{\str}_v(\omega)$ commutes with $(\rho^\strplq)^s$. The relation
  \begin{equation}\label{eq:adjumps}
    L_v^{\str,\dag}(\omega) = P^\str_{\bdy v}(\omega) Z_v = Z_v P^\str_{\bdy v}(-\omega) = L_v^{\str}(-\omega) \ ,
  \end{equation}
  together with the detailed balance condition \eqref{eq:jumprate} may be used to write the Lindbladian as 
  \begin{equation}\label{eq:single_site_lindbladian}
     \LL_v^{\str}(O) = \frac{1}{2} \sum_\omega h_v^\str(\omega)  \left(  L_v^{\str,\dag}(\omega) \big[O, L_v^{\str}(\omega)  \big] 
     + e^{-\beta \omega} \big[L_v^{\str}(\omega) ,O \big] L_v^{\str,\dag}(\omega) \right) ,
  \end{equation}
where the summation is over the eigenvalues of $ -2\sum_{s\in \partial v} A_s $ which is symmetric with respect to changing $ \omega \to - \omega $. A similar result applies to the plaquette part.

Using the above properties and following the arguments in \cite[Lemma 5.2]{carlen_Gradientflowentropy_2017}, one thus arrives at the advertised Dirichlet form.
\begin{proposition}[cf.~\cite{carlen_Gradientflowentropy_2017}] 
    For any $ s \in [0,1 ] $, all $ v \in \lsym $ and both $ \wildcard \in \{ \str, \plq \} $ the single-site Lindbladian $ \LL_v^\wildcard $ corresponds to the Dirichlet form
\begin{equation}\label{eq:Dirichlet}
\frac{1}{2} \sum_\omega h_v^\wildcard(\omega) e^{\beta \omega (s-1)}  \left\langle \big[ L_v^\wildcard(\omega), O_1\big] , \big[ L_v^\wildcard(\omega), O_2\big] \right\rangle_{\rho_\lsym^\strplq, s} = - \left\langle O_1, \LL_v^\wildcard(O_2) \right\rangle_{\rho_\lsym^\strplq, s}
\end{equation}
defined on all $ O_1,O_2 \in \BB(\HH_\lsym)$. 
\end{proposition}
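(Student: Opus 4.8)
The plan is to establish \eqref{eq:Dirichlet} by a direct computation in the spirit of \cite[Lemma~5.2]{carlen_Gradientflowentropy_2017}; by the reduction noted above it suffices to treat $\wildcard=\str$, and throughout I abbreviate $\rho=\rho_\lsym^\strplq$, $L(\omega)=L_v^\str(\omega)$, $h(\omega)=h_v^\str(\omega)$. The ingredients I would assemble first are: (i) the modular-eigenvector identity $L(\omega)\rho^s=e^{-\beta s\omega}\rho^s L(\omega)$, valid for all $s\in\mathbb{R}$, established in the lines above \eqref{eq:single_site_lindbladian}, together with its companions $L^\dagger(\omega)=L(-\omega)$ from \eqref{eq:adjumps} and $L^\dagger(\omega)L(\omega)=L(-\omega)L(\omega)=P^\str_{\bdy v}(\omega)$ (the latter because conjugation by $Z_v$ sends $P^\str_{\bdy v}(\omega)$ to $P^\str_{\bdy v}(-\omega)$, as $Z_v$ flips the sign of exactly the stars connected to $v$); (ii) the detailed-balance relation \eqref{eq:jumprate}, $h(-\omega)=h(\omega)e^{-\beta\omega}$; and (iii) cyclicity of the trace.

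Then I would expand the left-hand side of \eqref{eq:Dirichlet}. Writing $\langle A,B\rangle_{\rho,s}=\Tr\!\big(B\,\rho^s A^\dagger\rho^{1-s}\big)$ and multiplying out the two commutators, $\langle[L(\omega),O_1],[L(\omega),O_2]\rangle_{\rho,s}$ splits into four traces of the schematic form $\pm\Tr\!\big(X_2\,\rho^s\,Y_1^\dagger\,\rho^{1-s}\big)$ with $X_2\in\{LO_2,\ O_2L\}$ and $Y_1^\dagger\in\{O_1^\dagger L^\dagger,\ L^\dagger O_1^\dagger\}$. In each of these I would use (i) to push the stray $L$ or $L^\dagger$ past the adjacent power of $\rho$ and then apply (iii) to expose the pattern $(\,\cdot\,)\rho^s O_1^\dagger\rho^{1-s}$ inside the trace. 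Two of the four terms become, directly, $\tfrac12 h(\omega)\Tr\!\big(L^\dagger(\omega)L(\omega)O_2\,\rho^s O_1^\dagger\rho^{1-s}\big)$ and $-\tfrac12 h(\omega)\Tr\!\big(L^\dagger(\omega)O_2L(\omega)\,\rho^s O_1^\dagger\rho^{1-s}\big)$, because the exponential generated by moving $L^\dagger$ past $\rho^{1-s}$ cancels the prefactor $e^{\beta\omega(s-1)}$. For the remaining two terms the operator content after this manipulation is $O_2L(\omega)L^\dagger(\omega)$, respectively $L(\omega)O_2L^\dagger(\omega)$, carrying a coefficient $\pm\tfrac12 h(\omega)e^{-\beta\omega}$; here I would relabel $\omega\mapsto-\omega$ in the sum and use $L^\dagger(\omega)=L(-\omega)$ together with detailed balance $h(-\omega)e^{\beta\omega}=h(\omega)$ to bring them to the normal-ordered forms $\tfrac12 h(\omega)\Tr\!\big(O_2L^\dagger(\omega)L(\omega)\,\rho^s O_1^\dagger\rho^{1-s}\big)$ and $-\tfrac12 h(\omega)\Tr\!\big(L^\dagger(\omega)O_2L(\omega)\,\rho^s O_1^\dagger\rho^{1-s}\big)$.

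Summing the four contributions, the operator standing to the left of $\rho^s O_1^\dagger\rho^{1-s}$ inside the trace is $\sum_\omega h(\omega)\big(\tfrac12 L^\dagger(\omega)L(\omega)O_2+\tfrac12 O_2L^\dagger(\omega)L(\omega)-L^\dagger(\omega)O_2L(\omega)\big)=-\LL_v^\str(O_2)$ by \eqref{def:LLv}, so the left side of \eqref{eq:Dirichlet} equals $-\Tr\!\big(\LL_v^\str(O_2)\,\rho^s O_1^\dagger\rho^{1-s}\big)=-\langle O_1,\LL_v^\str(O_2)\rangle_{\rho,s}$, which is the asserted identity. It is convenient to cross-check the intermediate grouping against the symmetrized representation \eqref{eq:single_site_lindbladian}, which already exhibits the pairing of the $\omega$ and $-\omega$ terms. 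The plaquette case is obtained verbatim under $X_v\leftrightarrow Z_v$, $A_s\leftrightarrow B_p$, $P^\str_{\bdy v}\leftrightarrow P^\plq_{\cobdy v}$.

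The main obstacle I anticipate is the bookkeeping of exponentials in the middle step: one must verify, term by term, that the three sources of $e^{\pm\beta\cdot\omega}$ factors — the prefactor $e^{\beta\omega(s-1)}$ in the statement, the factors $e^{\pm\beta s\omega}$ and $e^{\pm\beta(1-s)\omega}$ produced when commuting a jump operator past $\rho^s$ or $\rho^{1-s}$, and the $e^{-\beta\omega}$ supplied by detailed balance on the relabeled terms — conspire to leave exactly the $s$-independent coefficient $h(\omega)$, notwithstanding that the overall identity \eqref{eq:Dirichlet} does depend on $s$ through the inner product on its right. Once \eqref{eq:Dirichlet} is established, the self-adjointness of $\LL_v^\wildcard$ with respect to $\langle\cdot,\cdot\rangle_{\rho,s}$ — in particular for the KMS ($s=\tfrac12$) and GNS ($s=1$) inner products — follows immediately, since the left side is symmetric in $O_1,O_2$ (up to the Hermitian conjugations built into the inner product), and its non-negativity at $O_1=O_2$, using $h(\omega)>0$, displays $\LL_v^\wildcard$ as a symmetric, non-positive Dirichlet operator.
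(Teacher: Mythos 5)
Your proposal is correct and follows essentially the same route as the paper: the paper prepares exactly the ingredients you list (the modular identity $L_v^\str(\omega)\rho^s=e^{-\beta s\omega}\rho^s L_v^\str(\omega)$, the adjoint relation \eqref{eq:adjumps}, detailed balance \eqref{eq:jumprate}, and the symmetrized form \eqref{eq:single_site_lindbladian}) and then defers the four-term commutator expansion to \cite[Lemma 5.2]{carlen_Gradientflowentropy_2017}, which is precisely the computation you carry out, with the exponential bookkeeping and the $\omega\to-\omega$ relabeling (licit since the Bohr spectrum is symmetric) working out as you describe.
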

An immediate consequence is the following 
\begin{corollary}\label{pro:fullrankfixedpoint}
    Let $ R \subseteq \lsym $. Then the full Davies Lindbladians of a CSS code and its parts, $\LL^\wildcard_R$ with $\wildcard \in \{\str, \plq,\strplq\}$ share the following properties:
   \begin{enumerate}
\item Self-adjointness with respect to $  \langle \cdot , \cdot \rangle_{\rho_\lsym^\strplq, s} $ for any $ s \in [0,1]$. 
\item The full Gibbs state is a stationary state:  $
    \LL^{\wildcard,*}_R(\rho_\lsym^{\strplq}) = 0 $.
\end{enumerate} 
For both $ \wildcard \in \{ \str, \plq \} $, the $ \wildcard $-part of the full Gibbs state is also a stationary state of its Linbladian, $\LL^{\wildcard,*}_R(\rho_\lsym^{\wildcard}) = 0 $, and the same applies to its $ R' $-local version (defined in~\eqref{def:restorGNS}) 
    $$
    \LL^{\wildcard,*}_R(\hat{\rho}_{R'}^{\wildcard}) = 0 
    $$
    as long as $ R \subseteq R' $.
\end{corollary}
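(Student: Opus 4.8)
The plan is to read all four assertions off the symmetric Dirichlet form of the preceding Proposition, together with the modular eigenvalue identity $L_v^\wildcard(\omega)\,(\rho_\lsym^\strplq)^s = e^{-\beta s\omega}(\rho_\lsym^\strplq)^s\,L_v^\wildcard(\omega)$ verified just above it. The first point I would make is that this identity used only three facts about $\rho_\lsym^\strplq = e^{-\beta H^\strplq_\lsym}/Z^\strplq_\lsym$: that $H^\strplq_\lsym$ splits as $H^\wildcard_v + (H^\strplq_\lsym - H^\wildcard_v)$; that $Z_v$ (resp.\ $X_v$ in the plaquette case) commutes with the second summand and conjugates the first to its negative; and that $P^\wildcard_{\bdy v}(\omega)$ (resp.\ $P^\wildcard_{\cobdy v}(\omega)$) commutes with the second summand while projecting onto the eigenvalue-$\omega$ eigenspace of $2H^\wildcard_v$. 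All three persist verbatim if $\rho_\lsym^\strplq$ is replaced by $\rho_\lsym^\wildcard$ (there is now no complementary part), and also if it is replaced by $\hat\rho_{R'}^\wildcard$ as long as $v \in R'$: a high-temperature expansion as in~\eqref{eq:traceoutGibbs}, now tracing out all of $R'$ from $H^\wildcard_{R'}$, shows $\Tr_{R'} e^{-\beta H^\wildcard_{R'}}$ to be a scalar multiple of $\identity$ — every interaction appearing in $H^\wildcard_{R'}$ acts on $R'$, so only the empty subset survives the expansion — whence $\hat\rho_{R'}^\wildcard$ is a positive multiple of $e^{-\beta H^\wildcard_{R'}}$; and $\bdy v \subseteq \strset_{R'}$ (for $v\in R'$) makes $H^\wildcard_v$ a sub-sum of $H^\wildcard_{R'}$, so that $H^\wildcard_{R'} = H^\wildcard_v + (H^\wildcard_{R'} - H^\wildcard_v)$ carries exactly the required commutation structure. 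Hence the entire derivation of the Dirichlet form in the Proposition goes through with $\rho_\lsym^\strplq$ replaced by $\rho_\lsym^\wildcard$ or by $\hat\rho_{R'}^\wildcard$, for $\wildcard \in \{\str,\plq\}$ and $v \in R'$.

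For property~1, write $\mathcal{E}(O_1,O_2)$ for the left-hand side of~\eqref{eq:Dirichlet}, so that the Proposition reads $\mathcal{E}(O_1,O_2) = -\langle O_1, \LL^\wildcard_v(O_2)\rangle_{\rho_\lsym^\strplq, s}$. Each weighted inner product is a Hermitian form — by cyclicity of the trace one checks $\overline{\langle O_1,O_2\rangle_{\rho_\lsym^\strplq,s}} = \langle O_2,O_1\rangle_{\rho_\lsym^\strplq,s}$ — and the prefactors $h_v^\wildcard(\omega)\,e^{\beta\omega(s-1)}$ are real and nonnegative, so $\mathcal{E}$ is Hermitian as well, $\overline{\mathcal{E}(O_2,O_1)} = \mathcal{E}(O_1,O_2)$. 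Therefore $\langle \LL^\wildcard_v(O_1), O_2\rangle_{\rho_\lsym^\strplq,s} = \overline{\langle O_2, \LL^\wildcard_v(O_1)\rangle_{\rho_\lsym^\strplq,s}} = -\overline{\mathcal{E}(O_2,O_1)} = -\mathcal{E}(O_1,O_2) = \langle O_1, \LL^\wildcard_v(O_2)\rangle_{\rho_\lsym^\strplq,s}$, the claimed self-adjointness of $\LL^\wildcard_v$. Summing over $v \in R$, and for $\wildcard = \strplq$ also over its $\str$- and $\plq$-parts, yields property~1 for $\LL^\wildcard_R$ with $\wildcard \in \{\str,\plq,\strplq\}$.

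Property~2 then follows from self-adjointness combined with unitality. The Lindblad form~\eqref{def:LLv} gives $\LL^\wildcard_v(\identity) = 0$ at once, hence $\LL^\wildcard_R(\identity) = 0$. Since $\langle \identity, O\rangle_{\rho_\lsym^\strplq,1} = \Tr(\rho_\lsym^\strplq O)$, self-adjointness at $s = 1$ gives, for every $O \in \BB(\HH_\lsym)$, $\Tr(\LL^{\wildcard,*}_R(\rho_\lsym^\strplq)\,O) = \Tr(\rho_\lsym^\strplq\,\LL^\wildcard_R(O)) = \langle \identity, \LL^\wildcard_R(O)\rangle_{\rho_\lsym^\strplq,1} = \langle \LL^\wildcard_R(\identity), O\rangle_{\rho_\lsym^\strplq,1} = 0$, so $\LL^{\wildcard,*}_R(\rho_\lsym^\strplq) = 0$ by non-degeneracy of the trace pairing. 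For the last two assertions, the observation of the first paragraph makes the Proposition — hence also the self-adjointness step — valid with $\rho_\lsym^\strplq$ replaced by $\rho_\lsym^\wildcard$, resp.\ by $\hat\rho_{R'}^\wildcard$ whenever $R \subseteq R'$ (so that every $v \in R$ lies in $R'$); repeating the previous line then gives $\LL^{\wildcard,*}_R(\rho_\lsym^\wildcard) = 0$ and $\LL^{\wildcard,*}_R(\hat\rho_{R'}^\wildcard) = 0$. (Alternatively one can bypass self-adjointness and insert the modular identity at $s = 1$, the detailed balance~\eqref{eq:jumprate}, and~\eqref{eq:adjumps} directly into the Schrödinger-picture generator $\LL^{\wildcard,*}_v$: commuting $\sigma$ past the jump operators in the two available ways, together with a relabelling $\omega \mapsto -\omega$, turns the gain term $\sum_\omega h_v^\wildcard(\omega) L_v^\wildcard(\omega)\sigma L_v^{\wildcard,\dagger}(\omega)$ into the loss term $\tfrac12 \sum_\omega h_v^\wildcard(\omega)\{L_v^{\wildcard,\dagger}(\omega)L_v^\wildcard(\omega),\sigma\}$, so that $\LL^{\wildcard,*}_v(\sigma)$ vanishes for each $\sigma \in \{\rho_\lsym^\strplq, \rho_\lsym^\wildcard, \hat\rho_{R'}^\wildcard\}$.) The only step above that is more than bookkeeping is the reduction for $\hat\rho_{R'}^\wildcard$: one has to verify that its local normalization is a scalar and that $H^\wildcard_v$ embeds as a sub-sum of $H^\wildcard_{R'}$ precisely under the hypothesis $R \subseteq R'$, after which the modular identity localizes and the rest is immediate.
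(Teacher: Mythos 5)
Your reduction for $\hat\rho^{\wildcard}_{R'}$ rests on a false claim: $\Tr_{R'}e^{-\beta H^{\wildcard}_{R'}}$ is in general \emph{not} a scalar multiple of $\identity$. In the expansion of the type \eqref{eq:traceoutGibbs} (with $H^\str_\lsym$ replaced by $H^\str_{R'}$ and the trace taken over $R'$), the surviving subsets $T\subseteq \strset_{R'}$ are those with $\cobdy T\cap R'=\emptyset$, and $\cobdy T$ is the mod-$2$ boundary of the product $\prod_{s\in T}A_s$, not the union of the individual supports: although every single star in $\strset_{R'}$ acts nontrivially on $R'$, a product of several of them can cancel inside $R'$ while acting nontrivially on $(R')^c$. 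This happens in the models of main interest: for the $2$D toric code and a proper rectangle $R'$, every qubit of $R'$ lies in exactly two stars of $\strset_{R'}$, so $T=\strset_{R'}$ survives and $\Tr_{R'}e^{-\beta H^\str_{R'}} = 2^{|R'|}\cosh(\beta)^{|\strset_{R'}|}\bigl(\identity + \tanh(\beta)^{|\strset_{R'}|}\prod_{s\in \strset_{R'}}A_s\bigr)$, exactly the two-term structure of \eqref{eq:pseudo1Dcase1} and of the $1$D Ising discussion after \cref{cor:interactioncount}. So "only the empty subset survives" is wrong, and with it your verification of the modular identity for $\hat\rho^{\wildcard}_{R'}$ collapses, both in the Dirichlet-form route and in the parenthetical direct computation -- and this is precisely the step you yourself single out as the only non-bookkeeping one.

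The gap is repairable, but the repair needs exactly the commutation input you tried to bypass: the normalization $(\Tr_{R'}e^{-\beta H^\str_{R'}})^{-1}$ is supported on $(R')^c$, hence commutes with every $Z_v$, $v\in R\subseteq R'$, and by the expansion \eqref{eq:traceoutGibbs} it is a polynomial in star operators (cf.\ \cref{lem:pinchtotrace}), hence commutes with $e^{-\beta H^\str_{R'}}$ and with every spectral projection $P^\str_{\bdy v}(\omega)$; it can therefore be pulled through the jump operators, after which your modular identity for $\hat\rho^\str_{R'}$, and the rest of your argument, does go through. The remaining parts of your proposal are correct, and your route genuinely differs from the paper's for the last two assertions: the paper deduces them from item~2 by factoring $\rho^\strplq_\lsym = 2^{|\lsym|}\rho^\plq_\lsym\rho^\str_\lsym$ and $e^{-\beta H^\str_\lsym} = e^{-\beta(H^\str_\lsym-H^\str_{R'})}e^{-\beta H^\str_{R'}}$ and pulling the commuting factor through $\LL^{\str,*}_R$, whereas you rerun the detailed-balance/Dirichlet-form argument with the reference state replaced; both work, but only once the commutation of these boundary and normalization factors with the jump operators is actually established rather than assumed away.
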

\begin{proof}
1. The self-adjointness of $ \LL_v^\wildcard $ with $\wildcard \in \{\str, \plq\} $ is immediate from~\eqref{eq:Dirichlet}. The claim then follows by linearity.

\noindent
2. We use self-adjointness with respect to the GNS-inner product ($s=1$) and the unitality of the semigroup to conclude
          \begin{align*}
            \Tr(O \LL^{\wildcard, *}_R(\rho^{\strplq}_{\lsym})) = \Tr(\LL^{\wildcard}_R(O)\rho^{\strplq}_{\lsym} \identity) = -\Tr(O\rho^{\strplq}_{\lsym} \LL^{\wildcard}_R(\identity))=0
          \end{align*}
          for any $O\in \BB(\HH_{\lsym})$.

\noindent
For the proof of the remaining assertions, we focus on the $ \str $-case, and recall from \cref{thm:equi:main} that $ \rho^{\strplq}_{\lsym} = 2^{|\lsym|} \rho^{\plq}_{\lsym}\rho^{\str}_{\lsym}$. Since the jump operators $ L_v^\str (\omega)$ and their adjoints commute with the plaquette operators $ B_p $ and hence $ \rho^{\plq}_{\lsym} $, the second item implies that 
\begin{equation*} 
0 = \LL^{\str, *}_R(\rho^{\strplq}_{\lsym}) = 2^{|\lsym|} \rho^{\plq}_{\lsym} \LL^{\str,*}_R(\rho_\lsym^{\str}) 
\end{equation*}
and hence $  \LL^{\str,*}_R(\rho_\lsym^{\str}) = 0 $. 

Finally, we may write  $ e^{-\beta H_\lsym^\str} = e^{-\beta (H_{\lsym}^\str - H_{R'}^\str)} e^{-\beta H_{R'}^\str} $. The first factor commutes with the jump operators $ L_v(\omega)^\str $ with $ v \in R $ and its adjoint. Since it is not supported on $R$ it commutes with all $Z_v$ for $v\in R$ and, since they are built from stars, it commutes with all star projections $P^\str_{\bdy v}(\omega)$.
 By a similar argument as above, one hence concludes $  \LL^{\str,*}_R( e^{-\beta H_{R'}^\str} ) = 0 $, and hence the claim.
\end{proof}
The second assertion in this corollary is a weak form of frustration-freeness. Subsequently, we also need a stronger statement: any fixed point of the Lindbladian on  $R\subset \lsym$ is a fixed point of the Lindbladian on any subregion $R'\subseteq R$. This will be established as the last item in \cref{pro:condexpprop} below.

\subsection{Commuting star and plaquette parts}
Splitting the Lindbladian into a sum of two Lindbladians is possible in general. In our case, the two parts satisfy a much stronger condition: they commute. As in the equilibrium setting, this allows us to decompose most properties into a star and a plaquette part. Unlike for the Gibbs state, this, however, does not mean that the two Lindbladian parts act trivially as an entirely classical dynamics on general states. Only for density matrices diagonal in the $ X $- or $ Z $-basis they behave as Glauber dynamics.  
\begin{lemma}\label{lem:lindcom}
  Let $R,R'\subseteq \lsym$. Then $\LL^\str_R$ and $\LL^\plq_{R'}$ commute.
\end{lemma}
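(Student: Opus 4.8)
The plan is to reduce the statement to a single-site identity and then read off how the two families of Davies jump operators intertwine. Since $\LL_R^{\str}=\sum_{v\in R}\LL_v^{\str}$ and $\LL_{R'}^{\plq}=\sum_{v'\in R'}\LL_{v'}^{\plq}$, it suffices to prove $[\LL_v^{\str},\LL_{v'}^{\plq}]=0$ for every pair of vertices $v,v'$, whether or not they coincide.

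First I would record the relevant algebra of the jump operators. From \eqref{def:jumpops} and \eqref{eq:repPpol}, $L_v^{\str}(\omega)=Z_vP^{\str}_{\bdy v}(\omega)$ with $P^{\str}_{\bdy v}(\omega)$ a polynomial in $\{A_s:s\in\bdy v\}$, and $L_{v'}^{\plq}(\omega')=X_{v'}P^{\plq}_{\cobdy v'}(\omega')$ with $P^{\plq}_{\cobdy v'}(\omega')$ a polynomial in $\{B_p:p\in\cobdy v'\}$; moreover $L_v^{\str,\dag}(\omega)L_v^{\str}(\omega)=P^{\str}_{\bdy v}(\omega)$ and $L_{v'}^{\plq,\dag}(\omega')L_{v'}^{\plq}(\omega')=P^{\plq}_{\cobdy v'}(\omega')$, since these spectral projections are idempotent and $Z_v^2=X_{v'}^2=\identity$. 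Each $A_s$ is a product of Pauli-$X$ operators and each $B_p$ a product of Pauli-$Z$ operators, so $A_s$ commutes with $X_{v'}$, $B_p$ commutes with $Z_v$, and $[A_s,B_p]=0$ by the CSS relation \eqref{eq:commrule}. Hence $P^{\str}_{\bdy v}(\omega)$ and $P^{\plq}_{\cobdy v'}(\omega')$ commute with each other and with every jump operator of the opposite type, and the only nontrivial crossing is $Z_vX_{v'}=(-1)^{\delta_{v,v'}}X_{v'}Z_v$. It follows that $L_v^{\str}(\omega)$ and $L_v^{\str,\dag}(\omega)$ commute with $L_{v'}^{\plq}(\omega')$ and $L_{v'}^{\plq,\dag}(\omega')$ when $v\neq v'$ and anticommute when $v=v'$, while $P^{\str}_{\bdy v}(\omega)$ always commutes with $L_{v'}^{\plq}(\omega')$, $L_{v'}^{\plq,\dag}(\omega')$ and $P^{\plq}_{\cobdy v'}(\omega')$.

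Next I would pass to the superoperator level. Writing $\mathsf{L}_X\colon O\mapsto XO$ and $\mathsf{R}_X\colon O\mapsto OX$, the single-site Lindbladian \eqref{def:LLv} reads
\[
  \LL_v^{\str}=\sum_\omega h_v^{\str}(\omega)\Bigl(\mathsf{L}_{a(\omega)}\mathsf{R}_{b(\omega)}-\tfrac{1}{2}\mathsf{L}_{P^{\str}_{\bdy v}(\omega)}-\tfrac{1}{2}\mathsf{R}_{P^{\str}_{\bdy v}(\omega)}\Bigr),\qquad a(\omega)=L_v^{\str,\dag}(\omega),\quad b(\omega)=L_v^{\str}(\omega),
\]
and analogously for $\LL_{v'}^{\plq}$ with $a'$, $b'$ and $P^{\plq}_{\cobdy v'}$. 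I would then expand $[\LL_v^{\str},\LL_{v'}^{\plq}]$ bilinearly and use the elementary identities $[\mathsf{L}_X,\mathsf{R}_Y]=0$, $[\mathsf{L}_X,\mathsf{L}_Y]=\mathsf{L}_{[X,Y]}$ and $[\mathsf{R}_X,\mathsf{R}_Y]=\mathsf{R}_{[Y,X]}$. Every resulting commutator in which one factor is $\mathsf{L}_{P^{\str}_{\bdy v}(\omega)}$, $\mathsf{R}_{P^{\str}_{\bdy v}(\omega)}$, $\mathsf{L}_{P^{\plq}_{\cobdy v'}(\omega')}$ or $\mathsf{R}_{P^{\plq}_{\cobdy v'}(\omega')}$ vanishes, because $P^{\str}_{\bdy v}(\omega)$ and $P^{\plq}_{\cobdy v'}(\omega')$ commute with every operator of the opposite type. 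The only surviving term is $[\mathsf{L}_{a}\mathsf{R}_{b},\mathsf{L}_{a'}\mathsf{R}_{b'}]=\mathsf{L}_{aa'}\mathsf{R}_{b'b}-\mathsf{L}_{a'a}\mathsf{R}_{bb'}$, which vanishes: for $v\neq v'$ one has $aa'=a'a$ and $bb'=b'b$, whereas for $v=v'$ the sign flips $aa'=-a'a$ and $b'b=-bb'$ cancel. Hence $[\LL_v^{\str},\LL_{v'}^{\plq}]=0$, and summing over $v\in R$ and $v'\in R'$ gives $[\LL_R^{\str},\LL_{R'}^{\plq}]=0$.

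The computation is short once the preliminary algebra is in place, and I do not expect a genuine obstacle. The one point that needs care is the on-site case $v=v'$: there the star and plaquette jump operators anticommute rather than commute, so one might worry that the two Lindbladians fail to commute. They do commute because each term of the generator \eqref{def:LLv} is quadratic in the jump operators, so the two sign changes incurred by moving an $a$ past an $a'$ and a $b$ past a $b'$ cancel; relatedly, the operators $L_v^{\wildcard,\dag}(\omega)L_v^{\wildcard}(\omega)$ end up commuting, rather than anticommuting, with the jump operators of the opposite type.
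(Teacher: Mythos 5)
Your proposal is correct and follows essentially the same route as the paper: reduce to single-site generators, observe that the star and plaquette jump operators commute off-site and anti-commute on-site (with the projections $P^\str,P^\plq$ commuting with everything of the opposite type), and note that the quadratic structure of the generator makes the on-site signs cancel. The only difference is presentational — you expand the original Lindblad form \eqref{def:LLv} in left/right multiplication superoperators and check each term, whereas the paper declares the same cancellation ``immediate'' from its rewritten form \eqref{eq:single_site_lindbladian}.
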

\begin{proof}
It suffices to show that the single-site Lindbladians, $\LL_v^\str$ and $\LL_{v'}^\plq$, commute for any two sites $v,v'\in \lsym$.  This is immediate from the representation~\eqref{eq:single_site_lindbladian} together with the fact that
the jump operators either commute or anti-commute:
  \begin{align}\label{eq:commjump}
    L_{v'}^{\plq}(\omega') L_v^{\str}(\omega) \ & =  X_{v'} P^\plq_{\cobdy v'}(\omega') Z_v P^\str_{\bdy v}(\omega)  = X_{v'} Z_v P^\plq_{\cobdy v'}(\omega')  P^\str_{\bdy v}(\omega) \notag \\  &= (-1)^{\delta_{v,v'}} L_v^{\str} (\omega) L_{v'}^{\plq}(\omega') \ .
  \end{align}
  This follows from the commutation rules of the constituents: as polynomials in star and plaquette operators (cf.~\eqref{eq:repPpol}), the projections $P^\str$ and $P^\plq$ commute. Moreover, $P^\str$ commutes with all $X$-operators, and $P^\plq$ commutes with all $Z$-operators. The same relation holds if we replace one or both jumps by their adjoints, using \eqref{eq:adjumps}. The jump operators thus share the commutation relation of $Z_v$ and $X_{v'}$. 
\end{proof}

\subsection{Kernels as commutants}
Determining the kernels of the parts of the Davies Lindbladian will be key for our explicit formulas for the conditional expectations in \Cref{sec:condexp}. 
\begin{proposition}\label{pro:lindbladkern}
  For $R, R'\subseteq \lsym$, the kernels of the star and plaquette Lindbladians are given by
  \begin{equation}\label{eq:lkern}
    \ker{\LL^{\str}_R} = \left\{Z_v, \sum_{s\in \bdy v} A_s \right\}'_{v\in R} \quad \text{and} \quad
    \ker{\LL^{\plq}_R} = \left\{X_v, \sum_{p\in \cobdy v} B_p \right\}'_{v\in R}
  \end{equation}
  where $\{\ \cdot \ \}'$ denotes the commutant.
\end{proposition}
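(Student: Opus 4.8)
The plan is to leverage that, by the Dirichlet form identity~\eqref{eq:Dirichlet}, each single-site Lindbladian $\LL_v^\str$ is self-adjoint and negative semidefinite with respect to the GNS inner product $\langle\cdot,\cdot\rangle_{\rho_\lsym^\strplq,1}$ (and indeed with respect to any $s$-inner product~\eqref{def:sscalar}). Setting $O_1=O_2=O$ in~\eqref{eq:Dirichlet} gives $-\langle O,\LL_v^\str(O)\rangle_{\rho_\lsym^\strplq,s}=\tfrac12\sum_\omega h_v^\str(\omega)e^{\beta\omega(s-1)}\langle[L_v^\str(\omega),O],[L_v^\str(\omega),O]\rangle_{\rho_\lsym^\strplq,s}$, and since this inner product is positive definite (the Gibbs state has full rank, cf.~\cref{lem:condexpkern}) and the jump rates are strictly positive, the right-hand side vanishes precisely when $[L_v^\str(\omega),O]=0$ for every $\omega$. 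A self-adjoint negative-semidefinite operator annihilates exactly the elements on which its quadratic form vanishes, so $\ker\LL_v^\str=\{L_v^\str(\omega):\omega\}'$. Summing over $v\in R$ and using that $\LL_R^\str$ is likewise self-adjoint and negative semidefinite, $\LL_R^\str(O)=0$ is equivalent to $\langle O,\LL_R^\str(O)\rangle_{\rho_\lsym^\strplq,s}=0$, i.e.\ to the vanishing of each of the nonnegative terms $-\langle O,\LL_v^\str(O)\rangle_{\rho_\lsym^\strplq,s}$; hence $\ker\LL_R^\str=\bigcap_{v\in R}\ker\LL_v^\str=\{L_v^\str(\omega):v\in R,\ \omega\}'$.

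It then remains to identify this commutant with $\{Z_v,\sum_{s\in\bdy v}A_s\}'_{v\in R}$. For the inclusion ``$\supseteq$'' I would use that the operators $A_s$, $s\in\bdy v$, mutually commute and square to the identity, so the spectral projection $P^\str_{\bdy v}(\omega)$ of $-2\sum_{s\in\bdy v}A_s$ onto the eigenvalue $\omega$ is a polynomial in $\sum_{s\in\bdy v}A_s$ — this is also visible in the explicit formula~\eqref{eq:repPpol} — and therefore anything commuting with both $Z_v$ and $\sum_{s\in\bdy v}A_s$ commutes with $L_v^\str(\omega)=Z_v P^\str_{\bdy v}(\omega)$. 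For ``$\subseteq$'' I would recover the two generators from the jump operators by the spectral resolutions $\sum_\omega P^\str_{\bdy v}(\omega)=\identity$ and $\sum_\omega\omega\,P^\str_{\bdy v}(\omega)=-2\sum_{s\in\bdy v}A_s$, which yield $\sum_\omega L_v^\str(\omega)=Z_v$ and $\sum_\omega\omega\,L_v^\str(\omega)=-2\,Z_v\sum_{s\in\bdy v}A_s$; any $O$ commuting with all $L_v^\str(\omega)$ then commutes with these two linear combinations, and since $Z_v^2=\identity$ one strips off the left factor $Z_v$ to conclude $[O,\sum_{s\in\bdy v}A_s]=0$. The plaquette statement is proven verbatim under the substitution $\str\leftrightarrow\plq$, $Z_v\leftrightarrow X_v$, $A_s\leftrightarrow B_p$, $\bdy v\leftrightarrow\cobdy v$.

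I do not expect a genuine obstacle; the only point meriting care is the first step. There it is the positive definiteness of the $s$-inner product — which rests on full rank of $\rho_\lsym^\strplq$ — that upgrades ``the Dirichlet form vanishes on $O$'' to ``$O$ lies in the commutant of all jump operators''. It is also worth recording that the adjoint jump operators need not be handled separately, since $L_v^{\str,\dag}(\omega)=L_v^\str(-\omega)$ by~\eqref{eq:adjumps}; the set $\{L_v^\str(\omega):\omega\}$ is thus already self-adjoint, so its commutant is automatically a von Neumann algebra, consistent with the form of the statement.
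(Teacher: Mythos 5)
Your proposal is correct, and its overall architecture matches the paper's: first identify $\ker\LL^{\str}_R$ with the commutant of the jump operators, then trade the jump operators for the simpler set $\{Z_v,\sum_{s\in\bdy v}A_s\}_{v\in R}$. The difference lies in the first step. The paper simply invokes the structure theorem for Lindbladians with a faithful invariant state (\cite[Thm.~7.2]{wolf_QuantumChannelsOperations_2012}), noting via \eqref{eq:adjumps} that the set of jump operators is self-adjoint so the adjoints need not be added; you instead rederive this fact from the Dirichlet-form identity \eqref{eq:Dirichlet}: negative semidefiniteness, positive definiteness of the $s$-inner product (which indeed only needs $\rho_\lsym^\strplq>0$, immediate from $e^{-\beta H^\strplq_\lsym}>0$ rather than \cref{lem:condexpkern}), and strict positivity of the rates $h^\str_v(\omega)$ give $\LL^\str_R(O)=0$ iff all commutators $[L^\str_v(\omega),O]$ vanish, and the intersection-over-sites argument is handled correctly by the sign-definiteness of each single-site form. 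This variational route is self-contained and a bit more informative, at the price of relying on detailed balance, whereas the cited theorem applies to arbitrary Lindbladians with faithful fixed points. The second step is essentially the paper's: both use $\sum_\omega L^\str_v(\omega)=Z_v$; to recover $\sum_{s\in\bdy v}A_s$ the paper passes through the intermediate set $\{Z_v,P^\str_{\bdy v}(\omega)\}$ and then forms $\sum_\epsilon \epsilon P^\str_{\bdy v}(-2\epsilon)$, while you take the single weighted sum $\sum_\omega \omega L^\str_v(\omega)=-2Z_v\sum_{s\in\bdy v}A_s$ and strip off the invertible factor $Z_v$, which is fine since $[O,Z_v]=0$. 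For the reverse inclusion, your phrasing is actually the cleaner one: what is needed is that $P^\str_{\bdy v}(\omega)$ is a function of $\sum_{s\in\bdy v}A_s$ itself (the projection onto a level set of the sum), not merely the polynomial in the individual $A_s$ displayed in \eqref{eq:repPpol}; a one-line remark to that effect would make the step airtight.
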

\begin{proof} 
  The kernel of a Lindbladian with a full-rank fixed point is known \cite[Thm. 7.2]{wolf_QuantumChannelsOperations_2012} to be given by the commutant of the set of jump operators and their adjoints. Since the Gibbs state $ \rho_\lsym^{\strplq} $ is a full rank fixed point (\cref{pro:fullrankfixedpoint}) and the set of jump operators is self-adjoint by~\eqref{eq:adjumps}, we have:
  \begin{equation*}
    \ker{\LL^{\str}_R} = \left\{L_v(\omega)^\str \right\}'_{v\in R, \omega}
    =\left\{Z_vP_{\bdy v}^\str(\omega) \right\}'_{v\in R, \omega} \ .
  \end{equation*}
  We now use the fact that the commutant of a set is equal to the commutant of the algebra generated by this set to replace the jump operators with simpler operators.
  In a first step, we express  
  \begin{equation*}
    \sum_{\omega} L^\str_v(\omega) = Z_v \sum_{\omega} P_{\bdy v}^\str(\omega) = Z_v \ ,
  \end{equation*}
  since $P_{\bdy v}^\str(\omega)$ are the spectral projections of $-2\sum_{s\in\bdy v} A_s$.
  This provides the first inclusion ($\subseteq$) of the equality
  \begin{equation*}
    \left\{L_v(\omega)^\str \right\}'_{v\in R, \omega}
    =\left\{Z_v, P_{\bdy v}^\str(\omega) \right\}'_{v\in R, \omega} \ .
  \end{equation*}
  The reverse inclusion ($ \supseteq $) follows by multiplying the projection by $Z_v$ from the left.

  In a second step, we again use that $P_{\bdy v}^\str(\omega)$ are spectral projections to obtain the sum of the star operators:
  \begin{equation*}
    \sum_{s\in\bdy v} A_s = \sum_{\epsilon} \epsilon P_{\bdy v}^\str(-2\epsilon) \ .
  \end{equation*}
  Here the sum is over all eigenvalues $\epsilon$ of $\sum_{s\in\bdy v} A_s$. This yields the first inclusion ($\subseteq$) in the equality
  \begin{equation*}
    \left\{Z_v, P_{\bdy v}^\str(\omega) \right\}'_{v\in R, \omega} =\left\{Z_v, \sum_{s\in\bdy v} A_s \right\}'_{v\in R}\ .
  \end{equation*}
  The reverse inclusion ($ \supseteq $) follows from the representation~\eqref{eq:repPpol} 
  of the spectral projection $ P_{\bdy v}^\str(\omega) $ as a polynomial. By the same argument, we also show the identity claimed for $\ker\mathcal{L}_R^\plq$.
\end{proof}

\section{Conditional expectations}\label{sec:condexp}
Key quantities in the study of the long-time limits of the Davies Lindbladian will be the projections related to their kernels, which we study in this section.  
\begin{definition}
  Let $\wildcard \in \{\str,\plq,\strplq\}$ and $ R \subseteq \lsym $. Then, the local Lindbladian projector, or conditional expectation, $\E^{\wildcard}_R$ of $\LL_R$ is the infinite-time limit defined as
  \begin{equation}\label{def:condexp}
    \E^{\wildcard}_R(O) \coloneqq \lim_{t\to \infty} e^{t\LL^{\wildcard}_R}(O)
  \end{equation}
  for any $O \in \BB(\HH_{\lsym})$.
\end{definition}
The limit in~\eqref{def:condexp} exists in either case $ \wildcard \in \{\str,\plq,\strplq\}$  since the Davies Lindbladians are self-adjoint with respect to $  \langle \cdot , \cdot \rangle_{\rho_\lsym^\strplq, s} $.  As the name suggests, the maps defined through this limit share all the defining properties of a conditional expectation, which we recall in the following~\cite{takesaki_TheoryOperatorAlgebras_2003}. 
\begin{definition}
  Let $\NN\subseteq \BB(\HH_{\lsym})$ be a von Neumann subalgebra. A completely positive, unital map $E_{\NN}:\BB(\HH_{\lsym})\to \NN$ is called a conditional expectation  onto $\NN$ if
  \begin{align}\label{eq:condexpdef}
      \forall O \in \NN &\colon \ E_{\NN}(O) = O \notag\\
      \forall l, r \in \NN,  O \in \BB(\HH_{\lsym})&\colon \ E_{\NN}(l O r)= l E_{\NN}(O)r \ .
  \end{align}
\end{definition}
Here, the two conditions in \eqref{eq:condexpdef} are in fact equivalent.
Basic properties of the local Lindbladian projectors are collected in the following:
\begin{proposition}\label{pro:condexpprop}
  For any  $\wildcard \in \{\str,\plq,\strplq\}$ and $R\subseteq \lsym$:
  \begin{enumerate}
    \item $\E^{\wildcard}_R$ is unital and completely positive,
    \item $\E^{\wildcard}_R$ is a conditional expectation onto the von Neumann subalgebra $\ker{\LL^{\wildcard}_R}$. In particular, it is an orthogonal projection.
    \item $\E^{\wildcard}_R$ is self-adjoint with respect to the inner product $\langle \cdot,\cdot\rangle_{\rho^\strplq_\lsym, s}$ for any $ s \in [0,1]$.
    \item\label{item4} $\E^{\wildcard}_R\circ\E^{\wildcard}_{R'} = \E^{\wildcard}_{R'}\circ \E^{\wildcard}_R = \E^{\wildcard}_R$ for any $R'\subseteq R$ .
  \end{enumerate}
\end{proposition}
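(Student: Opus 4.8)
The plan is to deduce all four items from two structural facts established earlier: first, that by \cref{pro:fullrankfixedpoint} each generator $\LL^\wildcard_R$ is self-adjoint with respect to every inner product $\langle\cdot,\cdot\rangle_{\rho^\strplq_\lsym,s}$, $s\in[0,1]$; and second, that summing the Dirichlet form identity \eqref{eq:Dirichlet} over $v\in R$ with both arguments equal to $O$ gives $-\langle O,\LL^\wildcard_R(O)\rangle_{\rho^\strplq_\lsym,s}\ge 0$, so $\LL^\wildcard_R$ is also negative semidefinite for the same inner products. Being self-adjoint, $\LL^\wildcard_R$ is diagonalizable with real eigenvalues, which the previous inequality forces to be $\le 0$; its zero-eigenspace is precisely $\ker\LL^\wildcard_R$ and is $\langle\cdot,\cdot\rangle_{\rho^\strplq_\lsym,s}$-orthogonal to the span of the strictly negative eigenspaces. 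Hence $e^{t\LL^\wildcard_R}$ acts as the identity on $\ker\LL^\wildcard_R$ and decays exponentially on its complement, so the limit \eqref{def:condexp} exists and $\E^\wildcard_R$ is the $\langle\cdot,\cdot\rangle_{\rho^\strplq_\lsym,s}$-orthogonal projection onto $\ker\LL^\wildcard_R$, simultaneously for every $s\in[0,1]$. This single observation already yields item 3, the assertion that $\E^\wildcard_R$ is an orthogonal projection, and the fact that its range equals $\ker\LL^\wildcard_R$ with $\E^\wildcard_R$ restricting to the identity there.

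For item 1 I would note that each $e^{t\LL^\wildcard_R}$ is unital and completely positive --- complete positivity because $\LL^\wildcard_R$ is a sum of terms in GKLS form \eqref{def:LLv}, and unitality because $\LL^\wildcard_R(\identity)=0$ follows directly from that form --- and then use that the set of unital completely positive maps on the finite-dimensional algebra $\BB(\HH_\lsym)$ is closed under pointwise limits, so $\E^\wildcard_R$ inherits both properties.

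With items 1 and 3 in hand, item 2 reduces to upgrading ``unital completely positive projection'' to ``conditional expectation''. Here I would invoke \cref{pro:lindbladkern}: $\ker\LL^\wildcard_R$ is the commutant of a $\dagger$-closed set of operators, hence a unital von Neumann subalgebra of $\BB(\HH_\lsym)$. Since $\E^\wildcard_R$ is a unital completely positive map whose range is exactly this subalgebra and which restricts to the identity on it, Tomiyama's theorem \cite{takesaki_TheoryOperatorAlgebras_2003} guarantees that $\E^\wildcard_R$ is a conditional expectation onto $\ker\LL^\wildcard_R$; alternatively one verifies the bimodule identity $\E^\wildcard_R(lOr)=l\,\E^\wildcard_R(O)\,r$ for $l,r\in\ker\LL^\wildcard_R$ by a direct Kadison--Schwarz argument, using that $\E^\wildcard_R$ is a unital $2$-positive idempotent fixing $\ker\LL^\wildcard_R$ pointwise. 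I expect this to be the only genuinely non-routine step; the rest is soft functional analysis.

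Finally, for item 4 I would use that passing from $R'$ to $R\supseteq R'$ enlarges the generating set in \cref{pro:lindbladkern} and therefore shrinks the commutant, so $\ker\LL^\wildcard_R\subseteq\ker\LL^\wildcard_{R'}$. Consequently $\E^\wildcard_R$ and $\E^\wildcard_{R'}$ are $\langle\cdot,\cdot\rangle_{\rho^\strplq_\lsym,s}$-orthogonal projections onto nested subspaces, and for such a pair one has $\E^\wildcard_{R'}\circ\E^\wildcard_R=\E^\wildcard_R$ (project into the smaller space first) and, taking $\langle\cdot,\cdot\rangle_{\rho^\strplq_\lsym,s}$-adjoints via item 3, also $\E^\wildcard_R\circ\E^\wildcard_{R'}=\E^\wildcard_R$; the case $R'=R$ re-proves idempotency $\E^\wildcard_R\circ\E^\wildcard_R=\E^\wildcard_R$.
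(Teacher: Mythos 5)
Your proposal is correct and follows essentially the same route as the paper: item 4 is proven exactly as in the text (nested kernels via \cref{pro:lindbladkern}, hence orthogonal projections onto nested subspaces), while items 1--3 are precisely the standard consequences of self-adjointness of $\LL^\wildcard_R$ and the unitality/complete positivity of the semigroup that the paper cites as well known, which you merely spell out (spectral argument for the limit, closedness of UCP maps under limits, Tomiyama).
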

    The first three items are well known and follow the self-adjointness of the Lindbladian (\cref{pro:fullrankfixedpoint}) and the unitality and complete positivity of the semigroup. The last item is well known for the full Davies Lindbladian (cf.~\cite{kastoryano_NoncommutativeNashInequalities_2015,kochanowski_RapidThermalizationDissipative_2025}). The proofs for the star and plaquette parts are analogous. We spell it out for completeness.
\begin{proof}[Proof of~\ref{item4}.~in \cref{pro:condexpprop}]
Recall that $\E^\wildcard_{R}$ and $\E^\wildcard_{R'}$ are projections onto the kernels of $\LL^\wildcard_{R}$ and $\LL^\wildcard_{R'}$, respectively. By \cref{pro:lindbladkern}, the commutant of the set of jump operators gives these kernels. The kernel of the Lindbladian on the bigger set is hence contained in the kernel of the one on the smaller set: $   \ker \LL^\wildcard_{R} \subseteq \ker \LL^\wildcard_{R'} $.
Consequently, the image of $\E^\wildcard_{R}$ is contained in the image of $\E^\wildcard_{R'}$. In particular, $ \E^\wildcard_{R'}\circ \E^\wildcard_{R} = \E^\wildcard_{R} = \E^\wildcard_{R}\circ \E^\wildcard_{R'} $, since these operators are orthogonal projections on nested spaces. 
This completes the proof.
\end{proof}

Having control over the conditional expectations $\E^\str_R$ and $\E^\plq_R$ is a key ingredient in the proof of the MLSI. As a main result of this section, we will derive simple, explicit expressions of these conditional expectations. These are then used to establish other properties, which will be very helpful in \Cref{sec:approxtensor}.

\subsection{Partition by support}

As preparation for the explicit representations of the conditional expectations, we regroup the terms in the star and plaquette part of the Hamiltonian. This will be done by partitioning the sets of stars or plaquettes of a subset $R\subseteq \lsym$ according to the support of the respective operators within this subset. 

\begin{figure}[h]
  \centering
  \begin{tikzpicture}[scale=0.6]

    \pgfmathsetmacro{\sizex}{7}
    \pgfmathsetmacro{\sizey}{7}
    \pgfmathsetmacro{\shifty}{1}
    \def\region{33,34,35,43,44,45,53,55}
    \pgfmathsetmacro{\scalefactor}{0.6}
    \pgfmathsetmacro{\intboxscale}{0.2}
    
    \definecolor{strbgcolor}{HTML}{CF5219}
    \definecolor{plqbgcolor}{HTML}{29A6FF} 
    \definecolor{qubitcolor}{HTML}{000000}

    \pgfdeclarelayer{background}
    \pgfdeclarelayer{foreground}
    \pgfdeclarelayer{nodes}
    \pgfdeclarelayer{nodes_foreground}
    \pgfsetlayers{background, foreground, nodes, nodes_foreground}

    \begin{pgfonlayer}{nodes}
        \foreach \row in {1,...,\sizex} {
            \foreach \column in {1,..., \sizey}{
                \fill[qubitcolor] (\column,\row) circle (3pt);
                \node[coordinate] (\column\row) at (\column,\row) {};
            }
        }
        \foreach \n in \region {
            \node[coordinate] (z\n) at ($(\n) + (\sizey,0) + (\shifty,0) - \scalefactor*(44) + \scalefactor*(\n)$) {};
            \fill[qubitcolor] (z\n) circle (3pt);
        }
    \end{pgfonlayer}
    \begin{pgfonlayer}{background}
        \draw (1,1) grid (\sizey, \sizex);
        \foreach \row in {1, ..., \sizex} {
            \draw[dashed] (0,\row) -- (1, \row);
            \draw[dashed] (\sizey,\row) -- (\sizey+1, \row);
            
        }
        \foreach \column in {1, ..., \sizey} {
            \draw[dashed] (\column,0) -- (\column, 1);
            \draw[dashed] (\column,\sizex) -- (\column, \sizex+1);
        }
        
    \end{pgfonlayer}

    \begin{pgfonlayer}{foreground}
        \foreach \n in \region{
            \draw[blue, line width=2pt] (\n) -- +(0,1);
            \draw[blue, line width=2pt] (\n) -- +(0,-1);
            \draw[blue, line width=2pt] (\n) -- +(1,0);
            \draw[blue, line width=2pt] (\n) -- +(-1,0);
            \draw[blue, line width=2pt] ($(z\n) + \scalefactor*(0,0.5)$) -- +(0,1);
            \draw[blue, line width=2pt] ($(z\n) + \scalefactor*(0,-0.5)$) -- +(0,-1);
            \draw[blue, line width=2pt] ($(z\n) + \scalefactor*(0.5,0)$) -- +(1,0);
            \draw[blue, line width=2pt] ($(z\n) + \scalefactor*(-0.5,0)$) -- +(-1,0);
        }
        
    \end{pgfonlayer}

    \

    \begin{pgfonlayer}{nodes_foreground}

        \draw [rounded corners,fill=red, red, fill opacity=0.15, line width=2.5pt] (2.7,2.7) -- (5.3,2.7) -- (5.3,3.3) -- (4.5,3.5) -- (4.5,4.5) -- (5.3,4.7) -- (5.3,5.3) -- (2.7,5.3) --cycle;
        
        \draw [rounded corners=2pt,fill=orange, orange, fill opacity=0.15, line width=1.5pt] ($(z33) + \scalefactor*(0.5,0) - \intboxscale*\scalefactor*(1.0,1.2) $) rectangle +($\intboxscale*\scalefactor*(2,2.4)+(1,0)$);
        \draw [rounded corners=2pt,fill=orange, orange, fill opacity=0.15, line width=1.5pt] ($(z33) + \scalefactor*(0,0.5) - \intboxscale*\scalefactor*(1.2,1) $) rectangle +($\intboxscale*\scalefactor*(2.4,2)+(0,1)$);

        \draw [rounded corners=2pt,fill=orange, orange, fill opacity=0.15, line width=1.5pt] ($(z35) + \scalefactor*(0.5,0) - \intboxscale*\scalefactor*(1.0,1.2) $) rectangle +($\intboxscale*\scalefactor*(2,2.4)+(1,0)$);
        \draw [rounded corners=2pt,fill=orange, orange, fill opacity=0.15, line width=1.5pt] ($(z35) - \scalefactor*(0,0.5) + \intboxscale*\scalefactor*(1.2,1) $) rectangle +($-\intboxscale*\scalefactor*(2.4,2)-(0,1)$);

        \node[coordinate] (l35) at ($(z35) - \scalefactor*(0.5,0)$) {};
        \node[coordinate] (u35) at ($(z35) + \scalefactor*(0,0.5)$) {};
        \draw [rounded corners=2pt,fill=orange, orange, fill opacity=0.15, line width=1.5pt] ($(l35) + \intboxscale*\scalefactor*(1,-1.2) $) -- ($(l35) + \intboxscale*\scalefactor*(-1,-1.2) -(1,0)$) -- +($\intboxscale*\scalefactor*(0,2.4)$) -- ($(u35) + (0,1) - \intboxscale*\scalefactor*(1.2,-1)$) -- +($+\intboxscale*\scalefactor*(2.4,0)$) -- ($(u35) + \intboxscale*\scalefactor*(1.2,-1)$)  [sharp corners] -- +($\intboxscale*\scalefactor*(-2.4,0)$) -- ($(l35) + \intboxscale*\scalefactor*(1,1.2) $) [rounded corners] -- cycle;

        \node[coordinate] (l33) at ($(z33) - \scalefactor*(0.5,0)$) {};
        \node[coordinate] (d33) at ($(z33) - \scalefactor*(0,0.5)$) {};
        \draw [rounded corners=2pt,fill=orange, orange, fill opacity=0.15, line width=1.5pt] ($(l33) + \intboxscale*\scalefactor*(1,1.2) $) -- ($(l33) + \intboxscale*\scalefactor*(-1,1.2) -(1,0)$) -- +($\intboxscale*\scalefactor*(0,-2.4)$) -- ($(d33) - (0,1) - \intboxscale*\scalefactor*(1.2,1)$) -- +($+\intboxscale*\scalefactor*(2.4,0)$) -- ($(d33) + \intboxscale*\scalefactor*(1.2,1)$) [sharp corners] -- +($\intboxscale*\scalefactor*(-2.4,0)$) -- ($(l33) + \intboxscale*\scalefactor*(1,-1.2) $) [rounded corners] -- cycle;

        \draw [rounded corners=2pt,fill=orange, orange, fill opacity=0.15, line width=1.5pt] ($(z53) - \scalefactor*(0.5,0) + \intboxscale*\scalefactor*(1.0,1.2) $) rectangle +($-\intboxscale*\scalefactor*(2,2.4)-(1,0)$);

        \node[coordinate] (r53) at ($(z53) + \scalefactor*(0.5,0)$) {};
        \node[coordinate] (d53) at ($(z53) - \scalefactor*(0,0.5)$) {};
        \node[coordinate] (u53) at ($(z53) + \scalefactor*(0,0.5)$) {};
        \draw [rounded corners=2pt,fill=orange, orange, fill opacity=0.15, line width=1.5pt] ($(u53) + \intboxscale*\scalefactor*(-1.2,-1) $) -- ($(u53) + \intboxscale*\scalefactor*(-1.2,1) + (0,1)$) -- +($\intboxscale*\scalefactor*(2.4,0)$) -- ($(r53) + \intboxscale*\scalefactor*(1,1.2) + (1,0)$) -- +($\intboxscale*\scalefactor*(0,-2.4)$) -- ($(d53)- (0,1) - \intboxscale*\scalefactor*(-1.2,1)$) -- +($\intboxscale*\scalefactor*(-2.4,0)$) -- ($(d53) + \intboxscale*\scalefactor*(-1.2,1)$) -- +($\intboxscale*\scalefactor*(2.4,0)$) --  ($(u53) + \intboxscale*\scalefactor*(1.2,-1) $) -- cycle;

        \draw [rounded corners=2pt,fill=orange, orange, fill opacity=0.15, line width=1.5pt] ($(z55) - \scalefactor*(0.5,0) + \intboxscale*\scalefactor*(1.0,1.2) $) rectangle +($-\intboxscale*\scalefactor*(2,2.4)-(1,0)$);

        \node[coordinate] (r55) at ($(z55) + \scalefactor*(0.5,0)$) {};
        \node[coordinate] (d55) at ($(z55) - \scalefactor*(0,0.5)$) {};
        \node[coordinate] (u55) at ($(z55) + \scalefactor*(0,0.5)$) {};

        \draw [rounded corners=2pt,fill=orange, orange, fill opacity=0.15, line width=1.5pt] ($(u55) + \intboxscale*\scalefactor*(-1.2,-1) $) -- ($(u55) + \intboxscale*\scalefactor*(-1.2,1) + (0,1)$) -- +($\intboxscale*\scalefactor*(2.4,0)$) -- ($(r55) + \intboxscale*\scalefactor*(1,1.2) + (1,0)$) -- +($\intboxscale*\scalefactor*(0,-2.4)$) -- ($(d55)- (0,1) - \intboxscale*\scalefactor*(-1.2,1)$) -- +($\intboxscale*\scalefactor*(-2.4,0)$) -- ($(d55) + \intboxscale*\scalefactor*(-1.2,1)$) -- +($\intboxscale*\scalefactor*(2.4,0)$) --  ($(u55) + \intboxscale*\scalefactor*(1.2,-1) $) -- cycle;
        
        \draw [rounded corners=2pt,fill=orange, orange, fill opacity=0.15, line width=1.5pt] ($(z34) - \scalefactor*(0.5,0) + \intboxscale*\scalefactor*(1.0,1.2) $) rectangle +($-\intboxscale*\scalefactor*(2,2.4)-(1,0)$);
        \draw [rounded corners=2pt,fill=orange, orange, fill opacity=0.15, line width=1.5pt] ($(z45) + \scalefactor*(0,0.5) - \intboxscale*\scalefactor*(1.2,1) $) rectangle +($\intboxscale*\scalefactor*(2.4,2)+(0,1)$);
        \draw [rounded corners=2pt,fill=orange, orange, fill opacity=0.15, line width=1.5pt] ($(z43) - \scalefactor*(0,0.5) + \intboxscale*\scalefactor*(1.2,1) $) rectangle +($-\intboxscale*\scalefactor*(2.4,2)-(0,1)$);
        
        \draw [rounded corners=2pt,fill=orange, orange, fill opacity=0.15, line width=1.5pt] ($(z44) + \scalefactor*(0.5,0) - \intboxscale*\scalefactor*(1.0,1.2) $) rectangle +($\intboxscale*\scalefactor*(2,2.4)+(1,0)$);
        \draw [rounded corners=2pt,fill=orange, orange, fill opacity=0.15, line width=1.5pt] ($(z44) - \scalefactor*(0.5,0) + \intboxscale*\scalefactor*(1.0,1.2) $) rectangle +($-\intboxscale*\scalefactor*(2,2.4)-(1,0)$);
        \draw [rounded corners=2pt,fill=orange, orange, fill opacity=0.15, line width=1.5pt] ($(z44) + \scalefactor*(0,0.5) - \intboxscale*\scalefactor*(1.2,1) $) rectangle +($\intboxscale*\scalefactor*(2.4,2)+(0,1)$);
        \draw [rounded corners=2pt,fill=orange, orange, fill opacity=0.15, line width=1.5pt] ($(z44) - \scalefactor*(0,0.5) + \intboxscale*\scalefactor*(1.2,1) $) rectangle +($-\intboxscale*\scalefactor*(2.4,2)-(0,1)$);

    \end{pgfonlayer}

\end{tikzpicture}
  \caption{A subset of the Ising model and the corresponding partition of the interactions. For illustration purposes, the Ising interactions are taken as the star part of a CSS code. Plaquette interactions are not discussed. Left: Qubits (black dots) and interactions (lines) of the $2$D Ising model. A subset of qubits $R$  (red box) and the interactions $\strset_R$ (blue, thick lines) with support intersecting $R$.  Right: Only the qubits in $R$ and their interactions. The interactions are partitioned by support in $R$ (orange sets). Most sets of this partition contain only a single interaction, only sets at the boundary contain multiple interactions. }
  \label{fig:ising}
\end{figure}
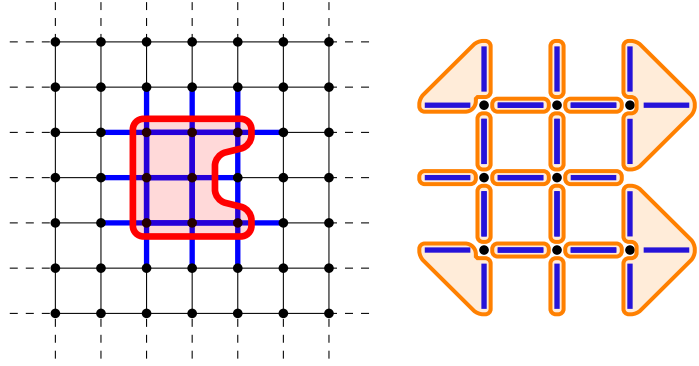

\noindent Given any non-empty subset $r\subseteq R$, we denote the set of stars whose support in $R$ is exactly $r$ by
\begin{equation}
  \strset_{r|R} \coloneqq \left\{s\in \strset_R \ \middle| \ \cobdy s\cap R = r\right\} \ .
\end{equation}
Analogously, for the plaquettes, we define
$
  \plqset_{r|R} \coloneqq \left\{p\in \plqset_R\middle| \bdy p\cap R = r\right\} $. 
In general, there are many subsets $r\subseteq R$ such that $\strset_{r|R}=\emptyset$ or $\plqset_{r|R}=\emptyset$. We stress that we do not allow $r=\emptyset$, and abbreviate the set of supports by
$$J^\str_R \coloneqq \left\{ r \subseteq R \middle| r \neq \emptyset \wedge \strset_{r|R}  \neq \emptyset \right\} \subset2^R ,$$ 
and similarly for plaquettes. 
The following lemma shows that this set partitions the set of stars disjointly. 
It also ensures that enlarging a set from $R'$ to $R$ only refines this partition. An illustration of these partitions for the case of Ising interactions (as the stars) can be found in \cref{fig:ising}. 
The last item, which will become important in  \cref{lem:pinchingoverlap} below, expresses the fact that if $r$ is far enough from the boundary of two sets, its star set looks identical to a star set of the bulk.
\begin{lemma}\label{lem:partitionproperties}
  For any non-empty $R\subseteq \lsym$:
  \begin{enumerate}
    \item $\displaystyle \bigcupplus_{r\in J_R^\str} \strset_{r|R} = \strset_R$,
    \item For any $R'\subseteq R$, and any  $r\in J^\str_R$, $r'\in J^\str_{R'}$:\begin{equation*}
        \strset_{r|R}\subseteq \strset_{r'|R'} \quad \mathrm{or} \quad \strset_{r|R}\cap \strset_{r'|R'} = \emptyset . 
    \end{equation*}
    \item Let $R_1, R_2\subseteq \lsym$ be two sets of qubits. For $i=1,2$, let 
    \begin{equation}\label{def:Rminus} 
    R_i^{-}\coloneqq\{v\in R_i\mid \dist(v, (R_1\cup R_2)\setminus R_i )>2\}
    \end{equation}
    be the interior of $R_i$ with respect to $R_1\cup R_2$. Then for $i=1,2$ and any $r\in J_{R_1\cap R_2}^\str$:
  \begin{equation}\label{eq:pinchingoverlap:proof:implication}
    r\cap R_i^{-} \neq \emptyset \  \Rightarrow  \ r\subseteq R_i \ \mathrm{and} \ \strset_{r|R_1\cup R_2} = \strset_{r|R_i} \ .
  \end{equation}
  \end{enumerate}
  Analogous results apply to plaquettes.
\end{lemma}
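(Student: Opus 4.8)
The plan is to prove all three items by direct set-theoretic manipulation, using only the definition of $\strset_R$ from \eqref{eq:relstarpsets}, the definition of $\strset_{r|R}$, and the standing assumption that every star (and plaquette) has vertex support of diameter at most $1$. Item~1 is immediate: for any $s\in\strset_R$ the set $r\coloneqq\cobdy s\cap R$ is non-empty by definition of $\strset_R$, hence $r\in J_R^\str$ and $s\in\strset_{r|R}$; conversely every $\strset_{r|R}\subseteq\strset_R$, and the union is disjoint because $s\in\strset_{r|R}\cap\strset_{r'|R}$ forces $r=\cobdy s\cap R=r'$. For item~2, I would observe that $R'\subseteq R$ gives $\cobdy\tilde s\cap R'=(\cobdy\tilde s\cap R)\cap R'$ for every star $\tilde s$; so if some $s$ lies in both $\strset_{r|R}$ and $\strset_{r'|R'}$ then $r'=r\cap R'$, and applying the same identity to an arbitrary $\tilde s\in\strset_{r|R}$ yields $\cobdy\tilde s\cap R'=r\cap R'=r'$, i.e. $\tilde s\in\strset_{r'|R'}$.

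For item~3, fix $r\in J_{R_1\cap R_2}^\str$ with $r\cap R_i^-\neq\emptyset$, pick $v_0\in r\cap R_i^-$, and choose a star $s$ realizing $r$, so that $r\subseteq\cobdy s$ and hence $\diam(r)\leq\diam(\cobdy s)\leq 1$. First I would show $r\subseteq R_i$: if some $v\in r$ had $v\notin R_i$, then $v\in(R_1\cup R_2)\setminus R_i$ (since $r\subseteq R_1\cap R_2$), so $\dist(v_0,v)>2$ by definition of $R_i^-$, contradicting $\dist(v_0,v)\leq 1$. Then I would prove $\strset_{r|R_1\cup R_2}=\strset_{r|R_i}$ by two inclusions. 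The inclusion $\subseteq$ is routine: intersecting $\cobdy s'\cap(R_1\cup R_2)=r$ with $R_i$, and using $R_i\subseteq R_1\cup R_2$ together with $r\subseteq R_i$, gives $\cobdy s'\cap R_i=r$. For $\supseteq$, take $s'\in\strset_{r|R_i}$; then $r=\cobdy s'\cap R_i\subseteq\cobdy s'\cap(R_1\cup R_2)$, and any $w\in\cobdy s'\cap(R_1\cup R_2)$ satisfies $\dist(w,v_0)\leq\diam(\cobdy s')\leq 1$ because $v_0\in r\subseteq\cobdy s'$; hence $w$ cannot lie in $(R_1\cup R_2)\setminus R_i$ (whose distance to $v_0$ exceeds $2$), so $w\in R_i$ and therefore $w\in\cobdy s'\cap R_i=r$. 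This shows $\cobdy s'\cap(R_1\cup R_2)=r$, so $s'\in\strset_{r|R_1\cup R_2}$. The plaquette statements follow verbatim with $\bdy$ in place of $\cobdy$ and $\plqset$ in place of $\strset$, since the only structural input is the diameter bound.

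The main obstacle is item~3, and specifically the inclusion $\strset_{r|R_i}\subseteq\strset_{r|R_1\cup R_2}$: this is the only place where one genuinely needs both the ``deep interior'' hypothesis $r\cap R_i^-\neq\emptyset$ (providing the anchor point $v_0$ that is far from $(R_1\cup R_2)\setminus R_i$) and the geometric locality assumption $\diam(\cobdy s)\leq 1$ (forcing the whole support of any competing star to stay within distance $1$ of $v_0$). Items~1 and~2 are essentially bookkeeping with intersections.
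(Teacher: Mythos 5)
Your proof is correct and follows essentially the same route as the paper's: items 1 and 2 by the same intersection bookkeeping (using $\cobdy s\cap R'=(\cobdy s\cap R)\cap R'$), and item 3 by anchoring at a point $v_0\in r\cap R_i^{-}$ and playing the diameter-$1$ bound on star supports against the distance-$>2$ definition of $R_i^{-}$ to get both $r\subseteq R_i$ and the two inclusions between $\strset_{r|R_1\cup R_2}$ and $\strset_{r|R_i}$. The only (harmless) difference is the choice of anchoring star in item 3: the paper picks $s\in\strset_{r|R_1\cup R_2}$ (consistent with reading the index set as $J^\str_{R_1\cup R_2}$, which is how the lemma is later applied in the pinching-overlap argument), whereas you pick a star realizing $r$ relative to $R_1\cap R_2$, so your argument also covers the statement as literally written with $r\in J^\str_{R_1\cap R_2}$.
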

  \begin{figure}[h]
    \centering
    \begin{tikzpicture}[scale=0.8]

    \pgfmathsetmacro{\sizex}{4}
    \pgfmathsetmacro{\sizey}{9}
    \pgfmathsetmacro{\sizera}{7.5}
    \pgfmathsetmacro{\sizerb}{7.5}
    \pgfmathsetmacro{\tilingoffset}{0.1}
    
    \definecolor{strbgcolor}{HTML}{86abbf}
    \definecolor{plqbgcolor}{HTML}{ab8d67}
    \definecolor{strfgcolor}{HTML}{00A6FF}
    \definecolor{plqfgcolor}{HTML}{FF9000}
    \definecolor{qubitcolor}{HTML}{000000}

    \pgfdeclarelayer{background}
    \pgfdeclarelayer{foreground}
    \pgfdeclarelayer{nodes}
    \pgfdeclarelayer{nodes_foreground}
    \pgfsetlayers{background, foreground, nodes, nodes_foreground}

    \begin{pgfonlayer}{nodes}
        \foreach \row in {1,...,\sizex} {
            \foreach \column in {1,..., \sizey}{
                \node[coordinate] (\column\row) at (\column,\row) {};
            }
        }
    \end{pgfonlayer}
    \begin{pgfonlayer}{background}
        \draw[line width=1pt, black!30] (0.2,-0.8) grid[step=1] +(\sizey+1.6,\sizex+2.6);
    \end{pgfonlayer}

    \begin{pgfonlayer}{foreground}
        \foreach \row in {-2,...,\sizex} {
            \foreach \column in {-1,..., \sizey}{
                \filldraw [black!40!white] ($(\column, \row) + (0.35,0.3) + (1,1)$) circle (1.5pt);
                \filldraw [black!40!white] ($(\column, \row) + (0.7,0.4) + (1,1)$) circle (1.5pt);
                \filldraw [black!40!white] ($(\column, \row) + (0.25,0.7) + (1,1)$) circle (1.5pt);
            }
        }

        \draw[red, dash pattern={on 8pt off 6pt}, dash phase=4pt, line width=1.5pt] (1,1) rectangle +(\sizera-2.5,\sizex);
        \draw[fill=red, red, fill opacity=0.05, line width=1.5pt] (1,1) rectangle +(\sizera,\sizex);
        \draw[fill=blue, blue, fill opacity=0.05, line width=1.5pt] (\sizey+1,1) rectangle +(-\sizerb,\sizex);
        \draw [rounded corners=3pt,fill=orange, orange, fill opacity=0.4, line width=2pt] (5+\tilingoffset,1+\tilingoffset) rectangle +(2-2*\tilingoffset, 2-2*\tilingoffset);
        \draw [rounded corners=0pt, dash pattern={on 3pt off 4pt}, green!60!black, line width=3pt] (4,0) rectangle +(4, 4);

        \draw [decorate,decoration={brace,amplitude=5pt,mirror,raise=6ex}](1,1) -- +(\sizera,0) node[midway,yshift=-6.5ex-1em, xshift=-0.3em,  rectangle, fill=white, inner sep=1.5pt]{$R_1$};
        \draw [decorate,decoration={brace,amplitude=5pt,mirror,raise=0.5ex}](1,1) -- +(\sizera-2.5,0) node[midway,yshift=-1.1ex-1em,  rectangle, fill=white, inner sep=0pt]{$R_1^{-}$};
        \draw [decorate,decoration={brace,amplitude=5pt,mirror,raise=6ex}](\sizera+1,1) -- (\sizey+1,1) node[midway,yshift=-6.5ex-1em, xshift=-0.2em, rectangle, fill=white, inner sep=1pt]{$R_2\setminus R_1$};

        \node [] at (\sizera-1.1,2.5) {$r$};

        \draw [decorate,decoration={brace,amplitude=5pt,mirror,raise=6ex}](\sizey+1,\sizex+1) -- +(-\sizerb,0) node[midway,yshift=6.5ex+1em]{$R_2$};

    \end{pgfonlayer}

\end{tikzpicture}
    \caption{The geometry for \eqref{eq:pinchingoverlap:proof:implication} and  \cref{lem:pinchingoverlap}. The unit cells are given by the grid, each one contains $3$ qubits (grey dots). Stars or plaquettes are not drawn. The red and blue boxes are two overlapping regions $R_1$ and $R_2$. The thin dashed lines show the boundary of $R^{-}_1$. The orange box shows a set $r$ intersecting $R^{-}_1$ with diameter one. Any such set lies fully within $R_1$. Any star or plaquette with support intersecting $r$ can only have support in the green, dotted region. In particular, it cannot have support in $R_2\setminus R_1$. }
    \label{fig:partition_overlap}
  \end{figure}
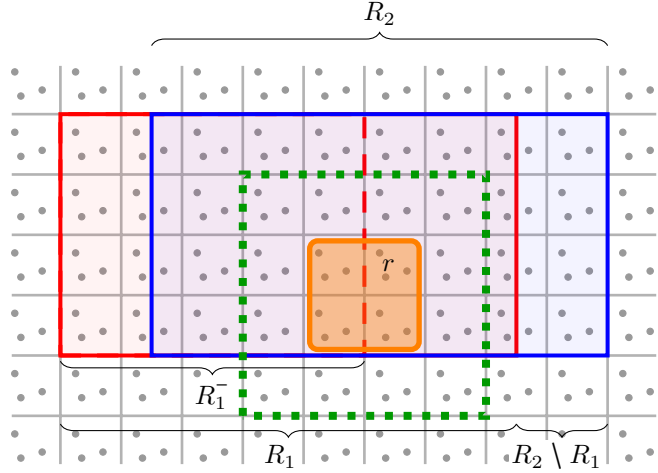
\begin{proof}
  \begin{enumerate}
    \item Since $\strset_R$ is the set of stars $ s $ with $\cobdy s\cap R \neq \emptyset$, the representation of $ \strset_R $ as a union is immediate. The disjointness of this union follows from the fact that $\strset_{r|R}  \cap \strset_{w|R} = \emptyset$  if $r\neq w$. This holds, since $r=\cobdy s \cap R=w$ for any $s$ in both sets.
    \item Assume $\strset_{r|R}\cap \strset_{r'|R'} \neq \emptyset$. Then $r\cap R'=r'$, since for any $s\in \strset_{r|R}\cap \strset_{r'|R'}$:
      \begin{equation*}
        r\cap R' = (\cobdy s\cap R)\cap R' = \cobdy s \cap R' = r' \ .
      \end{equation*}
      Hence any $s\in \strset_{r|R}$ has support in $R'$ and thus is an element of $\strset_{R'}$.
      Using these two facts and the definition, we may rewrite:
      \begin{align*}
        \strset_{r|R} &= \left\{s\in \strset_{R}\middle| \cobdy s\cap R = r\right\}= \left\{s\in \strset_{R'}\middle| \cobdy s\cap R = r\right\}\\
        &=  \left\{s\in \strset_{R'}\middle| \cobdy s\cap R' = r' \ \mathrm{and} \  (\cobdy s\cap R)\setminus R' = r\setminus R'\right\} \subseteq \strset_{r'|R'} \ .
      \end{align*}
      \item By symmetry, we only need to prove the result in case $ i = 1$. We  pick any $v_0\in r\cap R^{-}_1$ and any $s\in \strset_{r|R_1\cup R_2}$. Recall that the support of any star has diameter at most $1$. Thus, the distance between $v_0$ and any $v\in r$ is at most one (cf.\ \cref{fig:partition_overlap}), since they all lie in the support $ds$ of $s$. By assumption, the distance of $v_0$ to $R_2\setminus R_1$ is greater than $2$. Hence the distance $\dist(r,R_2\setminus R_1)$ is greater than $1$ and $r\subseteq R_1$. The latter also implies that $\strset_{r|R_1\cup R_2}\subset \strset_{R_1}$ and, moreover, that $\strset_{r|R_1}$ is well defined. Then, for every $s\in \strset_{r|R_1\cup R_2}$
  \begin{equation*}
    \cobdy s\cap R_1 = (\cobdy s\cap(R_1\cup R_2))\cap R_1 = r\cap R_1 = r \ .
  \end{equation*}
  Thus $\strset_{r|R_1\cup R_2}\subseteq \strset_{r|R_1}$. The other inclusion follows from the fact that the diameter of a star is at most $1$. More precisely,  since the distance from $r$ to $R_2\setminus R_1$ is greater than $1$, no star with support in $r$ can have support in $R_2\setminus R_1$. In particular, for any $s\in \strset_{r|R_1}$:
  \begin{equation*}
    r = \cobdy s \cap R_1 = (\cobdy s \cap R_1 ) \cup (\cobdy s \cap (R_2\setminus R_1)) = \cobdy s \cap (R_1\cup R_2) , 
  \end{equation*}
  which concludes the proof of \cref{eq:pinchingoverlap:proof:implication} for $ i = 1 $. 
  \end{enumerate}
    This completes the proof.
\end{proof}
  
The partitions by support give rise to the operator sums 
\begin{equation}\label{eq:sumsPbS}
   \Sigma^\str_{r|R} \coloneqq \sum_{s\in \strset_{r|R} } A_s ,  
\end{equation} 
and similarly $ \Sigma^\plq_{r|R}  \coloneqq \sum_{p\in \plqset_{r|R} } B_p$. These sums will play a key role in our construction of explicit representations of the conditional expectations. 
They inherit the commutation properties of their constituents.

\begin{lemma}\label{lem:partitionproperties2}
 For any $v\in R \subseteq \lsym$, $Z_v$ either commutes or anti-commutes with $\Sigma_{r|R}^\str$: 
    $$ Z_v \Sigma_{r|R}^\str Z_v = \pm \Sigma_{r|R}^\str$$ with $-$ if and only if $v\in r$.
    Analogous results apply for the plaquette sums. 
\end{lemma}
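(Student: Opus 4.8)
The plan is to reduce the statement to a one-operator computation and then observe that the sign that appears is the same for every star contributing to $\Sigma^\str_{r|R}$. First recall the elementary commutation rule: $Z_v$ anticommutes with $X_v$ and commutes with $X_{v'}$ for every $v'\neq v$. Since $A_s=\bigotimes_{v'\in \cobdy s}X_{v'}$, conjugating a single star operator gives
$$
Z_v A_s Z_v = (-1)^{\mathbf 1[v\in \cobdy s]}\, A_s
$$
for every $s\in\strset_{\lsym}$, where $\mathbf 1[\cdot]$ is the indicator.

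The key step is then to note that, for a fixed $v\in R$ and any $s\in \strset_{r|R}$, the condition $v\in\cobdy s$ is equivalent to $v\in r$. Indeed, by the definition of $\strset_{r|R}$ we have $\cobdy s\cap R = r$; since $v\in R$, this gives $v\in\cobdy s \iff v\in \cobdy s\cap R = r$. Hence the sign $(-1)^{\mathbf 1[v\in \cobdy s]}$ is \emph{uniform} over the sum defining $\Sigma^\str_{r|R}$, equal to $(-1)^{\mathbf 1[v\in r]}$. Summing over $s\in\strset_{r|R}$ and using linearity of conjugation by $Z_v$ yields
$$
Z_v \Sigma^\str_{r|R} Z_v = \sum_{s\in\strset_{r|R}} Z_v A_s Z_v = (-1)^{\mathbf 1[v\in r]}\,\Sigma^\str_{r|R},
$$
which is exactly the claim, with the minus sign occurring precisely when $v\in r$. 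The plaquette statement follows verbatim after replacing $Z_v$ by $X_v$, $A_s$ by $B_p$, $\cobdy s$ by $\bdy p$, and using that $X_v$ anticommutes with $Z_v$ and commutes with $Z_{v'}$ for $v'\neq v$.

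I do not expect any genuine obstacle here: the argument is a routine computation, and the only point worth flagging is that the sign produced by conjugation is the same for all stars in $\strset_{r|R}$ — this uniformity is precisely what the definition of the partition by support guarantees, and it is what allows the sum to transform as a whole by a single global sign.
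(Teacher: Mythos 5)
Your proof is correct and follows essentially the same route as the paper: conjugating each star by $Z_v$ gives a sign determined by whether $v$ lies in the star's support, and since all stars in $\strset_{r|R}$ have the same support $r$ within $R$, the sign is uniform and equals $(-1)^{\mathbf 1[v\in r]}$. Your write-up just makes the paper's one-line argument more explicit; no gap.
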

\begin{proof}
For any individual star $s\in \strset_R$ the operator $ A_s $ either commutes or anti-commutes with any $Z_v$: $Z_v A_s Z_v = \pm A_s$, with $-$ if and only if $v$ is in the support of $s$. Since all stars $ s \in \strset_{r|R}$ share the same sites $ v $ in their support when restricted to $R$, any site $v$ is either in the support of all of them or not in the support of any.
\end{proof}

The conditional expectation is, among other things, a projection onto the kernel of the local Lindbladian, which according to \cref{pro:lindbladkern} is the commutant of all local Pauli $Z$-operators and all sums of the form $\sum_{s\in \bdy v}A_s$. 
Regrouping those terms according to the partition by support into the sums~\eqref{eq:sumsPbS} still allows to express the kernel as a commutant. 
\begin{lemma}\label{lem:kernelDbS}
Let $R\subseteq \lsym$ and let $J_R^\str$ index be the partition of the star set $\strset_R$ by support
 \begin{equation*}
    \ker \LL_R^\str 
    = \left\{Z_v, \Sigma_{r|R}^\str\right\}_{v\in R, \ r\in J_R^\str}' \ .
  \end{equation*}
 Analogous results apply to plaquettes.
\end{lemma}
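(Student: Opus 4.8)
The plan is to reduce everything to \cref{pro:lindbladkern}, which already identifies $\ker\LL_R^\str = \{Z_v, \sum_{s\in\bdy v}A_s\}'_{v\in R}$, and then to show that the generating set appearing there and the set $\{Z_v, \Sigma_{r|R}^\str\}_{v\in R,\, r\in J_R^\str}$ generate the same algebra of operators. Since the commutant of any set of operators equals the commutant of the algebra it generates — the very same fact already exploited in the proof of \cref{pro:lindbladkern} — this immediately yields the claimed identity. As the paper notes, the plaquette statement is verbatim the same with $(Z_v,A_s)$ replaced by $(X_v,B_p)$, so I would only treat the star case.

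For one inclusion, I would observe that for a fixed qubit $v\in R$ every star $s$ with $v\in\cobdy s$ automatically lies in $\strset_R$, and its support in $R$ is a non-empty set $r\coloneqq\cobdy s\cap R$ containing $v$, hence $r\in J_R^\str$; conversely every $s\in\strset_{r|R}$ with $v\in r$ has $v\in\cobdy s$. Together with the disjointness from \cref{lem:partitionproperties}, this gives $\bdy v=\bigcupplus_{r\in J_R^\str,\, v\in r}\strset_{r|R}$ and therefore
\begin{equation*}
  \sum_{s\in\bdy v}A_s \;=\; \sum_{r\in J_R^\str,\, v\in r}\Sigma_{r|R}^\str ,
\end{equation*}
so each generator $Z_v$, $\sum_{s\in\bdy v}A_s$ already lies in the algebra generated by $\{Z_v,\Sigma_{r|R}^\str\}$.

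For the reverse inclusion I would isolate a single block $\Sigma_{r_0|R}^\str$ by conjugations with $Z$-operators. Fix $r_0\in J_R^\str$ and any $v_0\in r_0$. By \cref{lem:partitionproperties2} we have $Z_w\Sigma_{r|R}^\str Z_w=(-1)^{1_{w\in r}}\Sigma_{r|R}^\str$, so the linear map
\begin{equation*}
  \Phi_w(X)\;\coloneqq\;\tfrac12\big(X+(-1)^{1_{w\in r_0}}\,Z_w\,X\,Z_w\big)
\end{equation*}
fixes $\Sigma_{r|R}^\str$ when $1_{w\in r}=1_{w\in r_0}$ and annihilates it otherwise. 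The maps $\{\Phi_w\}_{w\in R}$ pairwise commute (the $Z_w$ do), and since a subset $r\subseteq R$ is determined by the family $(1_{w\in r})_{w\in R}$, the composition $\Phi$ of all the $\Phi_w$, $w\in R$, sends $\Sigma_{r|R}^\str$ to $\Sigma_{r_0|R}^\str$ if $r=r_0$ and to $0$ otherwise. Applying $\Phi$ to $\sum_{s\in\bdy{v_0}}A_s=\sum_{r\in J_R^\str,\, v_0\in r}\Sigma_{r|R}^\str$, whose sum contains the term $r=r_0$ because $v_0\in r_0$, produces exactly $\Sigma_{r_0|R}^\str$; since $\Phi$ is a composition of maps of the form $X\mapsto aX+bZ_wXZ_w$ with $Z_w$ ($w\in R$) among the generators, this exhibits $\Sigma_{r_0|R}^\str$ as an element of the algebra generated by $\{Z_v,\sum_{s\in\bdy v}A_s\}_{v\in R}$. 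Hence the two generating sets produce the same algebra, and passing to commutants concludes.

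I do not expect a genuine obstacle here: the argument is essentially bookkeeping. The only points that need care are the decomposition $\bdy v=\bigcupplus_{r\in J_R^\str,\, v\in r}\strset_{r|R}$ — in particular that a star touching $v\in R$ necessarily lies in $\strset_R$ and that the union may be restricted to $r\in J_R^\str$ — and the elementary fact that the sign patterns $w\mapsto(-1)^{1_{w\in r}}$ separate distinct subsets of $R$, which is exactly what makes the projector $\Phi$ pick out a single block.
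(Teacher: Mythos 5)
Your proposal is correct and follows essentially the same route as the paper: reduce to \cref{pro:lindbladkern}, show both generating sets generate the same algebra via $\sum_{s\in\bdy v}A_s=\sum_{r\in J_R^\str,\,v\in r}\Sigma_{r|R}^\str$, and recover a single block $\Sigma_{r_0|R}^\str$ by $Z$-conjugations with a minus sign on $r_0$ and a plus sign on $R\setminus r_0$. The only cosmetic difference is that you package these conjugations into one commuting family of projectors $\Phi_w$ applied simultaneously, whereas the paper performs them sequentially.
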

\begin{proof}
Recall from Proposition~\ref{pro:lindbladkern} that $\ker \LL_R^\str = \left\{Z_v, \sum_{s\in \bdy v} A_s\right\}_{v\in R}' $.  We keep the $Z_v$ as they are and replace the sums $\sum_{s\in \bdy v}A_s$ with those of the partition by support $\Sigma_{r|R}^\str$:
\begin{equation}\label{eq:equalcomm}
 \left\{Z_v, \sum_{s\in \bdy v} A_s\right\}_{v\in R}' = \left\{Z_v, \sum_{s\in \strset_{r|R}} A_s\right\}_{v\in R, \ r\in J_R^\str}' \ .
  \end{equation}
The proof of this identity will be based on the fact that if two sets generate the same algebra, they also have the same commutant. 

It is easy to see that the algebra generated by $\sum_{s\in \bdy v}A_s$ with $ v \in R $ is included in the one generated by $\Sigma^\str_{r|R} $ with $ r\in J^\str_R $, since the former is the sum over all stars whose support intersects $v$ while the latter is the sum over all stars whose support in $R$ is exactly $r$:
  \begin{equation*}
    \sum_{s\in \bdy v}A_s = \sum_{r\in J^\str_R\colon \ v\in r} \Sigma_{r|R}^\str \ .
  \end{equation*}

  The other direction requires more work and involves the $Z_v$ operators. Our goal is, given $r\in J^\str_R$, to construct $\Sigma_{r|R}^\str$. First, we pick any $v_0\in r$ (recall that $r$ is never empty). We then observe that for any other $v\in r$, conjugating the sum $\sum_{s\in \bdy v_0}A_s$ with $Z_v$ adds a negative sign to all stars supported on $v$:
  \begin{equation*}
    \frac{1}{2}\left(\sum_{s\in \bdy v_0} A_s -  Z_{v}\sum_{s\in \bdy v_0} A_s Z_{v}\right) = \sum_{s\in \bdy v_0 \cap\bdy v} A_s
  \end{equation*}
  where $s\in \bdy v_0 \cap\bdy v$ if and only if $s$ is supported on both $v$ and $v_0$ (but not necessarily exactly).
  Repeating this step for all $v\in r$, we can construct the sum of all stars with $r\subseteq  \cobdy s$. However, there may still be stars in this sum that have support on some $v\in R\setminus r$. To exclude these, we again conjugate the sum with $Z_v$, this time for $v\in R\setminus r$ and with a plus sign:
  \begin{equation*}
    \frac{1}{2}\left(\sum_{s\in \bdy v_0, \ r\subseteq \cobdy s} A_s +  Z_{v}\sum_{s\in \bdy v_0, \ r\subseteq \cobdy s} A_s Z_{v}\right) = \sum_{s\in \bdy v_0, \ r\subseteq \cobdy s,\ s\notin \bdy v} A_s \ .
  \end{equation*}
  Repeating this for all $v\in R\setminus r$ yields a sum over all stars $s$ with $r= \cobdy s$. 
  This concludes the proof of~\eqref{eq:equalcomm}.
\end{proof}

\subsection{Explicit representations}
We are now ready to spell out the explicit expressions for the conditional expectations. They involve three ingredients. One will be the $ Z$- and $ X$-pinchings of observables $ O \in \BB(\HH_{\lsym})$:
\begin{align}\label{def:ZXpinch}
    \TT_R^\str(O) & \coloneq 2^{|R|}\sum_{\mathbf{z}\in 2^R} \ketbra{\mathbf{z}}{\mathbf{z}}O\ketbra{\mathbf{z}}{\mathbf{z}} , \notag \\
    \TT_R^\plq(O) & \coloneq 2^{|R|}\sum_{\mathbf{x}\in 2^R} \ketbra{\mathbf{x}}{\mathbf{x}}O\ketbra{\mathbf{x}}{\mathbf{x}} . 
  \end{align}
Here and in the following, we denote by $ \ketbra{\mathbf{z}}{\mathbf{z}} \equiv \ketbra{\mathbf{z}}{\mathbf{z}} \otimes \identity_{\lsym\backslash R}$ the orthogonal projection onto the subspace spanned by the joint eigenvectors $\ket{\mathbf{z}} \in \HH_R $ of all $ Z_v $, $ v \in R $, corresponding to the eigenvalues $ z_v $. The latter are the components of the vector $ \mathbf{z} $.   Similarly and  by a slight abuse of notation, $\ket{\mathbf{x}}$ marks the joint eigenvectors of $ X_v $, $ v \in R $, and eigenvalue $ x_v $. 

The second ingredient are the star or plaquette,  $\wildcard \in \{\str, \plq\}$, pinchings
\begin{equation}\label{def:Sigmapinch}
    \PP^\wildcard_R(O) \coloneq \sum_{\boldsymbol{\omega} \in \Omega_R^\wildcard} \mathbf{\Pi}_R^\str(\boldsymbol{\omega}) \, O \,  \mathbf{\Pi}_R^\str(\boldsymbol{\omega}) 
  \end{equation}
   corresponding to  the spectral projections 
    $$
     \mathbf{\Pi}_R^\wildcard(\boldsymbol{\omega}) \coloneqq \prod_{r\in J_{R}^\wildcard}\Pi_{r|R}^\wildcard(\omega_r) 
    $$
    associated with the collection of commuting operators $ \Sigma^\wildcard_{r|R} $ with $ r\in J_{R}^\wildcard $ and eigenvalues $ \boldsymbol{\omega} = (\omega_r)_{ r \in J^\wildcard_R} \in \Omega_R^\wildcard \coloneqq \bigtimes_{r\in J^\wildcard_R} \Omega^\wildcard_{r|R} $. The latter 
stands for the Cartesian product of the spectra $\Omega^\wildcard_{r|R}$ of $\Sigma^\wildcard_{r|R}$ and $\Pi^\wildcard_{r|R}(\omega)$ is the projection onto the eigenvalue $\omega \in \Omega^\wildcard_{r|R}$. 

The last piece in our construction of the conditional expectations is the insertion of the reduced Gibbs state $ \hat \rho_R^\wildcard $ (defined in~\eqref{def:restorGNS}) 
  to ensures the self-ajointness with respect to the scalar product~\eqref{def:sscalar}. The following is our key result. Note that this representation does not rely on $ D$-dimensionality of the underlying graph, and hence extends straightforwardly to general graphs (e.g.\ expanders, which are used in low-density parity-check codes).
\begin{theorem}\label{lem:condexp}
  Let $\wildcard \in \{\str, \plq\}$ and $R\subseteq \lsym$. The Davies conditional expectation $\E^{\wildcard}_R$ can be written as:
  \begin{equation}\label{eq:condexpexpression}
    \E^{\wildcard}_R(O)  = \TT^{\wildcard}_R \circ \PP^{\wildcard}_R (\hat\rho^{\wildcard}_R O) . 
  \end{equation}
  for any $ O \in \BB(\HH_{\lsym})$.
\end{theorem}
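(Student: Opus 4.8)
The plan is to verify that the right-hand side of~\eqref{eq:condexpexpression}, call it $\Phi^\wildcard_R(O) \coloneqq \TT^{\wildcard}_R \circ \PP^{\wildcard}_R(\hat\rho^\wildcard_R O)$, is precisely the orthogonal projection onto $\ker\LL^\wildcard_R$ with respect to the GNS inner product $\langle \cdot,\cdot\rangle_{\rho_\lsym^\strplq,1}$; since both $\E^\wildcard_R$ and $\Phi^\wildcard_R$ are then orthogonal projections onto the same subspace for the same inner product, they must agree. Concretely I would check three things: (i) $\Phi^\wildcard_R$ maps $\BB(\HH_\lsym)$ into $\ker\LL^\wildcard_R$; (ii) $\Phi^\wildcard_R$ restricts to the identity on $\ker\LL^\wildcard_R$; and (iii) $\Phi^\wildcard_R$ is self-adjoint with respect to $\langle\cdot,\cdot\rangle_{\rho_\lsym^\strplq,1}$ (equivalently, it is idempotent and GNS-symmetric). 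I will do everything in the star case, as the paper does.

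For step (i), recall from \cref{lem:kernelDbS} that $\ker\LL^\str_R = \{Z_v,\ \Sigma^\str_{r|R}\}'_{v\in R,\ r\in J^\str_R}$. The $Z$-pinching $\TT^\str_R$ produces an operator commuting with every $Z_v$, $v\in R$, so the output automatically commutes with all the $Z_v$; it remains to see that it also commutes with each $\Sigma^\str_{r|R}$. Here I would use \cref{lem:partitionproperties2}: each $Z_v$ with $v\in r$ anticommutes with $\Sigma^\str_{r|R}$ and commutes with all other $\Sigma^\str_{r'|R}$. Writing $\TT^\str_R$ as an average of conjugations by the group generated by the $Z_v$, $v\in R$ (up to the normalization $2^{|R|}$ this is the standard pinching onto the $Z$-diagonal), one sees the image consists exactly of $Z$-diagonal operators. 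A $Z$-diagonal operator commutes with $\Sigma^\str_{r|R}$ iff its matrix elements respect the $\pm$ sign structure that $\Sigma^\str_{r|R}$ imposes; but $\PP^\str_R$ has already sandwiched the operator between the joint spectral projections $\mathbf\Pi^\str_R(\boldsymbol\omega)$ of the commuting family $\{\Sigma^\str_{r|R}\}_r$, so $\PP^\str_R(\hat\rho^\str_R O)$ commutes with every $\Sigma^\str_{r|R}$, and I must check that the subsequent $Z$-pinching $\TT^\str_R$ does not destroy this. Since $\TT^\str_R$ is an average over conjugations by $Z_v$'s and each such conjugation sends $\Sigma^\str_{r|R}$ to $\pm\Sigma^\str_{r|R}$, it maps the commutant of $\{\Sigma^\str_{r|R}\}_r$ into itself; hence the image of $\Phi^\str_R$ lies in the joint commutant, i.e.\ in $\ker\LL^\str_R$. (One also needs $\hat\rho^\str_R$ itself to commute with all $Z_v$ and all $\Sigma^\str_{r|R}$, which holds because $\hat\rho^\str_R$ is $X$-diagonal, a polynomial in the star operators $A_s$ with $s\in\strset_R$ by~\eqref{eq:traceoutGibbs}; and $X$-diagonal operators commute among themselves and each $A_s$ commutes with each $\Sigma^\str_{r'|R}$.)

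For step (ii), take $O\in\ker\LL^\str_R$. Then $O$ commutes with all $Z_v$, $v\in R$, hence is $Z$-diagonal on $R$, so $\TT^\str_R$ acts trivially up to the normalization; and $O$ commutes with all $\Sigma^\str_{r|R}$, so $O$ is block-diagonal with respect to the joint spectral projections $\mathbf\Pi^\str_R(\boldsymbol\omega)$, whence $\PP^\str_R(O)=O$. The remaining content is $\TT^\str_R\circ\PP^\str_R(\hat\rho^\str_R O) = O$, i.e.\ we must absorb the factor $\hat\rho^\str_R$. Since $\hat\rho^\str_R$ commutes with $O$ (both lie in the commutative algebra generated by the $A_s$, $s\in\strset_R$, once we restrict attention to the relevant subalgebra — more carefully, $\hat\rho^\str_R$ is a function of the $A_s$ and commutes with everything in $\ker\LL^\str_R$), and $\Tr_R\hat\rho^\str_R=\identity$ by construction~\eqref{def:restorGNS}, the combined operation $\TT^\str_R\circ\PP^\str_R(\hat\rho^\str_R\,\cdot\,)$ is a conditional expectation that is normalized correctly; evaluating on $O\in\ker\LL^\str_R$ gives back $O$. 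The cleanest way to nail this is to observe that both $\E^\str_R$ and $\Phi^\str_R$ are completely positive and unital (check $\Phi^\str_R(\identity)=\TT^\str_R\circ\PP^\str_R(\hat\rho^\str_R)=\TT^\str_R(\hat\rho^\str_R)$, and compute this partial-trace-normalized object equals $\identity$), so by step (i), step (ii), and the equivalence of the two defining properties of a conditional expectation recalled after~\eqref{eq:condexpdef}, $\Phi^\str_R$ is the conditional expectation onto $\ker\LL^\str_R$ as soon as it is a projection with the right range and fixes the range pointwise.

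For step (iii), GNS self-adjointness: I would verify $\langle O_1,\Phi^\str_R(O_2)\rangle_{\rho^\strplq_\lsym,1} = \langle\Phi^\str_R(O_1),O_2\rangle_{\rho^\strplq_\lsym,1}$, i.e.\ $\Tr\big(O_1^\dagger\,\rho^\strplq_\lsym\,\Phi^\str_R(O_2)\big) = \Tr\big(\Phi^\str_R(O_1)^\dagger\,\rho^\strplq_\lsym\,O_2\big)$. Using $\rho^\strplq_\lsym = 2^{|\lsym|}\rho^\plq_\lsym\rho^\str_\lsym$ (\cref{thm:equi:main}) and the factorization $\rho^\str_\lsym \propto e^{-\beta(H^\str_\lsym - H^\str_R)}\,e^{-\beta H^\str_R}$, the factor $\hat\rho^\str_R$ in $\Phi^\str_R$ is exactly what is needed so that the weight $\rho^\strplq_\lsym$ can be shuffled through the pinchings $\PP^\str_R$ and $\TT^\str_R$ (both of which are self-adjoint in the Hilbert–Schmidt sense, being averages of conjugations by self-adjoint unitaries) at the cost of rewriting $\rho^\str_\lsym \hat\rho^\str_R{}^{-1}$, which is supported off $R$ and commutes with everything $\PP^\str_R,\TT^\str_R$ touch. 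This is the standard GNS-symmetry computation for Davies generators (as in \cref{pro:fullrankfixedpoint} and the references to Carlen–Maas), and it is where I expect the bookkeeping to be heaviest — keeping track of which factors of the Gibbs state live on $R$ versus $R^c$, and confirming that the $Z$-pinching $\TT^\str_R$ commutes past the $X$-diagonal weight $\hat\rho^\str_R$ in the right order (it does not commute with $\hat\rho^\str_R$ as operators, but the trace identity still goes through because $\TT^\str_R$ is trace-dual to $\PP^{\plq}_R$-type averaging and $\hat\rho^\str_R$ is $\TT^\str_R$-invariant only after the full composition). Once self-adjointness is established, $\Phi^\str_R$ is a GNS-self-adjoint idempotent with range $\ker\LL^\str_R$, hence equals the unique such projection, which is $\E^\str_R$ by \cref{pro:condexpprop}. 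The main obstacle is precisely this step (iii): verifying that the specific placement of $\hat\rho^\str_R$ — rather than, say, $\rho^\str_R$ or $(\hat\rho^\str_R)^{1/2}$ on each side — is the one compatible with GNS symmetry, which amounts to the identity $\Phi^\str_R = \E^\str_R$ holding on the nose rather than up to a similarity transformation.
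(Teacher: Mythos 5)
Your proposal takes essentially the same route as the paper's proof: characterize $\ker\LL^\wildcard_R$ as the commutant $\{Z_v,\Sigma^\wildcard_{r|R}\}'_{v\in R,\,r\in J^\wildcard_R}$ (\cref{lem:kernelDbS}), verify that $\TT^\wildcard_R\circ\PP^\wildcard_R(\hat\rho^\wildcard_R\,\cdot\,)$ maps into this commutant, acts as the identity on it (via the unitality $\TT^\wildcard_R(\hat\rho^\wildcard_R)=\identity$, i.e.\ \cref{lem:pinchtotrace}), and is self-adjoint for the weighted inner product, and then conclude by uniqueness of orthogonal projections --- exactly the paper's argument, with your direct ``conjugation by $Z_v$ sends $\Sigma^\str_{r|R}$ to $\pm\Sigma^\str_{r|R}$'' reasoning playing the role of \cref{lem:pinchingcommute}. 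The one soft spot is step (iii), which you only sketch (and the aside that $\TT^\str_R$ is ``trace-dual to $\PP^\plq_R$-type averaging'' is off --- both pinchings are Hilbert--Schmidt self-adjoint, and the paper's computation instead pulls the off-$R$ and plaquette factors of the Gibbs weight through the pinchings and uses $\PP^\str_R(\hat\rho^\str_R O)=\PP^\str_R(O\hat\rho^\str_R)$), but the plan is correct and is carried out in the paper just as you outline.
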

\begin{proof}
  We will use the fact that orthogonal projections are uniquely determined by the subspaces onto which they project.
  From  \cref{pro:condexpprop} and \cref{lem:kernelDbS}, we recall that $ \E^\str_R$ is the self-adjoint projection onto 
  \begin{equation*}
    \ker \LL^\str_R  = \left\{Z_v, \Sigma^\str_{r|R}\right\}_{v\in R, \ r\in J_R^\str}' \ . 
  \end{equation*}
  We will show that the expression in~\eqref{eq:condexpexpression} is also a self-adjoint projection onto the same image.
    For a proof, we first note that by \cref{lem:pinchingcommute} below, the two pinchings $\TT^\str_R$ and $\PP^\str_R$ commute. Thus, any operator $O\in \Img(\TT^\str_R \circ \PP^\str_R(\hat\rho_R \ \cdot \  ))$ commutes with any $Z_v$ for any $v\in R$, as well as with any $\Sigma^\str_{r|R}$ for any $ r\in J_R^\str $. Hence $O\in \ker \LL_R^\str$, which establishes the inclusion $ \Img(\TT^\str_R \circ \PP^\str_R(\hat\rho_R^\str \ \cdot \  )) \subseteq  \ker \LL_R^\str $. For the reverse inclusion, let $O\in \ker \LL_R^\str$. Then $O$ commutes with any projection $\ketbra{\mathbf{z}}{\mathbf{z}}$  onto any $Z$-basis state with $\mathbf{z}\in 2^R$. It also commutes with the spectral projections $\Pi_{r|R}^\str(\omega)$ for any $r\in J_R^\str$ and any $\omega \in \Omega_{r|R}^\str$
    justifying the first equality in
  \begin{align*}
    \TT^\str_R \circ \PP^\str_R(\hat\rho^\str_R O ) = \TT^\str_R( \PP^\str_R(\hat\rho^\str_R) O ) = \TT^\str_R \circ \PP^\str_R(\hat\rho^\str_R ) O = O . 
  \end{align*}
  Here, the last equality, which expresses the unitality of $ \TT^\str_R \circ \PP^\str_R(\hat\rho_R^\str \ \cdot \  ) $, follows from \cref{lem:pinchtotrace} below together with the definition of $\hat\rho^\str_R$, which imply
  \begin{equation*}
    \TT^\str_R(\hat\rho^\str_R) = \TT^\str_R\left(\frac{e^{-\beta H^\str_R}}{\Tr_R e^{-\beta H^\str_R}}\right) = \frac{1}{\Tr_R e^{-\beta H^\str_R}} \Tr_{R}(e^{-\beta H^\str_R}) = \identity \ .
  \end{equation*}

  Next, we show that $\TT^\str_R \circ \PP^\str_R(\hat\rho^\str_R \ \cdot\  )$ is a projection. Let $O\in \BB(\HH_{\lsym})$, then, by a repeated application \cref{lem:pinchingcommute} below and using $\TT_R(\hat\rho_R^\str)=\identity$, we find
  \begin{align*}
    \TT^\str_R \circ \PP^\str_R(\hat\rho^\str_R \cdot \TT^\str_R \circ \PP^\str_R(\hat\rho^\str_R  O )) &= \PP^\str_R \circ \TT^\str_R(\hat\rho^\str_R \cdot \TT^\str_R \circ \PP^\str_R(\hat\rho^\str_R O ))\\
    &= \PP^\str_R( \TT^\str_R(\hat\rho^\str_R)  \cdot \TT^\str_R \circ\PP^\str_R(\hat\rho^\str_R O ))\\
    &= \PP^\str_R(\identity  \cdot \TT^\str_R \circ\PP^\str_R(\hat\rho^\str_R O ))\\
    &= \TT^\str_R \circ\PP^\str_R(\hat\rho^\str_R O ) .
  \end{align*}
  In the second line we also used that $[\TT^\str_R(O),\ketbra{\mathbf{z}}{\mathbf{z}}] = 0 $ for any $O \in \BB(\HH_{\lsym}) $ and any $\mathbf{z}\in 2^R$.

  To show self-adjointness with respect to the scalar product~\eqref{def:sscalar} with $ s \in [0,1] $ arbitrary, we note that for any  $O\in \BB(\HH_{\lsym})$
  \begin{equation}\label{eq:pinchingsonrho}
      \PP^\str_R(\hat\rho^\str_R O )= \PP^\str_R((\hat\rho^\str_R)^{1-s} O (\hat\rho^\str_R)^s)
  \end{equation}
  since by the explicit form of the states $ \hat\rho^\str_R$ and the fact that $ H^\str_R = - \sum_{r \in J_R^\str}  \Sigma^\str_{r|R} $ one may let the star-pinching act on the conjugating operators. 
  We now use the definition~\eqref{def:sscalar} of the scalar product to conclude that for all $O_1, O_2 \in \BB(\HH_{\lsym})$:
  \begin{align*}
   & \mkern-50mu \langle O_1, \TT^\str_R \circ \PP^\str_R(\hat\rho^\str_R O_2 )\rangle_{\rho_\lsym^\strplq,s}
  = \Tr \left((\rho_\lsym^\strplq)^{s}  O_1^\dag (\rho_\lsym^\strplq)^{1-s} \TT^\str_R \circ \PP^\str_R((\hat\rho^\str_R)^{1-s} O_2 (\hat\rho^\str_R)^s) \right)\\
    &= \Tr \left( \frac{e^{- s\beta H^\str_R}}{(\Tr_{R} e^{-\beta H^\str_R})^{s}} \TT^\str_{R}\circ \PP^\str_{R}\left( \frac{e^{-s \beta H_\lsym^\strplq}}{(Z_\lsym^\strplq)^s} O_1^\dag \frac{e^{- (1-s)\beta H_\lsym^\strplq}}{(Z_\lsym^\strplq)^{1-s}}\right)  \frac{e^{- (1-s)\beta H^\str_R}}{(\Tr_{R} e^{-\beta H^\str_R})^{1-s}}   O_2   \right)\\
    &= \Tr \left( \frac{e^{-s \beta H_\lsym^\strplq}}{(Z_\lsym^\strplq)^s} \TT^\str_{R}\circ \PP^\str_{R}\left( \frac{e^{- s\beta H^\str_R}}{(\Tr_{R} e^{-\beta H^\str_R})^{s}}  O_1^\dag \frac{e^{- (1-s)\beta H^\str_R}}{(\Tr_{R} e^{-\beta H^\str_R})^{1-s}} \right) \frac{e^{- (1-s)\beta H_\lsym^\strplq}}{(Z_\lsym^\strplq)^{1-s}}   O_2   \right)\\
    &=  \langle \TT^\str_R \circ \PP^\str_R(\hat\rho^\str_R  O_1 ), O_2 \rangle_{\rho_\lsym^\strplq,s} . 
  \end{align*}
  In the second equality, we commuted the pinchings using \cref{lem:pinchingcommute}, and let them act as their Hilbert-Schmidt duals. The next line relies on the fact that $(\Tr_{R} e^{-\beta H^\str_R})^{-1}$ and $e^{-\beta (H^\str_{\lsym}-H^\str_R)}$ have no support on $R$, and thus commute with any projection $\ketbra{\mathbf{z}}{\mathbf{z}}$ supported on $R$. Thus, they can be pulled in and out of $\TT^\str_R$. Since they also commute with any star operator, they furthermore can be pulled in and out of $\PP^\str_R$.  
  Moreover, for any $O\in \BB(\HH_{\lsym})$
  \begin{equation*}
    \TT^\str_R \circ \PP^\str_R(e^{-s\beta H^\plq} O e^{-(1-s) \beta H^\plq} )= e^{-s\beta H^\plq} \TT^\str_R \circ \PP^\str_R(O ) e^{-\beta (1-s) H^\plq} ,
  \end{equation*}
  since every plaquette operator $ B_p $ commutes with every star operator $ A_s $ and every $Z_v$.
  The final step in the above chain of equalities is again based on~\eqref{eq:pinchingsonrho} and the fact that both pinchings are $\dag$-homomorphisms. 
  This concludes the proof.
\end{proof}

\subsection{Auxiliary results}
This subsection contains more technical lemmas, on which the proof of \cref{lem:condexp} relied. We start by showing that the pinchings introduced in~\eqref{def:ZXpinch} and~\eqref{def:Sigmapinch} all commute. 
\begin{lemma}\label{lem:pinchingcommute}
  Let $R_1\subseteq R_2\subseteq \lsym$ and  $R_3\subseteq R_4\subseteq \lsym$. Then $\TT_{R_1}^\str$, $\PP_{R_2}^\str$, $\TT_{R_3}^\plq$ and $\PP_{R_4}^\plq$ all commute.
\end{lemma}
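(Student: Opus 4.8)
The plan is to prove that each of the six pairs among $\TT_{R_1}^\str$, $\PP_{R_2}^\str$, $\TT_{R_3}^\plq$, $\PP_{R_4}^\plq$ commutes; since "all commute" just means pairwise commutation, this suffices. The first step I would take is to rewrite the $Z$- and $X$-pinchings in group-averaged form. Expanding the rank-one projections in \eqref{def:ZXpinch} as $\ketbra{\mathbf z}{\mathbf z} = 2^{-|R|}\sum_{S\subseteq R} z_S\, Z_S$, with $Z_S \coloneqq \prod_{v\in S}Z_v$ and $z_S\coloneqq\prod_{v\in S}z_v$, and carrying out the sum over the sign assignments $\mathbf z$ (using $\sum_{\mathbf z} z_S z_T = 2^{|R|}\delta_{S,T}$), one obtains
\[
  \TT_R^\str(O) = \sum_{S\subseteq R} Z_S\,O\,Z_S, \qquad \TT_R^\plq(O) = \sum_{S\subseteq R} X_S\,O\,X_S ,
\]
which is exactly where the normalization $2^{|R|}$ is absorbed. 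The plaquette pinchings keep the form $\PP_R^\wildcard(O)=\sum_{\boldsymbol\omega}\mathbf\Pi_R^\wildcard(\boldsymbol\omega)\,O\,\mathbf\Pi_R^\wildcard(\boldsymbol\omega)$ of \eqref{def:Sigmapinch}, where the $\mathbf\Pi_R^\wildcard(\boldsymbol\omega)$ are orthogonal, sum to $\identity$, and — being spectral projections of sums of the $A_s$ (resp.\ $B_p$) — are polynomials in the respective commuting family. The only general tool needed is the elementary observation that two maps $\Phi(O)=\sum_k W_k O W_k^\dagger$ and $\Psi(O)=\sum_l V_l O V_l^\dagger$ commute as soon as every $W_k$ commutes with every $V_l$, together with a re-indexing variant of it.

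Three of the six pairs fall to this immediately: for $(\TT_{R_1}^\str,\PP_{R_4}^\plq)$ every $Z_v$ commutes with every $B_p$, for $(\TT_{R_3}^\plq,\PP_{R_2}^\str)$ every $X_v$ commutes with every $A_s$, and for $(\PP_{R_2}^\str,\PP_{R_4}^\plq)$ every $A_s$ commutes with every $B_p$ by \eqref{eq:commrule}; hence in each case all conjugating operators (the $Z_S$ or $X_S$, and the polynomials in the $A_s$ or $B_p$) commute. For the pair $(\TT_{R_1}^\str,\TT_{R_3}^\plq)$ the operators $Z_S$ and $X_{S'}$ commute only up to a sign, $Z_S X_{S'} = (-1)^{|S\cap S'|}X_{S'}Z_S$, but conjugation is insensitive to scalar factors of modulus one, so $\bigl(Z_S X_{S'}\bigr)O\bigl(Z_S X_{S'}\bigr)^\dagger = \bigl(X_{S'}Z_S\bigr)O\bigl(X_{S'}Z_S\bigr)^\dagger$ term by term; summing over $S\subseteq R_1$, $S'\subseteq R_3$ gives $\TT_{R_1}^\str\circ\TT_{R_3}^\plq = \TT_{R_3}^\plq\circ\TT_{R_1}^\str$.

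The load-bearing case is $(\TT_{R_1}^\str,\PP_{R_2}^\str)$ with $R_1\subseteq R_2$ (the pair $(\TT_{R_3}^\plq,\PP_{R_4}^\plq)$ is identical after $\str\leftrightarrow\plq$), and this is where $R_1\subseteq R_2$ gets used. Here the $Z_v$ with $v\in R_2$ genuinely need not commute with $\Sigma_{r|R_2}^\str$, but by \cref{lem:partitionproperties2} they commute or anticommute, with a $-$ exactly when $v\in r$; iterating over $v\in S$ (legitimate since $S\subseteq R_1\subseteq R_2$) gives $Z_S\,\Sigma_{r|R_2}^\str\,Z_S=(-1)^{|S\cap r|}\Sigma_{r|R_2}^\str$, hence $Z_S\,\Pi_{r|R_2}^\str(\omega)\,Z_S=\Pi_{r|R_2}^\str\bigl((-1)^{|S\cap r|}\omega\bigr)$, and therefore $Z_S\,\mathbf\Pi_{R_2}^\str(\boldsymbol\omega)\,Z_S=\mathbf\Pi_{R_2}^\str(\sigma_S\boldsymbol\omega)$, where $\sigma_S$ flips the $r$-th coordinate precisely when $|S\cap r|$ is odd. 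The step I expect to be the main obstacle — and the place where care is required — is checking that $\sigma_S$ is a bijection of $\Omega_{R_2}^\str=\bigtimes_{r\in J_{R_2}^\str}\Omega_{r|R_2}^\str$ onto itself: for each nonempty $r$ one picks $v\in r$, and then $Z_v\,\Sigma_{r|R_2}^\str\,Z_v=-\Sigma_{r|R_2}^\str$ shows $\Omega_{r|R_2}^\str$ is symmetric about $0$, so the coordinatewise sign flips $\sigma_S$ preserve $\Omega_{R_2}^\str$ and in fact $\sigma_S$ is an involution. Granting this, I would compute
\[
  \TT_{R_1}^\str\bigl(\PP_{R_2}^\str(O)\bigr)=\sum_{S\subseteq R_1}\sum_{\boldsymbol\omega} Z_S\,\mathbf\Pi_{R_2}^\str(\boldsymbol\omega)\,O\,\mathbf\Pi_{R_2}^\str(\boldsymbol\omega)\,Z_S=\sum_{S\subseteq R_1}\sum_{\boldsymbol\omega}\mathbf\Pi_{R_2}^\str(\sigma_S\boldsymbol\omega)\,\bigl(Z_S O Z_S\bigr)\,\mathbf\Pi_{R_2}^\str(\sigma_S\boldsymbol\omega),
\]
where the second equality inserts $Z_S Z_S=\identity$ twice; re-indexing the inner sum by $\boldsymbol\omega'=\sigma_S\boldsymbol\omega$ and then interchanging the two sums yields $\sum_{\boldsymbol\omega'}\mathbf\Pi_{R_2}^\str(\boldsymbol\omega')\bigl(\sum_{S\subseteq R_1}Z_S O Z_S\bigr)\mathbf\Pi_{R_2}^\str(\boldsymbol\omega')=\PP_{R_2}^\str\bigl(\TT_{R_1}^\str(O)\bigr)$, finishing this pair. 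Everything outside this last case is bookkeeping; the essential new input is \cref{lem:partitionproperties2} and the spectral symmetry it forces.
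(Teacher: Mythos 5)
Your proof is correct, and its load-bearing ingredients are the same as the paper's: the reduction to pairwise commutation, the three trivial pairs handled by commuting Kraus operators, and, for the pair $(\TT^\str_{R_1},\PP^\str_{R_2})$, the use of \cref{lem:partitionproperties2} together with the symmetry of the spectra $\Omega^\str_{r|R_2}$ (which you justify correctly: each $r\in J^\str_{R_2}$ is nonempty, so conjugation by some $Z_v$, $v\in r$, shows $\Sigma^\str_{r|R_2}$ and $-\Sigma^\str_{r|R_2}$ are unitarily equivalent). Where you differ is in execution: the paper keeps the pinching in its rank-one-projector form and computes $Z$-basis matrix elements $\bra{\mathbf z_1}\TT^\str_{R_1}\circ\PP^\str_{R_2}(\ketbra{\mathbf z_2}{\mathbf z_3})\ket{\mathbf z_4}$, inserting a single $Z_v$ at a time and extracting indicator constraints, whereas you first rewrite $\TT^\str_{R}$ as the Pauli twirl $\sum_{S\subseteq R}Z_S\,\cdot\,Z_S$ and then argue at the operator level, turning conjugation by $Z_S$ into a sign-flip bijection $\sigma_S$ of the eigenvalue tuples and re-indexing the sum over $\Omega^\str_{R_2}$. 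Your route avoids basis computations and makes explicit both where $R_1\subseteq R_2$ enters (so that \cref{lem:partitionproperties2} applies to every $v\in S$) and why the re-indexing is legitimate (symmetric spectra), points the paper treats more tersely; the paper's matrix-element computation, in exchange, exhibits concretely the common diagonal form of both compositions. Also, for the pair $(\TT^\str_{R_1},\TT^\plq_{R_3})$ you use $Z_SX_{S'}=\pm X_{S'}Z_S$ and the phase-insensitivity of conjugation, in place of the paper's observation that $|\braket{\mathbf x|\mathbf z}|^2$ is constant; both are elementary and correct.
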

\begin{proof}
  Since star operators are diagonal in $X$, plaquettes are diagonal in $Z$ and stars and plaquettes commute, the only non-trivial pairings are $[\TT_{R_1}^\str, \TT_{R_3}^\plq]$, $[\TT_{R_1}^\str, \PP_{R_2}^\str]$ and $[\TT_{R_3}^\plq, \PP_{R_4}^\plq]$ and the latter two are equivalent. The fact that $X$- and $Z$-pinchings commute is easy to check using the fact that $|\braket{\mathbf{x}|\mathbf{z}}|^2 $ is independent of $\mathbf{x} \in 2^R$ and $\mathbf{z}\in 2^R$. It thus remains to prove the last case.
  
  From \cref{lem:partitionproperties2} we recall that 
   $Z_v \Sigma_{r|R_{2}}^\str Z_v=\pm \Sigma_{r|R_{2}}^\str$ for any $r\in J^\str_{R_2} $ and $v\in R_1$.  Conjugating the corresponding spectral projections may hence only flip the sign of the eigenvalue  $\omega \in \Omega_{r|R}^\str$:
   $$Z_v \Pi_{r|R}^\str(\omega) Z_v=\Pi_{r|R}^\str(\pm\omega). $$
  For the rest of the proof, we abbreviate $\mathbf{\Pi}_R(\pm_v\boldsymbol{\omega}) \coloneqq Z_v\mathbf{\Pi}_R^\str(\boldsymbol{\omega})Z_v  $. 
  
  We now compute the action of $\TT_{R_1}^\str \circ \PP_{R_2}^\str $ in the $Z$-basis, that is, for any $\mathbf{z}_1, \mathbf{z}_2, \mathbf{z}_3, \mathbf{z}_4 \in 2^{\lsym}$: 
  \begin{align*}
    &2^{-|R_1|} \bra{\mathbf{z}_1{}}\TT_{R_1}^\str \circ \PP_{R_2}^\str (\ketbra{\mathbf{z}_2{}}{\mathbf{z}_3{}})\ket{\mathbf{z}_4{}} \\
    &= 1[\mathbf{z}_1|_{R_1} = \mathbf{z}_4|_{R_1}] \ \bra{\mathbf{z}_1{}} \PP_{R_2}^\str (\ketbra{\mathbf{z}_2{}}{\mathbf{z}_3{}})\ket{\mathbf{z}_4{}}\\
    &= 1[\mathbf{z}_1|_{R_1} = \mathbf{z}_4|_{R_1}]  (\mathbf{z}_1)_v (\mathbf{z}_2)_v (\mathbf{z}_3)_v (\mathbf{z}_4)_v  \bra{\mathbf{z}_1{}} Z_v \PP_{R_2}^\str (Z_v \ketbra{\mathbf{z}_2{}}{\mathbf{z}_3{}} Z_v)Z_v\ket{\mathbf{z}_4{}}\\
    &= 1[\mathbf{z}_1|_{R_1} = \mathbf{z}_4|_{R_1}] (\mathbf{z}_2)_v (\mathbf{z}_3)_v \sum_{\boldsymbol{\omega}\in \Omega^\str_{R_2}}
    \bra{\mathbf{z}_1{}}
    \mathbf{\Pi}_{R_2}(\pm_v\boldsymbol{\omega})
    \ketbra{\mathbf{z}_2{}}{\mathbf{z}_3{}}
    \mathbf{\Pi}_{R_2}(\pm_v\boldsymbol{\omega})   \ket{\mathbf{z}_4{}}\\
    &= 1[\mathbf{z}_1|_{R_1} = \mathbf{z}_4|_{R_1}]  (\mathbf{z}_2)_v (\mathbf{z}_3)_v \bra{\mathbf{z}_1{}} \PP_{R_2}^\str (\ketbra{\mathbf{z}_2{}}{\mathbf{z}_3{}})\ket{\mathbf{z}_4{}} .
  \end{align*}
  Here $1[\dots] $ denotes the indicator function of the set in brackets. In the first step, the indicator function resulted from the $Z$-pinching. 
  For the second step, we used that for $v\in R_i$ the values $(\mathbf{z}_i)_v \in \{\pm 1\}$ are the eigenvalues of $Z_v$ on the vector $\ket{\mathbf{z}_i}$. In the last step, 
   we absorbed the sign $\pm_v$ into the sum since the spectra $\Omega_{r|R}^\str$ are symmetric.
  To proceed, we compare the second with the last line in the equation above, finding
  \begin{align*}
      1[\mathbf{z}_1|_{R_1} = \mathbf{z}_4|_{R_1}]\bra{\mathbf{z}_1{}} \PP_{R_2}^\str (\ketbra{\mathbf{z}_2{}}{\mathbf{z}_3{}})\ket{\mathbf{z}_4{}} =  (\mathbf{z}_2)_v (\mathbf{z}_3)_v  1[\mathbf{z}_1|_{R_1} = \mathbf{z}_4|_{R_1}]\bra{\mathbf{z}_1{}} \PP_{R_2}^\str (\ketbra{\mathbf{z}_2{}}{\mathbf{z}_3{}})\ket{\mathbf{z}_4{}}
  \end{align*}
  and hence $ 1[\mathbf{z}_1|_{R_1} = \mathbf{z}_4|_{R_1}] \bra{\mathbf{z}_1{}} \PP_{R_2}^\str (\ketbra{\mathbf{z}_2{}}{\mathbf{z}_3{}})\ket{\mathbf{z}_4{}} = 0 $ unless $ (\mathbf{z}_2)_v (\mathbf{z}_3)_v = 1 $. 
  Since $v \in R_1$ was arbitrary, we conclude 
  \begin{multline*}
    2^{-|R_1|} \bra{\mathbf{z}_1{}}\TT_{R_1}^\str \circ \PP_{R_2}^\str (\ketbra{\mathbf{z}_2{}}{\mathbf{z}_3{}})\ket{\mathbf{z}_4{}} \\
    = 1[\mathbf{z}_1|_{R_1} = \mathbf{z}_4|_{R_1}]   1[\mathbf{z}_2|_{R_1} = \mathbf{z}_3|_{R_1}] \bra{\mathbf{z}_1{}} \PP_{R_2}^\str (\ketbra{\mathbf{z}_2{}}{\mathbf{z}_3{}})\ket{\mathbf{z}_4{}} . 
  \end{multline*}
  Repeating the same argument for $\PP_{R_2}^\str \circ\TT_{R_1}^\str$ gives an indicator function on $\mathbf{z}_2, \mathbf{z}_3$ from the $Z$-pinching and another indicator on $\mathbf{z}_1,\mathbf{z}_4$ from inserting $Z_v$. Thus, the two expressions are identical, as claimed.
\end{proof}

In what follows, it will be convenient to abbreviate the abelian algebras generated by $\{\Sigma^\str_{r|R}\}_{r\in J^\str_R}$ or $\{\Sigma^\plq_{r|R}\}_{r\in J^\plq_R}$ by $\AAA_R^\str$ and $\AAA_R^\plq$, respectively.
\begin{lemma}\label{lem:pinchtotrace}
  Let $ R_1 \subseteq R_2 \subseteq \lsym$, let $\wildcard \in \{\str, \plq\}$ and $O\in \AAA^\wildcard_{R_2}$. Then $\TT^\wildcard_{R_1}(O) = \Tr_{R_1}(O) \in \AAA^\wildcard_{R_2}$. In particular, $\hat\rho^\wildcard_{R}$ as defined in~\eqref{def:restorGNS} is an element of $\AAA^\wildcard_R$ for any $ R \subseteq \lsym$.
\end{lemma}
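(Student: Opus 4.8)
The plan is to treat only the star case $\wildcard=\str$; the plaquette case follows verbatim by exchanging the roles of $X$ and $Z$ throughout, as is the paper's convention. Fix $R_1\subseteq R_2\subseteq\lsym$ and $O\in\AAA^\str_{R_2}$, and abbreviate $X_A\coloneqq\prod_{v\in A}X_v$, $Z_B\coloneqq\prod_{v\in B}Z_v$ for $A,B\subseteq\lsym$. The core idea is to express \emph{both} $\TT^\str_{R_1}(O)$ and $\Tr_{R_1}(O)$ as one and the same sum of conjugates $Z_BOZ_B$, $B\subseteq R_1$, and then to invoke \cref{lem:partitionproperties2} to see that this sum stays in $\AAA^\str_{R_2}$. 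I would use two elementary group-averaging identities, valid for \emph{every} $M\in\BB(\HH_\lsym)$. First, the $Z$-basis pinching is the uniform average of $Z$-string conjugations, so that with the normalisation in~\eqref{def:ZXpinch} one has $\TT^\str_{R_1}(M)=\sum_{B\subseteq R_1}Z_BMZ_B$. Second, the $4^{|R_1|}$ tensor-product Pauli operators $P=\prod_{v\in R_1}P_v$ with $P_v\in\{\identity_v,X_v,Y_v,Z_v\}$ form a unitary $1$-design on $\HH_{R_1}$, whence $\Tr_{R_1}(M)=2^{-|R_1|}\sum_P PMP^\dagger$, using the convention (recalled below~\eqref{def:restorGNS}) that $\Tr_{R_1}$ is tensored with $\identity_{R_1}$.

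The second step, which is the only place where $O\in\AAA^\str_{R_2}$ is used, is to understand how these conjugations act on $\AAA^\str_{R_2}$. On the one hand every $X_v$ commutes with each star operator $A_s=\bigotimes_{v'\in \cobdy s}X_{v'}$, hence with every generator $\Sigma^\str_{r|R_2}$, hence with all of $\AAA^\str_{R_2}$; in particular $X_vOX_v=O$. On the other hand, by \cref{lem:partitionproperties2}, each $Z_v$ with $v\in R_2$ satisfies $Z_v\Sigma^\str_{r|R_2}Z_v=\pm\Sigma^\str_{r|R_2}$, so conjugation by any $Z_B$ with $B\subseteq R_1\subseteq R_2$ is a $\ast$-automorphism of $\BB(\HH_\lsym)$ carrying the generating set, and hence $\AAA^\str_{R_2}$, onto itself. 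Decomposing a Pauli $P$ supported on $R_1$ as $P=\zeta\,X_AZ_B$ with a phase $\zeta$, $A=\{v\in R_1: P_v\in\{X_v,Y_v\}\}$ and $B=\{v\in R_1: P_v\in\{Y_v,Z_v\}\}$, the phases cancel in $PMP^\dagger$ and one gets $POP^\dagger=X_A(Z_BOZ_B)X_A=Z_BOZ_B\in\AAA^\str_{R_2}$, the last equality because $Z_BOZ_B\in\AAA^\str_{R_2}$ commutes with $X_A$.

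Assembling the pieces finishes the proof. In the design formula for $\Tr_{R_1}(O)$ the assignment $P\mapsto B$ is surjective onto $2^{R_1}$ and exactly $2^{|R_1|}$-to-one (at each site there remain two admissible choices of $P_v$ once it is decided whether $v\in B$), so $\Tr_{R_1}(O)=2^{-|R_1|}\sum_P Z_BOZ_B=\sum_{B\subseteq R_1}Z_BOZ_B=\TT^\str_{R_1}(O)$, and this common value, being a finite sum of elements of $\AAA^\str_{R_2}$, lies in $\AAA^\str_{R_2}$. For the final claim, \cref{lem:partitionproperties} gives $H^\str_R=-\sum_{r\in J^\str_R}\Sigma^\str_{r|R}$, so $e^{-\beta H^\str_R}=\prod_{r\in J^\str_R}e^{\beta\Sigma^\str_{r|R}}\in\AAA^\str_R$; applying the lemma with $R_1=R_2=R$ yields $\Tr_R e^{-\beta H^\str_R}\in\AAA^\str_R$. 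This operator is positive definite — a partial trace of the positive definite operator $e^{-\beta H^\str_R}$, tensored with $\identity_R$ — and its inverse, being a polynomial in that element of the finite-dimensional abelian algebra $\AAA^\str_R$, again lies in $\AAA^\str_R$; hence $\hat\rho^\str_R=e^{-\beta H^\str_R}(\Tr_R e^{-\beta H^\str_R})^{-1}\in\AAA^\str_R$.

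The step I expect to be the main obstacle is precisely the membership $\Tr_{R_1}(O)\in\AAA^\str_{R_2}$. The algebra generated by the \emph{sums} $\Sigma^\str_{r|R_2}$ is in general strictly smaller than the algebra of all $X$-diagonal operators, and even strictly smaller than the algebra generated by the individual $A_s$, so one cannot conclude by checking an identity on generators, and the partial trace is not multiplicative. The device resolving this is the reformulation of both $\TT^\str_{R_1}$ and $\Tr_{R_1}$ as the same average of $Z_B$-conjugations, combined with \cref{lem:partitionproperties2} guaranteeing that each such conjugation preserves $\AAA^\str_{R_2}$; a welcome by-product is that it simultaneously yields the equality $\TT^\str_{R_1}(O)=\Tr_{R_1}(O)$, which would otherwise call for a separate expansion of $O$ into monomials in the $X_v$.
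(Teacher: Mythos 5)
Your proof is correct, and it reaches the conclusion by a somewhat different technical route than the paper, although both arguments ultimately rest on the same two structural facts: that $X$-strings commute with every element of $\AAA^\str_{R_2}$ (so spin flips in the $Z$-basis leave $O$ invariant), and that conjugation by $Z_v$, $v\in R_1\subseteq R_2$, sends each generator $\Sigma^\str_{r|R_2}$ to $\pm\Sigma^\str_{r|R_2}$ (\cref{lem:partitionproperties2}). The paper expands $O$ into a finite linear combination of monomials $\Xi^\str_{\boldsymbol r}$ in the generators, shows via the $X$-flip argument that the diagonal matrix elements $\bra{\mathbf z}\Xi^\str_{\boldsymbol r}\ket{\mathbf z}$ are independent of $\mathbf z\in 2^{R_1}$ (yielding $\TT^\str_{R_1}=\Tr_{R_1}$ monomial by monomial), and then establishes membership in $\AAA^\str_{R_2}$ by computing single-site $Z$-pinchings of each monomial, which give $2\,\Xi^\str_{\boldsymbol r}$ or $0$. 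You instead avoid the monomial expansion entirely: you rewrite the (paper-normalized) pinching as $\sum_{B\subseteq R_1}Z_B\,\cdot\,Z_B$ and the partial trace as a Pauli twirl, use $X$-invariance of $O$ and the $Z_B$-covariance of $\AAA^\str_{R_2}$ to collapse the twirl onto the $Z$-conjugation sum, and obtain the equality $\TT^\str_{R_1}(O)=\Tr_{R_1}(O)$ and the membership $\in\AAA^\str_{R_2}$ in one stroke; your normalization bookkeeping ($2^{|R_1|}$ Paulis per $B$, matching the factor $2^{|R|}$ in~\eqref{def:ZXpinch} and the convention $\Tr_{R_1}=\identity_{R_1}\otimes\mathrm{tr}_{R_1}$) checks out. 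What your version buys is a shorter, basis-free argument with no matrix-element computation; what the paper's version buys is a fully elementary, self-contained calculation that does not invoke the twirl identity. Your treatment of the last claim (exponential and inverse staying in the finite-dimensional abelian algebra, with the inverse a polynomial in the positive-definite partial trace) is also fine and in fact slightly more explicit than the paper's one-line conclusion; both implicitly read $\AAA^\str_R$ as containing the identity, which is the natural (von Neumann) interpretation used throughout \cref{sec:condexp}.
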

\begin{proof}
  Let $\Xi_{\boldsymbol{r}}^\str \coloneqq\Sigma_{r_1|R_2}^\str \cdot \Sigma_{r_2|R_2}^\str \cdot \ldots$ be a product of $\Sigma_{r_i|R_2}^{\str}$ with $r_i\in J^\str_{R_2}$. Then the operator $\bra{\mathbf{z}} \Xi_{\boldsymbol{r}}^\str \ket{\mathbf{z}} \in \BB(\HH_{R_1^c}) $ is independent of $\mathbf{z} \in 2^{R_1} $, which follows from the fact that any $\mathbf{z}' \in 2^{R_1}$ may be reached by spin flips, $\ket{\mathbf{z}'} = \bigotimes_{v \in \mathbf{z}\Delta \mathbf{z}'} X_v \ket{\mathbf{z}}$. Since $[\Xi_{\boldsymbol{r}}^\str,X_v]=0$ for any $v$, this implies
  \begin{equation*}
    \bra{\mathbf{z}} \Xi_{\boldsymbol{r}}^\str \ket{\mathbf{z}} = \bra{\mathbf{z}} \left(\bigotimes_{v \in \mathbf{z}\Delta \mathbf{z}'} X_v\right)^2 \Xi_{\boldsymbol{r}}^\str \ket{\mathbf{z}} = \bra{\mathbf{z}'} \Xi_{\boldsymbol{r}}^\str \ket{\mathbf{z}'}\ .
  \end{equation*}
  Now, any operator $O\in \AAA_{R_2}$ can be written as a finite linear combination $O = \sum_{\boldsymbol{r}} o_{\boldsymbol{r}}\Xi^\str_{\boldsymbol{r}}$ with some $o_{\boldsymbol{r}}\in \C$, which implies:
  \begin{align*}
    \TT^\str_{R_1}(O) &= 2^{|R_1|}\sum_{\mathbf{z}\in 2^{R_1}}\ketbra{\mathbf{z}}{\mathbf{z}}
    \sum_{\boldsymbol{r}} o_{\boldsymbol{r}} \Xi_{\boldsymbol{r}}^\str
    \ketbra{\mathbf{z}}{\mathbf{z}} \\
    &= \sum_{\mathbf{z}\in 2^{R_1}}\sum_{\boldsymbol{r}} o_{\boldsymbol{r}}\sum_{\mathbf{z}'\in 2^{R_1}} \ketbra{\mathbf{z}}{\mathbf{z'}}
    \Xi_{\boldsymbol{r}}^\str
    \ketbra{\mathbf{z}'}{\mathbf{z}} \\
    &= \identity_{R_1}\otimes\sum_{\boldsymbol{r}} o_{\boldsymbol{r}} \Tr_{R_1}\Xi_{\boldsymbol{r}}^\str
    = \Tr_{R_1} O \ .
  \end{align*}
  To show that $\Tr_{R_1}(O)\in\AAA_{R_2}^\str$, we recall from \cref{lem:partitionproperties2} that $Z_v \Sigma^\str_{r|R_2} = \pm\Sigma^\str_{r|R_2} Z_v$ for any $v\in R_1$ and any $r\in J^\str_{R_2}$, and thus also $Z_v\Xi^\str_{\boldsymbol{r}} = \pm \Xi^\str_{\boldsymbol{r}} Z_v$. We now consider the $Z$-pinching of $\Xi^\str_{\boldsymbol{r}}$ on a single qubit $v\in R_1$:
  \begin{align*}
    \TT^\str_{v}(\Xi^\str_{\boldsymbol{r}}) &= 2\left(\frac{\identity + Z_v}{2} \Xi^\str_{\boldsymbol{r}}  \frac{\identity + Z_v}{2} +  \frac{\identity - Z_v}{2} \Xi^\str_{\boldsymbol{r}}\frac{\identity - Z_v}{2}\right)\\
    &= 2\, \Xi^\str_{\boldsymbol{r}} \frac{\identity \pm Z_v}{2}\frac{\identity + Z_v}{2} +  2\, \Xi^\str_{\boldsymbol{r}} \frac{\identity  \mp Z_v}{2}\frac{\identity - Z_v}{2}\\
    &=
    \begin{cases}2\, \Xi^\str_{\boldsymbol{r}} &+\\ 0 & -
    \end{cases} \ .
  \end{align*}
  Thus, $\Tr_{R_1} O$ is indeed an element of $\AAA_{R_2}^\str$.

  For a proof of the last claim, note that $ H^\str_R = - \sum_{r \in J^\str_R}\Sigma^\str_{r|R}$ and thus $\exp\left(-\beta H^\str_R \right) \in \AAA^\str_R$. By the above argument, $\Tr_R(e^{-\beta H^\str_R})\in \AAA^\str_{R}$, and thus $\hat\rho^\str_R\in \AAA^\str_{R}$ as well.
\end{proof}

Our final lemma concerns the star or plaquette pinchings of two overlapping sets. It states that the product of these pinchings is equal to the pinching of the union of the two sets if the overlap is sufficiently wide -- made precise in terms of the definition~\eqref{def:Rminus} of interior sets. Simply put, this holds since any qubit in the union of the two sets is in the bulk of at least one of them, and the projections agree in the bulk. An illustration of the geometry can be found in \cref{fig:partition_overlap}. 
\begin{lemma}\label{lem:pinchingoverlap}
  Let $R_1, R_2\subseteq \lsym$ be two sets of qubits. If $R_1\cup R_2 = R_1^{-} \cup R_2^{-}$, then
  \begin{equation}
    \PP^\wildcard_{R_1}\circ \PP^\wildcard_{R_2} = \PP^\wildcard_{R_1\cup R_2}
  \end{equation}
  for both $\wildcard \in \{\str,\plq\}$.
\end{lemma}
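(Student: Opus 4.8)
The plan is to pass to the common $X$-eigenbasis, in which the star pinching is a completely ``classical'' operation. Since every $\Sigma^\str_{r|R}=\sum_{s\in\strset_{r|R}}A_s$ is diagonal in the $X$-basis, the joint spectral projections $\mathbf{\Pi}^\str_R(\boldsymbol\omega)$ are $X$-diagonal as well, so $\PP^\str_R$ acts on matrix elements simply as $\bra{\mathbf{x}}\PP^\str_R(O)\ket{\mathbf{x}'}=1[\mathbf{x}\sim_R\mathbf{x}']\,\bra{\mathbf{x}}O\ket{\mathbf{x}'}$, where $\mathbf{x}\sim_R\mathbf{x}'$ means that $\ket{\mathbf{x}}$ and $\ket{\mathbf{x}'}$ have the same eigenvalue of $\Sigma^\str_{r|R}$ for every $r\in J^\str_R$ (equivalently, they belong to the same minimal projection of the abelian algebra $\AAA^\str_R$). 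Reading off this characterization, $\PP^\str_{R_1}\circ\PP^\str_{R_2}$ keeps the $(\mathbf{x},\mathbf{x}')$ entry exactly when $\mathbf{x}\sim_{R_1}\mathbf{x}'$ and $\mathbf{x}\sim_{R_2}\mathbf{x}'$, whereas $\PP^\str_{R_1\cup R_2}$ keeps it exactly when $\mathbf{x}\sim_{R_1\cup R_2}\mathbf{x}'$; so the lemma reduces to the purely combinatorial claim that, under the hypothesis $R_1\cup R_2=R^-_1\cup R^-_2$, the equivalence $\sim_{R_1\cup R_2}$ is the common refinement of $\sim_{R_1}$ and $\sim_{R_2}$.

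To prove that claim I would establish two inclusions between the corresponding families of generators. First, for $r\in J^\str_{R_1\cup R_2}$ pick $v\in r$ (nonempty); by hypothesis $v\in R^-_i$ for some $i$, i.e.\ $\dist\!\big(v,(R_1\cup R_2)\setminus R_i\big)>2$, and since a star has support of diameter at most $1$, any star $s'$ with $v\in\cobdy s'$ satisfies $\cobdy s'\cap(R_1\cup R_2)=\cobdy s'\cap R_i$ --- this is the ``bulk $=$ interior'' mechanism already used in the proof of \cref{lem:partitionproperties}(3). As both $\strset_{r|R_1\cup R_2}$ and $\strset_{r|R_i}$ consist only of stars whose support contains $v$, it follows that $\strset_{r|R_1\cup R_2}=\strset_{r|R_i}$, hence $\Sigma^\str_{r|R_1\cup R_2}=\Sigma^\str_{r|R_i}$ with $r\in J^\str_{R_i}$. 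Second, in the other direction, $\strset_{R_i}\subseteq\strset_{R_1\cup R_2}$, and \cref{lem:partitionproperties}(2) (refinement of the partition under enlargement of the region) shows that each $\strset_{r'|R_i}$ is a disjoint union of sets $\strset_{r|R_1\cup R_2}$, so $\Sigma^\str_{r'|R_i}$ is a sum of operators $\Sigma^\str_{r|R_1\cup R_2}$. Finally I combine: a generator on $R_1\cup R_2$ coincides with a generator on one of the $R_i$, and a generator on each $R_i$ is a sum of generators on $R_1\cup R_2$; reading both facts through ``having equal eigenvalue on $\ket{\mathbf{x}}$ and $\ket{\mathbf{x}'}$'' yields the two implications defining the refinement statement, giving $\PP^\str_{R_1}\circ\PP^\str_{R_2}=\PP^\str_{R_1\cup R_2}$. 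The plaquette case is identical with $X\leftrightarrow Z$.

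The only genuine subtlety --- the \emph{main obstacle} --- is the first inclusion: one has to feed the hypothesis $R_1\cup R_2=R^-_1\cup R^-_2$ through a single vertex of $r$ and then propagate it, via the diameter-$1$ bound, to \emph{all} stars whose support meets $r$, not just those in $\strset_{r|R_1\cup R_2}$, since the reverse inclusion $\strset_{r|R_i}\subseteq\strset_{r|R_1\cup R_2}$ is needed too. I should also note that \cref{lem:partitionproperties}(3) is stated for $r\in J^\str_{R_1\cap R_2}$, while here $r$ ranges over $J^\str_{R_1\cup R_2}$ and need not lie inside $R_1\cap R_2$; so I would rerun that short distance argument directly in the present setting rather than invoke the lemma verbatim. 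Everything else is bookkeeping with the support partitions, with no analytic estimate involved.
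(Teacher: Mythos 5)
Your proof is correct and follows essentially the same route as the paper's: the two combinatorial inputs --- that each $\strset_{r|R_1\cup R_2}$ coincides with $\strset_{r|R_i}$ for some $i$ (via the interior hypothesis together with the diameter-$1$ bound on star supports) and that each $\strset_{r'|R_i}$ is a disjoint union of sets $\strset_{r|R_1\cup R_2}$ --- are exactly what the paper extracts from \cref{lem:partitionproperties}, and your passage to the common $X$-eigenbasis merely recasts the paper's pinching-composition identity as the statement that $\sim_{R_1\cup R_2}$ is the common refinement of $\sim_{R_1}$ and $\sim_{R_2}$. Your caveat about the index set in \cref{lem:partitionproperties}(3) is well taken (the paper itself invokes it for $r\in J^\str_{R_1\cup R_2}$), and rerunning the short distance argument in that setting, as you propose, is harmless.
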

\begin{proof}
  The proof is based on \cref{lem:partitionproperties}. 
  Since any $r\in J^\str_{R_1\cup R_2}$ intersects at least one of $R_1^{-}$ or $R_2^{-}$ by assumption, the partition of the stars of $R_1\cup R_2$ is a subset of the partitions of $R_1$ and $R_2$:
  \begin{equation}\label{eq:pinchingoverlapproof1}
    \{\strset_{r|R_1\cup R_2}\}_{r\in J_{R_1 \cup R_2}^\str} \subseteq \{\strset_{r|R_1}\}_{r\in J_{R_1}^\str} \cup  \{\strset_{r|R_2}\}_{r\in J_{ R_2}^\str} \ .
  \end{equation}
  Denoting the remaining terms in the union of the partitions by $\tilde r_i \in \tilde J^\str_{R_i}$ and the corresponding projections by $\PP^\str_{\tilde J^\str_{R_i}}$, we may rewrite:
  \begin{equation}\label{eq:pinchingoverlapproof2}
    \PP^\str_{R_1}\circ \PP^\str_{R_2} = \PP^\str_{R_1\cup R_2} \circ \PP^\str_{\tilde J^\str_{R_1}} \circ \PP^\str_{\tilde J^\str_{R_2}} \ .
  \end{equation}
  To prove that the right-hand side of \eqref{eq:pinchingoverlapproof2} equals $\PP^\str_{R_1\cup R_2}$, it suffices to show that for a single $\tilde r_1 \in \tilde J_{R_1}^\str$, its pinching
  \begin{equation*}
    \PP^\str_{\tilde r_1} (O)= \sum_{\omega\in \Omega_{\tilde r_1|R_1}^\str} \Pi_{\tilde r_1|R_1}^\str(\omega) \ O\  \Pi_{\tilde r_1|R_1}^\str(\omega)
  \end{equation*}
  satisfies $\PP^\str_{R_1\cup R_2}\circ \PP^\str_{\tilde r_1} = \PP^\str_{R_1\cup R_2}$. In turn, by \cref{lem:partitionproperties} we know that any $\strset_{\tilde r_1|R_1}$ is a disjoint union
  \begin{equation*}
    \strset_{\tilde r_1|R_1} = \bigcupplus_{k} \strset_{r_k|R_1\cup R_2}
  \end{equation*}
  for a finite set $\{r_k\}\subset J^\str_{R_1\cup R_2}$.
  Thus, the spectral projection $\Pi_{\tilde r_1|R_1}^\str(\omega)$ projects onto the sum of the eigenvalues of the $\{\Sigma^\str_{r_k|R_1\cup R_2}\}$. In other words, for any $\boldsymbol{\omega} \in \Omega_{R_1\cup R_2}^\str$ and $\omega\in\Omega_{\tilde r_i}^\str$
  \begin{equation*}
    \prod_{r\in J_{R_1\cup R_2}^\str} \Pi_{r|R_1\cup R_2}^\str(\omega_r) \Pi_{\tilde r_1|R_1}^\str(\omega)= \prod_{r\in J_{R_1\cup R_2}^\str} \Pi_{r|R_1\cup R_2}^\str(\omega_r)  1[\Sigma_k\omega_{r_k}=\omega] \ .
  \end{equation*}
  The sum over $\omega\in \Omega_{\tilde r_i}^\str$ removes the indicator function in the pinching. Thus
  \begin{equation*}
    \PP^\str_{R_1\cup R_2} \circ \PP_{\tilde r_1}^\str = \PP^\str_{R_1\cup R_2} \ 
  \end{equation*}
  as claimed. 
\end{proof}

\section{Approximate tensorization}\label{sec:approxtensor}
A key technical contribution in the proof of the main result will be the approximate tensorization of the relative entropy. 
It concerns the approximate additivity of the relative entropy of the star or plaquette Gibbs states  from~\eqref{def:restorGNS} associated with three disjoint subsets  $U, V ,W\subset \lsym$. They define two  subsets $ UV \coloneqq U \uplus V $, and $  VW  \coloneqq  V\uplus W  $, 
which overlap in $ V $, and their union $ UVW  \coloneqq  U \uplus V\uplus W $. An illustration of this geometry is found in \cref{fig:approxtensor} for the example of rectangular sets, which is the relevant case in the main result.

\begin{theorem}\label{thm:approxtensor}
Let $U, V ,W\subset \lsym$ be disjoint non-empty subsets with $ \dist(U,W)>4 $. Assume that the star or plaquette Gibbs states, $\wildcard \in \{\str, \plq\}$, restricted to the unions $ UVW $ and $UV$ satisfy the norm bound
  \begin{equation}
      \left\| \frac{\Tr_{VW} \hat{\rho}^\wildcard_{UVW}}{\Tr_{V} \hat{\rho}^\wildcard_{UV}} -1 \right\| \leq \varepsilon\leq \frac{1}{28} .
  \end{equation}
  Then for any full-rank state $ \sigma \in \SSS(\HH_{\lsym})$:
  \begin{equation}
    D(\sigma \|\E^{\wildcard,*}_{UVW}(\sigma))\leq \left(1+28 \varepsilon\right) \left(D(\sigma \|\E^{\wildcard,*}_{VW}(\sigma)) + D(\sigma \|\E^{\wildcard,*}_{UV}(\sigma)) \right).
  \end{equation}
\end{theorem}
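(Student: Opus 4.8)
The plan is to deduce \cref{thm:approxtensor} from the general principle that approximate tensorization of the relative entropy follows from a complete-positivity ordering of the relevant conditional expectations (in the spirit of \cite{gao_Completepositivityorder_2025}); the substance is to derive that ordering from the hypothesis using the explicit formula of \cref{lem:condexp}. As usual we treat only the star case. First note that, by the commutant description in \cref{pro:lindbladkern}, $\ker\LL^\str_{UVW}=\ker\LL^\str_{UV}\cap\ker\LL^\str_{VW}$, since the jump operators supported on $UV\cup VW=UVW$ generate the same algebra as the union of those supported on $UV$ and on $VW$; hence $\E^\str_{UVW}$ is the conditional expectation onto the intersection algebra, and \cref{pro:condexpprop} gives the tower identities $\E^\str_{UV}\circ\E^\str_{UVW}=\E^\str_{UVW}\circ\E^\str_{UV}=\E^\str_{UVW}$ and likewise with $VW$. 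It then suffices to establish an estimate of the form
\begin{equation*}
  (1-c\varepsilon)\,\E^\str_{UVW}\ \leq_{\mathrm{cp}}\ \E^\str_{UV}\circ\E^\str_{VW}\circ\E^\str_{UV}\ \leq_{\mathrm{cp}}\ (1+c\varepsilon)\,\E^\str_{UVW}
\end{equation*}
(or the one-sided inequality needed for the upper bound), from which the abstract machinery delivers $D(\sigma\|\E^{\str,*}_{UVW}\sigma)\leq(1-f(\varepsilon))^{-1}\bigl(D(\sigma\|\E^{\str,*}_{UV}\sigma)+D(\sigma\|\E^{\str,*}_{VW}\sigma)\bigr)$ through the chain rule for relative entropy along the tower $\ker\LL^\str_{UVW}\subseteq\ker\LL^\str_{VW}$ together with data processing; bookkeeping constants and using $(1-f(\varepsilon))^{-1}\leq1+28\varepsilon$ on $\varepsilon\leq\tfrac1{28}$ then yields the claim.

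Next I translate the hypothesis. The operators $\Tr_{VW}\hat\rho^\str_{UVW}$ and $\Tr_{V}\hat\rho^\str_{UV}$ are positive, invertible and commute — both lie in the abelian algebra $\AAA^\str_{UVW}$ of \cref{lem:pinchtotrace} — so the assumed norm bound is equivalent to the operator sandwich
\begin{equation*}
  (1-\varepsilon)\,\Tr_{V}\hat\rho^\str_{UV}\ \leq\ \Tr_{VW}\hat\rho^\str_{UVW}\ \leq\ (1+\varepsilon)\,\Tr_{V}\hat\rho^\str_{UV}\,.
\end{equation*}
This will be combined with the decomposition $H^\str_{UVW}-H^\str_{VW}=H^\str_{UV}-H^\str_{V}$, whose left side has no support on $VW$, and which lets one express $\hat\rho^\str_{UVW}$ through the commuting factors $\hat\rho^\str_{VW}$, $\hat\rho^\str_{UV}$ and $(\hat\rho^\str_{V})^{-1}$ — the building blocks of $\E^\str_{VW}$ and $\E^\str_{UV}$.

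The central step is to upgrade this ordering of the marginals into the ordering of the conditional expectations. Because $\dist(U,W)>4$, every qubit of $UVW$ lies in the set $R_1^{-}$ or $R_2^{-}$ of \eqref{def:Rminus} (with $R_1=UV$, $R_2=VW$): a qubit $v\in V$ with both $\dist(v,U)\leq2$ and $\dist(v,W)\leq2$ would force $\dist(U,W)\leq4$. Hence $R_1\cup R_2=R_1^{-}\cup R_2^{-}$, so \cref{lem:pinchingoverlap} gives $\PP^\str_{UV}\circ\PP^\str_{VW}=\PP^\str_{UVW}$; together with \cref{lem:pinchingcommute} (all $Z$-pinchings and $\Sigma$-pinchings commute), the analogous composition law for the $Z$-pinchings $\TT^\str_R$ and for partial traces (up to explicit powers of $2$ that cancel against the normalizations of the $\hat\rho^\str_R$), and \cref{lem:pinchtotrace} (so that the Gibbs weights can be pulled through all the pinchings), one computes $\E^\str_{UV}\circ\E^\str_{VW}\circ\E^\str_{UV}$ explicitly and finds that it equals $\E^\str_{UVW}$ with the operator $\Tr_{VW}\hat\rho^\str_{UVW}\,(\Tr_{V}\hat\rho^\str_{UV})^{-1}$ (or its inverse) inserted. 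The operator sandwich above then yields the required complete-positivity ordering; here one works with the KMS ($s=\tfrac12$) symmetric structure of \cref{def:sscalar}, under which conjugation by the Gibbs weight is completely positive and preserves the ordering. Feeding this into the abstract approximate-tensorization inequality and tracking constants completes the proof; full rank of $\sigma$ guarantees that all relative entropies are finite and all logarithms well defined.

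The step I expect to be the main obstacle is the explicit computation of $\E^\str_{UV}\circ\E^\str_{VW}\circ\E^\str_{UV}$: one must verify that the pinching parts $\TT^\str_R$, $\PP^\str_R$ compose exactly and that all accompanying combinatorial factors cancel, so that the sole discrepancy with $\E^\str_{UVW}$ is precisely the marginal ratio bounded by $\varepsilon$ — this is where the geometric hypothesis $\dist(U,W)>4$ and the commutation and overlap lemmas do the real work — and then to match the resulting operator bound to the exact hypotheses of the abstract approximate-tensorization result (in particular realizing both sides as genuinely completely positive maps), keeping the constant explicit enough to reach $1+28\varepsilon$.
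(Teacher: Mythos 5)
Your overall route is the paper's: the cp-order result of \cref{prop:lem2.3} (from the cited work of Gao et al.), the explicit formula of \cref{lem:condexp}, the overlap and commutation lemmas for the pinchings under $\dist(U,W)>4$, the reformulation of the hypothesis as an operator sandwich between the commuting marginals $\Tr_{VW}\hat\rho^{\str}_{UVW}$ and $\Tr_{V}\hat\rho^{\str}_{UV}$, and a chain-rule plus data-processing finish. The genuine gap is your choice of intermediate map $\Psi=\E^{\str}_{UV}\circ\E^{\str}_{VW}\circ\E^{\str}_{UV}$. First, the clean identity you assert for it is not what the computation yields: the statement ``equals $\E^{\str}_{UVW}$ with the ratio $\Tr_{VW}\hat\rho^{\str}_{UVW}\,(\Tr_{V}\hat\rho^{\str}_{UV})^{-1}$ inserted'' is true for the \emph{two-fold} composition $\E^{\str}_{UV}\circ\E^{\str}_{VW}$ (this is exactly \cref{lem:positivitybound}); writing your map as $(\E^{\str}_{UV}\circ\E^{\str}_{VW})\circ\E^{\str}_{UV}$ leaves an inner $\E^{\str}_{VW}\circ\E^{\str}_{UV}$ whose analogous explicit form involves the \emph{other} ratio $\Tr_{UV}\hat\rho^{\str}_{UVW}\,(\Tr_{V}\hat\rho^{\str}_{VW})^{-1}$, which the hypothesis does not control. (The cp-ordering for your $\Psi$ can still be rescued, but only \emph{after} proving the two-fold ordering: right-compose it with the CP map $\E^{\str}_{UV}$ and use $\E^{\str}_{UVW}\circ\E^{\str}_{UV}=\E^{\str}_{UVW}$ from \cref{pro:condexpprop}; so the symmetrization buys nothing.) Second, and fatally for the stated constants: applying \cref{prop:lem2.3} with your $\Psi$ leaves $D(\sigma\|\E^{\str,*}_{UV}\circ\E^{\str,*}_{VW}\circ\E^{\str,*}_{UV}(\sigma))$ on the right, and the natural decomposition (peel off the outer $\E^{\str,*}_{UV}$ via \cref{pro:logdiff}, apply data processing, then peel off $\E^{\str,*}_{VW}$, then data processing again) gives $2\,D(\sigma\|\E^{\str,*}_{UV}(\sigma))+D(\sigma\|\E^{\str,*}_{VW}(\sigma))$ — the $UV$ term is counted twice — so the claimed bound $(1+28\varepsilon)\bigl(D(\sigma\|\E^{\str,*}_{UV}(\sigma))+D(\sigma\|\E^{\str,*}_{VW}(\sigma))\bigr)$ does not follow from your scheme as written.

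The fix is simply to drop the extra factor: \cref{prop:lem2.3} does not require $\Psi$ to be self-adjoint with respect to any of the weighted inner products, only unital, completely positive, satisfying $\E\circ\Psi=\E$ and the cp sandwich. Take $\Psi=\E^{\str}_{UV}\circ\E^{\str}_{VW}$. Your operator sandwich on the marginals, inserted into the explicit expressions (this is where \cref{lem:pinchingoverlap}, \cref{lem:pinchtotrace} and the hypothesis $\dist(U,W)>4$ are used), gives $(1-\varepsilon)\,\Psi\leq\E^{\str}_{UVW}\leq(1+\varepsilon)\,\Psi$, hence $(1-2\varepsilon)\,\E^{\str}_{UVW}\leq\Psi\leq(1+2\varepsilon)\,\E^{\str}_{UVW}$ after inverting $1\pm\varepsilon$; \cref{prop:lem2.3} with parameter $2\varepsilon\leq\tfrac1{14}$ then produces the factor $1+28\varepsilon$, and a single chain-rule step at the outer $\E^{\str,*}_{VW}$ (legitimate by \cref{pro:logdiff}, with full rank of $\E^{\str,*}_{VW}(\sigma)$ from \cref{lem:condexpkern}) plus one application of data processing yields exactly $D(\sigma\|\E^{\str,*}_{VW}(\sigma))+D(\sigma\|\E^{\str,*}_{UV}(\sigma))$. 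That is the paper's proof, and your remaining ingredients (the kernel/tower identities, the geometric check that $\dist(U,W)>4$ forces $UVW=(UV)^{-}\cup(VW)^{-}$, the equivalence of the norm bound with the commuting sandwich) are all correct.
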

The proof will be spelled at the end of this section. 
The two main ingredients of this proof are (i) a result from \cite{gao_Completepositivityorder_2025} reducing approximate tensorization to a positivity order of the conditional expectations, and (ii) the explicit expressions and properties of the conditional expectations derived in \cref{sec:condexp}. They will be the topic of the next subsection.

\subsection{Auxiliary results}

We start recalling a result from \cite{gao_Completepositivityorder_2025} on a rather general bound of the relative entropy involving essentially two conditional expectations, which are nested and related by positivity order. 
\begin{proposition}[{{\cite[Lemma 2.3]{gao_Completepositivityorder_2025}}}]\label{prop:lem2.3}
  Let $\EE$ be a conditional expectation and $\Psi$ be a unital and completely positive map defined over the set of bounded operators of a finite dimensional Hilbert space $\HH$, such that $\EE\circ \Psi = \EE$ and 
  \begin{equation}
    (1- \varepsilon) \EE \leq \Psi \leq (1+\varepsilon) \EE
  \end{equation}
   in the sense of complete positivity order. Then for any $\sigma \in \SSS(\HH)$
  \begin{equation}\label{eq:approxtensorfull}
    D(\sigma \| \EE^*(\sigma)) \leq \left(\frac{1-\varepsilon}{1+\varepsilon} - \frac{\varepsilon}{(1-\varepsilon)(2\ln2-1)}\right)^{-1} D(\sigma \|\Psi^*(\sigma)).
  \end{equation}
  In particular, if $\varepsilon \leq \frac{1}{14}$ we have
  \begin{equation}\label{eq:approxtensorbound}
    D(\sigma \| \EE^*(\sigma)) \leq (1+14\varepsilon) D(\sigma \|\Psi^*(\sigma)).
  \end{equation}
\end{proposition}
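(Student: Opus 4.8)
The plan is to reduce the statement to a pointwise operator comparison of $\Psi^*(\sigma)$ with $\EE^*(\sigma)$ and then convert it into the stated relative-entropy bound, writing $\omega := \EE^*(\sigma)$ and $\tau := \Psi^*(\sigma)$. \textit{First,} I would exploit the positivity order: put $\Phi := \varepsilon^{-1}\bigl(\Psi - (1-\varepsilon)\EE\bigr)$; since $\Psi - (1-\varepsilon)\EE$ is completely positive and $\Psi,\EE$ are unital, $\Phi$ is a unital completely positive map, so $\Psi = (1-\varepsilon)\EE + \varepsilon\Phi$, and the upper bound $(1+\varepsilon)\EE - \Psi \ge 0$ rewrites as $\Phi \le 2\EE$ in complete positivity order. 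Passing to Hilbert–Schmidt adjoints, $\eta := \Phi^*(\sigma)$ is a state with
\begin{equation*}
  \tau = (1-\varepsilon)\,\omega + \varepsilon\,\eta, \qquad 0 \le \eta \le 2\,\omega .
\end{equation*}
Moreover $\EE\circ\Psi = \EE$ dualises to $\Psi^*\circ\EE^* = \EE^*$ and idempotency $\EE\circ\EE = \EE$ to $\EE^*\circ\EE^* = \EE^*$, so $\omega$ is a fixed point of both $\EE^*$ and $\Psi^*$.

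\textit{Second,} the crude comparison. Since $D(\sigma\|\omega) - D(\sigma\|\tau) = \Tr\bigl[\sigma(\ln\tau - \ln\omega)\bigr]$, it suffices to control this difference. From $\tau \le (1+\varepsilon)\omega$ and operator monotonicity of the logarithm we get $\ln\tau - \ln\omega \le \ln(1+\varepsilon)\,\identity$, hence the additive bound $D\bigl(\sigma\|\EE^*(\sigma)\bigr) \le D\bigl(\sigma\|\Psi^*(\sigma)\bigr) + \ln(1+\varepsilon)$. This already proves the proposition whenever $D\bigl(\sigma\|\Psi^*(\sigma)\bigr)$ is bounded below by a suitable fixed multiple of $\varepsilon$; it also handles the equality case $\varepsilon = 0$.

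\textit{Third — and this is the main obstacle —} the complementary regime where $D(\sigma\|\tau)$ is small requires a genuinely second-order (multiplicative, not additive) estimate. Here I would feed in the decomposition $\tau = (1-\varepsilon)\omega + \varepsilon\eta$ with $0 \le \eta \le 2\omega$ together with the two-sided comparison $\tfrac1{1+\varepsilon}\tau \le \omega \le \tfrac1{1-\varepsilon}\tau$, and use operator convexity of $x \mapsto x\ln x$ (equivalently joint convexity of relative entropy) to bound $\Tr\bigl[\sigma(\ln\tau-\ln\omega)\bigr]$ by $\bigl(C(\varepsilon)^{-1}-1\bigr)D(\sigma\|\tau)$. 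The constant $2\ln2-1 = h(2)$, with $h(x) := x\ln x - x + 1$, should appear as the sharp constant in the one-variable convexity inequality for $h$ on the interval $[0,2]$ dictated by the spectral constraint $\eta\le2\omega$, while the factors $\tfrac{1-\varepsilon}{1+\varepsilon}$ and $\tfrac1{1-\varepsilon}$ in $C(\varepsilon)$ trace back to the two-sided comparison. Combining the two regimes yields $D\bigl(\sigma\|\EE^*(\sigma)\bigr) \le C(\varepsilon)^{-1}D\bigl(\sigma\|\Psi^*(\sigma)\bigr)$. The final ``in particular'' claim is then an elementary calculus check: $C$ is continuous and decreasing on $(0,\tfrac1{14}]$ with $C(\tfrac1{14}) = \tfrac{13}{15} - \tfrac1{13(2\ln2-1)} > \tfrac12$, and one verifies $C(\varepsilon)^{-1} \le 1 + 14\varepsilon$ directly there. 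Since this is Lemma~2.3 of the cited work of Gao et al., one could alternatively just invoke it verbatim.
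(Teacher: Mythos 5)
Your preliminary reductions are fine: the decomposition $\Psi=(1-\varepsilon)\EE+\varepsilon\Phi$ with $\Phi$ unital CP and $\Phi\le 2\EE$, the dualization $\Psi^*\circ\EE^*=\EE^*$, and the additive estimate $D(\sigma\|\EE^*(\sigma))\le D(\sigma\|\Psi^*(\sigma))+\ln(1+\varepsilon)$ via operator monotonicity of $\ln$ are all correct. The genuine gap is your third step, which is where the entire content of the statement lives: you assert that "operator convexity of $x\mapsto x\ln x$" will bound $\Tr[\sigma(\ln\tau-\ln\omega)]$ by $\bigl(C(\varepsilon)^{-1}-1\bigr)D(\sigma\|\tau)$, but this is exactly the inequality to be proved, restated; no argument is given for how the constraint $0\le\eta\le 2\omega$ and the two-sided comparison actually produce the constant $\frac{1-\varepsilon}{1+\varepsilon}-\frac{\varepsilon}{(1-\varepsilon)(2\ln 2-1)}$, and the remark that $2\ln 2-1=h(2)$ "should appear as the sharp constant" is a conjecture about the proof, not a proof. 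Your two-regime split is also incoherent as stated: the additive bound only yields the claim when $D(\sigma\|\Psi^*(\sigma))$ exceeds an $\varepsilon$-independent constant (both $\ln(1+\varepsilon)$ and $C(\varepsilon)^{-1}-1$ are of order $\varepsilon$), not "a fixed multiple of $\varepsilon$", while the third-step bound, if you could prove it, would cover all states and make the split superfluous.

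For comparison, the paper does not reprove the main inequality \eqref{eq:approxtensorfull} at all: it invokes \cite[Lemma 2.3]{gao_Completepositivityorder_2025} verbatim — which your closing sentence also offers as a fallback, and that is the only part of your proposal that fully matches the paper. The paper's own contribution is the elementary derivation of \eqref{eq:approxtensorbound}: with $k=2\ln 2-1$ one computes $C(\varepsilon)^{-1}=\frac{1-\varepsilon^2}{1-\varepsilon(2+\frac1k)-\varepsilon^2(\frac1k-1)}\le\frac{1}{1-7\varepsilon}\le 1+14\varepsilon$ for $\varepsilon\le\frac1{14}$, using $\frac1k\le 3$ and $\varepsilon^2\le\varepsilon$. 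Your version of this check — evaluating $C$ at $\varepsilon=\frac1{14}$ and appealing to monotonicity — does not work as stated, because the comparison function $\frac{1}{1+14\varepsilon}$ is also decreasing in $\varepsilon$, so an endpoint check plus monotonicity of $C$ alone does not give $C(\varepsilon)\ge\frac{1}{1+14\varepsilon}$ on the whole interval; you need an inequality valid for all $\varepsilon\in(0,\frac1{14}]$, as in the paper's computation.
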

\begin{proof}
  The proof of \eqref{eq:approxtensorfull} is in~\cite{gao_Completepositivityorder_2025}. The bound~\eqref{eq:approxtensorbound} follows from elementary estimates: setting $k\coloneqq2\ln2-1$, we have $1/k\leq 3$, $\varepsilon^2\leq\varepsilon\leq \frac{1}{14}$ and
  \begin{equation*}
    \left(\frac{1-\varepsilon}{1+\varepsilon} - \frac{\varepsilon}{(1-\varepsilon)k}\right)^{-1} = \frac{1-\varepsilon^2}{1-\varepsilon(\frac{1}{k}+2)-\varepsilon^2(\frac{1}{k}-1)} \leq \frac{1}{1-7\varepsilon} \leq 1+14 \varepsilon . 
  \end{equation*}
\end{proof}

Next, we relate the positivity order of the conditional expectations back to a DS-type condition. The relation~\eqref{eq:posorderexp} is based on the explicit expressions (\cref{lem:condexp}). 

\begin{lemma}\label{lem:positivitybound}
  Let $U, V, W\subseteq \lsym$ be non-empty and disjoint with $\dist(U,W)>4$ and let $\wildcard \in \{\str, \plq\}$. Assume that the Gibbs states restricted to $ UVW $ and $ UV$ satisfy
  \begin{equation}
    \left\| \frac{\Tr_{VW} \hat{\rho}^{\wildcard}_{UVW}}{\Tr_{V} \hat{\rho}^{\wildcard}_{UV}} -1 \right\| \leq \varepsilon
  \end{equation}
  for some $0<\varepsilon\leq \frac{1}{2}$. Then
  \begin{equation}\label{eq:posorderexp}
    (1-2\varepsilon)\E_{UVW}^{\wildcard} \leq \E_{UV}^{\wildcard}\circ \E_{VW}^{\wildcard} \leq (1+2\varepsilon) \E_{UVW}^{\wildcard}
  \end{equation}
  where $\leq$ is in the sense of complete positivity.
\end{lemma}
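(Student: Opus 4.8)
The plan is to use the explicit formula $\E^\str_R(O) = \TT^\str_R \circ \PP^\str_R(\hat\rho^\str_R O)$ from \cref{lem:condexp} to reduce the claimed positivity ordering between the composition $\E^\str_{UV}\circ \E^\str_{VW}$ and $\E^\str_{UVW}$ to a statement about the operators $\hat\rho^\str_R$ alone. First I would compute $\E^\str_{UV}\circ \E^\str_{VW}$ by substituting the explicit expressions and commuting the pinchings past each other using \cref{lem:pinchingcommute} and \cref{lem:pinchtotrace}. The key structural fact is that, because $\dist(U,W)>4$, the overlap $V$ is wide enough that $UV$ and $VW$ are "bulk-compatible" in the sense of \cref{lem:pinchingoverlap}: writing $R_1 = UV$, $R_2 = VW$, one checks $R_1 \cup R_2 = UVW = R_1^- \cup R_2^-$ (every qubit of $UVW$ lies at distance $>2$ from the part of the other set not containing it, using $\dist(U,W)>4$), so that $\PP^\str_{UV}\circ\PP^\str_{VW} = \PP^\str_{UVW}$, and likewise the $Z$-pinchings satisfy $\TT^\str_{UV}\circ\TT^\str_{VW} = \TT^\str_{UVW}$ trivially since they are both just $Z$-basis pinchings on the respective regions. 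After pushing all pinchings to the outside, the composition becomes $\TT^\str_{UVW}\circ\PP^\str_{UVW}$ applied to an operator of the form $\hat\rho^\str_{UV}\, \widetilde{\rho}\, O$, where $\widetilde\rho$ is what remains of $\hat\rho^\str_{VW}$ after the $UV$-pinchings have acted on it; using \cref{lem:pinchtotrace} and the fact that $\hat\rho^\str_{VW}\in\AAA^\str_{VW}$ one identifies $\widetilde\rho$ as a trace-corrected version of $\hat\rho^\str_{VW}$, and the product $\hat\rho^\str_{UV}\,\widetilde\rho$ should simplify — via the factorization of the local Gibbs states into bulk and boundary pieces that don't overlap — to $\hat\rho^\str_{UVW}$ multiplied by the operator $M \coloneqq \Tr_{VW}\hat\rho^\str_{UVW}\,/\,\Tr_V\hat\rho^\str_{UV}$ (or its inverse), which is exactly the quantity controlled in the hypothesis. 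This is really the classical telescoping identity $\mu_{UVW} = \mu_{UV}\cdot\mu_{VW}/\mu_V \cdot(\text{correction})$ lifted to the operator level, valid because all states in sight are diagonal in the $X$-basis and lie in the abelian algebra $\AAA^\str$.

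Once the composition is written as $\E^\str_{UV}\circ\E^\str_{VW}(O) = \TT^\str_{UVW}\circ\PP^\str_{UVW}(\hat\rho^\str_{UVW}\, M\, O)$ with $M\in\AAA^\str_{UVW}$ commuting with all the relevant projections and satisfying $\|M-1\|\le\varepsilon$ (equivalently $(1-\varepsilon)\identity\le M\le(1+\varepsilon)\identity$, sandwiching possibly up to a factor $(1-\varepsilon)^{-1}$ which is absorbed into the constant $2\varepsilon$), the positivity ordering follows: for any positive $O$, $\hat\rho^\str_{UVW}\,M\,O$ lies between $(1-\varepsilon)\hat\rho^\str_{UVW}\,O$ and $(1+\varepsilon)\hat\rho^\str_{UVW}\,O$ as a sandwiched product (here one uses that $M$ is a positive element of the abelian algebra commuting with $\hat\rho^\str_{UVW}$, so one may write $\hat\rho^\str_{UVW}MO = (\hat\rho^\str_{UVW})^{1/2}M(\hat\rho^\str_{UVW})^{1/2}O$ and the middle factor is an operator-monotone sandwich), and since $\TT^\str_{UVW}\circ\PP^\str_{UVW}$ is a completely positive map this ordering is preserved. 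Complete positivity of the ordering (rather than mere positivity) comes for free because the whole argument applies verbatim after tensoring with an arbitrary ancilla — the maps $\E^\wildcard_R$ are completely positive and $M\otimes\identity$ is still in the relevant commutant. The nested-compatibility $\E^\str_{UV}\circ(\E^\str_{UV}\circ\E^\str_{VW}) = \E^\str_{UV}\circ\E^\str_{VW}$-type consistency needed to later invoke \cref{prop:lem2.3} also falls out of \cref{pro:condexpprop}\ref{item4} combined with the above.

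The main obstacle I anticipate is the bookkeeping in the first step: carefully tracking which pinchings commute past which operators, and verifying that after all the $Z$-pinchings $\TT^\str$ and star-pinchings $\PP^\str$ on the smaller regions have been commuted outward, what is left multiplying $O$ is precisely $\hat\rho^\str_{UVW}$ times the ratio operator $M$ and not some more complicated object. This requires the distance hypothesis $\dist(U,W)>4$ in an essential way — it is exactly what makes the partitions $J^\str_{UV}$, $J^\str_{VW}$ refine to $J^\str_{UVW}$ near the overlap (via \cref{lem:partitionproperties}, item 3, and \cref{lem:pinchingoverlap}), so that no star or plaquette straddles both $U$ and $W$ and the local Hamiltonians genuinely satisfy $H^\str_{UVW} - H^\str_{VW} = H^\str_{UV} - H^\str_V$ with the difference supported away from $VW$. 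The factor $28 = 2\times 14$ in the final approximate-tensorization statement (rather than in this lemma) comes from feeding the constant $2\varepsilon$ here into the bound $(1+14\cdot(2\varepsilon))$ of \cref{prop:lem2.3}; within this lemma the only quantitative content is the clean $(1\pm 2\varepsilon)$ sandwich, where the extra factor of $2$ over the naive $\varepsilon$ accounts for passing from $\|M-1\|\le\varepsilon$ to a two-sided multiplicative bound on both $M$ and a second ratio operator appearing from the $\E^\str_{VW}$ side.
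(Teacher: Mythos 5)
Your overall strategy (explicit formula from \cref{lem:condexp}, commuting pinchings via \cref{lem:pinchingcommute} and \cref{lem:pinchtotrace}, merging star pinchings via \cref{lem:pinchingoverlap} thanks to $\dist(U,W)>4$, and finally comparing a ratio of partial traces of local Gibbs states inside a completely positive map) is the right one and matches the paper's. However, the central identity you plan to establish is false: you cannot write $\E^\str_{UV}\circ\E^\str_{VW}(O)=\TT^\str_{UVW}\circ\PP^\str_{UVW}\bigl(\hat\rho^\str_{UVW}\,M\,O\bigr)$. The obstruction is that the inner $Z$-pinching $\TT^\str_{VW}$ (in particular its action on the qubits of $W$ within distance $1$ of $V$) does \emph{not} commute past $\hat\rho^\str_{UV}$ or past the spectral projections in $\PP^\str_{UV}$, since those are built from star operators whose $X$-type support reaches into $W$; so the $Z$-pinchings of the two expectations cannot be pushed to the outside and combined into $\TT^\str_{UVW}$. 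More structurally, any map of the form $\TT^\str_{UVW}\circ\PP^\str_{UVW}(\cdot)$ has range inside the commutant of all $Z_v$, $v\in UVW$, i.e.\ inside $\ker\LL^\str_{UVW}$; if $\E^\str_{UV}\circ\E^\str_{VW}$ had this property, then (both being orthogonal projections for the same weighted inner product) they would commute and the composition would be exactly the projection onto $\ker\LL^\str_{UVW}$ — which is false in general and would make approximate tensorization exact with $\varepsilon=0$. So the step "the product $\hat\rho^\str_{UV}\widetilde\rho$ simplifies to $\hat\rho^\str_{UVW}M$ inside the full pinchings" cannot be carried out, and the distance hypothesis does not rescue it (it only guarantees $\PP^\str_{UV}\circ\PP^\str_{VW}=\PP^\str_{UVW}$, not the analogous statement once the non-commuting $Z$-pinchings and Gibbs weights are interleaved). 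Incidentally, with the paper's normalization $\TT^\str_{UV}\circ\TT^\str_{VW}=2^{|V|}\,\TT^\str_{UVW}$, not $\TT^\str_{UVW}$, though that is only bookkeeping.

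The workable route, which the paper follows, is to massage \emph{both} sides into a common form in which only the $Z$-pinching over $U$ sits outside and $\E^\str_{VW}(O)$ is kept intact inside: one shows $\E^\str_{UVW}(O)=\TT^\str_U\bigl(\Tr_{VW}(\hat\rho^\str_{UVW})\,\PP^\str_{UVW}\circ\E^\str_{VW}(O)\bigr)$ and $\E^\str_{UV}\circ\E^\str_{VW}(O)=\TT^\str_U\bigl(\Tr_{V}(\hat\rho^\str_{UV})\,\PP^\str_{UVW}\circ\E^\str_{VW}(O)\bigr)$, using that the partial traces lie in the abelian star algebra $\AAA^\str_{UVW}$ (so they commute with $\PP^\str_{UVW}$ and with each other), that they have no support on $VW$ (so they pass through $\TT^\str_{VW}$), and that $\E^\str_{VW}(O)$ is invariant under $\PP^\str_{VW}$ and under the normalized $\TT^\str_V$. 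The hypothesis then compares the two commuting positive weights $\Tr_{VW}(\hat\rho^\str_{UVW})$ and $\Tr_V(\hat\rho^\str_{UV})$, which also commute with the positive operator $\PP^\str_{UVW}\circ\E^\str_{VW}(O)$ for $O\geq 0$, and complete positivity of $\TT^\str_U$ preserves the order; the factor $2$ in $(1\pm2\varepsilon)$ then comes simply from inverting $(1\pm\varepsilon)$ for $\varepsilon\leq\tfrac12$, not from a second ratio operator. Your final comparison step and the tensoring-with-ancilla remark for complete positivity are fine once this corrected intermediate form is in place.
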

\begin{proof}  The proof is spelled out for the star case, and, to make things more readable, we will drop the superscript $\str$ from $\E, \TT, \PP, \AAA,  H, \Sigma,J$ and $\hat \rho$ for the rest of this proof.
  We will use the explicit expressions from \cref{lem:condexp} and the following facts:
  \begin{enumerate}
    \item\label{item1} By \cref{lem:pinchtotrace} the partial traces $\Tr_{UVW}e^{-\beta H_{UVW}} $ and $\Tr_{VW}e^{-\beta H_{UVW}} $, which as partial traces of Gibbs states are also invertible, are elements of $\AAA_{UVW}$, the algebra generated by $\{\Sigma_{r|UVW}\}_{r\in J_{UWV}}$.  They are hence polynomials in the star sums $ \Sigma_{r|UVW} $ with $ r\in J_{UVW}$, and commute with each other and with any spectral projector of the star sums. In particular, they can be pulled in and out of the star pinching $\PP_{UVW}$. 
    In addition, both partial traces have no support on $VW$ and can thus be pulled in and out of any $Z$-pinching $\TT_{VW}$.
    \item\label{item2} By \cref{lem:pinchingcommute}: $[\TT_{VW}, \PP_{UVW}]=0$ and $[\TT_{V}, \PP_{UVW}]=0$.
    \item\label{item3} $\PP_{UVW} = \PP_{UV}\circ \PP_{VW} = \PP_{UVW} \circ \PP_{VW}$ by \cref{lem:pinchingoverlap}, as the distance between $U$ and $W$ is larger than $4$ and thus, any $v\in V$ is either in the interior of $UV$ or in the interior of $VW$.
  \end{enumerate}
  Applying $\E_{UVW}$ to $O \in \BB(\HH_{\lsym})$ and using Fact~\ref{item1} in the second and forth subsequent equality, and Facts~\ref{item2}-\ref{item3} for the fourth and fifth equality, we obtain:
  \begin{align}\label{eq:cond3}
    \E_{UVW}(O)
    &= \TT_{UVW}\left(\frac{\Tr_{VW}e^{-\beta H_{UVW}}}{\Tr_{VW}e^{-\beta H_{UVW}}} \PP_{UVW}\left(\frac{e^{-\beta H_{UVW}}}{\Tr_{UVW}e^{-\beta H_{UVW}}}  O \right)\right)  \notag\\
    &= \TT_{UVW}\left(\frac{\Tr_{VW}e^{-\beta H_{UVW}}}{\Tr_{UVW}e^{-\beta H_{UVW}}} \PP_{UVW}\left(\frac{e^{-\beta H_{UVW}}}{\Tr_{VW}e^{-\beta H_{UVW}}}  O \right)\right)  \notag\\
    &= \TT_{U}\circ \TT_{VW}\left(\Tr_{VW}(\hat\rho_{UVW}) \PP_{UVW}\left(\hat\rho_{VW}  O \right)\right)  \notag \\
    &= \TT_{U} \left(\Tr_{VW}(\hat\rho_{UVW}) \PP_{UVW}\circ \TT_{VW}\left(\hat\rho_{VW}  O \right)\right)  \notag\\
    &= \TT_{U} \left(\Tr_{VW}(\hat\rho_{UVW}) \PP_{UVW}\circ \PP_{VW}\circ \TT_{VW}\left(\hat\rho_{VW}  O \right)\right) \notag \\
    &= \TT_{U} \left(\Tr_{VW}(\hat\rho_{UVW}) \PP_{UVW}\circ \E_{VW}(O)\right) .
  \end{align}
  The third equality is also based on the observation that
  \begin{equation*}
    \hat\rho_{VW} = e^{-\beta H_{VW}}\left(\Tr_{VW}e^{-\beta H_{VW}}\right)^{-1} = e^{-\beta H_{UVW}}\left(\Tr_{VW}e^{-\beta H_{UVW}}\right)^{-1}
  \end{equation*}
  since $H_{UVW}-H_{VW}$ has no support on $VW$. 
  
  Applying $\E_{UV}\circ \E_{VW}$ to $O\in \BB(\HH_{\lsym}) $ and using the facts above, we similarly obtain: 
  \begin{align}\label{eq:conditer}
    \E_{UV}\circ \E_{VW}(O) &= \TT_{UV}\circ \PP_{UV}\left(\hat\rho_{UV} \E_{VW}(O) \right)\notag\\
    &=  \TT_{UV}\left(\hat\rho_{UV} \PP_{UV}\circ\E_{VW}(O) \right) \notag\\
    &= \TT_{U}\circ \TT_{V}\left(\hat\rho_{UV} \PP_{UVW}\circ(2^{-|V|}\TT_{V})\circ\E_{VW}(O)\right)\notag\\
    &= \TT_{U}\circ \TT_{V}\left(\hat\rho_{UV} (2^{-|V|}\TT_{V})\circ\PP_{UVW}\circ\E_{VW}(O)\right)\notag\\
    &= \TT_{U}\left(\TT_{V}(\hat\rho_{UV}) \PP_{UVW}\circ\E_{VW}(O)\right) \notag\\
    &= \TT_{U}\left(\Tr_V(\hat\rho_{UV}) \PP_{UVW}\circ\E_{VW}(O)\right)  . 
  \end{align}
  The factor $2^{|V|}$ on lines 3 and 4 normalizes $\TT_{V}$ into a $Z$-pinching. In particular, we use that $\TT_{V}\circ\TT_{V} = 2^{|V|}\TT_{V}$ to pull a $Z$-pinching out of $\E_{VW}(O)$ on line 3 and absorb it again on line 5.
  In order to compare the expressions in~\eqref{eq:cond3} and \eqref{eq:conditer} in the sense of positivity order for non-negative $ O \geq 0 $, we note that the following objects all commute:
  \begin{enumerate}
    \item $[\Tr_{VW}(\hat\rho_{UVW}),\Tr_V(\hat\rho_{UV})]=0$, since both are contained in $\AAA_{UVW}$ by \cref{lem:pinchtotrace}.
    \item $[\Tr_{VW}(\hat\rho_{UVW}), \PP_{UVW}\circ \E_{VW}(O)]=0$, since $\Tr_{VW}(\hat\rho_{UVW})$ is an element of $\AAA_{UVW}$ and $\PP_{UVW}$ projects onto the commutant $(\AAA_{UVW})'$.
    \item $[\Tr_V(\hat\rho_{UV}),\PP_{UVW}\circ \E_{VW}(O)]=0$ for the same reason.
  \end{enumerate}

  Thus, if
  \begin{equation*}
    (1-\varepsilon) \Tr_V(\hat\rho_{UV}) \leq \Tr_{VW}(\hat\rho_{UVW}) \leq (1+\varepsilon) \Tr_V(\hat\rho_{UV})
  \end{equation*} 
  and since $\PP_{UVW}\circ \E_{VW}(O)$ is non-negative by assumption on $ O\geq 0 $ and commutes with the partial traces inside the quantum channel $\TT_{U}$ featuring on the outside of \eqref{eq:cond3} and \eqref{eq:conditer}, we arrive at the bounds
  \begin{equation*}
    (1-\varepsilon) \E_{UV}\circ \E_{VW}(O) \leq \E_{UVW}(O) \leq (1+\varepsilon) \E_{UV}\circ \E_{VW}(O) .
  \end{equation*}
  To establish the claimed bound, we note that $\varepsilon\leq \frac{1}{2}$ implies the bounds $\frac{1}{1-\varepsilon}\leq 1+2\varepsilon$ and $\frac{1}{1+\varepsilon}\geq 1-2\varepsilon$. This finishes the proof.
\end{proof}

\subsection{Proof of approximate tensorization}
\begin{proof}[Proof of \cref{thm:approxtensor}]
  Combining \cref{prop:lem2.3} and \cref{lem:positivitybound} yields the bound:
  \begin{equation*}
    D(\sigma \|\E^{\str,*}_{UVW}(\sigma))\leq (1+28\varepsilon) D(\sigma \|\E^{\str,*}_{VW}\circ\E^{\str,*}_{UV}(\sigma)) \ .
  \end{equation*}
  We now use the fact that $\ln\E^{\str,*}_{VW}(\sigma) - \ln \E^{\str,*}_{VW}\circ\E^{\str,*}_{UV}(\sigma)$ is an element of the von Neumann subalgebra of $\E^{\str}_{VW}$ (cf.\ \cref{pro:logdiff}) to rewrite the relative entropy for full-rank states $\sigma$ (cf. \cref{lem:condexpkern}):
  \begin{align*}
    D(\sigma \|\E^{\str,*}_{VW}\circ\E^{\str,*}_{UV}(\sigma)) &= \Tr(\sigma (\ln\sigma -\ln \E^{\str,*}_{VW}\circ\E^{\str,*}_{UV}(\sigma)))\\
    &= \Tr(\sigma (\ln\sigma -\ln\E^{\str,*}_{VW}(\sigma)))\\
    &\phantom{=} + \Tr(\sigma(\ln\E^{\str,*}_{VW}(\sigma) - \ln \E^{\str,*}_{VW}\circ\E^{\str,*}_{UV}(\sigma)))\\
    &=D(\sigma \|\E^{\str,*}_{VW}(\sigma)) +  D(\E^{\str,*}_{VW}(\sigma) \|\E^{\str,*}_{VW}\circ\E^{\str,*}_{UV}(\sigma))\\
    &\leq D(\sigma \|\E^{\str,*}_{VW}(\sigma)) + D(\sigma \|\E^{\str,*}_{UV}(\sigma)) . 
  \end{align*}
  The equality from the first to the fourth line is sometimes refered to as the chain rule.
  The last inequality is the data processing inequality for the relative entropy.
\end{proof}
\section{Proof of the MLSI}\label{sec:mainresult}
In this section, we finally spell the proof of the main result and its immediate consequences. 
To lay out the structure of the argument, we start with the proof of \cref{thm:main}, postponing the heart of the argument, the multiscale analysis, to the next subsection.

\subsection{Proof of Theorem~\ref{thm:main} and Corollary~\ref{corMLSIfull} }
  The proof of the main theorem (\cref{thm:main}), which is spelled for the star case, consists of two steps. First, \cref{lem:mlsirecursion} below establishes a recursive bound of the MLSI constant on different length scales. Second, a lower bound for the MLSI constant on some fixed, finite length scale $L^\str_0$ is derived from a spectral gap of the Lindbladian $ \LL_\lsym $.
  
  In this context, we define the (star) MLSI constant of any rectangle $ R \subseteq \lsym$ as
  \begin{equation}\label{eq:CMLSI}
    \alpha^\str(R):= \inf_{\sigma \in \SSS(\HH_{\lsym}), \sigma>0} \frac{EP^\str_R(\sigma)}{2 D(\sigma \| \E^{\str,*}_{R}(\sigma))} \ .
  \end{equation}
  where the numerator is the entropy production 
  \begin{equation*}
   EP_{R}^\str (\sigma)\coloneqq -\Tr\left(\LL^{\str,*}_{R}(\sigma)\left(\ln \sigma - \ln \E^{\str,*}_{R}(\sigma)\right)\right) 
  \end{equation*}
  related to the star-Lindbladian on $ R $ and a full-rank state $ \sigma $. For the convenience of the reader, some basic properties of this quantity are collected in~\cref{sec:appendix:full-rank}. In particular, it is proven there (\cref{lem:condexpkern}), that:
  \begin{itemize}
      \item $ \E^{\str,*}_{R}(\sigma) $ is again of full rank (\cref{lem:condexpkern});
      \item on the right-hand side, we may exchange $ \ln \rho_R^\strplq $ for $ \ln \E^{\str,*}_{R}(\sigma) $ (\cref{pro:logdiff}).
  \end{itemize} 
  Since the projected state $ \E^{\str,*}_{R}(\sigma) $ is again a state on the full~region $\lsym $, taking a strict subset $ R\subset \lsym $ means that the constant in~\eqref{eq:CMLSI} is, in the notion of \cite{bardet_Estimatingdecoherencetime_2017,gao_FisherInformationLogarithmic_2020,gao_Completeentropicinequalities_2022}, bounded from below by the optimal constant in the complete modified logarithmic Sobolev inequality (CMLSI) for $ \LL_R^\str$. \\

Postponing the details of the multiscale method which underlies \cref{lem:mlsirecursion},  the proof of the main result proceeds as follows. We let $\RR_L$ be the set of rectangles of diameter at most $L$. The MLSI constant of a length scale $L$ is given by the minimum over all rectangles $$ \alpha^\str(L) \coloneqq \min_{R\in \RR_L} \alpha^\str(R) , $$ 
(where the same symbol $ \alpha^\str $ is used by a slight abuse of notation).
\begin{proof}[Proof of \cref{thm:main}]
  By \cref{lem:mlsirecursion} below and assuming \eqref{eq:corrdecaywildcard}, there exists a length scale $L^\str_0$, which is larger or equal to the scale in~\cref{def:dscond} (and, by a slight abuse of notation, denoted by the same symbol),  such that for $ J $ such that $R\in \RR_{2^J L_0}$:
  \begin{equation}
      \alpha^\str(R) \geq \alpha^\str(2^J L_0^{\str}) \geq \prod_{j=0}^{J-1} \left(1+ \frac{12 D}{(L_0^\str)^{1/3} 2^{j/3}} \right)^{-1} \alpha^\str(L^\str_0)
  \end{equation}
  where we iterated \eqref{eq:mlsirecursion} below exactly $J$ times.
  Next, we bound the product in the right side using $\ln(1+x)\leq x$ to arrive at:
  \begin{equation}\label{eq:MLSIlb}
    \alpha^\str(R) \geq \exp\left[-\frac{12 D}{(L_0^\str)^{1/3}} \sum_{j=0}^{\infty}2^{-j/3}\right] \alpha^\str(L_0^\str)\geq \exp\left[\frac{-60 D}{(L_0^\str)^{1/3}} \right]\alpha^\str(L^\str_0) \ .
  \end{equation}
  Note that this bound is independent of the system size and $J $.

It remains to establish a uniform lower bound on $\alpha^\str(L^\str_0) $ for the initial length scale. 
To do so, we use the lower bound from~\cite[Cor.~3.3 and Eq.~(19)]{gao_Completeentropicinequalities_2022} for any $ R \in \mathcal{R}_{L^\str_0} $:
\begin{equation}\label{eq:gaorouze}
    \alpha^\str(R) \geq \frac{\gap \LL_{R}^\str}{C_{cb}(\E^{\str}_{R})}  ,
\end{equation}
which involve a spectral gap and the completely bounded Pimsner-Popa index $ C_{cb}(\E^{\str}_{R}) $ associated with the conditional expectation. By~\cite[Eq.~(14) and (20)]{gao_Completeentropicinequalities_2022} the latter is bounded from above 
\begin{equation*}
C_{cb}(\E^{\str}_{R}) \leq (\dim \HH_{R'})^2 (\hat{\rho}^\str_{R})_{\textrm{min}}^{-1} ,    
\end{equation*}
$R'$ being the set of qubits at distance at most $1$ from $R$, and 
since \cref{pro:fullrankfixedpoint} ensures that the $ R $-local Gibbs state is invariant, 
$ \hat{\rho}^\str_{R} =\E^{\str,*}_{R}(\hat{\rho}_{R}^\str) $. 
Its minimal eigenvalue is lower bounded  according to 
$$
(\hat{\rho}^\str_{R})_{\textrm{min}}^{-1} \leq \left\| e^{\beta H_R^\str} \Tr_R e^{-\beta H_R^\str} \right\| \leq 2^{|R|} \exp\left( 2 \beta \| H_{R}^\str \|\right) .
$$ 
Since the Hamiltonian is a sum of bounded local terms, $\| H_{R}^\str \|$ is estimated by a polynomial in the size $ | R' |$, which is uniformly bounded in terms of $ L_0^\str$. 

Next, we recall from \cite{gao_Completeentropicinequalities_2022} the details concerning the spectral gap featuring in~\eqref{eq:gaorouze} and provide a lower bound, which is uniformly positive. 
Since we may exchange \cite[cf.~Lemma 3.2]{gao_Completeentropicinequalities_2022} the invariant state $ \E^{\str,*}_{R}(\sigma) $ by the full Gibbs state $\rho^\strplq_\lsym $, the notion of spectral gap from \cite[Eq.~(33)]{gao_Completeentropicinequalities_2022} agrees with the lowest non-zero eigenvalue associated with the Dirichlet form~\eqref{eq:Dirichlet} averaged over $ s \in [0,1] $:
\begin{equation*}
    \gap  \LL_{R}^\str \coloneqq \inf_{O \in \BB(\HH_R)}  \int_0^1 \frac{- \langle O, \LL^\str_R(O) \rangle_{\rho_\lsym^\strplq,s}}{\| O -\E^{\str}_{R}(\sigma)(O) \|_{\textrm{BKM}} } ds . 
\end{equation*}
The average corresponds the Bogoliubov-Kubo-Mori scalar product whose norm is abbreviated by $ \| \cdot \|_{\textrm{BKM}} $. 
By self-adjointness of $ \LL^\str_R $ (\cref{pro:fullrankfixedpoint}), the above is indeed the second largest eigenvalue of the linear operator $ \LL^\str_R $ on the finite-dimensional Hilbert space $ \BB(\HH_R)$.

By the assumption on the jump rates (\cref{def:jumprates}) and using the representation~\eqref{eq:Dirichlet}, the Dirichlet form $ - \langle O, \LL^\str_R(O) \rangle_{\rho^\strplq_\lsym,s} $
is lower bounded in terms of a Dirichlet form, in which the jump rates are set to $\widetilde h^\str(\omega) \coloneqq g^\str e^{\beta\omega/2} $. Since these jump rates still satisfy detailed balance, they correspond to a Davies Lindbladian $ \widetilde{\LL}_{R}^\str $, whose jump rates are independent of $v$, and for which, by the above reasoning:
\begin{equation}
    \gap  \LL_{R}^\str \geq \gap  \widetilde \LL_{R}^\str .
\end{equation}
The right side is bounded uniformly in all $ R \in \mathcal{R}_{L^\str_0} $, since by translation invariance of $ \widetilde\LL_{R}^\str$, one only needs to consider finitely many rectangles of the given length scale, and on which the second largest eigenvalue of $  \widetilde\LL_{R}^\str $ does not vanish. This completes the proof of $\alpha^\str(L^\str_0) > 0 $.
\end{proof}

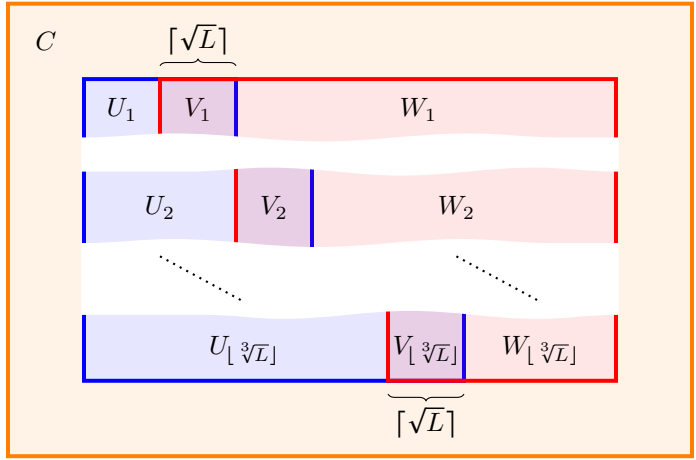
\begin{figure}
  \centering
  \begin{tikzpicture}[scale=1]

    \draw[orange, fill=orange, fill opacity=0.1, line width=1.5pt] (-1, -1) rectangle (8, 5);
    \fill[white] (0, 0) rectangle (7, 4);
    \node at (-0.5, 4.5) {$C$};

    \draw[blue, fill=blue, fill opacity=0.1, line width=1.5pt] (0, 3.1) rectangle (2, 4);
    \node at (0.5, 3.6) {$U_1$};
    \node at (1.5, 3.6) {$V_1$};
    \draw[decoration={brace,mirror,raise=5pt},decorate] (2,4) -- node[above=6pt] {$\lceil \sqrt{L}\rceil$} (1,4);
    \draw[red, fill=red, fill opacity=0.1, line width=1.5pt] (1, 3.1) rectangle (7, 4);
    \node at (4.4, 3.6) {$W_1$};

    \draw[blue, fill=blue, fill opacity=0.1, line width=1.5pt] (0, 1.7) rectangle (3, 2.9);
    \draw[red, fill=red, fill opacity=0.1, line width=1.5pt] (2, 1.7) rectangle (7, 2.9);

    \node at (1, 2.3) {$U_2$};
    \node at (2.5, 2.3) {$V_2$};
    \node at (4.9, 2.3) {$W_2$};

    \draw[blue, fill=blue, fill opacity=0.1, line width=1.5pt] (0, 0) rectangle (5, 1);
    \draw[red, fill=red, fill opacity=0.1, line width=1.5pt] (4, 0) rectangle (7, 1);

    \node at (2.1, 0.4) {$U_{\lfloor \sqrt[3]{L} \rfloor}$};
    \node at (4.5, 0.4) {$V_{\lfloor \sqrt[3]{L} \rfloor}$};
    \draw[decoration={brace,mirror,raise=5pt},decorate] (4,0) -- node[below=6pt] {$\lceil \sqrt{L}\rceil$} (5,0);
    \node at (6, 0.4) {$W_{\lfloor \sqrt[3]{L} \rfloor}$};

    \fill[white, draw=white, line width=1.5pt] (-0.01,3.2) decorate[decoration={snake,amplitude=1.5,segment length=110, post length=0}]{ --  (7.01,3.2) }  -- (7.01,2.8) decorate[decoration={snake,amplitude=1.5,segment length=99, mirror}]{ -- (-0.01,2.8)} --cycle;

    \fill[white, draw=white, line width=1.5pt] (-0.01,1.8) decorate[decoration={snake,amplitude=1.5,segment length=108, post length=0}]{ --  (7.01,1.8) }  -- (7.01,0.9) decorate[decoration={snake,amplitude=1.5,segment length=95}]{ -- (-0.01,0.9)} --cycle;

    \draw[thick, dotted] (1, 1.65)-- (2.1, 1.05);
    \draw[thick, dotted] (4.9, 1.65)-- (6, 1.05);
\end{tikzpicture}
  \caption{Geometry for the multiscale analysis. A sequence of partitions $U_iV_iW_i$ such that $V_i$ and $V_{i+1}$ do not overlap.}
  \label{fig:martingale}
\end{figure}

Having laid out the argument for the MLSI of the star or plaquette part of the Lindbladian, the argument for the full Lindbladian is as follows.

\begin{proof}[Proof of \cref{corMLSIfull}]
 By assumption, both $\LL_\lsym^\str$ and $\LL^\plq_\lsym$ satisfy an MLSI. Since they commute by \cref{lem:lindcom}, we have $\E^{\strplq,*}_R = \E^{\str,*}_R\circ \E^{\plq,*}_R$ together with \cref{pro:logdiff} for any full rank state $\sigma$:
    \begin{align*}
        D(\sigma \| \E^{\strplq,*}_R(\sigma) )&\leq D(\sigma \| \E^{\str,*}_R(\sigma))+D(\sigma \| \E^{\plq,*}_R(\sigma))\\
        &\leq \frac{EP_R^\str(\sigma)}{2\alpha^\str(R)} + \frac{EP_R^\plq(\sigma)}{2\alpha^\plq(R)}\\
        &\leq \frac{1}{2\min\{\alpha^\str(R), \alpha^\plq(R)\}}EP_R^\strplq(\sigma) . 
    \end{align*}
    In the last step, we used \cref{pro:entropyprodprop}, which guaranteed that the entropy production of $\LL_R^\str$ and $\LL^\plq_R $ is additive.
\end{proof}

\subsection{Multiscale analysis}
This subsection is dedicated to the derivation of the recursive bound on the MLSI constant used in the proof of the main theorem. The multiscale argument employed here goes back to the classical works \cite{stroock_equivalencelogarithmicSobolev_1992, stroock_logarithmicSobolevinequality_1992, martinelli_ApproachEquilibriumGlauber_1994, martinelli_LecturesGlauberDynamics_1999,cesi_Quasifactorizationentropylogarithmic_2001}. A related argument for quantum systems has been discussed in \cite{kastoryano_QuantumGibbsSamplers_2016}. One main ingredient is the approximate tensorization (\cref{thm:approxtensor}), which yields good control over the relative entropy between a state and its infinite time limit on different scales. 

The proof of the following result follows approximately \cite{martinelli_LecturesGlauberDynamics_1999, kastoryano_QuantumGibbsSamplers_2016}.
\begin{proposition}\label{lem:mlsirecursion}
  Let $\{\LL^\strplq_{\lsym}\}_{\lsym \in \famsym}$ be as in \cref{thm:main}, pick $\wildcard \in \{\str,\plq\}$ and assume that it satisfies $\mathrm{DS}^\wildcard(K^\wildcard, \xi^\wildcard)$ with constants $K^\wildcard <\infty$, $ \xi^\wildcard>0$.  
  Then, there exists $L^\wildcard_0>0$ independent of the system size such that for all $L\geq L^\wildcard_0$, $\alpha^\wildcard(L)$ satisfies the following recursion relation:
  \begin{equation}\label{eq:mlsirecursion}
    \alpha^\wildcard(2L) \geq \left(1+ \frac{12 D}{L^{1/3}} \right)^{-1}\alpha^\wildcard(L)
  \end{equation}
  where $D$ is the dimension of the lattice.
\end{proposition}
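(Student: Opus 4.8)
The plan is to reduce an MLSI on a rectangle of diameter $2L$ to MLSIs on rectangles of diameter $\le L$ by a geometric decomposition along the longest coordinate direction, combined with approximate tensorization. Fix $\wildcard\in\{\str,\plq\}$ and drop the superscript. Let $R$ be a rectangle of diameter $\le 2L$; we want to bound $EP_R(\sigma)$ from below by $2\alpha(L)D(\sigma\|\E_R^*(\sigma))$ up to the stated factor. Write $R$ as an increasing chain of overlapping sub-rectangles $U_iV_iW_i=R$, $i=1,\dots,k$ with $k=\lfloor L^{1/3}\rfloor$, as in Figure 9: each $U_iV_i$ and each $V_iW_i$ has diameter $\le L$ (so $\alpha(U_iV_i),\alpha(V_iW_i)\ge\alpha(L)$), the overlaps $V_i$ have width $\lceil\sqrt{L}\,\rceil$, and — crucially — consecutive overlaps $V_i$ and $V_{i+1}$ are disjoint. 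The first step is to verify that these geometric requirements can simultaneously be met for $L\ge L_0$ with $L_0$ depending only on the DS-constants and $D$: the total length to be covered is $\le 2L+O(1)$ in the long direction (and $\le 2L$ in the others), each step advances by $\gtrsim L/k \sim L^{2/3}$, so $k\sim L^{1/3}$ steps suffice, while $\dist(U_i,W_i)\ge \lceil\sqrt L\rceil - O(1)>4$ and $\diam(W_i)\le 2L \le 2\dist(U_i,W_i)^2$ hold once $L_0$ is large, so Theorem 4.4 (approximate tensorization) and Lemma 5.? apply to each triple.

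Next, apply approximate tensorization iteratively. By the DS-condition and Proposition 2.? (the bound $\varepsilon_i:=\|\Tr_{V_iW_i}\hat\rho_{U_iV_iW_i}/\Tr_{V_i}\hat\rho_{U_iV_i}-1\|\le K e^{-\xi\dist(U_i,W_i)}\le K e^{-\xi(\sqrt L - c)}$), for $L_0$ large we have $28\sum_i\varepsilon_i \le 12D/L^{1/3}$ (here the $k\sim L^{1/3}$ factor from the number of steps is the dominant contribution, the exponential decay $e^{-\xi\sqrt L}$ being overwhelmingly small; this is exactly the slack the bound $\diam(W)\le 2\dist(U,W)^2$ was designed to give). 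Telescoping Theorem 4.4 along the chain, using at step $i$ that $U_iV_i W_i$ is the union $U_i V_i \uplus W_i$ with $U_iV_i$ itself the union of the previous stage — more precisely, using $\E_R^* = \E_{U_kV_k}^*\circ\E_{V_kW_k}^*$-type relations and Proposition 3.? ($\E_R\circ\E_{R'}=\E_R$ for $R'\subseteq R$) to keep the nested conditional expectations consistent — one obtains
\begin{equation*}
  D(\sigma\|\E_R^*(\sigma)) \le \Bigl(1+\tfrac{12D}{L^{1/3}}\Bigr)\sum_{i=1}^{k}\Bigl(D(\sigma\|\E_{U_iV_i}^*(\sigma))+D(\sigma\|\E_{V_iW_i}^*(\sigma))\Bigr),
\end{equation*}
where the sum is actually a sum over $2k$ rectangles each of diameter $\le L$, but arranged so that the overlaps alternate and each qubit lies in only boundedly many of them. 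Here the precise combinatorics of which rectangles appear, and the fact that the "double counting" multiplicity is bounded by a constant depending only on $D$ (absorbed into the $12D$), is where Figure 9's disjointness of consecutive $V_i$ is used.

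Finally, convert back to entropy production. For each rectangle $R'$ of diameter $\le L$ the MLSI at scale $L$ gives $2\alpha(L)D(\sigma\|\E_{R'}^*(\sigma))\le EP_{R'}(\sigma)$, and by Proposition 3.? / the monotonicity of entropy production in the region (Proposition 2.?), $\sum_{R'} EP_{R'}(\sigma)\le C_D\, EP_R(\sigma)$ — again with $C_D$ a constant depending only on $D$ and the bounded overlap multiplicity, absorbed into the constant $12D$. Combining, $2\alpha(L)(1+12D/L^{1/3})^{-1}D(\sigma\|\E_R^*(\sigma))\le EP_R(\sigma)$, i.e. $\alpha(R)\ge(1+12D/L^{1/3})^{-1}\alpha(L)$; taking the infimum over rectangles of diameter $\le 2L$ gives \eqref{eq:mlsirecursion}. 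The main obstacle I expect is bookkeeping the constants: making sure the number of summands ($O(L^{1/3})$), the approximate-tensorization errors ($O(L^{1/3}e^{-\xi\sqrt L})$, hence negligible), and the overlap-multiplicity factors in both the relative-entropy sum and the entropy-production sum all fit together into the single clean factor $1+12D/L^{1/3}$ for all $L\ge L_0$ — in particular choosing $L_0$ (as a function of $K,\xi,D$) large enough that every inequality above holds with room to spare, and checking the geometric packing claim that $2k$ overlapping rectangles of diameter $\le L$ with $\sqrt L$-wide alternating overlaps really do tile a $2L$-rectangle with $D$-bounded multiplicity.
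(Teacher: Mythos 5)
Your overall strategy (multiscale reduction via approximate tensorization, using a family of partitions of $R$ with disjoint consecutive overlaps as in \cref{fig:martingale}) is the right one, but the two quantitative steps that actually produce the factor $\left(1+12D/L^{1/3}\right)^{-1}$ have genuine gaps. The critical one is your conversion of relative entropies into entropy production. In your family, every triple satisfies $U_iV_iW_i=R$, so the rectangles $U_iV_i$ are nested and every qubit of $U_1$ lies in \emph{all} $k\sim L^{1/3}$ of them: the multiplicity with which a qubit is covered by your $2k$ rectangles is of order $k$, not a constant $C_D$, so the claimed bound $\sum_{R'} EP_{R'}(\sigma)\le C_D\,EP_R(\sigma)$ fails. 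If you instead retreat to a single partition (or boundedly many covering rectangles), the overlap $V$ is counted twice and the best you get from \cref{pro:entropyprodprop} is $EP_{UV}+EP_{VW}\le 2\,EP_R$, giving $\alpha(2L)\gtrsim \alpha(L)/2$, which does not iterate to a size-independent constant. The point of the disjointness of $V_i$ and $V_{i+1}$ is not bounded multiplicity; it is that $U_iV_i$ and $V_{i+1}W_{i+1}$ are \emph{disjoint}, so one can apply approximate tensorization separately for each $i$, convert, and then \emph{average} over $i$, pairing $EP_{U_iV_i}$ with $EP_{V_{i+1}W_{i+1}}$ (each pair bounded by $EP_R$ via additivity), with only the two unpaired boundary terms $EP_{V_1W_1}$ and $EP_{U_IV_I}$ each bounded by $EP_R$. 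This averaging is what yields the multiplicity $1+O(L^{-1/3})$, i.e.\ the $L^{-1/3}$ in the recursion; your attribution of this term to $28\sum_i\varepsilon_i$ is inconsistent with your own (correct) observation that the tensorization errors are $O(e^{-\xi\sqrt L})$ and hence negligible.

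The second gap is the assertion that each $U_iV_i$ and $V_iW_i$ has diameter $\le L$. A rectangle $R\in\RR_{2L}$ may have sides of length up to $2L$ in \emph{every} coordinate direction; cutting along the single long direction $l_1$ leaves pieces whose extent in the other directions is still up to $2L$ (and even along $l_1$ the longest piece is only $\le \tfrac32 L$, not $\le L$). The paper's proof therefore only relates $\alpha(2L)$ to $\alpha(\tfrac32 L)$, repeats the cutting in up to $D$ coordinate directions, and iterates the resulting inequality three times (using $(\tfrac32)^3>2$); the exponent $6D$ accumulated this way, together with $\ln(1+x)\le x$ and the choice of $L_0$, is exactly where the constant $12D$ comes from. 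Absorbing this into an unspecified ``$D$-bounded multiplicity'' skips the mechanism. You would also need the explicit construction of the partitions (lengths of $U_i,V_i,W_i$, the periodic case where $V$ has two components, and the resulting lower bounds on $L_0$), but that is bookkeeping once the averaging and the $D$-directional iteration are in place.
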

\begin{proof}
  We fix any rectangle $R\in \RR_{2L}$. If no side of $R$ is longer than $L$, $R\in\RR_{L}$, and thus $\alpha^\str(R) \geq \alpha^\str(L)$ by definition. For the rest of the proof, we hence assume that one side of $R$, which we will call $l_1$,  is longer than $L$, that is $L<l_1\leq 2L$. To simplify notation, we abbreviate in this proof for any $R$:
  $$D_R \coloneqq D(\sigma \| \E^{\str,*}_{R}(\sigma)). $$

 \noindent  We partition $R = U\uplus V\uplus W$ into overlapping rectangles $UV$ and $VW$ along the direction of $l_1$. If the boundary is periodic in the direction of $l_1$, $R$ might have a winding number in this direction. In this case, $V$ contains two components $V=V_1\uplus V_2$, see \cref{fig:approxtensor}.  Assume furthermore that the length of the overlap $V$, that is, the distance $\dist(U,W)$, is at least $\sqrt{L}$ and larger than $4$.
  Then the diameter of $W$ is shorter than $2L\leq 2 \dist(U,W)^2 $ and the DS bound \eqref{eq:corrdecaywildcard} holds. 
  We now take $L_0^{\str}$ large enough such that the DS bound \eqref{eq:corrdecaywildcard} satisfies for all $L\geq L_0^{\str}$:
  \begin{equation*}
      K^\str e^{-\xi^\str \sqrt{L}}\leq \frac{1}{28} .
  \end{equation*}
  By approximate tensorization, \cref{thm:approxtensor} above, we may then bound the relative entropy with $K'=28K^\str$:
  \begin{align*}
    D_R &\leq (1 + K' e^{-\xi^{\str} \sqrt{L}}) (D_{UV} + D_{VW})\\
    &\leq (1 + K' e^{-\xi^{\str} \sqrt{L}}) \left(\frac{EP^\str_{UV}}{2 \alpha^\str(UV)} + \frac{EP^\str_{VW}}{2 \alpha^\str(UV)} \right)\\
    &\leq (1 + K' e^{-\xi^{\str} \sqrt{L}}) \frac{1}{2 \min\{\alpha^\str(UV), \alpha^\str(VW)\}}\left(EP^\str_{UV} + EP^\str_{VW} \right) .
  \end{align*}
  Here we dropped the argument $\sigma$ not only in the relative entropy, but also in the entropy production.

  Our next aim is to use the subadditivity of the entropy production (\cref{pro:entropyprodprop}). However, due to the overlap $V$, this is not possible for $EP^\str_{UV} + EP^\str_{VW}$. One way to get around this  is to average over many partitions as was done in \cite{martinelli_LecturesGlauberDynamics_1999}.
  
  Consider a family $(U_i, V_i, W_i)_{i=1}^I$ of partitions with $I=\lfloor\sqrt[3]{L}\rfloor$ and the property that for all $i=1,\ldots, I-1$ the regions $U_iV_i$ and $V_{i+1}W_{i+1}$ do not overlap (cf. \cref{fig:martingale}). By \cref{pro:entropyprodprop} the sum of the entropy productions of disjoint regions can be bounded by the entropy production of their union:
  \begin{equation*}
      EP^\str_{U_i V_i} + EP^\str_{V_{i+1} W_{i+1}} \leq EP^\str_{R} \ .
  \end{equation*}
  Averaging over all $i$ and pairwise combining entropy productions of $i$ and $i+1$ except $EP^\str_{V_1W_1}$ and $EP^\str_{U_IV_I}$, which we both bound individually by $EP^\str_{R}$,  we find:
  \begin{align}\label{eq:afterav}
    D_R \leq (1 + K' e^{-\xi^{\str} \sqrt{L}}) \frac{1}{2 \min\{\alpha^\str(U_iV_i), \alpha^\str(V_iW_i)\}_{i=1}^{I}} \left(1+L^{-1/3}\right) EP^\str_R \ .
  \end{align}

  Next, we let $R' \in\{U_iV_i, V_iW_i\}_{i=1}^I$ be the rectangle minimizing $\alpha^\str$.
  We will argue below that for $L$ large enough, the width of $R'$ in the coordinate direction of $l_1$ can be taken at most $3/2 L$.
  Then, we distinguish two cases: 
  \begin{enumerate}
    \item the size of $R'$ in all other directions is less than $\frac{3}{2}L$ and hence $R'\in \RR_{\frac{3}{2}L}$,
    \item $R'$ is longer than $\frac{3}{2}L$ in at least one direction.
  \end{enumerate}
  In the second case, we repeat the above procedure a maximum of $D-1$ times, since every coordinate direction in which we cut will then be shorter than $\frac{3}{2}L$.
  We then conclude from \eqref{eq:afterav}:
  \begin{equation}\label{eq:recursionthird}
    \alpha^\str(3/2L) \leq (1 + K' e^{-\xi^{\str} \sqrt{L}})^D \left(1+L^{-1/3}\right)^D \alpha^\str(2L).
  \end{equation}
  Setting $L_0^{\str}$ such that $K' e^{-\xi^{\str} \sqrt{L}} \leq L^{-1/3}\leq \frac{\ln 2}{6D}$ for all $L\geq 3/2 L_0^{\str}$ and iterating \eqref{eq:recursionthird} three times, we then arrive at
  \begin{align*}
    \alpha^\str(L) &\leq (1 + K' e^{-\xi^{\str} \sqrt{\frac{2}{3} L}})^{3D}\left(1+\left(\frac{2}{3} L\right)^{-1/3}\right)^{3D} \alpha^\str(64/27 L)\\
    &\leq \left(1+ L^{-1/3}\right)^{6D}\alpha^\str(2 L) \leq \left(1+12 D L^{-1/3}\right)\alpha^\str(2 L).
  \end{align*}

  It remains to be shown that such a family of partitions exists. There are two cases. If $R$ does not wrap around the torus along $l_1$, $V$ only has one connected component, and we choose the following partition:
  Let the length of $V_i$ be $\lceil \sqrt{L}\rceil$,
  let the length of $U_i$ be $\lfloor l_{0} \rfloor + i \lceil\sqrt{L}\rceil$ and let the length of $W_i$ be $\lceil l_0  \rceil + (I-i+1) \lceil\sqrt{L}\rceil$, where
  \begin{equation*}
    l_0 \coloneqq \frac{l_1- (I+2)\lceil\sqrt{L}\rceil}{2}\ .
  \end{equation*}
  This defines a valid partition if $\lceil \sqrt L \rceil >4$ and if the lengths of $U_1$ and $W_I$ are at least $1$. The latter condition is implied (using $l_1\geq L$) by \eqref{eq:afterav}: $ L - I \lceil\sqrt{L}\rceil\geq 4$. Both conditions are satisfied for all $L\geq L_0^{\str}$ if we pick $L_0^{\str}\geq16$.

  In case $R$ wraps around the torus along $l_1$, then $V$ contains two connected components, $V= V^1 \sqcup V^2$ (see \cref{fig:approxtensor}). The length of $V^1$ and $V^2$ are taken both as $\lceil\sqrt{L}\rceil$ which also agrees with $\dist(U,W)$. Similarly to the first case, we define a partition with a growing $U_i$ and shrinking $W_i$. However, in this case, $U_i$ grows on both ends and $W_i$ shrinks accordingly. That is, the length of $U_i$ is $\lfloor l_{0}' \rfloor + 2i \lceil\sqrt{L}\rceil$ and the length of $W_i$ is $\lceil l_0'  \rceil + 2(I-i+1) \lceil\sqrt{L}\rceil$ with
  \begin{equation*}
    l_0' \coloneqq \frac{l_1- 2(I+2)\lceil\sqrt{L}\rceil}{2}\ .
  \end{equation*}
  Requiring that the lengths of $U_i$ and $W_i$ are at least $1$ leads to the following condition: $ L - 2 I \lceil\sqrt{L}\rceil\geq 4$. This holds for $L_0^{\str}\geq 125$.

  Finally, the longest component in both cases is $W_1$. Its length is at most $L+\frac{\sqrt[3]{L}(\sqrt{L}+1)}{2}+1$ in the first case or $L+\sqrt[3]{L}(\sqrt{L}+1)+1$ in the second case. This length is shorter than $3/2 L$ if $L_0^{\str}\geq 1000$.
\end{proof}

\begin{remark}
    The proof works identically if we replace the entropy production by another subadditive, positive quantity. One particular choice is the approximate tensorization constant of a region $R$, defined as
    \begin{equation*}
        AT^\wildcard(R) \coloneqq \inf_{\sigma \in \SSS(\HH_{\lsym}), \sigma>0} \frac{ \sum_{v\in R}D(\sigma \| \E^{\wildcard,*}_{v}(\sigma))}{ D(\sigma \| \E^{\wildcard,*}_{R}(\sigma))} \ .
    \end{equation*}
    A recursive relation of this constant gives an alternative proof for the MLSI. Another choice is the entropy production of a Lindbladian $\hat{\LL}$ which has the same conditional expectation as the Davies-Lindbladian $\LL^\wildcard$. 
\end{remark}

\subsection{Proof of \cref{lem:nonselfcorr}}
We finally spell the proof of the erasure of quantum information in case the star Lindbladian is rapidly mixing. 
\begin{proof}[Proof of \cref{lem:nonselfcorr}]
    By an approximation argument, it suffices to establish the bound for any full-rank state  $\sigma \in \SSS(\HH_{\lsym})$.
    Since $\LL^\str_{\lsym}$ satisfies an MLSI:
    \begin{equation*}
        \|e^{t \LL^{\str, *}_{\lsym}}(\sigma) - \E^{\str,*}_{\lsym}(\sigma) \|_1 \leq e^{-\alpha^{\str} t} \sqrt{ 2 D\left(\sigma\| \E^{\str,*}_{\lsym}(\sigma)\right)} \ .
    \end{equation*}
    Using the fact that $\ln \E^{\str,*}_{\lsym}(\sigma) - \ln \rho^\strplq_{\lsym}$ is an element of the subalgebra associated with $ \E^{\str}_{\lsym}$ (\cref{pro:logdiff}), we rewrite the relative entropy as a difference and drop the non-positive term:
    \begin{align*}
        D\left(\sigma\| \E^{\str,*}_{\lsym}(\sigma)\right) &= D\left(\sigma\| \rho^\strplq_{\lsym}\right) - D\left(\E^{\str,*}_{\lsym}(\sigma)\| \rho^\strplq_{\lsym}\right) \leq D\left(\sigma\| \rho^\strplq_{\lsym} \right) \ . 
    \end{align*}
    By the standard bound 
    $$D\left( \sigma \| \rho^{\strplq}_{\lsym} \right) \leq \ln((\rho^{\strplq}_{\lsym})^{-1}_{\min}) = \mathrm{poly}(|\lsym|) , 
    $$
    we thus conclude that the star part mixes rapidly to the kernel of $\LL^{\str}_{\lsym}$. By construction, the logical operators $X_L$ and $Y_L=iX_LZ_L$ commute with all stars and plaquettes, and are annihilated by the star dynamics:
    \begin{align*}
        \E^\str_{\lsym}(X_L) &= \PP^\str_{\lsym} \circ \TT^\str_{\lsym}(\hat{\rho}^{\str}_{\lsym}X_L)
        =\PP^\str_{\lsym}\left(\Tr(\hat{\rho}^{\str}_{\lsym}X_L)\right) =0 ,\\
        \E^\str_{\lsym}(Y_L) &= i \PP^\str_{\lsym} \left( \TT^\str_{\lsym}(\hat{\rho}^{\str}_{\lsym}X_L)Z_L\right)
        =i \PP^\str_{\lsym} \left( \Tr(\hat{\rho}^{\str}_{\lsym}X_L)Z_L\right) =0 .
    \end{align*}
    This follows from the fact that $\TT^\str_{\lsym}$ is the $Z$-pinching on the full lattice and $\hat{\rho}^{\str}_{\lsym}X_L$ is diagonal in the $X$-basis. The above trace is zero by \cref{cor:logicalexpectation}.

    The above facts then allow us to conclude that for both $O_L\in \{X_L, Y_L\} $:
    \begin{align*}
        \left|\Tr( e^{t\LL^{\strplq, *}_{\lsym}}(\sigma)O_L )\right| &=  \left|\Tr( \left(e^{t\LL^{\strplq,*}_{\lsym}}(\sigma)- \E^{\str,*}_{\lsym}\circ e^{t\LL^{\plq,*}_{\lsym}}(\sigma)\right)O_L)\right|\\
        &= \left|\Tr( e^{t\LL^{\plq,*}_{\lsym}}\left(e^{t\LL^{\str,*}_{\lsym}}(\sigma)- \E^{\str,*}_{\lsym}(\sigma)\right)O_L)\right|\\
        &\leq \left\| e^{t\LL^{\plq,*}_{\lsym}}\left(e^{t\LL^{\str,*}_{\lsym}}(\sigma)- \E^{\str,*}_{\lsym}(\sigma)\right) \right\|_1\\
        &\leq \left\| e^{t\LL^{\str,*}_{\lsym}}(\sigma)- \E^{\str,*}_{\lsym}(\sigma) \right\|_1 \leq \mathrm{poly}(|\lsym|)e^{-\alpha^{\str} t}.
    \end{align*}
    Here, the second line follows from the fact that the star and plaquette parts of the Lindbladian commute (\cref{lem:lindcom}). The second inequality stems from the contractivity of any Lindbladian evolution.
\end{proof}

\appendix

\section{Conditional expectations and entropy production}\label{sec:appendix:full-rank}
For the convenience of the reader, this section compiles some properties of conditional expectations and the entropy production, which are well-known but partially hard to locate. For simplicity, we will first focus on full-rank input states only. In \cref{pro:nonfullrank} we will then show that we can extend a decay of the relative entropy of full-rank states to all states, justifying this simplification. 

First, let us show that taking the conditional expectation can not decrease the rank of a state:
\begin{lemma}\label{lem:condexpkern}
  Let $\LL$ be a Lindbladian (in Heisenberg picture) on a finite-dimensional Hilbert space $\HH$ with a full rank fixed point $\rho\in \SSS(\HH)$, $\rho>0$, $\LL^*(\rho)=0$, and with a well-defined conditional expectation. Then, for any state $\sigma \in \SSS(\HH)$:
  \begin{equation}
    \ker\E_{\LL}^*(\sigma) \subseteq \ker\sigma
  \end{equation}
  where $\displaystyle \E_{\LL}^* = \lim_{t\to \infty} e^{t\LL^*}$ is the dual of the conditional expectation of $\LL$.
\end{lemma}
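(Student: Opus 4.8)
The plan is to exploit the fact that the conditional expectation $\E_{\LL}$ (in the Heisenberg picture) is a completely positive, unital, trace-nonincreasing-dual map whose Schrödinger-picture dual $\E_{\LL}^*$ is obtained as the long-time limit of the semigroup $e^{t\LL^*}$, and to show that the support projection of $\E_{\LL}^*(\sigma)$ cannot miss any eigenvector on which $\sigma$ is positive. Concretely, write $P$ for the support projection of $\sigma$ and $Q$ for the support projection of $\E_{\LL}^*(\sigma)$; the claim $\ker\E_{\LL}^*(\sigma)\subseteq\ker\sigma$ is exactly $Q\geq P$, equivalently $\Tr(\sigma(\identity - Q))=0$.

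First I would observe that, since $\E_{\LL}^*$ is the dual of the unital completely positive map $\E_{\LL}$ and is itself a quantum channel (trace-preserving, as $\E_{\LL}$ is unital), we have for every observable $0\leq O\leq \identity$ in the image algebra of $\E_{\LL}$ that $\Tr(\E_{\LL}^*(\sigma)\,O)=\Tr(\sigma\,\E_{\LL}(O))=\Tr(\sigma\,O)$ provided $O$ lies in $\ker\LL$ (is fixed by $\E_{\LL}$). So the key is to locate $\identity - Q$, the projection onto $\ker\E_{\LL}^*(\sigma)$, inside the fixed-point algebra $\NN = \ker\LL$. This is where the full-rank fixed point $\rho$ enters: by a standard structure theorem (e.g.\ \cite{wolf_QuantumChannelsOperations_2012}), for a Lindbladian with a full-rank stationary state the decoherence-free algebra $\NN=\ker\LL$ is a von Neumann subalgebra and $\E_{\LL}$ is a genuine conditional expectation onto it; moreover the Schrödinger-picture limit $\E_{\LL}^*$ maps into the commutant-type structure so that the support projection of $\E_{\LL}^*(\sigma)$ is an element of $\NN$. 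Hence $Q\in\NN$ and therefore $\identity - Q\in\NN$ is a fixed point of $\E_{\LL}$.

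Given $\identity - Q\in\NN$, I would then compute
\begin{equation*}
  \Tr\bigl(\sigma\,(\identity-Q)\bigr) = \Tr\bigl(\sigma\,\E_{\LL}(\identity-Q)\bigr) = \Tr\bigl(\E_{\LL}^*(\sigma)\,(\identity-Q)\bigr) = 0 ,
\end{equation*}
the last equality because $\identity - Q$ is the projection onto the kernel of the positive semidefinite operator $\E_{\LL}^*(\sigma)$. Since $\sigma\geq 0$ and $\identity - Q\geq 0$, vanishing of the trace of their product forces $\sigma^{1/2}(\identity-Q)\sigma^{1/2}=0$, i.e.\ $(\identity-Q)\sigma(\identity-Q)=0$, hence $(\identity-Q)P=0$, which is precisely $\ker\E_{\LL}^*(\sigma)=\Img(\identity-Q)\subseteq\ker\sigma$.

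The main obstacle is the justification that the support projection $Q$ of $\E_{\LL}^*(\sigma)$ really lies in the fixed-point algebra $\NN=\ker\LL$; everything else is a short manipulation. I expect to handle this by using the explicit description of the conditional expectation as $\E_{\LL}^*(\sigma)=\sum_k \rho^{1/2}_k\,(\Tr_{\HH_k^B}[\,(\Pi_k\otimes\identity)\sigma(\Pi_k\otimes\identity)\,]\otimes\tau_k)\,\rho^{-1/2}_k$-type Koashi–Imoto/Takesaki decomposition associated with the full-rank fixed point $\rho$ — or, more simply in the present setting, by invoking the earlier-established self-adjointness of $\E_{\LL}$ with respect to a $\rho$-weighted inner product (\cref{pro:condexpprop}, \cref{pro:fullrankfixedpoint}), which implies $\E_{\LL}^*(\sigma)=\rho^{1/2}\E_{\LL}(\rho^{-1/2}\sigma\rho^{-1/2})\rho^{1/2}$ when $\rho$ commutes with the relevant operators, and then noting that $\E_{\LL}$ maps into $\NN$ so that its image conjugated by $\rho^{\pm 1/2}$ has support projection in $\NN$ (using that $\rho$ is invertible, so conjugation by $\rho^{1/2}$ preserves supports up to the algebra structure). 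I would phrase the final write-up abstractly enough to cover both the star and plaquette Davies generators and the full one, since all three have the full-rank Gibbs state as stationary state by \cref{pro:fullrankfixedpoint}.
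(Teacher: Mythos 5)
Your proposal is correct and rests on the same structural input as the paper's proof, namely the decomposition $\HH=\bigoplus_i\HH_i\otimes\KK_i$ with $\E_{\LL}^*(\sigma)=\sum_i\Tr_{\KK_i}(P_i\sigma P_i)\otimes\tau_i$ that comes with a full-rank stationary state, but your finishing step is genuinely different. The paper computes $\ker\E_{\LL}^*(\sigma)$ blockwise and shows at the vector level that any $\ket{\psi}\otimes\ket{\phi}$ in it is annihilated by $\sigma$; you instead reduce everything to the claim that the support projection $Q$ of $\E_{\LL}^*(\sigma)$ lies in the fixed-point algebra $\NN=\ker\LL$, and then close with the duality-plus-positivity argument $\Tr(\sigma(\identity-Q))=\Tr(\sigma\,\E_{\LL}(\identity-Q))=\Tr(\E_{\LL}^*(\sigma)(\identity-Q))=0$, which forces $(\identity-Q)\sigma(\identity-Q)=0$. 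That is a clean and slightly shorter route, but you should make the one nontrivial point explicit: in the decomposition $\NN=\bigoplus_i\BB(\HH_i)\otimes\identity_{\KK_i}$, the support of $\E_{\LL}^*(\sigma)$ equals $\sum_i q_i\otimes\identity_{\KK_i}\in\NN$ \emph{precisely because each $\tau_i$ is full rank on $\KK_i$} (a consequence of $\rho>0$); this is the same full-rank fact the paper uses blockwise, and without it $Q$ need not lie in $\NN$. Also, your fallback justification via $\E_{\LL}^*(\sigma)=\rho^{1/2}\E_{\LL}(\rho^{-1/2}\sigma\rho^{-1/2})\rho^{1/2}$ presupposes KMS self-adjointness (detailed balance), which is not among the hypotheses of this lemma — it holds for the Davies generators of the paper, but in the stated generality only the structure-theorem route is available. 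In short: your argument buys brevity and an algebraic, basis-free conclusion; the paper's buys self-containedness and makes the role of the full-rank $\tau_i$ transparent.
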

\begin{proof}

  As $\LL^*$ has a full rank fixed point, there exists a decomposition of the Hilbert space
  \begin{equation*}
    \HH = \bigoplus_{i\in I} \HH_i \otimes \KK_i
  \end{equation*}
  such that the conditional expectation has the following decomposition \cite{deschamps_Structureuniformlycontinuous_2016}:
  \begin{equation*}
    \E^*_{\LL} (\sigma) = \sum_{i\in I} \Tr_{\KK_i}(P_i \sigma P_i)\otimes \tau_i
  \end{equation*}
  where $P_i$ is the projection onto the $i$-th block $\HH_i\otimes \KK_i$ and where $\tau_i$ are full-rank states on $\KK_i$.
  First, since $\E_{\LL}^*(\sigma)$ is block-diagonal, it's kernel is
  \begin{equation*}
    \ker(\E^*_{\LL}(\sigma)) = \bigoplus_{i\in I} \ker\left(\Tr_{\KK_i}(P_i \sigma P_i)\otimes \tau_i\right) \ .
  \end{equation*}
  Without loss of generality, fix some $i\in I$. Since $\tau_i$ is full-rank, its kernel is empty and thus
  \begin{equation*}
    \ker\left(\Tr_{\KK_i}(P_i \sigma P_i)\otimes \tau_i\right) = \ker\left(\Tr_{\KK_i}(P_i \sigma P_i)\right) \otimes \KK_i \ .
  \end{equation*}
  Let $\ket{\psi}\in \HH_i$ be an element of the kernel of $\Tr_{\KK_i}(P_i \sigma P_i)$. Then
  \begin{equation*}
    0 = \Tr_{\HH_i}(\Tr_{\KK_i}(P_i\sigma P_i) \ketbra{\psi}{\psi}) =  \Tr_{\HH_i\otimes\KK_i}(\ketbra{\psi}{\psi}P_i\sigma P_i\ketbra{\psi}{\psi})
  \end{equation*}
  and, since $\ketbra{\psi}{\psi}P_i\sigma P_i\ketbra{\psi}{\psi}=O^\dag O$, $0=O=\sqrt{\sigma}P_i\ketbra{\psi}{\psi}$ and thus also
  $\sigma P_i\ketbra{\psi}{\psi} = \sigma \ketbra{\psi}{\psi} =0$. Where we embedded $\ket{\psi}$ into $\HH$ in the last step. Thus, for any element $ \sum_{k}\ket{\psi_k}\otimes \ket{\phi_k}\in \ker\left(\Tr_{\KK_i}(P_i \sigma P_i)\right) \otimes \KK_i$ it holds that
  \begin{equation*}
    \sigma \sum_k \ket{\psi_k}\otimes \ket{\phi_k} = \sum_k \sigma \ketbra{\psi_k}{\psi_k}\ket{\psi_k}\otimes \ket{\phi_k} =0 , 
  \end{equation*}
  and hence $\sum_{k}\ket{\psi_k}\otimes \ket{\phi_k}  \in \ker \sigma $. 
\end{proof}

\begin{corollary}\label{cor:relentropywelldef}
  Let $\LL$ be a Lindbladian (in Heisenberg picture) on a finite-dimensional Hilbert space $\HH$ with a full rank fixed point $\rho\in \SSS(\HH)$, $\rho>0$, $\LL^*(\rho)=0$, and with a well-defined conditional expectation. Then, for any state $\sigma \in \SSS(\HH)$ and any $t\geq 0$:
  \begin{equation}
    D(e^{t\LL} \sigma \| \E_{\LL}^*(\sigma)) < \infty
  \end{equation}
\end{corollary}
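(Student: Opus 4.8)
The plan is to deduce the claim directly from Lemma~\ref{lem:condexpkern} together with the semigroup property of the conditional expectation. Recall first the elementary fact that, for two states, the relative entropy $D(\cdot\|\cdot)$ is finite if and only if the support of the first state is contained in the support of the second, i.e.\ the kernel of the second is contained in the kernel of the first. Writing $e^{t\LL^*}$ for the Schr\"odinger-picture evolution denoted $e^{t\LL}\sigma$ in the statement, it therefore suffices to establish the inclusion $\ker\E_\LL^*(\sigma)\subseteq\ker\big(e^{t\LL^*}(\sigma)\big)$.

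First I would record that, abbreviating $\sigma_t\coloneqq e^{t\LL^*}(\sigma)$, one has $\E_\LL^*(\sigma_t)=\E_\LL^*(\sigma)$. This is immediate from the definition $\E_\LL^*=\lim_{s\to\infty}e^{s\LL^*}$ and the semigroup law, since $\E_\LL^*(\sigma_t)=\lim_{s\to\infty}e^{(s+t)\LL^*}(\sigma)=\E_\LL^*(\sigma)$. Moreover $\sigma_t$ is again a state in $\SSS(\HH)$, because $e^{t\LL^*}$ is completely positive and trace preserving, and the hypotheses on $\LL$ (full-rank fixed point $\rho$, well-defined conditional expectation) are unaffected.

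Then I would apply Lemma~\ref{lem:condexpkern} \emph{to the time-evolved state} $\sigma_t$ in place of $\sigma$: this gives $\ker\E_\LL^*(\sigma_t)\subseteq\ker\sigma_t$. Combined with the identity of the previous step, $\ker\E_\LL^*(\sigma)=\ker\E_\LL^*(\sigma_t)\subseteq\ker\sigma_t=\ker\big(e^{t\LL^*}(\sigma)\big)$, which is exactly the support inclusion needed, whence $D(e^{t\LL^*}(\sigma)\|\E_\LL^*(\sigma))<\infty$.

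There is no genuine obstacle here; the only point requiring minor care is that Lemma~\ref{lem:condexpkern} is formulated for an arbitrary input state, and may hence be legitimately invoked at $\sigma_t$, together with the observation that $\E_\LL^*$ is constant along the trajectory $t\mapsto\sigma_t$. Equivalently, and perhaps even more cleanly, one may argue via the data-processing inequality: since $\E_\LL^*(\sigma)$ is a fixed point of the semigroup, $D\big(e^{t\LL^*}(\sigma)\,\big\|\,\E_\LL^*(\sigma)\big)=D\big(e^{t\LL^*}(\sigma)\,\big\|\,e^{t\LL^*}(\E_\LL^*(\sigma))\big)\le D\big(\sigma\,\big\|\,\E_\LL^*(\sigma)\big)$, and the right-hand side is finite by Lemma~\ref{lem:condexpkern} at $t=0$.
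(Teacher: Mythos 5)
Your proof is correct and follows essentially the same route as the paper: the paper likewise uses $\E_{\LL}^*(\sigma)=\E_{\LL}^*(e^{t\LL^*}\sigma)$ and applies \cref{lem:condexpkern} to the time-evolved state to obtain $\ker\E_{\LL}^*(\sigma)\subseteq\ker(e^{t\LL^*}\sigma)$, concluding finiteness from the support criterion for relative entropy. Your additional data-processing argument is a valid (and slightly slicker) alternative, but the core reasoning coincides with the paper's.
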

\begin{proof}
  By \cref{lem:condexpkern}, $\ker\E_{\LL}^*(\sigma) \subseteq \ker\sigma$ for any $\sigma$. In particular,
  \begin{equation}
    \ker(\E_{\LL}^*( \sigma)) = \ker(\E_{\LL}^* \circ e^{t\LL}(\sigma)) \subseteq \ker (e^{t\LL} \sigma) .
  \end{equation}
  This completes the proof, since the relative entropy is infinite only if the kernel of the second argument is not contained in the kernel of the first.
\end{proof}

The following proposition collects mostly from \cite{spohn_Entropyproductionquantum_1978} some results about the entropy production, particularly its additivity. Here, the assumption of a full-rank state is very helpful, since otherwise the logarithm of $\sigma$ diverges.
\begin{proposition}\label{pro:entropyprodprop}
  Let $\LL =\sum_{v\in \lsym}\LL_v$ be a local, frustration-free Lindbladian (in Heisenberg picture) on a finite-dimensional Hilbert space $\HH$ with  full rank fixed point $\rho\in \SSS(\HH)$, that is $\rho>0$, $\LL^*(\rho)=0$, and with a well-defined conditional expectation. Then, for any other full rank state $\sigma \in \SSS(\HH)$ and any two regions $R_1, R_2 \subseteq \lsym$ the entropy production satisfies:
  \begin{enumerate}
    \item positivity: $EP_{R_1}(\sigma)\geq0$,
    \item linearity: if $R_1$ and $R_2$ are disjoint, $EP_{R_1}(\sigma) + EP_{R_2}(\sigma) = EP_{R_1 \cup R_2}(\sigma)$,
    \item monotonicity: if $R_1\subseteq R_2$, $EP_{R_1}(\sigma) \leq EP_{R_2}(\sigma)$ .
  \end{enumerate}
\end{proposition}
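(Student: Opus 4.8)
The plan is to prove each of the three properties of the entropy production in turn, using Spohn's formula~\eqref{eq:EPder} together with the structure of the conditional expectation and the frustration-freeness of the Lindbladian.

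\textbf{Positivity.} First I would invoke Spohn's identity: $EP_{R_1}(\sigma) = -\frac{\dd}{\dd t}D(e^{t\LL^*_{R_1}}(\sigma)\|\E^*_{R_1}(\sigma))\big|_{t=0}$. Since $\E^*_{R_1}(\sigma)$ is a fixed point of $e^{t\LL^*_{R_1}}$ (by the definition~\eqref{def:condexp} of the conditional expectation as the long-time limit and item~\ref{item4} of \cref{pro:condexpprop}, which gives $\E_{R_1}\circ\E_{R_1}=\E_{R_1}$, hence $\E^*_{R_1}(\E^*_{R_1}(\sigma)) = \E^*_{R_1}(\sigma)$, and $e^{t\LL^*_{R_1}}$ commutes with $\E^*_{R_1}$), the data processing inequality for relative entropy under the CPTP map $e^{t\LL^*_{R_1}}$ gives that $t\mapsto D(e^{t\LL^*_{R_1}}(\sigma)\|\E^*_{R_1}(\sigma))$ is non-increasing. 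Its negative derivative at $t=0$ is therefore non-negative, which is exactly $EP_{R_1}(\sigma)\geq 0$. (One must note the relative entropy is finite by \cref{cor:relentropywelldef}, so the derivative is well-defined.)

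\textbf{Linearity.} For disjoint $R_1, R_2$, I would work directly from the trace formula $EP_R(\sigma) = -\Tr(\LL^*_R(\sigma)(\ln\sigma - \ln\E^*_R(\sigma)))$. Since $\LL_R = \sum_{v\in R}\LL_v$ is a sum of single-site terms, $\LL^*_{R_1\cup R_2} = \LL^*_{R_1} + \LL^*_{R_2}$ when $R_1\cap R_2 = \emptyset$. The key point is then that $\ln\E^*_{R_1\cup R_2}(\sigma)$ can be replaced, inside the trace against $\LL^*_{R_i}(\sigma)$, by $\ln\E^*_{R_i}(\sigma)$: indeed, by \cref{pro:logdiff} the difference $\ln\E^*_{R_i}(\sigma) - \ln\E^*_{R_1\cup R_2}(\sigma)$ lies in the von Neumann algebra $\ker\LL^\wildcard_{R_i}$ (using $\E_{R_1\cup R_2}\circ\E_{R_i} = \E_{R_1\cup R_2}$ from item~\ref{item4}), and $\Tr(\LL^*_{R_i}(\sigma) O) = \Tr(\sigma\,\LL_{R_i}(O)) = 0$ for any $O\in\ker\LL_{R_i}$. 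Hence $-\Tr(\LL^*_{R_i}(\sigma)(\ln\sigma - \ln\E^*_{R_1\cup R_2}(\sigma))) = -\Tr(\LL^*_{R_i}(\sigma)(\ln\sigma - \ln\E^*_{R_i}(\sigma))) = EP_{R_i}(\sigma)$, and summing over $i=1,2$ gives $EP_{R_1}(\sigma) + EP_{R_2}(\sigma) = EP_{R_1\cup R_2}(\sigma)$.

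\textbf{Monotonicity.} For $R_1\subseteq R_2$, I would first enlarge a little: strictly speaking $R_2\setminus R_1$ need not be disjoint from $R_1$ in the relevant "locality" sense, but since $\LL_{R_2} = \sum_{v\in R_2}\LL_v = \LL_{R_1} + \LL_{R_2\setminus R_1}$ and $R_1\cap(R_2\setminus R_1) = \emptyset$ as sets of sites, the linearity argument already gives $EP_{R_2}(\sigma) = EP_{R_1}(\sigma) + EP_{R_2\setminus R_1}(\sigma)$; combined with positivity ($EP_{R_2\setminus R_1}(\sigma)\geq 0$) this yields $EP_{R_1}(\sigma)\leq EP_{R_2}(\sigma)$. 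The main obstacle, which is mild, is to make sure the algebraic identity $\ln\E^*_{R_i}(\sigma) - \ln\E^*_{R_1\cup R_2}(\sigma)\in\ker\LL_{R_i}$ is rigorously justified — this relies on the nesting of kernels $\ker\LL_{R_1\cup R_2}\subseteq\ker\LL_{R_i}$ (from \cref{pro:lindbladkern} and the proof of item~\ref{item4} in \cref{pro:condexpprop}) and on \cref{pro:logdiff}, which states precisely that for nested conditional expectations the logarithmic difference lies in the smaller algebra; everything else is routine manipulation of traces and the finiteness guaranteed by \cref{cor:relentropywelldef}.
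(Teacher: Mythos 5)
Your proof is correct and follows the same skeleton as the paper's: positivity from Spohn's monotonicity of $t\mapsto D(e^{t\LL^*_{R}}(\sigma)\|\E^*_{R}(\sigma))$ (finite by \cref{cor:relentropywelldef}), monotonicity in the region from linearity plus positivity, and linearity via \cref{pro:logdiff}. The one place where you genuinely deviate is the swap inside the linearity step: you replace $\ln\E^*_{R_1\cup R_2}(\sigma)$ by $\ln\E^*_{R_i}(\sigma)$ directly, which requires that $\E^*_{R_1\cup R_2}(\sigma)$ be a full-rank fixed point of $\LL^*_{R_i}$; you obtain this from the nesting $\E_{R_1\cup R_2}\circ\E_{R_i}=\E_{R_1\cup R_2}$ (item~4 of \cref{pro:condexpprop}, adjointed) together with \cref{lem:condexpkern}. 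The paper instead swaps each $\ln\E^*_{R_i}(\sigma)$ for $\ln\rho$, sums using $\LL^*_{R_1\cup R_2}=\LL^*_{R_1}+\LL^*_{R_2}$, and swaps back to $\ln\E^*_{R_1\cup R_2}(\sigma)$, applying \cref{pro:logdiff} three times with the \emph{global} fixed point $\rho$; this uses only the stated hypotheses of the proposition, since frustration-freeness is exactly the statement that $\rho$ is a common full-rank fixed point of $\LL^*_{R_1}$, $\LL^*_{R_2}$ and $\LL^*_{R_1\cup R_2}$. Your route is fine for the setting in which the proposition is actually used (the CSS Davies generators, where item~4 of \cref{pro:condexpprop} is proven via the commutant structure of \cref{pro:lindbladkern}), but note that it imports a CSS-specific nesting property into an appendix statement formulated for general local, frustration-free Lindbladians, where that nesting is not part of the hypotheses; detouring through $\rho$ as the paper does removes this dependence at no extra cost. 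Apart from this scope remark, the argument, including the treatment of $R_2=R_1\uplus(R_2\setminus R_1)$ in the monotonicity step, matches the paper's proof.
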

\begin{proof}
  As is shown in \cite{spohn_Entropyproductionquantum_1978} positivity follows from the monotonicity of $ t \mapsto D(e^{t\LL^*_{R}}(\sigma)\|\E^*_{R}(\sigma))$, which is finite by \cref{cor:relentropywelldef}. The monotonicity in the region follows directly from linearity and positivity.
  To show linearity, consider the explicit expression from \cite{spohn_Entropyproductionquantum_1978}:
  \begin{equation*}
    EP_{R_i}(\sigma) = -\Tr\left(\LL^*_{R_i}(\sigma)\left(\ln(\sigma) - \ln(\E^*_{R_i}(\sigma))\right)\right)\ ,
  \end{equation*}
  where $\E^*_{R_i}:=\E^*_{\LL_{R_i}}$ is the conditional expectation for the region $R_i$.
  Since the Lindbladian is frustration-free, $\rho$
  is a full rank fixed point of $\LL^*_{R_1}$, $\LL^*_{R_2}$ and $\LL^*_{R_1 \cup R_2}$. Using \cref{pro:logdiff} below for each of the three Lindbladians, we can swap the fixed points to find:
  \begin{align*}
    EP_{R_1}(\sigma) + EP_{R_2}(\sigma)
    &= -\Tr\left((\LL^*_{R_1}(\sigma) +\LL^*_{R_2}(\sigma) )\left(\ln(\sigma) - \ln(\rho)\right)\right) \\
    & =  -\Tr\left(\LL^*_{R_1\cup R_2}(\sigma)\left(\ln(\sigma) - \ln(\rho)\right)\right) \\
    &= -\Tr\left(\LL^*_{R_1 \cup R_2}(\sigma) \left(\ln(\sigma) - \ln(\E^*_{R_1\cup R_2}(\sigma))\right)\right) \\
    &= EP_{R_1 \cup R_2}(\sigma)\ .
  \end{align*}
  On the second line, we used that $R_1$ and $R_2$ are disjoint. In the last step, we also used \cref{lem:condexpkern}, which ensures that $ \E^*_{R_1\cup R_2}(\sigma)$ is full rank. 
\end{proof}

\begin{lemma}
    \label{pro:logdiff}
  Let $\LL$ be a Lindbladian (in Heisenberg picture) on a finite-dimensional Hilbert space $\HH$ with a full rank fixed point $\rho\in \SSS(\HH)$, $\rho>0$, $\LL^*(\rho)=0$, and with a well-defined conditional expectation. Then, for any state $\sigma \in \SSS(\HH)$ and any two full rank fixed points $\rho_1$ and $\rho_2$ of $\LL^*$:
  \begin{equation}
    \Tr\left(\LL^*(\sigma)\left(\ln(\rho_1) - \ln(\rho_2)\right)\right) =0 \ .
  \end{equation}
\end{lemma}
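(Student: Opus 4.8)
The plan is to show that the difference $\ln(\rho_1)-\ln(\rho_2)$ lies in the kernel of $\LL$ acting in the Heisenberg picture, so that pairing it against $\LL^*(\sigma)$ annihilates it by duality. Concretely, I would first recall that $\LL^*$ admits a well-defined conditional expectation $\E^*_\LL=\lim_{t\to\infty}e^{t\LL^*}$, so that by the structure theory of such semigroups with a full-rank fixed point (e.g.\ \cite{deschamps_Structureuniformlycontinuous_2016}), the Hilbert space decomposes as $\HH=\bigoplus_{i} \HH_i\otimes\KK_i$, and \emph{every} full-rank fixed point $\rho$ of $\LL^*$ has the block form $\rho=\bigoplus_i \sigma_i\otimes\tau_i$, where the $\tau_i$ are fixed (independent of which fixed point is chosen) and only the $\sigma_i$ on the $\HH_i$-factors may vary. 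Hence $\ln(\rho_1)-\ln(\rho_2)=\bigoplus_i\big(\ln(\sigma_i^{(1)})-\ln(\sigma_i^{(2)})\big)\otimes\identity_{\KK_i}$, which is an element of the decoherence-free algebra $\NN=\bigoplus_i \BB(\HH_i)\otimes\identity_{\KK_i}$, i.e.\ of $\ker\LL$ (the kernel of the Heisenberg-picture generator coincides with $\NN$, the image of the conditional expectation $\E_\LL$).

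With that in hand, the computation is short: since $\ln(\rho_1)-\ln(\rho_2)\in\ker\LL$ and $\E_\LL$ fixes it, and using self-adjointness / the defining duality $\Tr(\LL^*(\sigma)\,O)=\Tr(\sigma\,\LL(O))$, we get
\begin{equation*}
  \Tr\!\left(\LL^*(\sigma)\big(\ln(\rho_1)-\ln(\rho_2)\big)\right)
  = \Tr\!\left(\sigma\,\LL\big(\ln(\rho_1)-\ln(\rho_2)\big)\right)
  = \Tr(\sigma\cdot 0) = 0.
\end{equation*}
That is the whole argument. An alternative, essentially equivalent route avoids naming the decomposition explicitly: one notes that for any full-rank fixed point $\rho$ of $\LL^*$ the map $O\mapsto \rho^{1/2}O\rho^{-1/2}$ commutes with the semigroup, the modular-type generator of $\rho_1$ and $\rho_2$ differ by something supported on $\NN$, and $\ln\rho_1-\ln\rho_2$ is therefore in $\ker\LL$; but invoking the block structure directly is cleanest.

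The only genuine subtlety — the step I would treat most carefully — is the claim that the ``noise'' factors $\tau_i$ are common to all full-rank fixed points, so that the difference of logarithms really has no component on the $\KK_i$-legs. This is standard for Davies-type / GNS-symmetric semigroups and is exactly what \cite{deschamps_Structureuniformlycontinuous_2016} provides, but it is the load-bearing input; everything else is formal. Since \cref{lem:condexpkern} already quotes this decomposition in the present paper, I would simply cite it there and note that the $\tau_i$ depend only on $\LL$ (through the fixed-point equation restricted to each $\KK_i$-factor), not on the particular fixed point, which forces $\ln\rho_1-\ln\rho_2\in\NN=\ker\LL$ and closes the proof.
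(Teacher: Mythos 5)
Your proposal is correct and follows essentially the same route as the paper: invoke the block decomposition $\HH=\bigoplus_i\HH_i\otimes\KK_i$ of \cite{deschamps_Structureuniformlycontinuous_2016}, observe that $\ln\rho_1-\ln\rho_2$ has the form $\bigoplus_i(\cdots)\otimes\identity_{\KK_i}\in\ker\LL$ because the $\tau_i$-legs cancel, and conclude by the duality $\Tr(\LL^*(\sigma)O)=\Tr(\sigma\LL(O))$. The one load-bearing step you flag — that the noise factors $\tau_i$ are common to all full-rank fixed points — is handled in the paper exactly as you suggest, by noting $\rho_j=\E^*_\LL(\rho_j)$ and applying the explicit formula $\E^*_\LL(\cdot)=\sum_i\Tr_{\KK_i}(P_i\cdot P_i)\otimes\tau_i$, so no genuinely different ingredient is involved.
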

\begin{proof}
  As $\LL^*$ has a full-rank fixed point, there exists a decomposition of the Hilbert space
  \begin{equation*}
    \HH = \bigoplus_{i\in I} \HH_i \otimes \KK_i
  \end{equation*}
  such that the conditional expectation has the following decomposition \cite{deschamps_Structureuniformlycontinuous_2016}:
  \begin{equation*}
    \E^*_{\LL} (\sigma) = \sum_{i\in I} \Tr_{\KK_i}(P_i \sigma P_i)\otimes \tau_i
  \end{equation*}
  where $P_i$ is the projection onto the $i$-th block $\HH_i\otimes \KK_i$ and where $\tau_i$ are full-rank states on $\KK_i$. Furthermore, in this decomposition, the kernel of $\LL$ is given by $\bigoplus_{i\in I} \BB(\HH_i)\otimes \identity_{\KK_i}$.
  Since $\rho_1$ and $\rho_2$ have full rank, we can directly evaluate the difference of logarithms, using that $\rho_j = \E^*_{\LL}(\rho_j)$ for $j=1,2$:
  \begin{equation*}
    \ln(\rho_1) - \ln(\rho_2) = \sum_{i\in I}\left(\ln(\Tr_{\KK_i}(P_i \rho_1 P_i))-\ln(\Tr_{\KK_i}(P_i \rho_2 P_i))\right)\otimes\identity_{\KK_i}
  \end{equation*}
  which is an element of the kernel of $\LL$.
  Thus,
  \begin{equation*}
    \Tr\left(\LL^*(\sigma)\left(\ln(\rho_1) - \ln(\rho_2)\right)\right) = \Tr\left(\sigma \LL\left(\ln(\rho_1) - \ln(\rho_2)\right)\right) = 0 . 
  \end{equation*}
\end{proof}

The last proposition of this section is a density argument, showing that a decay of the relative entropy and thus also rapid mixing can be lifted from full-rank states to all states, using a continuity bound on the relative entropy \cite{hiai_Differentquantumfdivergences_2017}. 
\begin{proposition}\label{pro:nonfullrank}
  Let $\LL$ be a Lindbladian (in Heisenberg picture) on a finite-dimensional Hilbert space $\HH$ with a full rank fixed point $\rho\in \SSS(\HH)$, $\rho>0$, $\LL^*(\rho)=0$, and with a well-defined conditional expectation. Assume that the following inequality holds for all $t\geq 0$, all full rank states $\sigma\in \SSS(\HH)$ and some positive function $f(t)$:
  \begin{equation}
    D(e^{t\LL^*}\sigma \| \E^*_{\LL}\sigma)\leq f(t) D(\sigma \| \E^*_{\LL}\sigma)
  \end{equation}
  Then the inequality holds for all states $\sigma\in \SSS(\HH)$, not necessarily of full rank.
\end{proposition}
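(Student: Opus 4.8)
The plan is to approximate an arbitrary state $\sigma$ by the full-rank states $\sigma_\delta \coloneqq (1-\delta)\sigma + \delta\, \tau$ for a fixed full-rank reference state $\tau$ (say $\tau = \mathds{1}/\dim\HH$), and to pass to the limit $\delta \to 0^+$ in the inequality $D(e^{t\LL^*}\sigma_\delta \| \E^*_{\LL}\sigma_\delta)\leq f(t) D(\sigma_\delta \| \E^*_{\LL}\sigma_\delta)$, which holds by hypothesis. The right-hand side is easy: by joint convexity (or simply by continuity of $(\sigma,\sigma')\mapsto D(\sigma\|\sigma')$ on the set where the kernel condition holds), and using that $\E^*_\LL$ is linear and trace-preserving, one checks $\limsup_{\delta\to 0^+} D(\sigma_\delta\|\E^*_\LL\sigma_\delta)\leq D(\sigma\|\E^*_\LL\sigma)$; in fact since $\ker\E^*_\LL\sigma \subseteq \ker\sigma$ by \cref{lem:condexpkern}, the relative entropy $D(\sigma\|\E^*_\LL\sigma)$ is finite and one has genuine continuity here. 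So it suffices to show $D(e^{t\LL^*}\sigma\|\E^*_\LL\sigma)\leq \liminf_{\delta\to 0^+} D(e^{t\LL^*}\sigma_\delta\|\E^*_\LL\sigma_\delta)$, i.e.\ lower semicontinuity of the left-hand side as $\delta\to 0$.

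The cleanest way to get this is \emph{not} to invoke lower semicontinuity of relative entropy in general (which holds, but the argument still needs the kernel bookkeeping) but rather to use the explicit continuity estimate for the relative entropy from Hiai–Mosonyi \cite{hiai_Differentquantumfdivergences_2017} cited in the statement. Concretely: write $\rho_\delta \coloneqq e^{t\LL^*}\sigma_\delta$ and $\eta_\delta \coloneqq \E^*_\LL\sigma_\delta$, and $\rho_0 \coloneqq e^{t\LL^*}\sigma$, $\eta_0\coloneqq\E^*_\LL\sigma$. By \cref{lem:condexpkern} and \cref{cor:relentropywelldef}, $\ker\eta_0\subseteq\ker\rho_0$, so $D(\rho_0\|\eta_0)<\infty$. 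Both maps $e^{t\LL^*}$ and $\E^*_\LL$ are trace-preserving, completely positive, and linear, hence norm-continuous on states, so $\rho_\delta\to\rho_0$ and $\eta_\delta\to\eta_0$ in, say, trace norm as $\delta\to 0^+$, with $\|\rho_\delta-\rho_0\|_1 \le 2\delta$ and $\|\eta_\delta-\eta_0\|_1\le 2\delta$. Moreover $\eta_\delta = \E^*_\LL\sigma_\delta \ge \delta\,\E^*_\LL\tau$ is full-rank with minimal eigenvalue bounded below by $\delta$ times a fixed positive constant $c_\tau\coloneqq (\E^*_\LL\tau)_{\min}>0$. Feeding these bounds into the continuity inequality of \cite[Thm. in §]{hiai_Differentquantumfdivergences_2017} (of the form $|D(\rho_\delta\|\eta_\delta)-D(\rho_0\|\eta_0)| \le \varepsilon(\delta)$ with $\varepsilon(\delta)\to 0$, where the error depends on the trace-norm distances and on a lower bound for the smaller of the two support projections, here controlled because $\operatorname{supp}\rho_0\le\operatorname{supp}\eta_0$ and $\eta_\delta$ has eigenvalues $\gtrsim\delta$) gives $D(\rho_\delta\|\eta_\delta)\to D(\rho_0\|\eta_0)$, and in particular the desired $\liminf$ bound. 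Combining the two one-sided limits yields $D(e^{t\LL^*}\sigma\|\E^*_\LL\sigma)\le f(t)\,D(\sigma\|\E^*_\LL\sigma)$ for every $\sigma$, as claimed.

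The main obstacle is purely bookkeeping, not conceptual: making sure the continuity estimate one invokes is applicable when the \emph{second} argument $\eta_\delta$ degenerates (its minimal eigenvalue $\to 0$) while the first argument $\rho_\delta$ also changes. The standard Audenaert-type continuity bound for $D(\cdot\|\cdot)$ requires a uniform lower bound on the minimal eigenvalue of the second argument, and here that bound is only $\sim\delta$, so one must check that the corresponding error term — typically something like $\|\rho_\delta-\rho_0\|_1\log(1/(\eta_\delta)_{\min}) \sim 2\delta\log(1/\delta)$ — still vanishes as $\delta\to 0$. It does, since $\delta\log(1/\delta)\to 0$, but this is the one place where a little care is needed; alternatively one sidesteps it entirely by just invoking joint lower semicontinuity of $(\rho,\eta)\mapsto D(\rho\|\eta)$ (valid on all of $\SSS(\HH)\times\SSS(\HH)$) for the left-hand side together with the finiteness from \cref{cor:relentropywelldef} and the easy upper semicontinuity on the right. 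I would present the lower-semicontinuity route as the primary argument and mention the quantitative continuity bound as the alternative, since the former needs no eigenvalue bookkeeping at all.
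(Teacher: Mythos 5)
Your proposal is correct, but it takes a genuinely different route from the paper. The paper perturbs by a vanishing multiple of the \emph{invariant} full-rank state, setting $\xi_n = n^{-1}\rho$, and exploits that $\rho$ is fixed by both $e^{t\LL^*}$ and $\E^*_{\LL}$: the perturbed hypothesis then reads $D(e^{t\LL^*}\sigma + \xi_n\,\|\,\E^*_{\LL}\sigma + \xi_n) \le f(t)\, D(\sigma+\xi_n\,\|\,\E^*_{\LL}\sigma+\xi_n)$, i.e.\ the \emph{same} perturbation sits in both arguments on both sides, so a single application of the Hiai--Mosonyi approximation identity $D(\sigma\|\sigma') = \lim_n D(\sigma+\xi_n\|\sigma'+\xi_n)$ converts both sides exactly (plus a homogeneity remark to handle the non-normalized perturbed operators). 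You instead mix with an arbitrary full-rank reference $\tau$, for which $e^{t\LL^*}\tau \neq \tau$ in general, so the two arguments of the evolved relative entropy are perturbed differently; you compensate by treating the two sides asymmetrically -- joint lower semicontinuity of relative entropy for the left-hand side, joint convexity together with finiteness of $D(\tau\|\E^*_{\LL}\tau)$ (via \cref{lem:condexpkern}) for the right-hand side. This buys independence from the Hiai--Mosonyi result (only standard lsc and convexity are used) and keeps all states normalized, at the price of a slightly longer one-sided-limits argument; had you chosen $\tau=\rho$, your mixture would essentially reproduce the paper's symmetric two-sided limit. One caveat: the quantitative-continuity variant you sketch first is the shakier option, since $\E^*_{\LL}\sigma$ need not be full rank and Audenaert-type two-sided continuity bounds require a minimal-eigenvalue lower bound on the second argument of \emph{both} entropies; since you explicitly designate the lower-semicontinuity route as primary, the proposal as a whole is sound.
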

\begin{proof}
  By \cite[Proposition 3.8]{hiai_Differentquantumfdivergences_2017} the relative entropy of any two states $\sigma, \sigma'\in \SSS(\HH)$ can be approximated by 
  \begin{equation}\label{eq:approxentr}
    D(\sigma\|\sigma') = \lim_{n\to\infty} D(\sigma + \xi_n\|\sigma'+\xi_n)
  \end{equation}
  for any sequence $\xi_n$ of positive, bounded operators with $\lim_{n\to\infty}\xi_n=0$ and as long as $\sigma+\xi_n>0$ and $\sigma'+\xi_n>0$ for all $n$.
  Recall that $\rho>0$ is a fixed point of $e^{t\LL^*}$ and thus also of $\E^*_{\LL}$. Consider $\xi_n=n^{-1}\rho>0$ and note that $\sigma+\frac{1}{n}\rho$ is of full rank for any state $\sigma$.Thus, for any (not necessarily full rank) state $\sigma$, we find
  \begin{align*}
    D(e^{t\LL^*}\sigma \| \E^*_{\LL}\sigma) &= \lim_{n\to\infty} D(e^{t\LL^*}(\sigma) + n^{-1}\rho\| \E^*_{\LL}(\sigma)+n^{-1}\rho)\\
    &= \lim_{n\to\infty} D(e^{t\LL^*}(\sigma + n^{-1}\rho)\| \E^*_{\LL}(\sigma+n^{-1}\rho))\\
    &\leq f(t)  \lim_{n\to\infty} D(\sigma + n^{-1}\rho\| \E^*_{\LL}(\sigma+n^{-1}\rho))\\
    &= f(t) D(\sigma \| \E^*_{\LL}\sigma) \ ,
  \end{align*}
  where the last step is based on the linearity of the conditional expectation and~\eqref{eq:approxentr}. Note that we assume the latter only for normalized states but applied it to non-normalized states. However, for any two normalized states $\sigma_1,\sigma_2$ and any positive number $c$, it holds that
  \begin{equation*}
      D(c\sigma_1\|c\sigma_2) = \Tr\left(c\sigma_1 (\ln\sigma_1+\ln c -\ln\sigma_2 -\ln c)\right) = c D(\sigma_1\|\sigma_2) \ .
  \end{equation*}
  Thus, \eqref{eq:approxentr} can be extended to non-normalized states $\sigma$.
\end{proof}

\section*{Acknowledgments}
 The DFG supported this work under the grants TRR 352–Project-ID 470903074 (A.\,C., C.\,R., S.\,S., S.\,W.) and EXC-2111-390814868 (S.W.). A.\,L.~acknowledges support from the Italian Ministry of University and Research (MUR), through ``Programma per Giovani Ricercatori Rita Levi Montalcini'', as well as the grant ``Dipartimento di Eccellenza 2023-2027'' of Dipartimento di Matematica, Politecnico di Milano. L.\,G.~acknowledges support from the National Natural Science Foundation of China (NNSFC) through program "Excellent Young Scientists Fund (Overseas)" and the grant NNSFC-12401163. 
S.\,S. and S.\,W. would like to thank Princeton University for its hospitality. S.\,S. would like to thank the University of Tübingen for its hospitality. D.\,P-G. and A.\,P-H.~acknowledge financial support from the following grants PID2020-113523GB-I00 and PID2023-146758NB-I00 and funded by MICIU/AEI/10.13039/501100011033. D.P-G.acknowledges support from grant  TEC-2024/COM-84-QUITEMAD-CM, funded by Comunidad de Madrid, and grant CEX2023-001347-S, funded by MICIU/AEI/10.13039/501100011033. This work has been financially supported by the Ministry for Digital Transformation and the Civil Service of the Spanish Government through the QUANTUM ENIA project call – Quantum Spain project, and by the European Union through the Recovery, Transformation and Resilience Plan – NextGenerationEU within the framework of the Digital Spain 2026 Agenda. This project was funded within the QuantERA II Programme which has received funding from the EU’s H2020 research and innovation programme under the GA No 101017733.
\sloppy
        \bibliographystyle{abbrvurl}
        \bibliography{lit}

        \end{document}